\tikzstyle{vertex} = [circle, fill=black, draw=black, inner sep=0.8mm]
\renewcommand{\phi}{\varphi}
\renewcommand{\epsilon}{\varepsilon}
\newcommand{\nat}{\mathbb{N}}
\newcommand{\ZZ}{\mathbb{Z}}
\newcommand{\FF}{\mathbb{F}}
\newcommand{\defining}[1]{\emph{#1}}
\newcommand{\iso}{\cong}
\newcommand{\disunion}{\mathbin{\uplus}}
\newcommand{\qspace}{\mathchoice{\;}{\,}{}{}}
\newcommand{\set}[1]{\lbrace #1 \rbrace}
\DeclarePairedDelimiterX\setcond[2]{\{}{\}}{\mathchoice{\,}{}{}{}#1 \;\delimsize\vert\; #2\mathchoice{\,}{}{}{}}
\newcommand{\numVarA}{\nu}
\newcommand{\numVarB}{\mu}
\newcommand{\uniVarA}{x}
\newcommand{\uniVarB}{y}
\newcommand{\uniVarC}{z}
\newcommand{\varVec}[1]{\bar{#1}}
\newcommand{\numVarVA}{\varVec{\numVarA}}
\newcommand{\numVarVB}{\varVec{\numVarB}}
\newcommand{\uniVarVA}{\varVec{\uniVarA}}
\newcommand{\uniVarVB}{\varVec{\uniVarB}}
\newcommand{\termA}{s}
\newcommand{\termB}{t}
\newcommand{\formA}{\Phi}
\newcommand{\formB}{\Psi}
\newcommand{\interpret}{\Theta}
\newcommand{\rel}{R}
\newcommand{\auto}{\phi}
\newcommand{\autoA}{\phi}
\newcommand{\autoB}{\psi}
\newcommand{\autGroup}[1]{\mathsf{Aut}(#1)}
\newcommand{\tup}[1]{\bar{#1}}
\newcommand{\Struct}{\mathfrak{A}}
\newcommand{\StructV}{A}
\newcommand{\vertA}{u}
\newcommand{\vertB}{v}
\newcommand{\vertC}{w}
\newcommand{\spleq}{\preceq}
\newcommand{\spless}{\prec}
\newcommand{\countingExt}[1]{#1^{\#}}
\newcommand{\StructA}{\mathfrak{A}}
\newcommand{\StructB}{\mathfrak{B}}
\newcommand{\StructC}{\mathfrak{C}}
\newcommand{\StructVA}{A}
\newcommand{\StructVB}{B}
\newcommand{\reduct}[2]{#1 \upharpoonright #2}
\newcommand{\arity}[1]{\mathrm{ar}(#1)}
\newcommand{\GraphClass}{\mathcal{K}}
\newcommand{\HFsym}{\mathsf{HF}}
\newcommand{\HF}[1]{\HFsym(#1)}
\newcommand{\sig}{\tau}
\newcommand{\sigA}{\tau}
\newcommand{\sigB}{\sigma}
\newcommand{\sigOf}[1]{\mathrm{sig}(#1)}
\newcommand{\neighbors}[2]{N_{#1}(#2)}
\newcommand{\PTime}{\textsc{Ptime}}
\newcommand{\NPTime}{\textsc{NPtime}}
\newcommand{\CPT}{\ensuremath{\mathrm{CPT}}}
\newcommand{\CPTWSC}{\ensuremath{\mathrm{CPT{+}WSC}}}
\newcommand{\FO}{\ensuremath{\mathrm{FO}}}
\newcommand{\IFP}{\ensuremath{\mathrm{IFP}}}
\newcommand{\IFPSC}{\ensuremath{\mathrm{IFP{+}SC}}}
\newcommand{\IFPWSC}{\ensuremath{\mathrm{IFP{+}WSC}}}
\newcommand{\IFPWSCI}{\ensuremath{\mathrm{IFP{+}WSC{+}I}}}
\newcommand{\IFPC}{\ensuremath{\mathrm{IFPC}}}
\newcommand{\IFPCWSC}{\ensuremath{\mathrm{IFPC{+}WSC}}}
\newcommand{\IFPCWSCI}{\ensuremath{\mathrm{IFPC{+}WSC{+}I}}}
\newcommand{\ifpcwscsig}[1]{\ensuremath{\mathrm{IFPC{+}WSC}[#1]}}
\newcommand{\ifpcwscisig}[1]{\ensuremath{\mathrm{IFPC{+}WSC{+}I}[#1]}}
\newcommand{\Ck}[1]{\mathcal{C}_{#1}}
\newcommand{\CFImsym}{\mathsf{CFI}^\omega}
\newcommand{\CFIm}[1]{\CFImsym(#1)}
\newcommand{\CFImk}[2]{\mathsf{CFI}^{#1}(#2)}
\newcommand{\WSCof}[1]{\mathrm{WSC}(#1)}
\newcommand{\Iof}[1]{\mathrm{I}(#1)}
\newcommand{\WSCIof}[1]{\mathrm{WSCI}(#1)}
\newcommand{\WSCIkof}[2]{\mathrm{WSCI}^{#1}(#2)}
\newcommand{\restrictVect}[2]{#1|_{#2}}
\DeclareMathSymbol{\shortminus}{\mathbin}{AMSa}{"39}
\newcommand{\inv}[1]{#1^{\shortminus 1}}
\newcommand{\hiddenEqq}{\phantom{{}:={}}}
\newcommand{\gamebij}{\lambda}
\newcommand{\formOut}{\formA_{\text{out}}}
\newcommand{\formStep}{\formA_{\text{step}}}
\newcommand{\formChoice}{\formA_{\text{choice}}}
\newcommand{\formWit}{\formA_{\text{wit}}}
\newcommand{\ifpwscSym}{\mathsf{ifp}\text{-}\mathsf{wsc}}
\newcommand{\ifpwsc}[8]{\ifpwscSym_{#1; #2; #3; #4}.\qspace(#5,#6,#7, #8)}
\newcommand{\ifpwscbig}[8]{\ifpwscSym_{#1; #2; #3; #4}.\qspace\big(#5,#6,#7, #8\big)}
\newcommand{\ifp}{\mathsf{ifp}}
\newcommand{\stepF}[0]{f_{\text{step}}}
\newcommand{\choiceF}[0]{f_{\text{choice}}}
\newcommand{\autF}[0]{f_{\text{wit}}}
\newcommand{\wscStarSym}{\mathsf{WSC}^*}
\newcommand{\wsc}[3]{\wscStarSym(#1,#2,#3)}
\newcommand{\iop}[2]{\mathsf{I}(#1;#2)}
\newcommand{\CFIsym}{\mathsf{CFI}}
\newcommand{\CFI}[1]{\CFIsym(#1)}
\newcommand{\orig}[1]{\operatorname{orig}(#1)}
\newcommand{\bVertA}{\mathfrak{u}}
\newcommand{\bVertB}{\mathfrak{v}}
\newcommand{\bVertC}{\mathfrak{w}}
\newcommand{\bEdge}{\mathfrak{e}}
\newcommand{\tlength}[1]{| #1 |}
\newcommand{\multipede}[1]{\mathsf{MP}(#1)}
\newcommand{\gadgetsb}{V}
\newcommand{\feetb}{W}
\newcommand{\footbset}{X}
\newcommand{\footbsetA}{X}
\newcommand{\footbsetB}{Y}
\newcommand{\footbsetC}{Z}
\newcommand{\attractor}[1]{\operatorname{attr}(#1)}
\newcommand{\closure}[1]{\operatorname{cl}(#1)}
\newcommand{\gluing}[3]{#1 \cup_#2 #3}
\newcommand{\kequiv}[1]{\simeq^#1_{\mathcal{C}}}
\newcommand{\indRel}{\trianglelefteq}
\newcommand{\indClosure}[1]{#1^\indRel}
\newcommand{\Pk}[1]{\mathcal{P}_{#1}}
\newcommand{\existsP}[1]{\exists^P{#1}}
\newcommand{\partind}[2]{P_{#2}(#1)}
\newcommand{\partvert}[2]{V_{#2}(#1)}
\newcommand{\footinduced}[2]{#1[[#2]]}
\newcommand{\feetOf}[1]{F(#1)}
\newcommand{\segmentOf}[1]{S(#1)}
\newcommand{\feetrestrict}[2]{#1_{#2}}
\newcommand{\gadgetrestrict}[1]{#1_{\text{G}}}
\newcommand{\nongadgetrestrict}[1]{#1_{\text{F}}}
\newcommand{\gadgetfixed}[1]{#1_{\text{gf}}}
\newcommand{\closurefixed}[1]{#1_{\text{cf}}}
\newenvironment{claimproof}[1][\proofname]{\begin{claimprooftemp}[#1]}{\end{claimprooftemp}}
\keywords{logic for polynomial time, witnessed symmetric choice, fixed-point logic with counting, interpretations}
\begin{document}
	\title[Witnessed Symmetric Choice and Interpretations in IFPC]{Witnessed Symmetric Choice
		and Interpretations
		in Fixed-Point Logic with Counting}
	\titlecomment{An extended abstract of this article appeared in the Proceedings of the 50th International Colloquium on Automata, Languages, and Programming (ICALP 2023)~\cite{Lichter2023b}.}
	
	\author[Moritz Lichter]{Moritz Lichter
		\lmcsorcid{0000-0001-5437-8074}}
	\address{RWTH Aachen University, Chair of Logic and Theory of Discrete Systems, Ahornstraße 55, 52074 Aachen, Germany}
	\email{lichter@lics.rwth-aachen.de}
	\thanks{This research received funding from the European Research Council (ERC) under the European Union’s Horizon 2020 research and innovation programme (EngageS: grant agreement No.\ 820148).}
	
\begin{abstract}
	At the core of the quest for a logic for \PTime{}
	is a mismatch between algorithms making arbitrary choices
	and isomorphism-invariant logics.
	One approach to overcome this problem is
	witnessed symmetric choice.
	It allows for choices from definable orbits
	which are certified by definable witnessing automorphisms.
	
	We consider the extension of fixed-point logic with counting (\IFPC{})
	with witnessed symmetric choice (\IFPCWSC{})
	and a further extension with an interpretation operator (\IFPCWSCI{}).
	The latter operator evaluates a subformula in the structure defined by an interpretation.
	This structure  possibly has other automorphisms
	exploitable by the WSC-operator.
	For similar extensions of pure fixed-point logic (IFP),
	it is known that
	\IFPWSCI{} simulates counting which \IFPWSC{} fails to do.
	For \IFPCWSC{} it is unknown whether the interpretation operator increases expressiveness and thus allows studying the relation between WSC and interpretations beyond counting.
	
	We separate \IFPCWSC{} from \IFPCWSCI{}
	by showing that \IFPCWSC{} is not closed under \FO-interpretations.
	Additionally, we prove that nesting WSC-operators increases the expressiveness
	using the so-called CFI graphs.
	We show that if \IFPCWSCI{} canonizes a particular class of base graphs,
	then it also canonizes the corresponding CFI graphs.
	This differs from various other logics,
	where CFI graphs provide difficult instances.
\end{abstract}

\maketitle

\section{Introduction}

The quest for a logic for \PTime{} is one of the prominent open questions in finite model theory~\cite{ChandraHarel82, Grohe2008}.
It asks whether there is a logic defining exactly all polynomial-time decidable properties of finite structures.
While Fagin's theorem~\cite{Fagin74} initiated descriptive complexity theory
by showing that there is a logic capturing \NPTime{},
the question for \PTime{}  is still open. 
One problem at the core of the question is a mismatch between logics and algorithms.
For algorithms, it is common to make arbitrary choices as long as they do not affect the output.
Well-known examples of such algorithms are depth-first search on graphs,
which chooses the next neighbor of a vertex to continue, or Gaussian elimination to check an equation system for satisfiability, where the next variable (or column) to process is chosen.
While the intermediate steps of these algorithms depend on the choices made,
their answer finally does not depend on them (so whether the vertex was found or the system is satisfiable).
Designing logics to express such behavior turned out to be difficult.
On the one hand, it is undecidable whether an algorithm has this property.
In the setting of decision problems the desired property is isomorphism-invariance: for isomorphic inputs, the output is the same.
Isomorphism-invariance is usually part of the correctness proof of an algorithm.
On the other hand, every reasonable logic is required to be isomorphism-invariant by design~\cite{ebbinghaus1985, Gurevich1988}.
This means that a logic syntactically enforces isomorphism-invariance.
In contrast to algorithms, it is not possible to define something non-isomorphism-invariant in a logic.
Hence, it is in general unclear how choice-making algorithms can be expressed in a logic.

For totally ordered structures, inflationary fixed-point logic (\IFP{}) captures \PTime{} due to the Immerman-Vardi Theorem~\cite{Immerman87}. 
On ordered structures, no arbitrary choices are needed and the total order is used to ``choose'' the unique minimal element.
Thus, the ability to make choices is crucial on unordered structures.
We therefore would like to support choices in a logic while still guaranteeing isomorphism-invariance.
There are logics in which arbitrary choices can be made (e.g.~\cite{ArvindBiswas87, DawarRicherby03Nondet}),
but for these it is undecidable whether a formula is isomorphism-invariant~\cite{ArvindBiswas87} (or, in some sense, it is not decidable whether a string is a syntactically well-formed formula).
In particular, such a logic fails to be a logic capturing~\PTime{} in the sense of Gurevich~\cite{Gurevich1988}.
Similarly, when extending structures by an arbitrary order,
it is undecidable whether a formula is order-invariant, i.e., it evaluates equally for all such orders (see~\cite{GradelKLMSVVW07}).

One approach to overcome the lack of choices in logics
is to support a restricted form of choice.
Whenever we want to choose from a set of  elements related by symmetries (automorphisms) of the input, for example all vertices of a clique, then it does not matter which element is chosen.
In the logical setting,
if only choices from definable orbits are allowed,
that is from sets of definable objects  related by an automorphism of the input structure,
the output is still guaranteed to be isomorphism-invariant.
This form of choice is called \emph{symmetric choice} (SC).
However, it is unknown whether orbits can be computed in \PTime{}.
Hence, it is unknown whether a logic with symmetric choice can be evaluated in \PTime{}
because during the evaluation
it has to be verified that the choice-sets are indeed orbits.
This problem is solved by handing over the obligation to check whether the choice-sets are orbits from the evaluation to the formulas themselves.
To make a choice, not only a choice-set but also
a set of witnessing automorphisms has to be defined.
These automorphisms certify that the choice-set is indeed an orbit
in the following way: For every pair of elements $a$ and $b$ in the choice-set,
an automorphism mapping $a$ to $b$  has to be provided.
This condition can easily be checked in \PTime{}.
We call this restricted form of choice \emph{witnessed symmetric choice} (WSC).
Symmetric and witnessed symmetric choice was first studied by Gire and Hoang~\cite{GireHoang98}. The latter variant is called specified symmetric choice in this work and differs in formal details (see the paragraph on related work for more details), but their results carry over to the notion presented in this article.
Gire and Hoang extend \IFP{} with (witnessed) symmetric choice (\IFPSC{} and \IFPWSC{}) and show that \IFPWSC{} is strictly more expressive than \IFP{}.
Afterwards \IFPSC{} was studied by Dawar and Richerby~\cite{DawarRicherby03}.
They allowed for nested symmetric choice operators,
proved that parameters of choice operators increase the expressiveness,
showed that nested symmetric choice operators are more expressive
than a single one,
and conjectured that with additional nesting depth
the expressiveness increases.

Besides witnessed symmetric choice,
other operators were proposed to extend 
the expressiveness of logics not capturing \PTime{}
including a counting operator (see~\cite{Otto1997}) and an operator based on logical interpretations~\cite{GireHoang98}.
The counting operator increases the expressiveness of
\IFP{} and of the logic of Choiceless Polynomial Time (\CPT{})~\cite{BlassGS02}.
In \IFP{}, witnessed symmetric choice and counting are incomparable.
On the on hand, \IFPWSC{{}} defines the CFI query for ordered base graphs~\cite{GireHoang98},
which \IFP{} with counting (\IFPC) fails to~\cite{CaiFI1992}.
On the other hand, \IFPWSC{} fails to define the parity of the universe (indicated in~\cite{GireHoang98}),
which is possible in \IFPC{}.
However, for the combination of counting and symmetric choice not much is known.
In this article, we investigate the relation of counting, witnessed symmetric choice,
and interpretations to better understand their expressive power.

Recently, the extension of \CPT{} with witnessed symmetric choice (\CPTWSC{}) was studied by Lichter and Schweitzer~\cite{LichterSchweitzer24}.
\CPTWSC{} has the interesting property
that a \CPTWSC{}-definable isomorphism test on a class of structures implies a \CPTWSC{}-definable canonization for this class.
Canonization is the task of defining an isomorphic but totally ordered copy of the input structure.
The only requirement is that the class of structures is closed under individualization, so under assigning unique colors to vertices,
which in most cases is unproblematic~\cite{DBLP:conf/mfcs/KieferSS15, DBLP:journals/ipl/Mathon79}.
Individualization is natural in the context of choices because
a choice is, in some sense, an individualization.
The concept of canonization is essential in the quest for a logic for \PTime{}.
It provides the routinely employed approach to capture \PTime{} on a class of structures:
define canonization, obtain isomorphic and ordered structures, and apply the Immerman-Vardi Theorem (e.g.~\cite{AbuZaidGraedelGrohePakusa2014, Grohe2017, GroheN19, LichterS21}).
While \CPTWSC{} has the nice property that defining isomorphism implies canonization,
we do~not know whether the same holds for \CPT{}
or whether \CPTWSC{} is more expressive than \CPT{}.
Proving this requires separating \CPT{} from \PTime{},
which has been open for a long time.

(Witnessed) symmetric choice has the drawback that it can only choose from orbits
of the input structure.
This structure might have complicated orbits
that we cannot define or witness in the logic.
However, there could be a reduction to a different structure with easier orbits exploitable by witnessed symmetric choice.
For logics, the natural concept of a reduction is an interpretation,
i.e., defining a structure in terms of another one.
Interpretations are in some sense incompatible
with (witnessed) symmetric choice
because we always have to choose from orbits of the input structure.
To exploit a combination of choices and interpretations,
Gire and Hoang proposed an interpretation operator~\cite{GireHoang98}.
It evaluates a formula in the image of an interpretation.
For logics closed under interpretations (e.g.~\IFP{}, \IFPC{}, and \CPT{})
such an interpretation operator does not increase expressiveness.
However, for the extension with witnessed symmetric choice this is different:
\IFPWSC{} is less expressive than the extension of \IFPWSC{} with the interpretation operator.
The interpretation operator together with witnessed symmetric choice 
simulates counting.
However, it is indicated in~\cite{GireHoang98} that witnessed symmetric choice alone fails to simulate counting.

We are interested in the relation between witnessed symmetric choice and the interpretation operator not specifically for \IFP{} but more generally.
Most of the existing results in~\cite{GireHoang98,DawarRicherby03} showing that (witnessed) symmetric choice or the interpretation operator increases in some way the expressiveness of \IFP{} are based on counting.
However, counting is not the reason for using witnessed symmetric choice.
Counting can be achieved naturally in \IFPC{}.
Thus, it is unknown whether the interpretation operator increases expressiveness of \IFPCWSC{}.
In \CPT{},
it is not possible to show that witnessed symmetric choice or the interpretation operator
increase expressiveness without separating \CPT{} from \PTime{}~\cite{LichterSchweitzer24}.

Overall, a natural base logic for studying the interplay of witnessed symmetric choice
and the interpretation operator is \IFPC{}.
On the one hand, separation results based on counting are not applicable in \IFPC{}.
On the other hand, there are known \IFPC{}-undefinable \PTime{} properties, namely the already mentioned CFI query,
which can be used to separate extensions of \IFPC{}.
The CFI construction assigns to a connected graph, the so-called base graph,
two non-isomorphic CFI graphs: one is called even and the other is called odd.
The CFI query is to define whether a given CFI graph is even.

\paragraph{Results}
We define the logics \IFPCWSC{} and \IFPCWSCI{}.
They extend \IFPC{} with a fixed-point operator featuring witnessed symmetric choice
and the latter additionally with an interpretation operator.
We show that the interpretation operator increases expressiveness of the logic:
\begin{thm}
	\label{thm:wsc-le-wsci}
	$\IFPC{} < \IFPCWSC{} < \IFPCWSCI{} \leq \PTime$.
\end{thm}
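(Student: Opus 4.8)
The plan is to establish the three relations separately. The inclusion $\IFPCWSC \leq \IFPCWSCI$ is immediate, as $\IFPCWSCI$ syntactically extends $\IFPCWSC$ by the interpretation operator. For $\IFPCWSCI \leq \PTime$ I would argue by induction on formulas that evaluation is polynomial time: counting terms, first-order connectives and the (inflationary) fixed-point operator are handled as in $\IFPC$; the interpretation operator $\iop{\interpret}{\psi}$ is evaluated by first computing the $\FO$-interpretation $\interpret$ on the input — a polynomial-size structure computable in polynomial time — and then evaluating $\psi$ on it by the induction hypothesis. The delicate case is the WSC-fixed-point operator: each round evaluates (inductively) a step formula, a choice formula defining a set $X$ of tuples, and a formula defining a set of candidate automorphisms; one then checks in polynomial time that each candidate is really an automorphism of the current structure and that the candidates act transitively on $X$ (so that $X$ is an orbit), and, if so, adds an arbitrary element of $X$ to the parameters and continues, while a failed check makes the operator take a fixed default value. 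Since there are polynomially many rounds, each round and all checks are polynomial, and the witnessing conditions make the result independent of the arbitrary choices, the defined query is isomorphism-invariant and lies in $\PTime$.

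For the strict separation I would use that $\IFPCWSCI$ is closed under $\FO$-interpretations while $\IFPCWSC$ is not. Closure of $\IFPCWSCI$ is immediate: for an $\FO$-interpretation $\interpret$ and an $\IFPCWSCI$-sentence $\psi$ over the target signature, the sentence $\iop{\interpret}{\psi}$ defines over the source signature exactly ``$\interpret(\Struct) \models \psi$''. So if the two logics coincided, $\IFPCWSC$ would be closed under $\FO$-interpretations too, and it therefore suffices to refute this. Concretely, I would exhibit a class $\GraphClass$ of structures, an $\FO$-interpretation $\interpret$ with image in a class $\mathcal{D}$, and a query $Q$ on $\mathcal{D}$ that is $\IFPCWSC$-definable on $\mathcal{D}$ but whose pullback $\{\Struct \in \GraphClass \mid \interpret(\Struct) \in Q\}$ is not $\IFPCWSC$-definable on $\GraphClass$. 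This pullback is then $\IFPCWSCI$-definable on $\GraphClass$ via $\iop{\interpret}{\psi}$, witnessing the strict gap; and the same construction, read for the non-witnessed operator, shows that $\IFPSC$ is not closed under $\FO$-interpretations, answering Dawar and Richerby's question.

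The query $Q$ and the classes would come from a CFI-type construction. On CFI graphs over linearly ordered base graphs the CFI query is $\IFPWSC$-definable, hence $\IFPCWSC$-definable, by Gire--Hoang: there the relevant CFI automorphisms are definable and witnessable, so symmetric choice can individualize enough of the structure. The idea is to take as $\mathcal{D}$ such CFI graphs (or a mild variant) with $Q$ the CFI query, and to build $\GraphClass$ and $\interpret$ so that $\interpret$ reconstructs the members of $\mathcal{D}$ from the members of $\GraphClass$, thereby exposing the exploitable CFI symmetry only after interpretation, while the structures in $\GraphClass$ themselves carry only orbits that the logic cannot simultaneously define and witness, so that on $\GraphClass$ the WSC-fixed-point operator yields nothing beyond $\IFPC$, which cannot decide the pulled-back query.

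The main obstacle is exactly this last lower bound: that no $\IFPCWSC$-sentence decides the pullback on $\GraphClass$. It requires more than the classical fact that $\IFPC$ cannot express the CFI query (provable via the bijective $k$-pebble game on CFI graphs); one must also show that witnessed symmetric choice is of no help on $\GraphClass$, i.e., that every orbit of $\GraphClass$-structures that is both definable and witnessable in the logic is too coarse to break the parity-relevant symmetry of the CFI gadgets. I expect to prove this by analysing the automorphism groups of $\GraphClass$-structures, classifying their definable-and-witnessable orbits, and then pushing a CFI-style indistinguishability argument through the semantics of the WSC-fixed-point operator (and through $\IFPC$'s counting pebble game where counting intervenes). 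A secondary point is to keep $\GraphClass$ and $\interpret$ first-order and simple enough that the reconstruction — and hence the $\IFPCWSCI$-upper bound on the query — really goes through.
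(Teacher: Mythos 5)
Your skeleton is the same as the paper's: the two upper bounds are routine, and the strict separation is obtained by showing that \IFPCWSCI{} is closed under \FO{}-interpretations while \IFPCWSC{} is not, using an \FO{}-interpretation that exposes a CFI query over ordered base graphs (which is indeed definable already in $\WSCof{\IFPC}$ via Gurevich-style canonization). Where you diverge --- and where the gap lies --- is in how the lower bound on $\GraphClass$ is actually achieved.

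You propose to build $\GraphClass$ so that its structures ``carry only orbits that the logic cannot simultaneously define and witness,'' and then to prove this by ``classifying their definable-and-witnessable orbits'' and pushing a CFI-style indistinguishability argument through the WSC-operator semantics. As stated, this is not a workable plan: controlling which orbits a logic with choice can define \emph{and} witness on structures that still have rich automorphism groups is precisely the obstruction that makes lower bounds against symmetric choice hard (it is, for instance, why one cannot currently separate \CPTWSC{} from \CPT{}). The paper's key idea, which your proposal is missing, is to make the structures in $\GraphClass$ \emph{asymmetric}: each structure is a gluing $\gluing{\multipede{G}}{\footbset}{\CFI{H,f}}$ of a Gurevich--Shelah multipede (asymmetric, yet with orbits that \IFPC{} cannot distinguish) to a CFI graph over an ordered, highly connected base graph. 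Asymmetry makes every orbit a singleton, so the WSC-fixed-point operator degenerates trivially to an ordinary fixed-point operator and $\IFPCWSC{}=\IFPC{}$ on $\GraphClass$ --- no classification of witnessable orbits is needed at all. The same observation immediately yields the answer to Dawar and Richerby's question for \IFPSC{}.

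With that reduction in place, the remaining (and substantial) technical content is showing $\gluing{\multipede{G}}{\footbset}{\CFI{H,0}} \kequiv{k} \gluing{\multipede{G}}{\footbset}{\CFI{H,1}}$, i.e.\ that gluing the multipede onto the edge-vertex-pairs does not break $\Ck{k}$-indistinguishability of the two CFI graphs. This requires combining Duplicator's strategies on the two components, and the combination only works under quantitative hypotheses your proposal does not anticipate: the base graph $H$ must be $(k{+}2)$-connected and $r$-regular, the multipede's bipartite graph must be $2rk$-meager, and the gluing set $\footbset$ must be $6rk$-scattered so that pebbles placed in one component fix only boundedly many segments of the other (via closure arguments). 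Your final paragraph correctly identifies that something nontrivial is needed here, but the mechanism you sketch (orbit classification plus a semantic analysis of the WSC operator) is a different and far harder route than the one that actually closes the argument.
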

In particular, this separates \IFPCWSC{} from \PTime{}.
Such a result  does not follow from existing techniques
because separating \IFPWSC{} from \PTime{} is based on counting~\cite{GireHoang98}.
Moreover, we show that \IFPCWSC{} is not even closed under one-dimensional \FO{}-interpretations.
Proving Theorem~\ref{thm:wsc-le-wsci} relies on the CFI construction.
Similarly to~\cite{LichterSchweitzer24},
we show that if \IFPCWSCI{} distinguishes orbits,
then \IFPCWSCI{} defines a canonization.
We apply this to CFI graphs:
\begin{thm}
	\label{thm:canonize-CFI if-base}
	If \IFPCWSCI{} canonizes a class of colored base graphs~$\GraphClass$ (closed under individualization),
	then \IFPCWSCI{} canonizes the class of CFI graphs $\CFI{\GraphClass}$ over~$\GraphClass$.
\end{thm}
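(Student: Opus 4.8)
We aim to invoke the general principle, established earlier in the spirit of~\cite{LichterSchweitzer22}, that \IFPCWSCI{} canonizes any class closed under individualization as soon as \IFPCWSCI{} distinguishes the orbits of that class. It thus suffices to prove that, under the hypothesis, \IFPCWSCI{} distinguishes the orbits of $\CFI{\GraphClass}$. That $\CFI{\GraphClass}$ is in the scope of the principle is a routine check: individualizing a vertex of a CFI graph over $G$ either individualizes a vertex of~$G$ or fixes a gauge on one edge pair, and in both cases one stays inside a mild extension of $\CFI{\GraphClass}$---a CFI construction over an individualization of~$G$, which belongs to $\GraphClass$---to which the principle still applies. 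Throughout, we rely on the fact that from a CFI graph $H$ over a base graph of minimal degree~$3$ one recovers the colored base graph $G$ by a first-order interpretation: the degree bound makes the gadgets rigid enough to define the map $v\mapsto\orig{v}$, the edge pairs, and adjacency.

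To distinguish the orbits of $H\in\CFI{G}$ we use the standard description of $\autGroup{H}$: it is generated by lifts of automorphisms of~$G$ together with edge flips along cycles of~$G$, the latter forming a copy of the cycle space of~$G$ over~$\FF_2$. Composing the interpretation above with the \emph{interpretation operator} and invoking the assumed canonization of $\GraphClass$, we can access the canon of~$G$ from within~$H$, and hence define the orbit partition of~$G$ (two vertices lie in one orbit iff individualizing either yields the same canon). The orbit of a vertex, or more generally a tuple, of~$H$ is then determined by the orbits in~$G$ of the origin vertices involved together with data controlled by the cycle space---for a single edge vertex this amounts essentially to whether its edge is a bridge, and for tuples it reduces to linear algebra over~$\FF_2$ on the cycle space. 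Since the canon of~$G$ is available, all of this becomes \IFPCWSCI{}-definable; carrying it out uniformly is the technical heart of the proof.

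Granting orbit-distinguishability, the general principle yields the canonization, and it is instructive to see what its witnessed choices amount to on $H$: a choice among vertices in one base-graph orbit is witnessed by a lift of the corresponding automorphism of~$G$; a choice between the two vertices of a non-bridge edge~$e$ is witnessed by the flip of~$e$ composed with a flip along a fixed cycle through~$e$. Both witnessing automorphisms are \IFPCWSCI{}-definable because the canon of~$G$ makes the base graph effectively ordered, so cycles, spanning trees, and the lifts can all be defined. After finitely many such choices $H$ becomes rigid over an ordered base graph, and then \IFPC{} alone defines a linear order on~$H$, hence its canon; in particular it decides whether $H$ is the even or the odd CFI graph. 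Thus the theorem recovers the Gire--Hoang phenomenon, but now \emph{relative} to the hypothesis on $\GraphClass$ instead of for explicitly ordered base graphs.

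The step I expect to be most delicate is the interplay of the interpretation operator with witnessed symmetric choice. The canon of~$G$ lives in the interpreted structure, whereas each WSC-step needs orbits and \emph{witnessing automorphisms of the current structure~$H$}; so one must transport the canon back and, crucially, lift automorphisms of~$G$---the only ones available directly---through the CFI construction, including their edge-flip components, to genuine automorphisms of~$H$. Further points requiring care are the combinatorial description of orbits of tuples in $\CFI{G}$ in terms of the cycle space, the verification that all base graphs arising from intermediate individualizations remain in $\GraphClass$ so that the canonization hypothesis keeps applying, and the usual bookkeeping to keep every fixed point polynomially bounded.
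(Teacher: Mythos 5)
Your overall strategy is the paper's: reduce the theorem to an orbit condition via the Gurevich-style equivalence (Lemma~\ref{lem:canon-orbit-ready-iff}), recover the base graph from a CFI graph by an interpretation exploiting the degree-$3$ bound, feed the base-graph orbit information back through the interpretation operator, and describe the automorphisms of $\CFI{G,f}$ via lifts of base automorphisms and cycle flips. Two points, however, deserve attention.

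First, you aim at the strongest of the three equivalent conditions, namely distinguishing the $k$-orbits of $\CFI{\GraphClass}$ for all $k$, and you yourself locate the ``technical heart'' in the resulting linear algebra over the cycle space for tuples. The paper instead targets the weakest condition, \emph{ready for individualization}: for each individualized structure one only has to define a \emph{single} suitable $1$-orbit (non-trivial if one exists). This sidesteps the tuple analysis entirely --- the individualization relation $\indRel$ plays the role of the tuple, and one never has to order or compare arbitrary $k$-tuples of CFI vertices. If you insist on full $k$-orbit distinguishability you are signing up for considerably more work than the theorem requires.

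Second, and this is the genuine gap, your sketch does not handle the case where, after deleting the base edges underlying the individualized vertices, the base graph $G-\orig{\indRel^\Struct}$ is acyclic. In that case no edge-vertex-pair can be swapped by a cycle automorphism (Lemma~\ref{lem:edge-vertices-same-orbit}), yet non-trivial $1$-orbits may still exist, induced purely by automorphisms of the base graph permuting gadgets. To turn a $2$-orbit of $(G,\orig{\indRel^\Struct})$ into a definable $1$-orbit of the CFI graph one must first \emph{definably and isomorphism-invariantly select one vertex out of every edge-vertex-pair} (otherwise the candidate set is a union of two base-orbit-many vertices and is not itself an orbit). The paper does this with the edge-vertex-pair-order of Claim~\ref{clm:edge-vertex-pair-order-all}, propagating a canonical choice through the degree-$3$ gadgets along the forest $G-\orig{\indRel^\Struct}$ by an \IFPC{} fixed point, and then applies Claim~\ref{clm:orbit-no-cycle}. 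Your bridge/cycle-space remark covers the complementary case (edges on cycles, Claim~\ref{clm:orbit-cycle}) but not this one, and without it the orbit-defining formula is incomplete precisely on the rigid-gauge instances. Relatedly, when the base graph is rigid after individualization one must still output a singleton orbit of a not-yet-individualized vertex; that final branch also relies on the edge-vertex-pair-order to totally order the CFI graph.
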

The conclusion is that for $\IFPCWSCI$ a class of CFI graphs is not more difficult than the corresponding class of base graphs,
which is different from the case for many other logics~\cite{CaiFI1992,GradelPakusa19, Lichter2023, DawarGraedelLichter22}.
However, to canonize the CFI graphs in our proof,
the nesting depth of WSC-fixed-point operators and interpretation operators increases.
We show that this increase is unavoidable.
\begin{thm}
	\label{thm:cfi-wsci-wsci}
	There is a class of base graphs $\GraphClass$ such that
	\begin{enumerate}
		\item $\WSCIof{\IFPC{}}$ defines a canonization for~$\GraphClass$,
		\item $\WSCIof{\IFPC{}}$ does not define the CFI query for~$\CFI{\GraphClass}$, and
		\item $\WSCIof{\WSCIof{\IFPC{}}}$ defines a canonization for $\CFI{\GraphClass}$.
	\end{enumerate}
\end{thm}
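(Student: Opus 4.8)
The plan is to take $\GraphClass$ to be a class of CFI graphs over a well-behaved inner base class. Fix a class $\GraphClass_0$ of connected base graphs of minimal degree $3$, containing arbitrarily large graphs, closed under individualization, on which $\IFPC$ already defines a canonization --- for concreteness a class of linearly ordered graphs, so that canonization is immediate from the Immerman--Vardi theorem --- and set $\GraphClass := \CFI{\GraphClass_0}$. Since the CFI construction over graphs of minimal degree $3$ again yields colored graphs of minimal degree $3$, and (as is standard) closure under individualization carries over, $\GraphClass$ satisfies the structural hypotheses needed to feed it once more into the canonization transfer of Theorem~\ref{thm:canonize-CFI if-base}.

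Items~(1) and~(3) then follow from the proof of Theorem~\ref{thm:canonize-CFI if-base} read modularly, namely as the statement: if a logic $L$ defines a canonization of a class of base graphs of minimal degree $3$ closed under individualization, then $\WSCIof{L}$ --- one additional layer of WSC-fixed-point and interpretation operators on top of $L$ --- defines a canonization of the associated class of CFI graphs (and for the full logic $\IFPCWSCI$ the extra layer is absorbed, which is why Theorem~\ref{thm:canonize-CFI if-base} is stated without it). Instantiating with $L = \IFPC$, which canonizes $\GraphClass_0$, gives that $\WSCIof{\IFPC}$ canonizes $\CFI{\GraphClass_0} = \GraphClass$; this is item~(1). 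Instantiating again with $L = \WSCIof{\IFPC}$, which canonizes $\GraphClass$ by item~(1), gives that $\WSCIof{\WSCIof{\IFPC}}$ canonizes $\CFI{\GraphClass}$; this is item~(3). Thus each nesting level of the CFI construction is paid for by exactly one more layer of WSC-fixed-point plus interpretation operators.

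The real work is item~(2): $\WSCIof{\IFPC}$ does not define the CFI query for $\CFI{\GraphClass} = \CFI{\CFI{\GraphClass_0}}$. The intuition is that a single layer of WSC-fixed-point and interpretation operators over $\IFPC$ can resolve only one of the two nested CFI constructions. Resolving the outer CFI via a WSC-fixed-point operator, in the style of Gire and Hoang, needs its base graph --- here $\CFI{G_0}$ for $G_0 \in \GraphClass_0$ --- to be effectively ordered while the operator runs; but ordering $\CFI{G_0}$ is precisely the CFI problem over $\GraphClass_0$, which $\IFPC$ (the logic available for defining the choice sets and the witnessing automorphisms) cannot solve by Cai--F\"urer--Immerman, and the interpretation operator offers no way out because any structure $\IFPC$-interpretable in $\CFI{\CFI{G_0}}$ still carries the automorphisms induced by the inner CFI automorphisms of $\CFI{G_0}$. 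To make this precise I would, for a fixed $\WSCIof{\IFPC}$-sentence, build on the game and orbit/support machinery for WSC-logics on CFI graphs underlying Theorem~\ref{thm:canonize-CFI if-base} and \cite{LichterSchweitzer22}, and show that over the even and the odd $\CFI{\CFI{G_0}}$, for $G_0$ sufficiently large in $\GraphClass_0$, the single WSC-fixed-point operator produces isomorphic choice trees: by induction on the steps of the fixed-point iteration, every $\IFPC$-definable and $\IFPC$-witnessed choice set it uses, and every structure produced by the interpretation operator inside it, is invariant under an inner CFI automorphism that flips the outer parity, so the computation never separates the two instances.

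The main obstacle is exactly this last step: one must simultaneously control an a priori unbounded number of witnessed choices, the interpretation operator, and the counting of $\IFPC$, and argue that none of them breaks the symmetry between the even and the odd outer CFI graph. Everything else is either a direct reuse of Theorem~\ref{thm:canonize-CFI if-base} (items~(1) and~(3)) or routine verification that $\GraphClass$ inherits minimal degree $3$ and closure under individualization from $\GraphClass_0$.
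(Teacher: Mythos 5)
There is a genuine gap, and it sits exactly where you park it: item~(2) is the entire content of the theorem, and your proposal reduces it to "show by induction that the computation never separates the two instances," which is the statement to be proved, not an argument for it. Moreover, the class you choose, $\GraphClass = \CFI{\GraphClass_0}$ for an ordered $\GraphClass_0$, is not the one for which the paper can prove (2), and the paper's proof technique does not transfer to it. The paper instead takes $\GraphClass = \CFIm{\GraphClass_0}$, where each base graph is a color class join $\ccjoink{\ell+1}{\CFI{G,0},\CFI{G,g},\CFI{G,1}}$ of \emph{many} disjoint copies of $\CFI{G,0}$, $\CFI{G,g}$ and $\CFI{G,1}$ --- more copies than the formula has variables, and with reference copies of known parity. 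Both halves of the key dichotomy in Lemma~\ref{lem:cfi-cfi-not-in-wsci} depend on this: either every choice-set of the single WSC-fixed-point operator lives inside the (few) parameter-pebbled parts, in which case the choices can be resolved deterministically from a pebbled-part individualization and the residue is an $\IFPC$-formula handled by the auxiliary logic $\Pk{\ell}$ and its game (Lemmas~\ref{lem:pk-iff-pk-game} and~\ref{lem:pk-CFI undistinguishable}, which move the twist through the unpebbled copies); or some witnessed choice-set is unpebbled-part-distinguishing, in which case an $\IFPC$-formula separates the $\CFI{G,g}$-copies from the known-parity reference copies, contradicting Lemma~\ref{lem:k-connected-treewidth-CFI-equiv}. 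With a single copy and no join there are no "parts," no reference copies, and no excess of copies over parameters, so neither horn of the dichotomy is available. In particular, nothing in your sketch rules out that on $\CFI{\CFI{G_0,g},h}$ a single WSC-fixed-point operator makes a long sequence of legitimately witnessed choices (e.g., picking one vertex from edge-vertex-pairs whose origin lies on a definable cycle of the inner CFI graph, witnessed by $\IFPC$-definable cycle automorphisms) that cumulatively produce an edge-vertex-pair-order from which the outer parity can be read off; your claim that "any structure $\IFPC$-interpretable in $\CFI{\CFI{G_0}}$ still carries the inner automorphisms" is an intuition, not a proof, and the witnessing formula is allowed to use the fixed point, which makes such cumulative individualization exactly the danger the join construction is designed to neutralize.

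Two smaller points. First, "containing arbitrarily large graphs" is not the right hypothesis on $\GraphClass_0$: you need unbounded treewidth (the paper takes ordered $3$-regular graphs $G_i$ of treewidth at least $i$), since otherwise already $\CFI{\GraphClass_0}$ is $\IFPC$-canonizable and (2) fails trivially. Second, your items~(1) and~(3) are essentially Corollary~\ref{cor:canonize-CFI if-base-wsci} and are fine in spirit, but for the paper's actual class one additionally needs Lemma~\ref{lem:extended-CFI if-cfi} to pass from canonizing $\CFI{\GraphClass_0}$ to canonizing the color class joins $\CFIm{\GraphClass_0}$ without spending an extra operator level; if you were to repair (2) by switching to the paper's class, you would need that step as well.
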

Here $\WSCIof{L}$ is the fragment of $\IFPCWSCI$
using \IFPC{}-formula-formation-rules to compose $L$-formulas and 
an additional interpretation operator nested inside a WSC-fixed-point operator.
Theorem~\ref{thm:cfi-wsci-wsci} can be seen as a first step
towards an operator nesting hierarchy for $\IFPCWSCI$.

\paragraph{Our Techniques}
We adapt the techniques of~\cite{LichterSchweitzer24} from \CPT{}
to \IFPC{} to define a WSC-fixed-point operator.
It has some small but essential differences to~\cite{GireHoang98,DawarRicherby03}.
Similar to~\cite{LichterSchweitzer24} for \CPT{},
Gurevich's canonization algorithm~\cite{Gurevich97} is expressible in \IFPCWSC{}.
It suffices to distinguish orbits of a class of individualization-closed structures to define a canonization.

To prove Theorem~\ref{thm:canonize-CFI if-base},
we use the interpretation operator to show
that if \IFPCWSCI{} distinguishes orbits of the base graphs,
then \IFPCWSCI{} distinguishes also orbits of the CFI graphs
and thus canonizes the CFI graphs.
The CFI-graph-canonizing formula nests one WSC-fixed-point operator (for Gurevich's algorithm) and one interpretation operator (to distinguish orbits) more than the orbit-distinguishing formula of the base graphs.
To show that this increase in nesting depth is necessary, we construct double CFI graphs.
We start with a class of CFI graphs $\CFI{\GraphClass'}$ canonized in $\WSCIof{\IFPC}$.
We create a class of new base graphs $\GraphClass$ from the $\CFI{\GraphClass'}$-graphs.
Applying the CFI construction once more, $\CFI{\GraphClass}$ is canonized in $\WSCIof{\WSCIof{\IFPC}}$ but not in $\WSCIof{\IFPC{}}$:
To define orbits of $\CFI{\GraphClass}$,
we have to define orbits of the base graph,
for which we need to distinguish the CFI graphs $\CFI{\GraphClass'}$.

To prove $\IFPCWSC < \IFPCWSCI$,
we construct a class of asymmetric structures, i.e., structures without non-trivial automorphisms,
for which isomorphism is not \IFPC{}-definable.
Because asymmetric structures have only singleton orbits, witnessed symmetric choice is not
beneficial, thus $\IFPCWSC{} = \IFPC{}$,
and isomorphism is not \IFPCWSC{}-definable.
These structures combine CFI graphs and the so-called multipedes~\cite{GurevichShelah96},
which are asymmetric and for which \IFPC{} fails to distinguish orbits.
An interpretation removes the multipedes
and reduces the isomorphism problem to the ones of CFI graphs.
Thus,  isomorphism of this class of structures is $\IFPCWSCI{}$-definable.

\paragraph{Related Work}
The logic \IFPC{} was separated from \PTime{}
by the CFI query~\cite{CaiFI1992}.
CFI graphs not only turned out to be difficult for \IFPC{}
but variants of them were also used to separate rank logic~\cite{Lichter2023}
and the more general linear-algebraic logic~\cite{DawarGraedelLichter22} from \PTime{}.
\CPT{} was shown to define the CFI query for 
ordered base graphs~\cite{DawarRicherbyRossman2008}
and base graphs of maximal logarithmic color class size~\cite{PakusaSchalthoeferSelman2016}.
Defining the CFI query for these graphs in \CPT{}
turned out to be comparatively more complicated
than in \IFPWSC{} for ordered base graphs~\cite{GireHoang98}.
It is still open whether \CPT{} defines the CFI query for all base graphs.

The definitions of the symmetric choice operator
in~\cite{GireHoang98, DawarRicherby03}
differ at crucial points from the one in~\cite{LichterSchweitzer24} and in this article:
Witnessed symmetric choice is integrated into a fixed-point operator
where in each step (or iteration) a choice is made.
In this article, we require that the choice-set is an orbit with respect to all previous steps in the fixed-point computations (formally, we only consider automorphisms that fix all intermediate steps so far).
In the definition of (specified) symmetric choice in~\cite{GireHoang98, DawarRicherby03}
only the current step of the fixed-point iteration has to be fixed by automorphisms.
Hence, defining orbits is possibly harder with the definition of this article compared to~\cite{GireHoang98, DawarRicherby03}.
On the opposite, this stronger orbit condition
allows to give the formula defining the witnessing automorphisms
access to the obtained fixed-point (and not only to the intermediate steps, for which it has to witness orbits).
In this sense, it is easier to witness orbits with the definition of witnessed symmetric choice in this article.
So the precise formal notion of witnessed symmetric choice presented in this article and the one of specified symmetric choice of Gire and Hoang may be incomparable with regards to expressivity.
Currently, no separating example or inclusion in any direction is known.
The precise definition of witnessed symmetric choice in this paper
is essential to implement Gurevich's canonization algorithm.
The logics also differ in another subtle point dealing with reducts,
which is discussed in more detail in the end of Section~\ref{sec:witnessed-symmetric-choice}.

\CPTWSC{} in~\cite{LichterSchweitzer24} is actually a three-valued logic
using, besides true and false, an error marker for non-witnessed choices.
This is needed for \CPT{} because fixed-point computations in \CPT{} do not necessarily terminate in a polynomial number of steps. Instead, computation is aborted
(and orbits cannot be witnessed).
For \IFPC{} this problem does not occur
because fixed-points are always reached within polynomially many steps.

There are more approaches integrating choices in first-order logic:
Choice operators independent of a fixed-point operator
were studied in~\cite{BlassGurevich2000, Otto2000}.
They are no candidates to capture \PTime{}
because of nondeterminism, undecidable syntax, or too high complexity.
A similar statement holds for the nondeterministic version of the fixed-point operator with choice, where choices can be made from arbitrary choice-sets and not only of orbits~\cite{DawarRicherby03Nondet}.
For a more detailed overview, we refer to~\cite{RicherbyThesis2004}.

Multipedes~\cite{GurevichShelah96} are asymmetric structures,
which are not characterized up to isomorphism in $k$-variable counting logic for every fixed number of variables $k$.
Asymmetry turns multipedes to hard instances for graph isomorphism algorithms
in the individualization-refinement framework~\cite{NeuenSchweitzer18, AndersSchweitzer21}.
The size of a multipede not identifiable in $k$-variable counting logic is large compared to $k$.
There are asymmetric graphs
with similar properties, but whose order is linear in~$k$~\cite{DawarKhan19}.
Both constructions are based on the~CFI~construction.

There is another remarkable but not directly connected coincidence to
lengths of resolution proofs.
Resolution proofs for non-isomorphism of CFI-graphs
have exponential size~\cite{Toran13}.
When adding a global symmetry rule (SRC-I),
which exploits automorphisms of the formula
(so akin to symmetric choice),
the length becomes polynomial~\cite{SchweitzerSeebach21}.
For asymmetric multipedes  the length in the SRC-I system
is still exponential~\cite{ToranWoerz23}.
But when considering the local symmetry rule (SRC-II)
exploiting local automorphisms
(so somewhat akin to symmetric choice after restricting to a substructure with an interpretation)
the length becomes polynomial again~\cite{SchweitzerSeebach21}.

\paragraph{Structure of this Article}
Section~\ref{sec:prelimiaries} reviews preliminaries including the logic \IFPC{} and logical interpretations.
Next, Section~\ref{sec:witnessed-symmetric-choice} introduces
the logics \IFPCWSC{} and \IFPCWSCI{}.
Section~\ref{sec:cfi-construction} reviews the CFI construction
and Section~\ref{sec:canonize-cfi-graphs} considers their canonization (Theorem~\ref{thm:canonize-CFI if-base}).
Section~\ref{sec:nesting-of-operators} proves that the increase in the operator nesting depth is unavoidable to canonize CFI graphs (Theorem~\ref{thm:cfi-wsci-wsci}).
Finally, Section~\ref{sec:separating-wsc-wsci} separates \IFPCWSC{} from \IFPCWSCI{} (Theorem~\ref{thm:wsc-le-wsci}).
Section~\ref{sec:discussion} concludes with a discussion of the results.

\section{Preliminaries}
\label{sec:prelimiaries}

For a number $k\in \nat$, we set $[k] := \set{1,\dots, k}$.
For some set~$N$, the $i$-th entry of a $k$-tuple $\tup{t} \in  N^k$ is denoted by~$t_i$ and its length by $|\tup{t}|=k$.
The set of all tuples of length at most~$k$ is~$N^{\leq k}$
and the set of all tuples of finite length is~$N^*$.

A \defining{relational signature} is a set of relation symbols $\set{\rel_1,\dots, \rel_\ell}$ with associated \defining{arities} $\arity{\rel_i} \in \nat$. We use letters $\sigA$ and $\sigB$ for signatures.
Let $\sig = \set{\rel_1,\dots, \rel_\ell}$ be a signature.
A~\defining{$\sig$\nobreakdash-structure} is a tuple $\StructA = (\StructVA, \rel_1^\Struct, \dots, \rel_\ell^\Struct)$  where $\rel_i^\Struct \subseteq \StructVA^{\arity{\rel_i}}$ for every $i \in [\ell]$.
The set~$\StructVA$ is called the \defining{universe} of $\StructA$ and its elements are called \defining{vertices}.
We use Fraktur letters~$\StructA$ and~$\StructB$ for relational structures (except graphs)
and denote their universes always by~$\StructVA$ and~$\StructVB$, respectively.
For vertices, we use the letters~$\vertA$,~$\vertB$, and~$\vertC$.
For $\sigB \subseteq\sig$, the \defining{reduct} $\reduct{\StructA}{\sigB}$
is the restriction of~$\StructA$ to the relations contained in~$\sigB$.
For a subset $W \subseteq \StructVA$,
we denote by $\StructA[W]$ the \defining{substructure
of~$\StructA$ induced by~$W$}.
The structure $\StructA[W]$ has universe $W$
and relations $\rel^{\StructA[W]} := \rel^\StructA \cap W^{\arity{\rel}}$ for all $\rel \in \sig$.
We sometimes also view a $k$-tuple $\tup{\vertA} \in \StructV^k$ as a set
and write $\StructA[\tup{\vertA}]$ for $\StructA[\setcond{\vertA_i}{i \in [k]}]$.
In this article we consider finite structures.

A \defining{colored graph} is an $\set{E, \spleq}$-structure $G=(V,E^G,\spleq^G)$.
The binary relation $E$ is the edge relation and the binary relation $\spleq$
is a total preorder.
Its equivalence classes are the \defining{color classes} or just \defining{colors}.
We usually just write $G=(V,E,\spleq)$ for a colored graph.
This article considers undirected graphs without loops (so $E$ is symmetric and does not contain pairs of the form $(\vertA,\vertA)$.
The \defining{neighborhood} of a vertex $\vertA \in V$ in $G$ is $\neighbors{G}{\vertA} := \setcond{\vertB \in V}{(\vertA,\vertB) \in E}$.
For a subset $W \subseteq V$ of vertices of~$G$, the \defining{subgraph of $G$ induced by $W$} is $G[W]$.
The graph $G$ is \defining{$k$-connected}
if $|V| > k$ and, for every $V' \subseteq V$ of size at most $k-1$,
the graph $G \setminus V'$ obtained from~$G$ by deleting all vertices in~$V'$ is connected.
The \defining{treewidth} of a graph measures how close a graph is to being a tree (see e.g.~\cite{DawarRicherby07}). We omit a formal definition here and only use the following fact:
If a graph~$G$ is a \defining{minor} of~$H$,
so~$G$ can be obtained from~$H$ by deleting vertices, deleting edges, and contracting edges,
then the treewidth of~$G$ is at most the treewidth of~$H$.

For two $\sig$-structures $\StructA$ and $\StructB$,
an \defining{isomorphism} $\autoA \colon \StructA \to \StructB$
is a bijection $\StructVA \to \StructVB$ 
such that~$\tup{\vertA} \in \rel^\StructA$
if and only if $\autoA(\tup{\vertA})=\left(\autoA(\vertA_1),\dots,\autoA(\vertA_{\arity{\rel}}\right) \in \rel^\StructB$
for all~$\rel \in \sig$ and all~$\tup{\vertA} \in \StructVA^{\arity{\rel}}$.
For $\tup{\vertA}\in \StructVA^k$ and $\tup{\vertB} \in \StructVB^k$,
the structures $(\StructA, \tup{\vertA})$ and $(\StructB, \tup{\vertB})$
are isomorphic, denoted ${(\StructA, \tup{\vertA})\iso (\StructB, \tup{\vertB})}$,
if there is an isomorphism $\autoA\colon\StructA\to\StructB$ satisfying $\autoA(\tup{\vertA}) = \tup{\vertB}$.
An \defining{automorphism}~$\autoA$ of $(\StructA, \tup{\vertA})$ is an isomorphism $(\StructA, \tup{\vertA})\to (\StructA, \tup{\vertA})$.
We say that~$\autoA$ \defining{fixes}~$\tup{\vertA}$
and write $\autGroup{(\StructA, \tup{\vertA})}$
for the set of all automorphisms fixing $\tup{\vertA}$.
We will use the same notation also for other objects than tuples,
e.g.,~for automorphisms fixing relations.
A \defining{$k$-orbit} of $(\StructA, \tup{\vertA})$ is a maximal set of $k$-tuples $O \subseteq \StructVA^k$
such that for every $\tup{\vertB},\tup{\vertC} \in O$
there is an automorphism~$\autoA\in \autGroup{(\StructA, \tup{\vertA})}$
satisfying $\autoA(\tup{\vertB})  = (\autoA(\vertB_1),\dots,\autoA(\vertB_k))= \tup{\vertC}$.

\paragraph{Fixed-Point Logic with Counting}

\newcommand{\eval}[2]{#2^{#1}}
We recall fixed-point logic with counting \IFPC{} (proposed in~\cite{Immerman87b}, also see~\cite{Otto1997}).
Let $\sig$ be a signature
and $\Struct = (\StructV, \rel_1^\Struct, \dots, \rel_\ell^\Struct)$ be a $\sig$\nobreakdash-structure.
We extend $\sig$ and $\Struct$ by a numeric sort for natural numbers.
Define $\countingExt{\sig} := \sig \disunion \{\cdot, +, 0,1\}$
and
$\countingExt{\Struct} := (\StructV, \rel_1^\Struct, \dots, \rel_\ell^\Struct, \nat, \cdot, +, 0, 1)$
to be the two-sorted $\countingExt{\sig}$\nobreakdash-structure
that is the disjoint union of $\Struct$ and $\nat$.

$\IFPC{}[\sig]$ is a two-sorted logic using the signature $\countingExt{\sig}$.
\defining{Element variables} range over the vertices and \defining{numeric variables} range over the natural numbers.
For element variables we use the letters~$\uniVarA$, $\uniVarB$, and $\uniVarC$,
for numeric variables the Greek letters~$\numVarA$ and~$\numVarB$,
and for numeric terms the letters~$\termA$ and~$\termB$.
\IFPC{}-formulas are built from first-order formulas and a fixed-point operator.
\IFPC{}-terms (or numeric terms) are build from counting terms, the constants $0$ and $1$, and the numeric function symbols $\cdot$ and $+$.
A term or formula is closed if it has no free variables.
When quantifying over numeric variables,
their range needs to be bounded to ensure \PTime{}-evaluation:
For an \IFPC{}-formula~$\formA$,
a closed numeric \IFPC{}-term~$\termA$,
a numeric variable~$\numVarA$ possibly free in~$\formA$,
and a quantifier $Q \in\set{\forall, \exists}$,  the formula
\[Q \numVarA \leq \termA.\qspace\formA\]
is an \IFPC{}-formula.
An \defining{inflationary fixed-point operator} defines a relation $\rel$.
We allow $\rel$ to mix vertices with numbers.
For an $\IFPC{}[\sig,\rel]$-formula $\formA$
and variables $\uniVarVA\numVarVB$ possibly free in $\formA$,
the fixed-point operator
\[\left[\ifp \rel \uniVarVA\numVarVB \leq \tup{\termA}.\qspace\formA \right](\uniVarVA\numVarVB)\]
is an $\IFPC{}[\sig]$-formula. Here, $\tup{\termA}$ is a  $|\numVarVB|$-tuple of closed numeric terms which bounds the values of $\numVarVB$
similar to the case of a quantifier.
The crucial element of \IFPC{} is \defining{counting terms}.
They count the number of tuples satisfying a formula.
Let $\formA$ be an \IFPC{}-formula
with possibly free variables~$\uniVarVA$ and~$\numVarVA$
and let $\tup{\termA}$ be a $|\numVarVA|$-tuple of closed  numeric \IFPC{}-terms. Then
\[ \#\uniVarVA\numVarVA \leq \tup{\termA}.\qspace\formA \]
is a numeric \IFPC{}-term.

\IFPC{}-formulas (or terms) are evaluated over $\countingExt{\Struct}$.
For a numeric term $\termA(\uniVarVA\numVarVA)$, we denote by
$\eval{\Struct}{\termA} \colon \StructV^{|\uniVarVA|} \times \nat^{|\numVarVA|} \to \nat$
the function mapping the values for the free variables of $\termA$
to the value that $\termA$ takes in $\countingExt{\Struct}$.
Likewise, for a formula $\formA(\uniVarVA\numVarVA)$,
we write $\eval{\Struct}{\formA} \subseteq \StructV^{|\uniVarVA|} \times \nat^{|\numVarVA|}$ for the set of values for the free variables
satisfying $\formA$.
A formula $Qv \leq s.\qspace \formA$ as above is satisfied if there is a number at most the upper bound defined by $s$ satisfying $\formA$.
Formally, let the free universe variables of $\formA$ be $\uniVarVA$
and its free numeric variables be $\numVarVB\numVarA$.
Then
\[\eval{\Struct}{(Qv \leq s.\qspace \formA)} := \setcond*{\tup{\vertA}\tup{n} \in A^{|\uniVarVA|} \times \nat^{|\numVarVB|} } { \tup{\vertA}\tup{n}m \in \eval{\Struct}{\formA}\text{ for some/all } m \in \set{0,\dots, s^\StructA}},\]
where it depends on $Q \in \set{\exists,\forall}$ whether we consider some or all such $m$.
Let $\formA(\uniVarVB\uniVarVA\numVarVB\numVarVA)$ be an $\IFPC{}[\sig,R]$-formula and 
$\tup{\termA}$ be a $|\numVarVA|$-tuple  of closed numeric terms.
Evaluation of the inflationary fixed-point $\left[\ifp \rel \uniVarVA\numVarVA \leq \tup{\termA}.\qspace\formA \right]\uniVarVA\numVarVA$
starts with $\rel$ being empty.
Iteratively all tuples satisfying $\formA$ are added to $\rel$ until this process stabilizes.
The numeric part of these tuples is limited by the bounds defined by $\tup{\termA}$.
Formally, for values of the free variables
$\tup{\vertA} \in \StructV^{|\uniVarVB|}$ and $\tup{m} \in \nat^{|\numVarVB|}$
we inductively define a series of relations~$\rel_i^\Struct$
called stages via
\begin{align*}
	\rel^\Struct_0 &:= \emptyset,\\
	\rel^\Struct_{i+1} &:= \rel_i^\Struct \cup 
	\setcond*{\tup{\vertC}\tup{n} \in \StructV^{|\uniVarVA|} \times \nat^{|\numVarVA|}}{n_i \leq \termA_i^\StructA \text{ for all } i \in [|\numVarVA|] , \tup{\vertA}\tup{\vertC}\tup{m}\tup{n} \in \eval{(\Struct, \rel_i^\Struct)}{\formA}},
\end{align*}
where $(\Struct, \rel_i^\Struct)$ denotes the $(\sig \cup \set{\rel})$-structure
obtained from extending~$\Struct$ with~$\rel_i^\Struct$.
By definition, $\rel_i^\Struct \subseteq \rel_{i+1}^\Struct \subseteq \StructV^{|\uniVarVA|} \times \set{0,...,\termA_1^\Struct} \times \cdots \times \set{0,...,\termA_{|\numVarVA|}^\Struct}$ for every $i \in \nat$.
Because \IFPC{}-terms always evaluate to a number polynomial in the size of the input structure,
the series stabilizes after a polynomial number of steps, i.e, $\rel_\ell^\Struct = \rel_{\ell+1}^\Struct =: \rel^\Struct_{\tup{\vertA}\tup{m}}$ for some $\ell \in \nat$.
The fixed-point operator evaluates as follows:
\[ \eval{\Struct}{\big(\left[\ifp \rel \uniVarVA\numVarVA \leq \tup{\termA}.\qspace\formA \right](\uniVarVA\numVarVA)\big)} := \setcond*{\tup{\vertA}\tup{\vertB}\tup{m}\tup{n}}{ \tup{\vertB}\tup{n} \in  \rel_{\tup{\vertA} \tup{m}}^\Struct}.\]
Counting terms evaluate to the number of satisfying tuples.
For a formula $\formA(\uniVarVB\uniVarVA\numVarVB\numVarVA)$
and a $|\numVarVA|$-tuple $\tup{\termA}$ of closed numeric terms,
the evaluation of a counting term  is defined as follows:
\[\eval{\Struct}{(\#\uniVarVA\numVarVA \leq \tup{\termA}.\qspace\formA)}(\tup{\vertA}\tup{m}) := \left|\setcond*{\tup{\vertC}\tup{n} \in \StructV^{|\uniVarVA|} \times \nat^{|\numVarVA|}}{n_i \leq \termA_i^\StructA \text{ for all } i \in [|\numVarVA|] , \tup{\vertA}\tup{\vertC}\tup{m}\tup{n} \in \eval{\Struct}{\formA}}\right|.\]

\paragraph{Finite Variable Counting Logic}
The $k$-variable first-order logic with counting $\Ck{k}$ 
extends the $k$-variable fragment of first-order logic (\FO{}) with counting quantifiers $\exists^{\geq j} \uniVarA.\qspace\formA$
stating that at least $j$ distinct vertices satisfy $\formA$ (see~\cite{Otto1997}).
Bounded variable logics with counting are a useful tool to prove
\IFPC{}-undefinability.
For every $n \in \nat$,
every \IFPC{}-formula using~$k$ variables
is equivalent on structures of order up to~$n$
to a $\Ck{\mathcal{O}(k)}$-formula.

Let~$\StructA$ and~$\StructB$ be two $\sig$-structures and $\tup{\vertA} \in \StructVA^\ell$ and $\tup{\vertB}\in\StructVB^\ell$.
We say that a logic~$L$ \defining{distinguishes} $(\StructA,\tup{\vertA})$
from $(\StructB,\tup{\vertB})$ if there is an $L$-formula~$\formA$ with~$\ell$ free variables such that ${\tup{\vertA} \in \formA^\StructA}$ and ${\tup{\vertB}\notin \formA^\StructB}$. 
Otherwise, the structures are called \defining{$L$-equivalent}.
We write $(\StructA,\tup{\vertA})\kequiv{k} (\StructB,\tup{\vertB})$
if $(\StructA,\tup{\vertA})$ and $(\StructB,\tup{\vertB})$ are $\Ck{k}$-equivalent.
The logics~$\Ck{k}$ are used to prove \IFPC{}-undefinability as follows:
Let $(\StructA_k, \StructB_k)$ be a sequence of finite structures for every $k \in \nat$
such that~$\StructA_k$ has a property~$P$ but~$\StructB_k$ does not.
If $\StructA_k \kequiv{k} \StructB_k$ for every~$k$, then \IFPC{} does not define~$P$.

The logic $\Ck{k}$ can be characterized by an Ehrenfeucht-Fraïssé-like pebble game -- the \defining{bijective $k$-pebble game}~\cite{Hella96}.
The game is played on two structures~$\StructA$ and~$\StructB$
by two players called Spoiler and Duplicator.
There are~$k$ pebble pairs $(p_i,q_i)$, one for each $i \in [k]$.
Positions in the game are tuples $(\StructA, \tup{\vertA}; \StructB, \tup{\vertB})$
for tuples $\tup{\vertA}\in \StructVA^{\leq k}$ and
$\tup{\vertB} \in \StructVB^{\leq k}$ of the same length.
For every $i \in |\tup{\vertA}|$,
a pebble~$p_j$ is placed on the atom~$\vertA_i$
and the pebble~$q_j$ is placed on~$\vertB_i$ from some $j \in[k]$.
It will not matter which pebble pair $(p_j,q_j)$ is used for the $i$-th entries
of~$\tup{\vertA}$ and~$\tup{\vertB}$.
The game proceeds as follows.
If $|\StructVA| \neq |\StructVB|$, then Spoiler wins.
Otherwise,
Spoiler picks up a pair of pebbles $(p_i, q_i)$ (may it be already placed on the structures are not).
Duplicator answers with a bijection $\gamebij \colon \StructVA \to \StructVB$.
Spoiler places the pebble~$p_i$ on a vertex $\vertC \in \StructVA$
and~$q_i$ on $\gamebij(\vertC) \in \StructVB$.
If in the resulting position $(\StructA, \tup{\vertA}; \StructB, \tup{\vertB})$
there is no \defining{pebble-respecting} local isomorphism,
that is, the map defined via $\vertA_i \mapsto \vertB_i$
is not an isomorphism $(\StructA[\tup{\vertA}],\tup{\vertA}) \to (\StructB[\tup{\vertB}],\tup{\vertB})$,
then Spoiler wins.
Otherwise, the game continues with the next round.
Duplicator wins if Spoiler never wins.
We say that Spoiler or Duplicator has a \defining{winning strategy} in
position $(\StructA, \tup{\vertA}; \StructB, \tup{\vertB})$
if Spoiler or Duplicator, respectively, can always win the game  regardless of the moves of the other player.

For all finite $\sig$-structures $\StructA$ and $\StructB$
and all tuples $\tup{\vertA} \in \StructVA^{\leq k}$ and $\tup{\vertB} \in \StructVB^{\leq k}$,
Spoiler has a winning strategy in the bijective $k$-pebble game
in position $(\StructA, \tup{\vertA}; \StructB, \tup{\vertB})$
if and only if  $(\StructA,\tup{\vertA})\not\kequiv{k} (\StructB,\tup{\vertB})$~\cite{Hella96}.

\paragraph{Logical Interpretations}
A logical interpretation is the logical correspondence to an algorithmic reduction.
It transforms a relational structure to another one.
An $\IFPC[\sigA,\sigB]$-interpretation
defines a partial map from $\sigA$-structures to $\sigB$-structures,
where the map is defined in terms of \IFPC{}-formulas
operating on tuples of the input $\sigA$\nobreakdash-structure.
In the case of \IFPC{}, these tuples not only contain vertices
but also numbers.
For the sake of readability,
we use in the following~$\tup{\uniVarA}$,~$\tup{\uniVarB}$, and~$\tup{\uniVarC}$~for a tuple of both element and numeric variables
and~$\tup{\vertA}$ and~$\tup{\vertB}$ for a tuple of both vertices and numbers.

Let $\sigA$ and $\sigB=\set{\rel_1,\dots, \rel_\ell}$ be relational signatures.
A \defining{$d$-dimensional $\IFPC[\sigA,\sigB]$-interpretation}  $\interpret(\tup{\uniVarC})$
with parameters $\tup{\uniVarC}$
is a tuple
\[\interpret(\tup{\uniVarC}) = \left(\formA_{\text{dom}}(\tup{\uniVarC}\tup{\uniVarA}),
 \formA_{\cong}(\tup{\uniVarC}\tup{\uniVarA}\tup{\uniVarB}),
 \formA_{\rel_1}(\tup{\uniVarC}\tup{\uniVarA}_1\cdots \tup{\uniVarA}_{\arity{\rel_1}}), \dots,
 \formA_{\rel_\ell}(\tup{\uniVarC}\tup{\uniVarA}_1\cdots \tup{\uniVarA}_{\arity{\rel_\ell}}), \tup{\termA}\right)
\]
of $\IFPC{}[\sig]$-formulas
and a $j$-tuple $\tup{\termA}$ of closed numeric $\IFPC{}[\sig]$-terms,
where $j$ is the number of numeric variables in $\tup{\uniVarC}$.
The tuples of variables $\tup{\uniVarA}$, $\tup{\uniVarB}$, and all the $\tup{\uniVarA}_i$ are of length $d$ and agree on whether the $k$-th variable is an element or numeric variable.
Let $\StructA$ be a $\sig$-structure
and $\tup{\vertA} \in (\StructVA \cup \nat)^{|\tup{\uniVarC}|}$ match the types of the parameter variables~$\tup{\uniVarC}$ (element or numeric).
We now define $\interpret(\StructA, \tup{\vertA})$.
Assume that up to reordering the first $j$ variables in $\tup{\uniVarC}$ are numeric variables
and set $D := \set{0,..., \termA_1^\Struct}\times\dots\times\set{0,..., \termA_j^\Struct} \times \StructVA^{d-j}$.
We define a $\sigB$-structure $\StructB = (\StructVB, \rel_1^\StructB, \dots, \rel_\ell^\StructB)$ via
\begin{align*}
	\StructVB&:= \setcond*{\tup{\vertB}\in D}{\tup{\vertA}\tup{\vertB} \in \formA_{\text{dom}}^\Struct}, \\
	\rel_i^\StructB &:= \setcond*{(\tup{\vertB}_1,\dots,\tup{\vertB}_{\arity{\rel_i}})\in \StructVB^{\arity{\rel_i}}}{\tup{\vertA}\tup{\vertB}_1\dots\tup{\vertB}_{\arity{\rel_i}}\in \formA_{\rel_i}^\Struct} &\text{for all } i \in [\ell].
	\intertext{Finally, using the relation $E$,
		we define the image of the interpretation:}
	E &:= \setcond*{(\tup{\vertB}_1,\tup{\vertB}_2) \in \StructVB^2}{\tup{\vertA}\tup{\vertB}_1\tup{\vertB}_2 \in \formA_{\cong}^\Struct},\\
	\interpret(\StructA,\tup{\vertA}) &:= \begin{cases}
		\StructB / E & \text{if } E \text{ is a congruence relation on } \StructB,\\
		\text{undefined}& \text{otherwise.}
	\end{cases}
\end{align*}
Here $\StructB/E$ is the quotient of $\StructB$ by $E$,
its vertices are the equivalence classes of $E$.
Intuitively, $\StructB/E$ is obtained from $\StructB$ by contracting every equivalence class of $E$ to a single vertex.
The congruence condition ensures that the $\sigB$-relations are well-defined on the quotient, that is, independent of the choice of representatives of the classes.
An interpretation is called \defining{equivalence-free}
if $\formA_{\cong}(\tup{\uniVarC}\tup{\uniVarA}\tup{\uniVarB})$
is the formula $\tup{\uniVarA} = \tup{\uniVarB}$.

When we consider a logic $L$ that is an extension of \IFPC{},
then the notion of an $L[\sigA,\sigB]$-interpretation
is defined exactly in the same way
by replacing \IFPC{}-formulas or terms with $L$-formulas or terms, respectively.
For logics, which do not posses numeric variables like \FO{} or \IFP{},
the notion of an interpretation is similar and just omits the numeric part,
i.e., there is no numeric term $\termA$ bounding the range of numeric variables.

A property $P$ of $\sig$-structures is \defining{$L$-reducible} to 
a property $Q$ of $\sigB$-structures
if there is an $L[\sig, \sigB]$-interpretation $\interpret$
such that, for every $\sig$-structure
$\Struct$, it holds that $\Struct \in P$ if and only if $\interpret(\Struct) \in Q$.
A logic $L'$ is \defining{closed under $L$-interpretations} (or $L$-reductions)
if for every property~$P$ that is $L$-reducible to an $L'$-definable property~$Q$, the property~$P$ itself is $L'$-definable (cf.~\cite{ebbinghaus1985,Otto1997}).
We say that $L'$ is closed under interpretations
if $L'$ is closed under $L'$-interpretations.

\section{Witnessed Symmetric Choice}
\label{sec:witnessed-symmetric-choice}

We extend \IFPC{} with an inflationary fixed-point operator with witnessed symmetric choice.
Let~$\sig$ be a relational signature
and~$R$,~$R^*$, and~$S$ be new relation symbols not contained in~$\sig$
of arities $\arity{R}=\arity{R^*}$ and $\arity{S}$.
The letter $p$ is used for an element parameter in this section.
We define the WSC-fixed-point operator with parameters $\tup{p}\tup{\numVarA}$.
If
\begin{itemize}
	\item 
	$\formStep(\tup{p}\tup{x}\tup{\numVarA})$
	is an \ifpcwscsig{\sig, R, S}-formula such that $|\tup{x}| = \arity{R}$, 
	\item
	$\formChoice(\tup{p}\tup{y}\tup{\numVarA})$ is an \ifpcwscsig{\sig, R}-formula
	such that $|\tup{y}| = \arity{S}$,
	\item
	$\formWit(\tup{p}\tup{y}\tup{y}'z_1z_2\tup{\numVarA})$ is an \ifpcwscsig{\sig  , R, R^*}-formula where $|\tup{y}|=|\tup{y}'|=\arity{S}$, and
	\item
	$\formOut(\tup{p}\tup{\numVarA})$ is an \ifpcwscsig{\sig, R^*}-formula,
\end{itemize}
then
\[\formA(\tup{p}\tup{\numVarA}) = \ifpwscbig{R,\tup{x}}{R^*}{S,\tup{y}, \tup{y}'}{z_1z_2}{\formStep(\tup{p}\tup{x}\tup{\numVarA})}{\formChoice(\tup{p}\tup{y}\tup{\numVarA})}{\formWit(\tup{p}\tup{y}\tup{y}'z_1z_2\tup{\numVarA})}{\formOut(\tup{p}\tup{\numVarA})}\]
is an \ifpcwscsig{\sig}-formula.
The formulas~$\formStep$,~$\formChoice$,~$\formWit$, and~$\formOut$ are
called \defining{step formula}, \defining{choice formula}, \defining{witnessing formula}, and \defining{output formula}, respectively.
The free variables of~$\formA$ are 
the ones of~$\formStep$ apart from~$\tup{x}$,
the ones of~$\formChoice$ apart from~$\tup{y}$,
the ones of~$\formWit$ apart from~$\tup{y}$,~$\tup{y}'$,~$z_1$ and~$z_2$, and
the ones of~$\formOut$.
That is,~the variables~$\tup{x}$ are bound in~$\formStep$,~the variables~$\tup{y}$ are bound in~$\formChoice$ and~$\formWit$,
and~the variables $\tup{y}'$,~$z_1$, and~$z_2$ are bound in~$\formWit$.
Note that only element variables are used for defining the fixed-point
in the WSC-fixed-point operator.
This suffices for our purpose in this article and 
increases readability.
We expect that our arguments also work with numeric variables in the fixed-point.
For the sake of readability, we will now omit the free numeric variables~$\tup{\numVarA}$
when defining the semantics of the WSC-fixed-point operator.
Fixing numeric parameters does not change orbits or automorphisms.

\paragraph{Evaluation with Choices }
Intuitively, the WSC-fixed-point operator~$\formA$ is evaluated as follows:
Let $\StructA$ be a $\sig$\nobreakdash-structure
and $\tup{\vertA} \in \StructVA^{|\tup{p}|}$ be a tuple of parameters.
We define a sequence of relations called \defining{stages} $\emptyset =: R_1^\Struct \subset \dots \subset  R_\ell^\Struct = R_{\ell+1}^\Struct =: (R^*)^\Struct$.
Given the relation $R_i^\Struct$,
the choice formula defines the choice-set $T_{i+1}^\Struct$
of the tuples $\tup{\vertB}$ satisfying $\formChoice$, i.e.,
\[T_{i+1}^\Struct := \setcond*{\tup{\vertB}}{\tup{\vertA}\tup{\vertB} \in \formChoice^{\Struct, \smash{R_i^\StructA}}}.\]
We pick an arbitrary tuple $\tup{\vertB} \in T_{i+1}^\Struct$
and set $S_{i+1}^\Struct := \set{\tup{\vertB}}$
or $S_{i+1}^\Struct := \emptyset$ if no such $\tup{\vertB}$ exists.
The step formula
is used on the structure $(\StructA, R_i^\Struct, S_{i+1}^\Struct)$ to define the next stage in the fixed-point iteration:
\[R_{i+1}^\Struct :=  R_{i}^\Struct \cup \setcond[\Big]{\tup{\vertC}}{\tup{\vertA}\tup{\vertC} \in \formStep^{(\StructA, R_i^\Struct, S_{i+1}^\Struct)}}.\]
We proceed in that way until a fixed-point $(R^*)^\Struct$ is reached.
Because we define an inflationary fixed-point, it is guaranteed to exist.
This fixed-point is in general ill-defined
because it depends on the choices made.

We ensure that $\formA$ is still isomorphism-invariant as follows:
First, we only allow choices from orbits, which the witnessing formula has to certify.
A set $N \subseteq \autGroup{(\StructA, \tup{\vertA})}$ \defining{witnesses a relation $R \subseteq \StructVA^k$ as $(\StructA, \tup{\vertA})$-orbit},
if for every $\tup{\vertB}, \tup{\vertB}' \in R$,
there is a $\autoA \in N$
satisfying $\tup{\vertB} = \autoA(\tup{\vertB}')$.
Because we need the notion of witnessing orbits only for isomorphism-invariant sets, we do not need to check whether $R$ is a proper subset of an orbit.
We require that $\formWit$ defines a set of automorphisms.
Intuitively, for $\tup{\vertB}, \tup{\vertB}' \in T_{i+1}^\Struct$, a map $\autoA_{\tup{\vertB}, \tup{\vertB}'}$ is defined via
\[ \vertC \mapsto \vertC' \text{ whenever } \tup{\vertA}\tup{\vertB}\tup{\vertB}'\vertC\vertC' \in \formWit^{(\Struct, R_i^\Struct, (R^*)^\Struct)}.\]
The set of all these maps for all $\tup{\vertB}, \tup{\vertB}' \in T_{i+1}^\Struct$
has to witness $T_{i+1}^\Struct$ as $(\Struct, \tup{\vertA}, R_1^\Struct, \dots, R_i^\Struct)$-orbit.
Note here, that the witnessing formula always has access to the fixed-point.
Actually, we do not require that $\autoA_{\tup{\vertB}, \tup{\vertB}'}$ maps $\tup{\vertB}$ to $\tup{\vertB}'$ but only the set of all $\autoA_{\tup{\vertB}, \tup{\vertB}'}$ has to witness the orbit.
If some choice-set is not witnessed, $\formA$ is not satisfied by $\tup{\vertA}$.
Otherwise, the output formula is evaluated on the defined fixed-point: \[\formA^\Struct :=\formOut^{(\Struct, (R^*)^\Struct)}.\]
Because all choices are witnessed, all possible fixed-points (for different choices)
are related by an automorphism of $(\Struct, \tup{\vertA})$
and thus either all of them or none of them satisfy the output formula.

\paragraph{An Example}
We give an illustrating example (by adapting an example for \CPTWSC{} in~\cite{LichterSchweitzer24}).
We show that the class of threshold graphs (i.e, graphs that can be reduced to the empty graph by iteratively deleting universal or isolated vertices)
is \IFPCWSC{}-definable.
Threshold graphs are actually \IFP{}-definable, but our formula below illustrates the WSC-fixed-point operator.
The set of all isolated or universal vertices of a graph $G$ forms a $1$-orbit
(note that there cannot be an isolated and a universal vertex at the same time).
We choose one vertex of this orbit, collect the chosen vertex in a unary relation~$R$, and repeat pretending that the vertices in $R$ are deleted.
If all vertices are contained in the obtained fixed-point~$R^*$,
then the graph $G$ is a threshold graph.
The choice formula~$\formChoice$ defines the set of all isolated or universal vertices in~$G - R$.
The step formula~$\formStep$ adds the chosen vertex, which is the only vertex in the relation~$S$, to~$R$.
The output formula $\formOut$ checks whether~$R^*$ contains all vertices
and so defines whether it was possible to delete all vertices:
\begin{align*}
	\formChoice(y) &:= \neg R(y) \land \Big(\big(\forall z.\qspace\neg R(z) \Rightarrow E(y,z)\big) \lor \big(\forall z.\qspace\neg R(z) \Rightarrow \neg  E(y,z)\big)\Big),\\
	\formStep(x) &:= R(x) \lor S(x) , \\
	\formOut &:= \forall x.\qspace R^*(x).
\end{align*}
Witnessing orbits is easy: To show that two isolated (or universal, respectively) vertices $y$ and $y'$ are related by an automorphism, it suffices to define their transposition
as follows:
\begin{align*}
	\formWit(y,y', z_1,z_2) &:= 
		(z_1 = y \land z_2 = y') \lor  
		(z_2 = y \land z_1 = y') \lor 
		(y \neq z_1 = z_2 \neq y').
\end{align*}
To the end, the formula
$\ifpwsc{R,x}{R^*}{S,y, y'}{z_1z_2}{\formStep}{\formChoice}{\formWit}{\formOut}$
defines the class of threshold graphs.

\paragraph{Formal Semantics}
To define the semantics of the WSC-fixed-point operator formally,
we use the $\wscStarSym$-operator defined in~\cite{LichterSchweitzer24}.
The $\wscStarSym$-operator captures the idea of fixed-point iterations with choices from orbits for arbitrary isomorphism-invariant functions.
Let $\StructA$ be a $\sig$-structure and $\tup{\vertA} \in \StructVA^k$.
We denote by $\HF{\StructVA}$ the set of all hereditary finite sets over $\StructVA$.
The $\wscStarSym$\nobreakdash-operator defines for isomorphism-invariant functions
${\stepF^{\Struct, \tup{\vertA}},\autF^{\StructA, \tup{\vertA}} \colon \HF{\StructVA} \times \HF{\StructVA} \to \HF{\StructVA}}$ 
and $\choiceF^{\StructA, \tup{\vertA}} \colon \HF{\StructVA} \to \HF{\StructVA}$,
the set
\[W=\wsc{\stepF^{\Struct, \tup{\vertA}}}{\autF^{\StructA, \tup{\vertA}}}{\choiceF^{\StructA, \tup{\vertA}}}\]
of all $\HF{\StructVA}$-sets obtained in the following way.
Starting with $b_0 := \emptyset$,
define a sequence of sets as follows:
Given $b_i$,
define the choice-set $c_i := \choiceF^{\StructA, \tup{\vertA}}(b_i)$,
pick an arbitrary $d_i \in c_i$ (or $d=\emptyset$ if $c_i = \emptyset$),
and set 
$b_{i+1} := \stepF^{\StructA, \tup{\vertA}}(b_i, d_i)$.
Let $b^* := b_\ell$ for the smallest
$\ell$ satisfying $b_\ell = b_{\ell+1}$ (if it exists, which in our case of inflationary fixed-points is always the case).
Then we include $b^*$ in $W$ if, for every $i \in [\ell]$,
the set $\autF^{\StructA, \tup{\vertA}}(b_i, b^*)$
is a set of automorphisms
witnessing $c_i$ as $(\StructA, \tup{\vertA}, b_1, \dots, b_i)$-orbit.
Of course $b^*$ is not unique and depends on the choices of the $d_i$
and thus $W$ is not necessarily  a singleton set.
It is proven in~\cite{LichterSchweitzer24}
that $\autF^{\StructA, \tup{\vertA}}$
witnesses all choice-sets~$c_i$ 
for either every possible~$b^*$ obtained in the former way
or for none of them.
In particular,~$W$ is an $(\StructA, \tup{\vertA})$-orbit.
Formally, the set $W$ is defined using trees capturing all possible choices.
For more details, we refer to~\cite{LichterSchweitzer24}.
We now define the semantics of the WSC+fixed-point operator
\[\formA(\tup{p}) = \ifpwscbig{R,\tup{x}}{R^*}{S,\tup{y}, \tup{y}'}{z_1z_2}{\formStep(\tup{p}\tup{x})}{\formChoice(\tup{p}\tup{y})}{\formWit(\tup{p}\tup{y}\tup{y}'z_1z_2)}{\formOut(\tup{p})}\]
using tuples (implicitly encoded as $\HF{\StructV}$-sets).
Let $\tup{\vertA} \in \StructVA^{|\tup{p}|}$.
We set
\begin{align*}
	\stepF^{\Struct, \tup{\vertA}}(R^\Struct,S^\Struct) &:= 
	R^\Struct\cup \setcond*{\tup{\vertC}}{
		\tup{\vertA}\tup{\vertC} \in \formStep^{(\StructA, R^\Struct, S^\Struct)}},
\\
	\choiceF^{\Struct, \tup{\vertA}}(R^\Struct) &:= 
	\setcond*{\tup{\vertB}}{\tup{\vertA}\tup{\vertB} \in \formChoice^{(\StructA,R^\Struct)}}, \text{ and}\\
	\autF^{\Struct, \tup{\vertA}}(R^\Struct,(R^*)^\Struct) &:=
	 \setcond*{ \autoA_{\tup{\vertB}\tup{\vertB}'}}{\tup{\vertB},\tup{\vertB}' \in \choiceF^{\Struct, \tup{\vertA}}(R^\Struct)} \text{ where}   \\
	 	&\hiddenEqq \quad\autoA_{\tup{\vertB}\tup{\vertB}'} = \setcond*{(\vertC,\vertC') \in \StructVA^2}{\tup{\vertA}\tup{\vertB}\tup{\vertB}'\vertC\vertC' \in \formWit^{(\StructA, R^\Struct, (R^*)^\Struct)}} .
\end{align*}
The function $\stepF^{\Struct, \tup{\vertA}}(R^\Struct,S^\Struct)$
evaluates the step formula $\formStep$ and adds its output to~$R^\Struct$,
which eventually defines the inflationary fixed-point.
The function $\choiceF^{\Struct, \tup{\vertA}}(R^\Struct)$
defines the choice-set by evaluating the choice formula $\formChoice$.
Finally,
the function $\autF^{\Struct, \tup{\vertA}}(R^\Struct,(R^*)^\Struct)$ defines a set of automorphisms
by evaluating the witnessing formula $\formWit$ for all tuples
in the current relation.
Note that the set can formally contain arbitrary binary relations,
but if they are not automorphisms, the choice will not be witnessed.
Now set $W_\formA^{(\StructA,\tup{\vertA})} := \wsc{\stepF^{\Struct, \tup{\vertA}}}{\choiceF^{\Struct, \tup{\vertA}}}{\autF^{\Struct, \tup{\vertA}}}$.

We define the semantics of the WSC-fixed-point operator~$\formA$ as follows.
For a $\sig$-structure~$\StructA$ we define
\begin{equation}
	\label{eqn:wsc-semantics}
	\formA^\Struct := \setcond*{\tup{\vertA}}{\tup{\vertA} \in \formOut^{(\Struct, (R^*)^\Struct)} \text{ for some } (R^*)^\Struct \in W_\formA^{(\StructA,\tup{\vertA})}}. \tag{WSC}
\end{equation}
Note that if $W_\formA^{(\StructA,\tup{\vertA})} =\emptyset$, that is, not all choices could be witnessed,
then we have $\tup{\vertA} \notin \formA^\Struct$.
Also note that because $W_\formA^{(\StructA,\tup{\vertA})}$ is an $(\StructA, \tup{\vertA})$-orbit,
$\tup{\vertA} \in \formOut^{(\StructA, (R^*)^\Struct)}$
holds for either every $(R^*)^\Struct \in W_\formA^{(\StructA,\tup{\vertA})}$ or for no $(R^*)^\Struct \in W_\formA^{(\StructA,\tup{\vertA})}$.
Finally, we conclude that \IFPCWSC{} is isomorphism-invariant:

\begin{lem}
	For every \ifpcwscsig{\sig}-formula $\formA$
	and every $\sig$-structure $\Struct$,
	the set $\formA^\Struct$ is
	a union of $\StructA$-orbits.
\end{lem}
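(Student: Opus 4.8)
The plan is to prove the lemma by structural induction, stated uniformly for \ifpcwscsig{\sigB}-formulas and numeric \ifpcwscsig{\sigB}-terms over arbitrary signatures $\sigB$: for a formula $\formA$ the set $\formA^\StructA$ is invariant under every $\auto \in \autGroup{\reduct{\StructA}{\sigOf{\formA}}}$, and for a numeric term $\termA$ the value $\eval{\StructA}{\termA}$ is constant on $(\reduct{\StructA}{\sigOf{\termA}})$-orbits. Spelled out for formulas: for every $\auto \in \autGroup{\reduct{\StructA}{\sigOf{\formA}}}$ and every tuple $\tup{\vertA}$ of the appropriate type, $\tup{\vertA} \in \formA^\StructA$ iff $\auto(\tup{\vertA}) \in \formA^\StructA$. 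Since the claim is proven for all automorphisms $\auto$ simultaneously, it suffices to establish one implication; the other follows by applying it to $\inv{\auto}$.

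For atomic and numeric-atomic formulas the claim is immediate from the definition of an automorphism, using $\sigOf{R(\tup{\uniVarA})} = \set{R}$. The cases of Boolean connectives, element and bounded numeric quantifiers, counting terms, and the ordinary inflationary fixed-point operator are the standard proof of isomorphism-invariance of \IFPC{}; the only additional care needed is the bookkeeping of $\sigOf{\cdot}$. For example, $\sigOf{\formA \land \formB} = \sigOf{\formA} \cup \sigOf{\formB}$, so every $\auto \in \autGroup{\reduct{\StructA}{\sigOf{\formA \land \formB}}}$ restricts (by forgetting relations) to an automorphism of $\reduct{\StructA}{\sigOf{\formA}}$ and of $\reduct{\StructA}{\sigOf{\formB}}$, to which the induction hypothesis applies; and for a fixed-point $[\ifp R\,\ldots.\,\formB](\ldots)$ the induction hypothesis for $\formB$ shows that each stage of the iteration is invariant under $\autGroup{\reduct{\StructA}{\sigOf{\formA}}}$, hence so is the fixed-point.

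The main case is the WSC-fixed-point operator $\formA = \ifpwscbig{R,\tup{x}}{R^*}{S,\tup{y}, \tup{y}'}{z_1z_2}{\formStep}{\formChoice}{\formWit}{\formOut}$, where the argument rests on the equivariance of the $\wscStarSym$-operator established in~\cite{LichterSchweitzer22}. Since the relation symbols from $\sigB$ occurring in $\formStep$, $\formChoice$, $\formWit$, and $\formOut$ all lie in $\sigOf{\formA}$, any $\auto \in \autGroup{\reduct{\StructA}{\sigOf{\formA}}}$, together with the assignments $R^\StructA \mapsto \auto(R^\StructA)$, $S^\StructA \mapsto \auto(S^\StructA)$, and $(R^*)^\StructA \mapsto \auto((R^*)^\StructA)$, is an isomorphism between the corresponding expansions of $\reduct{\StructA}{\sigOf{\formA}}$. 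Applying the induction hypothesis to $\formStep$, $\formChoice$, and $\formWit$ then shows that the induced function families $\stepF^{\StructA,\tup{\vertA}}$, $\choiceF^{\StructA,\tup{\vertA}}$, and $\autF^{\StructA,\tup{\vertA}}$ are isomorphism-invariant in the sense required by the $\wscStarSym$-operator. Consequently the entire construction of $W_\formA^{(\reduct{\StructA}{\sigOf{\formA}},\tup{\vertA})}$ --- iterating $\choiceF$, picking a representative, applying $\stepF$, and checking the witnessing condition at each stage --- commutes with $\auto$; for the witnessing condition one only needs that $\auto$ conjugates $\autGroup{(\reduct{\StructA}{\sigOf{\formA}},\tup{\vertA},b_1,\ldots,b_i)}$ to $\autGroup{(\reduct{\StructA}{\sigOf{\formA}},\auto(\tup{\vertA}),\auto(b_1),\ldots,\auto(b_i))}$ and hence carries a set of automorphisms witnessing a choice-set as an orbit to one witnessing its $\auto$-image. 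Thus $W_\formA^{(\reduct{\StructA}{\sigOf{\formA}},\auto(\tup{\vertA}))} = \setcond*{\auto((R^*)^\StructA)}{(R^*)^\StructA \in W_\formA^{(\reduct{\StructA}{\sigOf{\formA}},\tup{\vertA})}}$; in particular one side is empty exactly when the other is.

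Finally, if $\tup{\vertA} \in \formA^\StructA$, then $\tup{\vertA} \in \formOut^{(\reduct{\StructA}{\sigOf{\formA}},(R^*)^\StructA)}$ for some $(R^*)^\StructA$ in the set above, so by the induction hypothesis for $\formOut$ (invariance under the expansion of $\auto$ by $(R^*)^\StructA \mapsto \auto((R^*)^\StructA)$) we get $\auto(\tup{\vertA}) \in \formOut^{(\reduct{\StructA}{\sigOf{\formA}},\auto((R^*)^\StructA))}$, and $\auto((R^*)^\StructA) \in W_\formA^{(\reduct{\StructA}{\sigOf{\formA}},\auto(\tup{\vertA}))}$, whence $\auto(\tup{\vertA}) \in \formA^\StructA$. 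This closes the induction. I expect the main obstacle in carrying this out to be the $\sigOf{\cdot}$-bookkeeping: verifying that expanding an automorphism of $\reduct{\StructA}{\sigOf{\formA}}$ by the (conjugated) auxiliary relations yields a genuine isomorphism of the structures on which $\formStep$, $\formChoice$, $\formWit$, and $\formOut$ are evaluated, so that the induction hypothesis is applicable; granting that, the equivariance of the $\wscStarSym$-operator from~\cite{LichterSchweitzer22} does the rest.
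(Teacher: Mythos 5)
Your proposal is correct and is essentially an expanded version of the paper's own (one-sentence) proof: the paper likewise argues by induction on the formula, with the WSC-fixed-point case resting on the fact from~\cite{LichterSchweitzer22} that the set $W_\formA^{(\reduct{\StructA}{\sigOf{\formA}},\tup{\vertA})}$ behaves equivariantly under automorphisms, i.e.\ is an orbit. The remaining cases are the standard isomorphism-invariance argument for \IFPC{}, exactly as you describe.
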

\begin{proof}
	The proof is straight-forward by induction on the formula
	using that $W_\formA^{(\StructA,\tup{\vertA})}$ is an $(\StructA,\tup{\vertA})$-orbit.
\end{proof}

\newcommand{\IFPCWSCsub}{\ensuremath{%
			\mathrm{IFPC{+}WSC^{\mathrm{sub}\upharpoonright}}%
}}
\newcommand{\IFPCWSCglob}{\ensuremath{%
		\mathrm{IFPC{+}WSC^{\mathrm{glob}\upharpoonright}}%
}}
\newcommand{\IFPCWSCIglob}{\ensuremath{%
		\mathrm{IFPC{+}WSC^{\mathrm{glob}\upharpoonright}{+}I}%
}}
\newcommand{\ifpcwscsubsig}[1]{\ensuremath{\IFPCWSCsub{}[#1]}}
\paragraph{The Reduct Property}
According to Ebbinghaus~\cite{ebbinghaus1985}, a reasonable logic
has the property that whether a structure $\StructA$ satisfies a formula $\formA$
only depends on the relation symbols used in $\formA$.
This property is called the \emph{reduct property}:
formally, if~$L$ is a logic,~$\formA$ is an $L[\sigA]$-sentence, and $\StructA$ a $\sigB$-structure with $\sigA \subseteq \sigB$,
then $\StructA$ satisfies $\formA$ if and only if the reduct $\reduct{\StructA}{\sigA}$ satisfies $\formA$.

This is not the case for the WSC-fixed-point operator: the orbits of $\StructA$ also depend on the relation symbols in $\sigB\setminus \sigA$,
and thus the evaluation of WSC-fixed-point operators depends on all relations.
Hence, $\IFPCWSC$ does not have the reduct property.
This is also the case for the logic introduced by Gire and Hoang~\cite{GireHoang98}.
We now discuss two possibilities to solve this issue.
For this, we will define two variants of \IFPCWSC{}:
their formulas
are syntactically the ones of \IFPCWSC{}, but with two different semantics:
subformula and global reduct semantics.
The semantics as defined in the previous paragraph is referred as \emph{non-reduct semantics}%
\footnote{The conference version of this article~\cite{Lichter2023b} only considered one of the two variants, \IFPCWSC{} with subformula reduct semantics.
However, for this logic some of the statements were false, essentially caused by Lemma~\ref{lem:ipfcwsc-subformula-reduct-closed-under-one-dim-interpretations}.
This article corrects this mistake by a more detailed analysis of Ebbinghaus' reduct property for witnessed symmetric choice, but also has to weaken some claims from subformula to global reduct semantics.
In particular, one question of Dawar and Richerby~\cite{DawarRicherby03} posed for subformula reduct semantics remains open and is only answered for global reduct semantics (see the end of Section~\ref{sec:separating-wsc-wsci}).
The author apologizes for this mistake.
}.

First, we introduce \IFPCWSC{} with \emph{subformula reduct semantics}.
Dawar and Richerby solved the problem by evaluating a (W)SC-fixed-point operator~$\formA$ not on the input structure~$\StructA$
but on the reduct~$\reduct{\StructA}{\sigOf{\formA}}$, where $\sigOf{\formA}$ is the set of free relation symbols in $\formA$ (the ones not bound by fixed-point or WSC-fixed-point operators and interpreted by~$\StructA$).
On a formal level, we would
have to replace
\begin{align*}
	\formOut^{(\Struct, (R^*)^\Struct)} \text{ and } 
	W_\formA^{(\StructA, \tup{\vertA})}  \quad \text { by } \quad
	\formOut^{(\reduct{\StructA}{\sigOf{\formA}}, (R^*)^\Struct)} \text{ and }
	W_\formA^{(\reduct{\StructA}{\sigOf{\formA}}, \tup{\vertA})},
\end{align*}
respectively, in Equation~(\ref{eqn:wsc-semantics})
to define the subformula reduct semantics.
Then a WSC-fixed-point operator $\formA$ is evaluated on the $\sigOf{\formA}$-reduct.
We refer to \IFPCWSC{} with subformula reduct semantics by \IFPCWSCsub{}.
With the subformula reduct semantics, the evaluation of the WSC-fixed-point operator does not depend on unused relation symbols anymore
and hence this logic has the reduct property.
In particular, if different WSC-fixed-point operators in a formula use different relation symbols, they are evaluated on different reducts.

The second approach is \IFPCWSC{} with \emph{global reduct semantics}.
The idea is a straight-forward way to ensure that a logic has the reduct property: to evaluate a formula $\formA$ on a structure $\StructA$, the formula $\formA$ is immediately evaluated on the reduct $\reduct{\StructA}{\sigOf{\formA}}$.
In particular, the transition to the reduct is only done once and then never again for subformulas.
This approach could be applied to any logic to ensure that it has the reduct property.
Formally, a formula $\formA$, under global reduct semantics, evaluates on $\StructA$ to $\formA^{\reduct{\StructA}{\sigOf{\formA}}}$ (where $\formA^{\reduct{\StructA}{\sigOf{\formA}}}$ refers to evaluation under non-reduct semantics).
We write \IFPCWSCglob{} for \IFPCWSC{} with global reduct semantics.
This logic evaluates all WSC-fixed point operators in a formula  on the same structure.

In what follows, we will explicitly state when formulas are evaluated under the global or subformula reduct semantics.
The notation $\formA^\StructA$ will always refer to the non-reduct semantics.
Comparing the approaches, the subformula reduct semantics
is more natural because evaluation is defined recursively
and no initial ``exception'' to consider the reduct is needed
as it is for global reduct semantics.
This implies that subformulas under the subformula reduct semantics have the reduct property, too.
The same is not true for global reduct semantics (here only the ``global'' formula has the reduct property).
The subformula reduct semantics is potentially more expressive:
\begin{lem}
	$\IFPCWSCglob{} \leq \IFPCWSCsub{}$.
\end{lem}
\begin{proof}
	Let~$\formA$ be an $\IFPCWSCglob{}$-formula.
	Note that~$\formA$ is also an $\IFPCWSCsub{}$-formula,
	but potentially has different semantics under the subformula reduct semantics than under the global one.
	Because \IFPCWSCglob{} and \IFPCWSCsub{}
	have the reduct property,
	it suffices to find an  \IFPCWSCsub{}-formula $\formA'$ such that $\sigOf{\formA'} = \sigOf{\formA}$ and $\formA'$, under the subformula reduct semantics, is equivalent to $\formA$ with the global reduct semantics for all $\sigOf{\formA}$-structures.
	Note that for $\sigOf{\formA}$-structures in the global reduct semantics, we can omit the initial reduct step when evaluating $\formA$
	and consider the non-reduct semantics.

	We will prove the following:
	For every signature $\sigA$
	and every \IFPCWSC{}-formula (or term) $\formA$, there is an \IFPCWSCsub{}-formula (or term, respectively) $\formA^\sigA$ such that
	\begin{enumerate}
		\item $\sigOf{\formA^\sigA} = \sigOf{\formA} \cup \sigA$
		and
		\item $\formA$ with non-reduct semantics and $\formA^\sigA$ with subformula reduct semantics are equivalent for all $\sigOf{\formA^\sigA}$-structures.
	\end{enumerate}
	For $\sigA = \sigOf{\formA}$, this claim proves the lemma.
	
	So let $\sigA$ be some signature.
	For simplicity, we assume that relation symbols are always renamed such that $\sigA$-symbols are never bound in~$\formA$.
	The base case of the translation is the case that $\formA$ is an $\IFPC{}$-formula or term.
	Let $\formB$ be an arbitrary \IFPC{}-tautology such that $\sigOf{\formB} = \sigA$.
	If $\formA$ is a formula, then $\formA^\sigA := \formA \land \formB$ satisfies the claim
	because for \IFPC{}-formulas the two semantics do not differ.
	If $\formA = \termA$ is a term, then $\termA^\sigA = \termA + \# x.\qspace x \neq x \land \formB$ satisfies the claim.
	
	Next, consider the case that a formula or term $\formA$ is obtained by \IFPC{}-formula-formation rules from subformulas $\formB_1,\dots, \formB_\ell$ and subterms $\termB_1, \dots, \termB_k$ (which are not all \IFPC{}-formulas or terms).
	Let $\sigB := \sigOf{\formA} \cup \sigA$.
	By the inductive hypothesis there are $\formB_1^\sigB, \dots, \formB_\ell^\sigB$ and $\termB_1^\sigB,\dots,\termB_k^\sigB$.
	From them, construct $\formA^\sig$ in the same way as $\formA$ is constructed from $\formB_1,\dots, \formB_\ell$ and $\termB_1, \dots, \termB_k$.
	For example, $\#\uniVarVA\numVarVA \leq (\termB_1,\dots,\termB_\ell).\qspace\formA$
	is turned into $ \#\uniVarVA\numVarVA \leq (\termB_1^\sigB,\dots, \termB_\ell^\sigB).\qspace\formA^\sigB$.
	By the inductive hypothesis, $\formB_i^\sigB$ with subformula reduct semantics
	is equivalent to $\formB_i$ with non-reduct semantics for all $\sigOf{\formB_i} \cup \sigB=\sigOf{\formB_i^\sigB} $-structures for all $i \in [\ell]$ (and similar for the terms).
	If $\formA$ is not a (non-WSC) fixed-point operator, no relation symbol is bound in $\formA$ and thus $\sigOf{\formB_i} \cup \sigB = \sigOf{\formA} \cup \sigA = \sigB$.
	Thus, $\formA^\tau$ is equivalent to $\formA$ on all $\sigB$-structures.
	Otherwise, $\formA = [\ifp \rel \tup{x}.\qspace \formB_1](\tup{y})$ is a fixed-point operator binding the relation symbol $\rel$.
	If $\rel \notin \sigOf{\formB}$, the claim follows by the inductive hypothesis.
	Otherwise, by the inductive hypothesis,
	$\formB_1^\sigB$ with subformula reduct semantics
	is equivalent to $\formB_1$ with non-reduct semantic
	for all $ \sigB \cup \set{\rel}=\sigOf{\formB_1^\sigB}$-structures.
	Hence, for all $\sigB$-structures, initially extended by $R$ to be empty,
	they will define the same inflationary fixed-point.
	Because $\sigOf{\formA^\sigB} = \sigOf{\formB_1^\sigB} \setminus \set{\rel} = \sigB = \sigOf{\formA} \cup \sigA$,
	it follows that $\formA^\sigA$  with subformula reduct semantics is equivalent to $\formA$ with non-reduct semantics for all $\sigB$-structures.
	
	Lastly, consider WSC-fixed-point operators. Let
	\[\formA(\tup{p}) = \ifpwscbig{R,\tup{x}}{R^*}{S,\tup{y}, \tup{y}'}{z_1z_2}{\formStep(\tup{p}\tup{x})}{\formChoice(\tup{p}\tup{y})}{\formWit(\tup{p}\tup{y}\tup{y}'z_1z_2)}{\formOut(\tup{p})}\]
	be an WSC-fixed-point operator
	and let $\sigB := \sigOf{\formA} \cup \sigA$.
	By the inductive hypothesis, there are formulas $\formStep^\sigB$, $\formChoice^\sigB$, $\formWit^\sigB$, and $\formOut^\sigB$ that under the subformula reduct semantics are equivalent to $\formStep$, $\formChoice$, $\formWit$, and $\formOut$, respectively, under the non-reduct reduct semantics for all $\sigB$-structures.
	Define 
	\[\formA^\sigA(\tup{p}) := \ifpwscbig{R,\tup{x}}{R^*}{S,\tup{y}, \tup{y}'}{z_1z_2}{\formStep^\sigB(\tup{p}\tup{x})}{\formChoice^\sigB(\tup{p}\tup{y})}{\formWit^\sigB(\tup{p}\tup{y}\tup{y}'z_1z_2)}{\formOut^\sigB(\tup{p})}.\]
	We show that $\formA^\sigA$ satisfies the claim.
	First note that $\sigOf{\formA^\sigA} = \sigOf{\formA} \cup \sigA =\sigB $
	because $\sigOf{\formA_{\text{x}}^\sigB} \setminus \set{R,R^*,S} = (\sigOf{\formA_{\text{x}}} \setminus \set{R,R^*,S} ) \cup \sigB = \sigOf{\formA} \cup \sigA$
	for every $\text{x} \in \set{\text{step},\text{choice},\text{wit},\text{out}}$.
	Now let~$\StructA$ be a $\sigB$-structure.
	The evaluation of the WSC-fixed-point operator under the subformula reduct semantics does not consider a proper reduct of~$\StructA$ because $\sigOf{\formA^\sigA} = \sigB$.
	By the inductive hypothesis for the $\formA_{\text{x}}^\sigB$
	and using a similar reasoning as for the plain fixed-point,
	the formula $\formA^\sigA$ evaluates under the subformula reduct semantics as~$\formA$ under the non-reduct one on~$\StructA$.
\end{proof}
Whether global reduct semantics is as expressive as the subformula one is not as easy to see and will be answered in this article.
We also compare the two semantics with respect to closure under interpretations.

\begin{lem}
	\label{lem:ipfcwsc-subformula-reduct-closed-under-one-dim-interpretations}
	The logic \IFPCWSCsub{} is closed under one-dimensional and equivalence-free   \IFPCWSCsub{}-interpretations.
\end{lem}
\begin{proof}
	Let $\sigA$ and $\sigB = \set{\rel_1,\dots, \rel_\ell}$ be signatures, let \[\interpret(\tup{\uniVarC}) = \left(\formA_{\text{dom}}(\tup{\uniVarC}\uniVarA),
	\formA_{\cong}(\tup{\uniVarC}\uniVarA\uniVarB),
	\formA_{\rel_1}(\tup{\uniVarC}\uniVarA_1\cdots \uniVarA_{\arity{\rel_1}}), \dots,
	\formA_{\rel_\ell}(\tup{\uniVarC}\uniVarA_1\cdots \uniVarA_{\arity{\rel_\ell}})\right)
	\]
	be a one-dimensional and equivalence-free  $\ifpcwscsubsig{\sigA,\sigB}$-interpretation, and let~$\formB$ be an $\ifpcwscsubsig{\sigB}$-sentence.
	Define, for each $i \in [\ell]$, the formula $\formA'_{\rel_i} := \formA_{\rel_i} \land \bigwedge_{j \in [\arity{R_i}]} \formA_{\text{dom}}(\tup{z}x_j)$
	to be $\formA_{\rel_i}$  that is only satisfied by vertices satisfying $\formA_{\text{dom}}$.
	Let $\formB_{\text{dom}}$ be obtained form $\formB$ by
	restricting all quantifiers, counting operators, (WSC)-fixed-points, etc.~(wherever first-order variables are bound) to vertices in~$R_{\text{dom}}$, for example, turn $\exists x.\qspace \Pi$ to $\exists x \in R_{\text{dom}}.\qspace \Pi$.
	
	We introduce, for \IFPCWSCsub{}-formulas $\Pi$ and $\Pi'$, the following shorthand notation:
	\[[\ifpwscSym R\tup{x}. \qspace\Pi](\Pi') := 
	\ifpwscbig{R,\tup{x}}{Y}{Z,(),()}{y_1y_2}{\Pi}{\bot}{\bot}{\Pi'},\]
	where $Y,Z$ are arbitrary fresh nullary relation symbols and $\bot$ is a contradictory formula with $\sigOf{\bot}=\emptyset$.
	The formula $(\ifpwscSym R\tup{x}. \Pi)[\Pi']$ on a structure $\StructA$
	defines the inflationary fixed-point defined by $\Pi$ and then evaluates $\Pi'$ using it (formulas are evaluated on $\reduct{(\StructA, R^\StructA)}{(\sigOf{\Pi'}\cup \sigOf{\Pi})}$).
	Because the choice and witnessing formula are contradictory, no choices are made or needed to be witnessed.
	
	Now consider the formula
	\begin{align*}
		\formB' &:= \left[\ifpwscSym \rel_{\text{dom}} \uniVarA.\qspace \formA_{\text{dom}}\right]\Big(\\
			&\qquad\qquad \left[\ifpwscSym \rel_1 \uniVarA_1\cdots \uniVarA_{\arity{\rel_1}}.\qspace\formA'_{\rel_1} \right] \left( \dots \left[\ifpwscSym \rel_\ell \uniVarA_1\cdots \uniVarA_{\arity{\rel_\ell}}.\qspace\formA'_{\rel_\ell} \right](\formB_{\text{dom}}) \dots\right)\Big).
	\end{align*}
	In the $\sigA$-formula $\formB'$, the fixed-points are used to define the relation symbols $\rel_{\text{dom}}, \rel_1,\dots,\rel_\ell$ of the interpretation,
	on which $\formB_{\text{dom}}$ is evaluated.
	
	Let~$\StructA$ be a $\sigA$-structure.
	We claim that, with subformula reduct semantics,
	the formula~$\formB'$ evaluates on~$\StructA$ as~$\formB$ on $\StructB := \interpret(\StructA) = (B, \rel_1^{\StructB}, \dots, \rel_\ell^\StructB)$.
	The first fixed-point in~$\formB'$ interprets~$\rel_{\text{dom}}$ as $B$.
	The remaining fixed-points in~$\formB'$ interpret $\rel_i$ as $\rel_i^{\StructB}$:
	Because \IFPCWSCsub{} has the reduct property,
	it does not matter whether $\formA'_{\rel_i}$ is evaluated on $(\StructA, \rel_{\text{dom}}^\StructA)$ or on $(\StructA, \rel_{\text{dom}}^\StructA, \rel_1^\StructB, \dots, \rel_{i-1}^\StructB)$
	(or possibly some reduct of it if not all relations in $\sigA$ are used).
	Finally, $\formB_{\text{dom}}$ is evaluated on  $(\StructA, \rel_{\text{dom}}^\StructA, \rel_1^\StructB, \dots, \rel_\ell^\StructB)$.
	Because also subformulas in \IFPCWSCsub{} have the reduct property,
	$\formB_{\text{dom}}$ can equivalently be evaluated on the $(\sigOf{\formB} \cup  \set{\rel_{\text{dom}}})$-reduct.
	Let this reduct be~$\StructC$.
	In~$\StructC$, every $\rel_i$-relation satisfies $\rel_i^{\StructC} = \rel_i^\StructB \subseteq (\rel_{\text{dom}}^\StructA)^{\arity{\rel_i}}$ by the definition of~$\formA'_{\rel_i}$.
	This means that $\StructC$ is the disjoint union of $\StructB$ (after putting all vertices into $R_{\text{dom}}$)
	and some isolated vertices that are in no relation (and in particularly not in $\rel_{\text{dom}}$).
	This means that the orbits of~$\StructC$ are the orbits of~$\StructB$ plus one orbit for the isolated vertices.
	The formula $\formB_{\text{dom}}$ restricts binding first-order variables to vertices in $\rel_{\text{dom}}$.
	This in particularly happens in the step formula of every WSC-fixed-point operator that is a subformula of $\formB_{\text{dom}}$.
	This implies that the subformula reduct semantics never removes the $\rel_{\text{dom}}$ relation.
	Hence, the isolated vertices can also be removed since they are never assigned to some variable and never change the orbits during the evaluation.
	Thus,~$\formB'$ on~$\StructC$ evaluates as~$\formB$ on $\StructB$ (both with subformula reduct semantics)\footnote{We used WSC-fixed-points to define the relations~$\rel_{\text{dom}}$ and~$\rel_i$ because WSC-fixed-points directly allow to evaluate~$\formB_{\text{dom}}$ using these relations, which with plain (non-WSC) fixed-points requires more coding:
	The WSC-fixed-points can easily be replaced by a simultaneous plain fixed-point.
	However, encoding the simultaneous fixed-point back into non-simultaneous ones is possible, but requires more coding
	than the translation for simultaneous fixed-points in \IFPC{}
	because we have to ensure that subformulas can move to proper reducts.}.
\end{proof}
The same approach does not work for higher-dimensional interpretations
because the reducts of WSC-fixed-point operators only consider orbits on the original universe but not on the larger universe defined by a higher-dimensional interpretation.
For global reduct semantics, it is unclear whether it is closed under one-dimensional interpretations.
Thus, it allows for a more fine-grained analysis of the interplay of witnessed symmetric choice and interpretations.
We can understand the expressive power gained with witnessed-symmetric choice 
better and to which extend this power depends on subtleties like the question of global versus subformula reduct semantics.
To do so, we now build in logical interpretations into the logic.

\paragraph{An Operator for Logical Interpretations}
We extend the logic \IFPCWSC{} with another operator for interpretations.
First, every \IFPCWSC{}-formula is an \IFPCWSCI{}-formula.
Second, if $\interpret(\tup{p}\tup{\numVarA})$ is an \ifpcwscisig{\sigA,\sigB}-interpretation
with parameters $\tup{p}\tup{\numVarA}$
and~$\formA$ is an \ifpcwscisig{\sigB}-sentence,
then the \defining{interpretation operator}
\[\formB(\tup{p}\tup{\numVarA}) = \iop{\interpret(\tup{p}\tup{\numVarA})}{ \formA}\]
is an $\ifpcwscisig{\sigA}$-formula with free variables $\tup{p}\tup{\numVarA}$.
The (non-reduct) semantics is defined as follows:
\[\iop{\interpret(\tup{p}\tup{\numVarA})}{\formA}^\StructA := 
\setcond*{\tup{\vertA}\tup{n}\in \StructV^{|\tup{p}|}\times\nat^{|\tup{\numVarA}|}}{ \formA^{\interpret(\StructA, \tup{\vertA}\tup{n})} \neq \emptyset}.\]
Note that $\formA^{\interpret(\StructA, \tup{\vertA}\tup{n})} \neq \emptyset$
if and only if $\formA$ is satisfied.
The interpretation operator allows to evaluate a subformula in the image of an interpretation.
Thus, by definition, \IFPCWSCI{} is closed under interpretations.
For \IFPC{}, such an operator does not increase the expressive power because \IFPC{} is already closed under \IFPC{}-interpretations (see~\cite{Otto1997}).
For \IFPCWSC{}, this is not clear:
Because $\interpret(\StructA, \tup{\vertA}\tup{n})$ may have different automorphisms than~$\StructA$,
the formula~$\formA$ may exploit the WSC-fixed-point operator in a way
which is not possible in~$\StructA$.
The logic \IFPCWSCglob{}
is extended in the same way with the interpretation operator
yielding the logic \IFPCWSCIglob{}.
Its evaluation strategy is as before:
evaluate a formula~$\formA$ under the non-reduct semantics of \IFPCWSCI{} 
in the $\sigOf{\formA}$-reduct.
In particular, the global reduct semantics
will not consider a reduct of $\interpret(\StructA, \tup{\vertA}\tup{n})$
when evaluating the interpretation operator
but just does this once as the initial step.
The interpretation operator allows to express subformula and global reduct semantics:
\begin{lem}
	\label{lem:interpretation-operator-expresses-sub-and-glob}
$\IFPCWSCsub \leq \IFPCWSCIglob{} \leq  \IFPCWSCI$.
\end{lem}
\begin{proof}
	We show the first inclusion.
	For a signature $\sig$ and a tuple of variables $\tup{p}$, 
	let $\sig_{\tup{p}} := \sig \cup \set{P}$,
	where $P$ is a fresh $|\tup{p}|$-ary relation symbol.
	Let $\interpret^\sig_{\tup{p}}(\tup{p})$ be the one-dimensional and equivalence-free \FO{}-interpretation
	that defines the ``$\sig_{\tup{p}}$-reduct'' of all $\sigB$-structures with $\sig \subseteq \sigB$ as follows:
	The domain formula is tautological.
	All relations $\rel\in \sigA$ are preserved (defined by the formula $\formA_\rel(\tup{x}) := \rel(\tup{x})$).
	The relation~$P$ stores the assignment to~$\tup{p}$ (defined by the formula $\formA_{P_i}(\tup{x}) := \bigwedge_{i \in [|\tup{p}|]} x_i = p_i$).
	In particular, $\sigOf{\interpret^\sig_{\tup{p}}} = \sig$.
	
	Let~$\formB$ be an \IFPCWSCsub{}-formula.
	We construct
	an \IFPCWSCI{}-formula $\formB'$ equivalent to~$\formB$
	for all $\sigOf{\formB}$-structures.
	Consider a WSC-fixed-point operator 
	\begin{align*}
		\formA(\tup{p})&=\ifpwscbig{R,\tup{x}}{R^*}{S,\tup{y}, \tup{y}'}{z_1z_2}{\formStep(\tup{p}\tup{x})}{\formChoice(\tup{p}\tup{y})}{\formWit(\tup{p}\tup{y}\tup{y}'z_1z_2)}{\formOut(\tup{p})}\\
		\intertext{in $\formB$. It is important that $\tup{p}$ is exactly the tuple of free variables of $\formA$ and not more.
		The formula $\formA$ is inductively replaced with
		}
		\formA''(\tup{p}) &:= \ifpwscbig{R,\tup{x}}{R^*}{S,\tup{y}, \tup{y}'}{z_1z_2}{\formStep'(\tup{p}\tup{x})}{\formChoice'(\tup{p}\tup{y})}{\formWit'(\tup{p}\tup{y}\tup{y}'z_1z_2)}{\formOut'(\tup{p})}\\
		\formA'(\tup{p}) &:=\iop{\interpret^{\sigOf{\formA}}_{\tup{p}}}{\forall \tup{p}.\qspace P(\tup{p}) \Rightarrow \formA''},
	\end{align*}
	where~$\formA_{\text{x}}'$ is obtained by induction for~$\formA_\text{x}$
	for all $\text{x} \in \set{\text{step},\text{choice},\text{wit},\text{out}}$.
	In this way,~$\formA'$ simulates the subformula reduct semantics.
	Adding the relation~$P$ does not change the orbits of the structure
	because evaluating the WSC-fixed-point operator considers orbits stabilizing the vertices assigned to~$\tup{p}$.
	Also note that $\sigOf{\formA'} = \sigOf{\interpret^{\sigOf{\formA}}_{\tup{p}}} = \sigOf{\formA}$.

	We finally obtain an \IFPCWSCI{}-formula~$\formB'$
	such that $\sigOf{\formB'} = \formB$
	and $\formB'$ with non-reduct semantics is equivalent to $\formB$ with subformula reduct semantics for all $\sigOf{\formB}$-structures.
	We now evaluate $\formB'$ as a \IFPCWSCIglob{}-formula under global reduct semantics.
	On $\sigOf{\formB}$-structures,
	the global reduct semantics does not move to a proper reduct
	and~$\formB$ and~$\formB'$ are equivalent on these structures.
	Since both \IFPCWSCIglob{} and \IFPCWSCsub{} have the reduct property,
	equivalence for  $\sigOf{\formB}$-structures implies equivalence for all structures.

	To prove the second inclusion, the global reduct semantics is expressible
	in $\IFPCWSCI$ in a similar way: a formula~$\formB(\tup{p})$ with global reduct semantics
	is equivalent to the formula $\iop{\interpret^{\sigOf{\formA}}_{\tup{p}}}{\forall \tup{p}.\qspace P(\tup{p}) \Rightarrow \formB}$ with non-reduct semantics.
\end{proof}
Although \IFPCWSCI{} expresses global and subformula reduct semantics,
the logic still does not have the reduct property for the same reasons as for \IFPCWSC{}.
So there is a need to consider \IFPCWSCIglob{}.
Lemma~\ref{lem:interpretation-operator-expresses-sub-and-glob} shows that we do not need a subformula reduct semantics version
of \IFPCWSCI{}
since it is expressible by \IFPCWSCIglob{}.
In the proof, the translation potentially increases
the alternation depth between WSC-fixed-point and interpretation operators.
Whether this is indeed necessary, will be one of the concerns in this article
to understand the interplay between interpretations and witnessed symmetric choice.

In the following, we will focus on \IFPCWSC{} and \IFPCWSCI{} with the non-reduct semantics.
Compared to the global reduct semantics, this will mostly make no difference:
we are interested in properties of structures with a fixed signature.
Here, definability under non-reduct semantics implies definability under the global one.
\begin{lem}
	\label{lem:non-reduct-in-global-reduct-semantics-for-fixed-signature}
	Let $\GraphClass$ be some class of $\sig$-structures for some fixed signature $\sig$.
	Then  $\IFPCWSC \leq \IFPCWSCglob{}$ and
	$\IFPCWSCI \leq \IFPCWSCIglob{}$ on $\GraphClass$.
\end{lem}
\begin{proof}
	Let $\formB$ be a tautology such that $\sigOf{\formB} = \sig$.
	Then a formula $\formA$ under non-reduct semantics is equivalent to $\formA \land \formB$ under global reduct semantics on $\GraphClass$.
\end{proof}
When proving undefinability of a property,
we will have to argue that proper reducts
do not make property definable.
For the properties of interest, we will show that discarding relations
\begin{enumerate}
	\item makes structures with and without the property isomorphic (and thus the property cannot be defined), or
	\item does not change to orbit partition of the structures
	(and hence WSC-fixed point operator will evaluate equally).
\end{enumerate}

\section{The CFI Construction}
\label{sec:cfi-construction}
In this section we recall the CFI construction that was introduced by Cai, Fürer, and Immerman~\cite{CaiFI1992}.
At the heart of the construction are the so-called CFI gadgets.
These gadgets are used to obtain from a base graph two non-isomorphic CFI graphs.

\paragraph{The CFI Gadget}
The \defining{degree-$d$ CFI gadget} consists of $d$ pairs of \defining{edge vertices} $\set{a_{i0}, a_{i1}}$, one for each $i \in [d]$
and the set of \defining{gadget vertices} $\setcond{\tup{b} \in \FF_2^d}{b_1+\dots+b_d =0}$ (the names of the vertices will become more clear in the next paragraph).
There is an edge between $a_{ij}$ and $\tup{b}$ if and only if
$b_i = j$.
Edge vertices and gadget vertices receive different colors.
If, using $d$ additional colors, every edge-vertex-pair $\set{a_{i0}, a_{i1}}$ 
receives its own color for every $i \in[d]$,
then the CFI gadget realizes precisely the automorphisms
exchanging the vertices $\set{a_{i0}, a_{i1}}$ for an even number of $i \in[d]$.

We also need a variant of the CFI gadgets that does not use the gadget vertices
but has a $d$-ary relation instead.
Every gadget vertex $\tup{b}$ is replaced by the $d$-tuple $(a_{1b_1}, \dots, a_{db_d})$ in the said relation.
This gadget has the same automorphisms (with respect to the edge vertices).
It has the benefit that not the whole gadget can be fixed by fixing a single
gadget vertex but has the drawback that the arity of the relations depends on the degree.
For an overview of the different variants of CFI gadgets we refer to~\cite{Lichter2023}.

\paragraph{CFI Graphs}
A \defining{base graph} is a connected, simple, and possibly colored graph.
Let ${G=(V,E,\spleq)}$ be a colored base graph.
In the context of CFI graphs,
we call the vertices~$V$ of~$G$ \defining{base vertices},
call its edges~$E$ \defining{base edges},
and use fraktur letters for base vertices or edges.
For a function $f\colon E \to \FF_2$,
we construct the \defining{CFI graph} $\CFI{G, f}$ as follows:
First, replace every vertex of~$G$ by a CFI gadget of the same degree.
In that way, we obtain for every base edge $\set{\bVertA,\bVertB} \in E$
two edge-vertex-pairs $\set{a_{i0},a_{i1}}$ and $\set{a'_{j0},a'_{j1}}$.
The first one is given by the gadget of~$\bVertA$ and the second one by the gadget of~$\bVertB$ (which justifies the name ``edge vertex'' even if they are part of a CFI gadget).
Second, add edges such that $\set{a_{ik},a'_{j\ell}}$ is an edge if and only if
$k + \ell = f(\set{\bVertA,\bVertB})$.
We say that the edge vertices $\set{a_{i0},a_{i1}}$ \defining{originate} from $(\bVertA,\bVertB)$,
the edge vertices $\set{a'_{j0},a'_{j1}}$ originate from $(\bVertB,\bVertA)$, respectively, and
that the gadget vertices of the gadget for a base vertex~$\bVertA$ 
\defining{originate} from~$\bVertA$.
The color of edge and gadget vertices is obtained from the color of its origin given by~$\spleq$.

\def \baseNeighborRadius{1}
\def \edgePairDistAngle{10}
\def \edgePairRadius{2}
\def \gadgetVertexYDist{0.7}

\colorlet{edgeColA}{blue!50}
\colorlet{edgeColA}{edgeColA}
\colorlet{edgeColB}{red!50}
\colorlet{edgeColC}{green!50}
\colorlet{edgeColE}{yellow!70!black!50!white}
\colorlet{edgeColD}{magenta!50}
\colorlet{edgeColF}{cyan!50}

\tikzstyle{basevertex} = [draw=black, circle, inner sep=2pt, minimum size = 12pt]
\tikzstyle{edgevertex} = [draw=black, circle, inner sep=2pt, font=\footnotesize]
\tikzstyle{gadgetvertex} = [draw=black, circle, inner sep=1, font=\tiny, fill = gray!50]

\tikzset{
	bicolor fill/.code 2 args={
		\pgfdeclareverticalshading[%
		tikz@axis@top,tikz@axis@middle,tikz@axis@bottom%
		]{diagonalfill}{100bp}{%
			color(0bp)=(tikz@axis@bottom);
			color(60bp)=(tikz@axis@bottom);
			color(60bp)=(tikz@axis@middle);
			color(60bp)=(tikz@axis@top);
			color(100bp)=(tikz@axis@top)
		}
		\tikzset{shade, left color=#1, right color=#2, shading=diagonalfill}
	}
}

\begin{figure}
	\centering
	
	\subfloat[A part of a colored base graph.]{
		\begin{tikzpicture}
			
			\draw[draw = none, use as bounding box] (-3.6,-1.8) rectangle (4.6, 1.8);

			\node[basevertex, fill=edgeColA] (u1) at (0,0){};
			\node[basevertex, fill = edgeColB] (u2)  at (0:\baseNeighborRadius){};
			\node[basevertex, fill = edgeColC] (u3)  at (220:\baseNeighborRadius){};
			\node[basevertex, fill = edgeColD] (u4)  at (140:\baseNeighborRadius){};
			\begin{scope}[shift={(1,0)}, xscale=-1]
				\node[basevertex, fill = edgeColE] (u5)  at (220:\baseNeighborRadius){};
				\node[basevertex, fill = edgeColF] (u6)  at (140:\baseNeighborRadius){};
			\end{scope}
			
			\draw [-, thick]
			(u1) to (u2)
			(u1) to (u3)
			(u1) to (u4)
			(u2) to (u5)
			(u2) to (u6);
		\end{tikzpicture}
		\label{fig:cfi-gadget-edge-sub-base-graph}
	}%
	\hspace{1em}%
	\newcommand{\reducedstrut}{\vrule width 0pt height .9\ht\strutbox depth .9\dp\strutbox\relax}
	\newcommand{\pink}[1]{%
		\begingroup
		\setlength{\fboxsep}{1pt}%  
		\colorbox{red!20}{#1}%
		\endgroup
	}
	\subfloat[Two connected CFI gadgets.]{
		\begin{tikzpicture}
			
			\draw[draw = none,use as bounding box] (-2.3,-1.8) rectangle (8.3, 1.8);

			\begin{scope}[name prefix=a]
				
				\node[edgevertex, bicolor fill={edgeColA}{edgeColB}] (v10) at (+\edgePairDistAngle:\edgePairRadius) {$0$};
				\node[edgevertex, bicolor fill={edgeColA}{edgeColB}] (v11) at (-\edgePairDistAngle:\edgePairRadius) {$1$};
				\node[edgevertex, bicolor fill={edgeColA}{edgeColC}, shading angle=310] (v20) at (220-\edgePairDistAngle:\edgePairRadius) {$0$};
				\node[edgevertex, bicolor fill={edgeColA}{edgeColC}, shading angle=310] (v21) at (220+\edgePairDistAngle:\edgePairRadius) {$1$};
				\node[edgevertex, bicolor fill={edgeColA}{edgeColD}, shading angle=230] (v30) at (140+\edgePairDistAngle:\edgePairRadius) {$0$};
				\node[edgevertex, bicolor fill={edgeColA}{edgeColD}, shading angle=230] (v31) at (140-\edgePairDistAngle:\edgePairRadius) {$1$};
				
				\setlength{\fboxsep}{0pt}
				
				\node[gadgetvertex, fill=edgeColA] (u000) at (0.4, 1.5*\gadgetVertexYDist) {$000$};
				\node[gadgetvertex, fill=edgeColA] (u011) at (0.4, 0.5*\gadgetVertexYDist) {$110$};
				\node[gadgetvertex, fill=edgeColA] (u101) at (0.4, -0.5*\gadgetVertexYDist) {$101$};
				\node[gadgetvertex, fill=edgeColA] (u110) at (0.4, -1.5*\gadgetVertexYDist) {$011$};
				
				\path [-, draw=black, thick]
				(u000) to (v10)
				(u000) to (v20)
				(u000) to (v30)
				(u011) to (v10)
				(u011) to (v21)
				(u011) to (v31)
				(u101) to (v11)
				(u101) to (v20)
				(u101) to (v31)
				(u110) to (v11)
				(u110) to (v21)
				(u110) to (v30);
				
			\end{scope}

			\begin{scope}[name prefix=b, shift={(6,0)}, xscale = -1]
				
				\node[edgevertex, bicolor fill={edgeColB}{edgeColA}, shading angle = -90] (v10) at (+\edgePairDistAngle:\edgePairRadius) {$0$};
				\node[edgevertex, bicolor fill={edgeColB}{edgeColA}, shading angle=-90] (v11) at (-\edgePairDistAngle:\edgePairRadius) {$1$};
				\node[edgevertex, bicolor fill={edgeColB}{edgeColE}, shading angle=50] (v20) at (220-\edgePairDistAngle:\edgePairRadius) {$0$};
				\node[edgevertex, bicolor fill={edgeColB}{edgeColE}, shading angle=50] (v21) at (220+\edgePairDistAngle:\edgePairRadius) {$1$};
				\node[edgevertex, bicolor fill={edgeColB}{edgeColF}, shading angle=130] (v30) at (140+\edgePairDistAngle:\edgePairRadius) {$0$};
				\node[edgevertex, bicolor fill={edgeColB}{edgeColF}, shading angle=130] (v31) at (140-\edgePairDistAngle:\edgePairRadius) {$1$};

			%	\fill [fill=edgeColF] (0.25,2*\gadgetVertexYDist) rectangle (0.35,-2*\gadgetVertexYDist);
			%	\fill [fill=edgeColE] (0.35,2*\gadgetVertexYDist) rectangle (0.45,-2*\gadgetVertexYDist);
			%	\fill [fill=edgeColA] (0.45,2*\gadgetVertexYDist) rectangle (0.55,-2*\gadgetVertexYDist);
				
				\node[gadgetvertex, fill=edgeColB] (u000) at (0.4, 1.5*\gadgetVertexYDist) {$000$};
				\node[gadgetvertex, fill=edgeColB] (u011) at (0.4, 0.5*\gadgetVertexYDist) {$011$};
				\node[gadgetvertex, fill=edgeColB] (u101) at (0.4, -0.5*\gadgetVertexYDist) {$101$};
				\node[gadgetvertex, fill=edgeColB] (u110) at (0.4, -1.5*\gadgetVertexYDist) {$110$};
				
				\path [-, draw=black, thick]
				(u000) to (v10)
				(u000) to (v20)
				(u000) to (v30)
				(u011) to (v10)
				(u011) to (v21)
				(u011) to (v31)
				(u101) to (v11)
				(u101) to (v20)
				(u101) to (v31)
				(u110) to (v11)
				(u110) to (v21)
				(u110) to (v30);
				
			\end{scope}
			
			\path[thick, -, draw = black]
			(av10) to (bv10)
			(av11) to (bv11);

		\end{tikzpicture}
		\label{fig:cfi-gadget-edge-sub-gadgets}
	}
	\caption[CFI gadgets with gadget and edge vertices]{Construction of a CFI graph from a base graph:
		Figure~\protect\subref{fig:cfi-gadget-edge-sub-base-graph} shows a part of a colored base graph~$G$.
		Figure~\protect\subref{fig:cfi-gadget-edge-sub-gadgets}
		shows the two gadgets for the red and the blue base vertex and in the CFI graph $\CFI{G,f}$.
		The figure  assumes that $f$ sets their incident edges to $0$.
		Gadget vertices inherit the color from their origins.
		The origin of an edge vertex is an ordered pair $(\bVertA,\bVertB)$ of base vertices and an edge vertex inherits the color of this pair.
		This color is  by asymmetrically coloring the edge vertices in the figure by the colors of $\bVertA$ and $\bVertB$.
		The $0/1$-triples in the gadget vertices refer to their adjacent edge vertices: for the order of entries in such a tuple in the left gadget,
		we use the order  $\textcolor{edgeColD}{\bullet} < \textcolor{edgeColC}{\bullet} < \textcolor{edgeColB}{\bullet}$
		on the neighbors of the blue base vertex.
		For the right gadget, the figure uses
		$\textcolor{edgeColA}{\bullet} < \textcolor{edgeColE}{\bullet} < \textcolor{edgeColF}{\bullet}$.
	}
	\label{fig:cfi-gadget-edge}
\end{figure}

It is well-known~\cite{CaiFI1992} that, for every $f,g \colon E \to \FF_2$,
we have $\CFI{G,g} \iso \CFI{G,f}$ if and only if $\sum g := \sum_{\bEdge\in E} g(\bEdge) = \sum f$.
Hence, if we are only interested in the graph up to isomorphism,
we also write $\CFI{G,0}$ and $\CFI{G,1}$.
A CFI graph $\CFI{G,f}$ is \defining{even} if ${\sum f = 0}$ 
and \defining{odd} otherwise.
A base edge $\bEdge \in E$ is called \defining{twisted} by $f$ and $g$ if ${g(\bEdge) \neq f(\bEdge)}$.
Twisted edges can be ``moved around'' using path isomorphisms (see e.g.~\cite{Lichter2023}):
If $\bVertA_1, \dots, \bVertA_\ell$ is a path in~$G$ for $\ell > 2$,
then there is an isomorphism $\autoA \colon \CFI{G,g} \to \CFI{G,g'}$,
where $g'(\bEdge) = g(\bEdge)$ apart from $\bEdge_1:=\set{\bVertA_1,\bVertA_2}$ and $\bEdge_2:= \set{\bVertA_{\ell-1}, \bVertA_\ell}$ which satisfy $g'(\bEdge_i) = g(\bEdge_i) +1$.
The isomorphism~$\autoA$ is the identity on all vertices apart from 
the gadget vertices, whose origin is contained in $\bVertA_2, \dots, \bVertA_{\ell-1}$, and from edge edges, whose origin is $(\bVertA_i,\bVertA_{i+1})$ or $(\bVertA_{i+1},\bVertA_{i})$ for $2\leq i < \ell-1$.
For an edge-vertex-pair of such an origin, $\autoA$ exchanges the two vertices of that pair.
This determines the image of the gadget vertices.
If $\bVertA_1, \dots, \bVertA_\ell$ is actually a cycle, we obtain an automorphism in this way.
If $G$ is totally ordered,
then every isomorphism is composed of path-isomorphisms
and
every automorphism of $\CFI{G,g}$ is the composition of such cycle-automorphisms.
For a class of base graphs~$\GraphClass$,
set
\[\CFI{\GraphClass} := \setcond[\big]{\CFI{G,g}}{G=(V,E) \in \GraphClass, g\colon E\to \FF_2}\]
to be the class of CFI graphs over $\GraphClass$.
The \defining{CFI query} for $\CFI{\GraphClass}$
is to decide whether a given CFI graph in $\CFI{\GraphClass}$ is even.
We collect some properties of CFI graphs:
\begin{lemC}[{\cite[Theorem~3]{DawarRicherby07}}]
	\label{lem:k-connected-treewidth-CFI-equiv}
	If $G$ is of  minimum degree $2$ and has treewidth at least $k$,
	in particular if $G$ is $k$-connected,
	then $\CFI{G,0} \kequiv{k} \CFI{G,1}$.
\end{lemC}
In particular, the CFI query for a class of base graphs
of unbounded treewidth or connectivity is not \IFPC{}-definable.
We now consider orbits of CFI graphs.
The following lemma is well-known: 
\begin{lem}
	\label{lem:edge-vertices-same-orbit}
	Let $G=(V,E,\spleq)$ be a colored base graph,
	let $\set{\bVertA,\bVertB} \in E$,
	let $\StructA = \CFI{G, f}$ for some $f\colon E\to \FF_2$,
	let $\tup{\vertC} \in \StructVA^*$ be an $\ell$-tuple of edge vertices,
	and let the origin of $\vertC_i$ be $(\bVertC_i,\bVertC'_i)$ for every $i \in [\ell]$.
	Then both edge vertices with origin $(\bVertA,\bVertB)$ are in the same $1$-orbit of $(\StructA,\tup{\vertC})$
	if and only if $\set{\bVertA,\bVertB}$ is part of a cycle in $G - \setcond{\set{\bVertC_i,\bVertC'_i}}{i \in [|\tup{\vertC}|]}$.
\end{lem}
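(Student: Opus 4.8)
The plan is to prove the two directions separately. For ``if'', set $G' := G - \setcond{\set{\bVertC_i,\bVertC'_i}}{i \in [\ell]}$ and suppose $\set{\bVertA,\bVertB}$ lies on a simple cycle $C$ of $G'$. Then I would use the associated cycle automorphism $\auto_C$ of $\StructA = \CFI{G,f}$ recalled in Section~\ref{sec:cfi-construction}: it is the identity outside the gadgets met by $C$ and outside the edge vertices whose origin has a base edge on $C$, and for every base edge of $C$ it swaps both edge-vertex pairs over that base edge. Hence $\auto_C$ sends one edge vertex with origin $(\bVertA,\bVertB)$ to the other, and since no $\set{\bVertC_i,\bVertC'_i}$ is an edge of $C$ it fixes each $\vertC_i$; so $\auto_C \in \autGroup{(\StructA,\tup{\vertC})}$, and the two edge vertices with origin $(\bVertA,\bVertB)$ share a $1$-orbit of $(\StructA,\tup{\vertC})$.

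For the converse I would argue by contraposition: assume $\set{\bVertA,\bVertB}$ lies on no cycle of $G'$, and suppose some $\auto \in \autGroup{(\StructA,\tup{\vertC})}$ sends one edge vertex $a$ with origin $(\bVertA,\bVertB)$ to the other one, $a'$. First I dispose of the degenerate case $\set{\bVertA,\bVertB} = \set{\bVertC_i,\bVertC'_i}$ for some $i$: then either $\vertC_i \in \set{a,a'}$, contradicting that $\auto$ fixes $\vertC_i$ but swaps $a$ and $a'$, or $\vertC_i$ is one of the two edge vertices with origin $(\bVertB,\bVertA)$, in which case exactly one of $a,a'$ is adjacent to $\vertC_i$ in $\CFI{G,f}$, again contradicting that $\auto$ fixes $\vertC_i$ and preserves adjacency. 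So $\set{\bVertA,\bVertB}$ is an edge of $G'$, and, lying on no cycle, it is a bridge of $G'$; let $W$ be the vertex set of the side of $\bVertA$ once this bridge is removed.

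The crux uses the standard description of automorphisms of CFI graphs (cf.~\cite{CaiFI1992,Lichter21}): $\auto$ is a composition of a label-preserving lift of a base-graph automorphism and a composition $\rho$ of cycle automorphisms, and the base edges over which such a $\rho$ swaps pairs form an element of the cycle space of $G$ (at each base vertex an even number of incident pairs is swapped). Since $\auto$ maps $a$ to $a'$, two edge vertices with the \emph{same} origin $(\bVertA,\bVertB)$, and fixes every $\vertC_i$, the underlying base automorphism of $\auto$ must fix $\bVertA$, $\bVertB$ and each $\bVertC_i$, $\bVertC'_i$, so its label-preserving lift fixes the corresponding edge-vertex pairs pointwise. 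Composing $\auto$ with the lift of the inverse base automorphism therefore yields a composition $\rho$ of cycle automorphisms that still fixes every $\vertC_i$ and still swaps $a$ and $a'$. Hence $\rho$ swaps the pair over $\set{\bVertA,\bVertB}$ but no pair over any $\set{\bVertC_i,\bVertC'_i}$, so the set of base edges over which $\rho$ swaps pairs is an element of the cycle space of $G'$ containing $\set{\bVertA,\bVertB}$; decomposing this even subgraph into edge-disjoint simple cycles, the one containing $\set{\bVertA,\bVertB}$ is a simple cycle of $G'$ through $\set{\bVertA,\bVertB}$, contradicting our assumption.

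The main obstacle I anticipate is making this structure theorem precise for \emph{colored} --- as opposed to totally ordered --- base graphs: when a base vertex has several equally colored neighbors, a gadget automorphism may additionally permute edge-vertex pairs, so one must track origins carefully to see that all such permuting moves are absorbed into the base-automorphism factor and that the residual $\rho$ only swaps pairs along a cycle-space element. This bookkeeping is routine given the cited literature. A self-contained alternative avoids the structure theorem and uses instead the classical CFI parity invariant over the bridge cut: for the set $W$ above, the parity of the labels (in a fixed reference labeling) of the selected edge vertices on the edges between $W$ and $V \setminus W$, added to $\sum_{\bEdge \in E(G[W])} f(\bEdge)$, differs at $a$ and $a'$ yet is preserved by any automorphism fixing $\tup{\vertC}$, which contradicts $\auto(a) = a'$.
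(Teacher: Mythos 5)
Your proof is correct and follows essentially the same route as the paper's: the ``if'' direction via the cycle automorphism (which fixes $\tup{\vertC}$ because it is the identity off the cycle), and the converse by reducing to a base-vertex-respecting automorphism whose set of swapped base edges is an even subgraph avoiding the edges $\set{\bVertC_i,\bVertC'_i}$, hence decomposes into cycles of $G'$, one through $\set{\bVertA,\bVertB}$. You spell out two points the paper compresses or delegates to the cited literature — the degenerate case $\set{\bVertA,\bVertB}=\set{\bVertC_i,\bVertC'_i}$ and the absorption of gadget permutations into the base-automorphism factor for merely colored base graphs — but the underlying argument is the same.
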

\begin{proof}
	Set $E' := \setcond{\set{\bVertC_i,\bVertC'_i}}{i \in [\ell]}$.
	Assume that there is a cycle in $G -E'$
	containing $\set{\bVertA,\bVertB}$.
	Then we can use a cycle-automorphism for that cycle
	to exchange the two edge vertex pairs with origin $(\bVertA,\bVertB)$.
	Because this automorphism is the identity on all vertices apart from the ones whose origin is contained in the cycle,
	it in particular fixes~$\tup{\vertC}$.
	
	For the other direction, assume that there is an automorphism~$\autoA$ mapping
	one edge vertex with origin $(\bVertA,\bVertB)$ to the other one.
	Then in particular~$\autoA$ has to exchange both.
	We can assume that~$\autoA$ is base-vertex-respecting,
	that is,~$\autoA$ maps vertices to vertices of the same origin.
	If~$\autoA$ was not base-vertex-respecting,~$\autoA$ induces
	a non-trivial automorphism of the base graph,
	whose inverse can be combined with~$\autoA$ to a base-vertex-respecting automorphism (see~\cite{Pago23}).
	But a base-vertex-respecting automorphism of $(\StructA, \tup{\vertC})$ is composed out of cycle-automorphisms
	not using the edges~$E'$.
	So there is in particular a single cycle in $G -E'$ containing $\set{\bVertA,\bVertB}$.
\end{proof}
We sketch the proof of the following lemma
to illustrate the requirement of high connectivity on the base graph.

\begin{lemC}[{\cite[Lemma~3.14]{GradelPakusa19}}]
	\label{lem:orbits-connectivity}
	Let $G=(V,E,\leq)$ be an ordered and $(k+2)$-connected base graph
	and let $\StructA = \CFI{G, f}$ for some $f\colon E\to \FF_2$.
	Let $\tup{\vertC} \in \StructVA^{\leq k}$ 
	and $\set{\bVertA,\bVertB} \in E$ be a base edge
	such that no vertex in $\tup{\vertC}$ has origin~$\bVertA$,~$\bVertB$, $(\bVertA,\bVertB)$, or $(\bVertB,\bVertA)$.
	Then the two edge vertices with origin $(\bVertA,\bVertB)$ are contained in the same orbit of $(\Struct, \tup{\vertC})$.
\end{lemC}
\begin{proof}
	We first assume that $\tup{\vertC}$ only consists of gadget vertices.	
	Let $V_{\tup{\vertC}} \subseteq V$ be the set of origins of all vertices in $\tup{\vertC}$.
	Because every vertex in~$G$ has degree at least $k+2$
	(since $G$ is $(k+2)$-connected),
	the vertices~$\bVertA$ and~$\bVertB$
	have degree at least $2$ in $G\setminus V_{\tup{\vertC}}$.
	Because $G$ is $(k+2)$-connected, there is a $\bVertA$-$\bVertB$-path in $G\setminus V_{\tup{\vertC}}$ not using the edge $\set{\bVertA,\bVertB}$
	(removing  $\bVertB$ from $G\setminus V_{\tup{\vertC}}$ removes at most $k+1$ vertices from $G$).
	So there is a cycle in $G\setminus V_{\tup{\vertC}}$ using the edge $\set{\bVertA,\bVertB}$ and thus there is
	an automorphism exchanging the two edge vertices with origin~$(\bVertA,\bVertB)$.
	
	Now assume that there is an edge vertex with origin $(\bVertA, \bVertB)$
	in $\tup{\vertC}$.
	Let $\tup{\vertC}'$ be obtained from $\tup{\vertC}$
	by replacing this vertex with a gadget vertex with origin $\bVertA$.
	Every automorphism fixing $\tup{\vertC}'$ also fixes $\tup{\vertC}$.
\end{proof}

\section{Canonization of CFI Graphs in \IFPCWSCI{}}
\label{sec:canonize-cfi-graphs}

In this section we show that with respect to canonization
the CFI construction ``loses its power'' in \IFPCWSCI{}
in the sense that canonizing CFI graphs is not harder
than canonizing the base graphs.
In the following, we work with a class of base graphs closed under
individualization.
Intuitively, this means that the class is closed under assigning some
vertices unique new colors.
We adapt a result of~\cite{LichterSchweitzer24} from \CPTWSC{} to \IFPCWSC{}
and show that it suffices to define a single orbit
for every individualization of a graph
to obtain a definable canonization.
The canonization approach adapts Gurevich's canonization algorithm~\cite{Gurevich97} and requires the WSC-fixed-point operator.
We then show that once we define orbits of (the closure under individualization of the) base graphs,
we can define orbits of CFI graphs
and hence canonize them.
Here we need the interpretation operator to reduce a CFI graph to its base graph, which is needed to define its orbits.

Intuitively, individualizing vertices
is just a tuple of parameters.
However, the number of vertices to individualize is not bounded
and we need to encode them via relations in the first-order setting.

\begin{defi}[Individualization of Vertices]
	Let $\Struct$ be a relational structure.
	A binary relation ${\indRel^\Struct} \subseteq \StructV^2$ 
	is an \defining{individualization} of a set of vertices $W \subseteq \StructVA$
	if~$\indRel^\StructA$ is a total order on~$W$
	and~${\indRel^\Struct} \subseteq W^2$.
	We say that $\indRel^\Struct$ is an individualization
	if it is an individualization of some $W \subseteq \StructVA$
	and that the vertices in $V$ are \defining{individualized} by $\indRel^\Struct$.
\end{defi}

The relation $\indRel^\Struct$ defines a total order on the individualized vertices.
Intuitively, it assigns unique colors to these vertices.
Instead of $(\Struct, \indRel^\Struct)$,
one can think of $(\Struct, \tup{\vertA})$,
where  $\vertA_1 \indRel^\StructA \cdots \indRel^\StructA \vertA_{|\tup{\vertA}|}$
are the vertices individualized by~$\indRel^\StructA$.

\begin{defi}[Closure under Individualization]
	Let $\GraphClass$ be a class of $\sig$-structures.
	The \defining{closure under individualization} $\indClosure{\GraphClass}$ of $\GraphClass$
	is the class
	of $(\sig \disunion \set{\indRel})$-structures
	such that ${(\StructA, \indRel^\StructA) \in \indClosure{\GraphClass}}$
	for every $\StructA \in \GraphClass$
	and every individualization $\indRel^\StructA$.
\end{defi}

In the following,
let $L$ be one of the logics \IFPC{}, \IFPCWSC{}, or \IFPCWSCI{}.
We adapt some notions related to canonization from~\cite{LichterSchweitzer24}
to the first-order setting.
Note that all following definitions implicitly include the closure under individualization.
\begin{defi}[Canonization]
	Let $\GraphClass$ be a class of $\sig$-structures.
	An \defining{$L$-canonization} for~$\GraphClass$ is an
	$L[\sig\disunion\set{\indRel}, \sig\disunion\set{\indRel,\leq}]$-interpretation $\interpret$
	satisfying the following:
	\begin{enumerate}
		\item $\leq^{\interpret(\StructA)}$ is a total order on $\interpret(\StructA)$ for every $\StructA \in \indClosure{\GraphClass}$,
		\item $\StructA \iso \reduct{\interpret(\StructA)}{(\sig\disunion\set{\indRel})}$ for every $\StructA \in \indClosure{\GraphClass}$, and
		\item $\interpret(\StructA) \iso \interpret(\StructB)$ if and only if $\StructA \iso \StructB$
		for every $\StructA,\StructB \in \indClosure{\GraphClass}$.
	\end{enumerate}
	The structure $\interpret(\Struct)$ is called the \defining{$\interpret$-canon} (or just the canon if unambiguous) of $\Struct$.
	We say that $L$ \defining{canonizes} $\GraphClass$
	if there is an $L$-canonization for $\GraphClass$.
\end{defi}

We make same remarks to the former definition.
Condition~3 requires isomorphism of \emph{ordered} structures.
For a logic $L$ possessing numbers (as in our case),
Condition~3 can equivalently be stated with equality.
While Condition~3 is essential for algorithmic canonizations,
it is implied by Condition~2 for $L$-definable canonizations:
Let $\StructA, \StructB \in \indClosure{\GraphClass}$.
If~$\StructA \iso \StructB$,
then $\interpret(\StructA) \iso \interpret(\StructB)$
because~$L$ is isomorphism-invariant.
If $\interpret(\StructA) \iso \interpret(\StructB)$,
then by Condition~2,
$\StructA \iso \reduct{\interpret(\StructA)}{(\sig\disunion\set{\indRel})}
\iso \reduct{\interpret(\StructB)}{(\sig\disunion\set{\indRel})} \iso \StructB$.

\begin{defi}[Distinguishable $k$-Orbits]
	The logic $L$ \defining{distinguishes the $k$-orbits} of a class of $\sig$-structures $\GraphClass$,
	if there is an $L[\sig\disunion\set{\indRel}]$-formula $\formA(\tup{\uniVarA},\tup{\uniVarB})$
	that has~$2k$ free variables~$\tup{\uniVarA}$ and~$\tup{\uniVarB}$ such that $|\tup{\uniVarA}| = |\tup{\uniVarB}| = k$
	and that defines, for every $\StructA \in \indClosure{\GraphClass}$, a total preorder on $\StructVA^k$ 
	whose equivalence classes coincide with the $k$-orbit partition of~$\StructA$.
\end{defi}
Note that because $\formA$ defines a total preorder,
$L$ does not only define the $k$-orbit partition
but also orders the $k$-orbits.

\begin{defi}[Ready for Individualization]
	A class of $\sig$-structures $\GraphClass$ is \defining{ready for individualization in $L$}
	if there is an $L[\sig\disunion\set{\indRel}]$-sentence $\formA$
	defining, for every $\Struct \in \indClosure{\GraphClass}$,
	a set of vertices $O = \formA^\Struct$ such that
	\begin{itemize}
		\item $O$ is a $1$-orbit of $\Struct$,
		\item $|O|>1$ if $\Struct$ has a non-trivial $1$-orbit, and
		\item if $O =\set{\vertA}$ is a singleton set,
		then $\vertA$ is not individualized by $\indRel^\Struct$ unless $\indRel^\Struct$ individualizes~$\StructV$, i.e., all vertices.
	\end{itemize}
\end{defi}

The following is a statement similar to \CPTWSC{} in~\cite{LichterSchweitzer24}
and the proof is analogous (note that~\cite{LichterSchweitzer24} includes definable isomorphism, which we do not do here):
\begin{lem}
	\label{lem:canon-orbit-ready-iff}
	Let $L$ be one of the logics \IFPCWSC{} and \IFPCWSCI{} and
	let $\GraphClass$ be a class of $\sig$-structures.
	The following are equivalent:
	\begin{enumerate}
		\item $L$ defines a canonization for $\GraphClass$.
		\item $L$ distinguishes the $k$-orbits of $\GraphClass$ for every positive $k \in \nat$.
		\item $\GraphClass$ is ready for individualization in $L$.
	\end{enumerate}
\end{lem}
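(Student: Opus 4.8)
The plan is to establish the cycle of implications $(1)\Rightarrow(2)\Rightarrow(3)\Rightarrow(1)$; this is the first-order counterpart of the corresponding result for \CPTWSC{} in~\cite{LichterSchweitzer22}, so I only sketch it and highlight where the \IFPC{}-setting matters. For $(1)\Rightarrow(2)$, fix an $L$-canonization $\interpret$ for $\GraphClass$. Given $(\StructA,\indRel^\StructA)\in\indClosure{\GraphClass}$ and a $k$-tuple $\tup{\uniVarA}$, one first-order-defines an individualization $\indRel_{\tup{\uniVarA}}$ extending $\indRel^\StructA$ by appending the entries of $\tup{\uniVarA}$ in order (dropping repetitions and already-individualized entries). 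After a first-order check that $\tup{\uniVarA}$ and $\tup{\uniVarB}$ have the same pattern of equalities and agree on which entries are individualized and at which position, $\tup{\uniVarA}$ and $\tup{\uniVarB}$ lie in the same $k$-orbit of $(\StructA,\indRel^\StructA)$ iff $(\StructA,\indRel_{\tup{\uniVarA}})\iso(\StructA,\indRel_{\tup{\uniVarB}})$ (using that individualized vertices are fixed pointwise), iff --- by the isomorphism clause of a canonization --- the ordered structures $\interpret(\StructA,\indRel_{\tup{\uniVarA}})$ and $\interpret(\StructA,\indRel_{\tup{\uniVarB}})$ are isomorphic. Substituting the definable relations $\indRel_{\tup{\uniVarA}},\indRel_{\tup{\uniVarB}}$ for $\indRel$ in the component formulas of $\interpret$ yields $L$-formulas describing the two canons, and ``the $\leq$-monotone bijection between them respects every $\sig\disunion\set{\indRel}$-relation'' is a first-order combination of these, the $i$-th elements being accessible by \IFPC{}-counting; comparing the canons lexicographically (and the types beforehand) gives the required total preorder on $\StructVA^k$. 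Note this step stays within the same universe and so does not need closure under interpretations.

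For $(2)\Rightarrow(3)$, let $\formB(\uniVarA,\uniVarB)$ define, on every structure in $\indClosure{\GraphClass}$, a total preorder whose classes are the $1$-orbits. Define $\formA$ to pick $\uniVarA$ from the $\preceq$-least $1$-orbit of size at least $2$ if one exists, else from the $\preceq$-least $1$-orbit whose unique vertex is not individualized if one exists, and else from the $\preceq$-least $1$-orbit. Since ``has size at least $2$'' and ``is not individualized'' are first-order over $\formB$ and $\indRel$, $\formA$ is an $L$-formula; it defines a $1$-orbit which is non-trivial whenever a non-trivial $1$-orbit exists, and if its value is a singleton then no $1$-orbit is non-trivial, so either all vertices are individualized or we picked a non-individualized singleton. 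Hence $\GraphClass$ is ready for individualization in $L$.

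For $(3)\Rightarrow(1)$ the plan is to implement Gurevich's canonization algorithm~\cite{Gurevich97} as a single WSC-fixed-point operator, following~\cite{LichterSchweitzer22}. The binary fixed-point relation $R$ grows into a total order on $\StructVA$: reading $R$ together with $\indRel^\StructA$ as a partial individualization $\indRel_R$, the choice formula $\formChoice$ evaluates $\formA$ with $\indRel$ replaced by the first-order definition of $\indRel_R$ and returns its not-yet-ordered vertices, i.e.\ the currently definable orbit $O$; the step formula $\formStep$ appends the chosen vertex (the unique element of the choice relation $S$) as the new $\indRel_R$-maximum; the output formula reads off the resulting order as the order $\leq$ of the canon, leaving universe and $\sig\disunion\set{\indRel}$-relations unchanged. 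One vertex is ordered per stage, so a fixed point is reached within $|\StructVA|$ stages and the operator is \PTime{}-evaluable --- which, unlike in \CPT{}, is all \IFPC{} needs. Any two completions obtained from different choices are isomorphic because every choice is from an orbit, so the output ordered structure is unique; the first two conditions of a canonization then follow (the third being automatic by the remark after the definition).

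The main obstacle is the witnessing formula $\formWit$: the operator accepts a choice from $O$ only if $\formWit$ produces, for all $\vertA,\vertA'\in O$, an automorphism of the current partially individualized structure, whereas readiness supplies only the orbit $O$. Here the access of $\formWit$ to the completed fixed-point $R^*$ is essential: from $R^*$, a total individualization of $(\StructA,\indRel^\StructA)$, one reconstructs the required automorphisms as the unique --- hence definable --- isomorphisms between two fully individualized (thus rigid) completions of the current structure, one individualizing $\vertA$ next and the other $\vertA'$, which are isomorphic precisely because $\vertA,\vertA'$ share an orbit. Organizing the iteration so that the data defining these completions is recorded, and verifying that the produced maps genuinely witness each $O$ --- which is exactly what makes the operator isomorphism-invariant --- is the technical core and is carried over from~\cite{LichterSchweitzer22}; the interpretation operator of $\IFPCWSCI{}$ is not needed for this lemma.
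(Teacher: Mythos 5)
Your proposal is correct and follows exactly the same route as the paper: the cycle $(1)\Rightarrow(2)\Rightarrow(3)\Rightarrow(1)$ with lexicographic comparison of canons of individualized tuples, selection of the least non-trivial (or least non-individualized singleton) $1$-orbit, and Gurevich's canonization algorithm realized by a single WSC-fixed-point operator whose witnessing formula exploits access to the completed fixed-point, with the technical core deferred to the \CPTWSC{} construction of~\cite{LichterSchweitzer22}. The paper's own proof is in fact only a brief sketch of these same three steps, so your more detailed account is entirely consistent with it.
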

\begin{proof}
	We show (1) $\Rightarrow$ (2) $\Rightarrow$ (3) $\Rightarrow$ (1).
	To show (1) $\Rightarrow$ (2),
	two $k$-tuples are ordered
	according to the lexicographical order on the canons
	when individualizing these tuples.
	For (2) $\Rightarrow$ (3),
	one orders the $1$-orbits and picks the minimal (according to that order)
	non-trivial orbit.
	If such an orbit does not exist, the minimal
	singleton $1$-orbit whose vertex is not individualized is picked.
	Finally, to show (3) $\Rightarrow$ (1),
	one defines a variant of Gurevich's canonization algorithm~\cite{Gurevich97}
	using a WSC-fixed-point operator.
	This is done exactly as in~\cite{LichterSchweitzer24} for \CPTWSC{}.
	We sketch the approach here and refer to the original work for further details.
	Vertices are individualized iteratively: distinguish the orbits of the input structure with the current individualization, 
	choose a vertex from the defined orbit, and
	individualize the chosen one.
	This procedure is repeated until all vertices are individualized.
	Then the individualization is a total order on the input.
	Because all choices were made from orbits, this order defines a canonization.
	To witness the choices, this approach is essentially applied again,
	starting from two vertices~$u$ and~$v$ in the same orbit.
	Intuitively,  each of both vertices gets individualized in a ``copy''
	of the current structure.
	For both copies, we obtain an order in the way before.
	Since both copies are isomorphic because~$u$ and~$v$ are in the same orbit,
	these two orders induce an isomorphism mapping~$u$ to~$v$.
	This isomorphism can be turned into an  automorphism of the original structure
	witnessing that~$u$ and~$v$ are in the same orbit.
	Because the witnessing formula has access to the total order initially obtained using choices,
	no further choices are needed to define these automorphism-inducing orders.
	In particular, it suffices to use one WSC-fixed-point operator.
	It is easy to see there that once the formula defining the orbit is an $L$-formula,
	this approach can easily be expressed in~$L$.
\end{proof}

While stating Lemma~\ref{lem:canon-orbit-ready-iff}
requires some technical definitions,
it simplifies defining canonization using WSC-fixed-point operators.
Their use is hidden in Gurevich's canonization algorithm:
Witnessing automorphisms do not have to be defined explicitly.

\begin{lem}
	\label{lem:CFI ready-individualization}
	Let $\GraphClass$ be a class of colored base graphs.
	\hspace{-0.5pt}If \IFPCWSCI{} distinguishes $2$-orbits of $\GraphClass$,
	then $\CFI{\GraphClass}$ is ready for individualization in \IFPCWSCI{}.
\end{lem}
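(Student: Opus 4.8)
The goal is to exhibit an \IFPCWSCI{}-sentence $\formA$ over $\sig \disunion \set{\indRel}$ (where $\sig$ is the signature of CFI graphs) that, for every individualization of every CFI graph $\StructA = \CFI{G,f} \in \CFI{\GraphClass}$, defines a $1$-orbit of $\StructA$ satisfying the three bullets of ``ready for individualization''. The natural strategy is to \emph{translate the orbit problem on $\StructA$ into the orbit problem on the base graph $G$} via an interpretation, then invoke the hypothesis that $2$-orbits of $\GraphClass$ are distinguishable. First I would observe that the base graph $G$ is \FO{}-interpretable (indeed \IFPC{}-interpretable) inside any CFI graph $\StructA=\CFI{G,f}$ together with an individualization: the base vertices are recovered as the color classes of gadget vertices (or equivalently via the origin relation, which is definable since vertices of a common gadget are mutually adjacent/at bounded distance in a definable way), and a base edge $\set{\bVertA,\bVertB}$ corresponds to the definable ``there is an edge-vertex pair originating from $(\bVertA,\bVertB)$''. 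Crucially, the preorder $\spleq$ on $G$ and — via $\indRel^\StructA$ — an individualization of some base vertices/edges are transported to $G$, so the interpreted structure lies in $\indClosure{\GraphClass}$ (possibly after re-encoding individualized \emph{edge} vertices as individualized \emph{incident base vertices}, using minimal degree $3$ to keep this harmless).

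\textbf{Key steps, in order.} (1) Build the \IFPCWSCI{}-interpretation $\interpret$ taking $(\StructA,\indRel^\StructA)$ to $(G, \spleq, \indRel^G)$ in $\indClosure{\GraphClass}$; check it is well-defined (the congruence $E$ of the interpretation is trivial/equivalence-free here). (2) Apply the hypothesis: inside the scope of an $\iop{\interpret}{\cdot}$ operator, there is an \IFPCWSCI{}-formula $\chi(\uniVarA_1\uniVarA_2,\uniVarB_1\uniVarB_2)$ defining the $2$-orbit preorder of $(G,\spleq,\indRel^G)$; in particular one can select the $\spleq$-and-order-least base edge $\set{\bVertA,\bVertB}$ whose two orientations $(\bVertA,\bVertB),(\bVertB,\bVertA)$ lie in distinct $1$-orbits relative to the already-individualized data — equivalently, by Lemma~\ref{lem:edge-vertices-same-orbit}, the least edge lying on a cycle in $G$ minus the individualized edges. (3) Pull this selected base edge back through $\interpret$ to a \emph{definable set of base edges of $\StructA$}; then, using Lemma~\ref{lem:edge-vertices-same-orbit} again, the two edge vertices with origin $(\bVertA,\bVertB)$ form exactly a nontrivial $1$-orbit of $\StructA$ (relative to the individualized vertices) precisely when that base edge is on such a cycle. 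Output $O$ = the union, over the selected base edge(s), of those two edge vertices. (4) Verify the three conditions: $O$ is a genuine $1$-orbit (cycle-automorphisms of the CFI graph, which fix everything off the cycle, realize the transposition, cf.\ the proof of Lemma~\ref{lem:edge-vertices-same-orbit}); $|O|>1$ whenever $\StructA$ has any nontrivial $1$-orbit (here minimal degree $3$ and the structure of CFI graphs guarantee that if some orbit is nontrivial then some edge-vertex pair is, because fixing all edge vertices pins down the rest of each gadget); and if $O$ is a singleton we have individualized essentially everything, so we may then extend $\formA$ to also handle the fully-individualized (rigid) case by picking the least non-individualized vertex, keeping $\formA$ a single sentence via a disjunction over cases.

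\textbf{Main obstacle.} The delicate point is matching the \emph{individualization bookkeeping} across the interpretation: $\indRel^\StructA$ may individualize edge vertices or gadget vertices of $\StructA$, and we must turn this into a legitimate individualization $\indRel^G$ of \emph{base} vertices so that the interpreted structure genuinely lies in $\indClosure{\GraphClass}$ and the hypothesis ``$2$-orbits of $\GraphClass$ are distinguishable'' actually applies — while simultaneously ensuring that the orbit of $\StructA$ we read off is correct \emph{relative to the original} $\indRel^\StructA$, not the coarser base-level individualization. The resolution is that an individualized edge vertex determines (and is determined up to its pair-partner by) its origin base edge, and individualizing a base edge can be simulated by individualizing both endpoints; because $G$ has minimum degree $3$, doing so does not destroy the cycles needed to witness nontrivial orbits, and Lemma~\ref{lem:edge-vertices-same-orbit} precisely records which orientations remain in a common orbit. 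A second, more routine, obstacle is defining the origin relation and the base graph purely syntactically in \FO{}/\IFPC{} from the colored CFI graph; this is standard but must be stated carefully (using the color classes and adjacency pattern of the gadgets), and the degree-$3$ hypothesis again smooths it by excluding degenerate gadgets. Why the \emph{nesting} must increase (as announced after Theorem~\ref{thm:canonize-CFI if-base}) also falls out here: Gurevich's algorithm inside a WSC-fixed-point operator turns ``distinguish $2$-orbits of $\GraphClass$'' into ``canonize $\GraphClass$'', and our interpretation operator sits on top of that, so the CFI-orbit formula uses one more WSC-fixed-point operator and one more interpretation operator than the base-graph orbit formula.
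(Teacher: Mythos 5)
Your overall strategy — interpret the base graph $(G,\orig{\indRel^\Struct})$ out of the CFI graph, evaluate the $2$-orbit formula there via the interpretation operator, and pull the answer back to edge vertices using Lemma~\ref{lem:edge-vertices-same-orbit} — is the same as the paper's, and your handling of the individualization bookkeeping (re-encoding individualized gadget/edge vertices as individualized base data) matches the paper's treatment. However, there is a genuine gap in your case analysis. Your construction produces a non-singleton orbit only when some base edge lies on a cycle of $G-\orig{\indRel^\Struct}$, and you then jump directly to the fully rigid case. This skips the case where $G-\orig{\indRel^\Struct}$ is acyclic (so by Lemma~\ref{lem:edge-vertices-same-orbit} no edge-vertex-pair can be swapped) but $(G,\orig{\indRel^\Struct})$ still has non-trivial automorphisms permuting whole gadgets; then $\StructA$ has non-trivial $1$-orbits, and your formula would output a singleton, violating the second condition of ``ready for individualization.'' Your justification — ``if some orbit is nontrivial then some edge-vertex pair is'' — is false precisely in this situation. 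The paper closes this gap with the edge-vertex-pair-order machinery (Claims~\ref{clm:edge-vertex-pair-order-all} and~\ref{clm:orbit-no-cycle}): in the acyclic case one can, in plain \IFPC{} and using minimal degree $3$, propagate an isomorphism-invariant choice of one vertex from each edge-vertex-pair through the gadgets, and the restriction of this set to a non-trivial base $2$-orbit is then a genuine non-trivial $1$-orbit of $(\StructA,\indRel^\Struct)$.

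A second, related omission: in the final rigid case, ``pick the least non-individualized vertex'' is not definable without that same edge-vertex-pair-order, since within a pair neither vertex is distinguished by the coloring even when all orbits are singletons; the paper combines the pair-order with the (now total) order on base edges to obtain a definable total order on $\StructVA$ before selecting the minimal non-individualized vertex. Also be careful in your step (3): the two edge vertices of a \emph{single} selected base edge need not form a full $1$-orbit when the base graph has automorphisms moving that edge; you must take the union over an entire $2$-orbit of the base graph, as in the paper's Claim~\ref{clm:orbit-cycle} (your parenthetical ``edge(s)'' suggests you sensed this, but it needs to be the stated construction, not an afterthought).
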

\begin{proof}
	Let $G=(V,E,\spleq) \in \GraphClass$,
	let $g\colon E \to \FF_2$,
	and let $\Struct = (\CFI{G,g}, \indRel^\Struct)$,
	where $\indRel^\Struct$ individualizes some vertices of
	$\CFI{G,g}$.
	We assume that only edge vertices are individualized.
	Instead of individualizing a gadget vertex with origin $\bVertA$,
	one can individualize one edge vertex
	with origin $(\bVertA, \bVertB)$ for every $\bVertB \in \neighbors{G}{\bVertA}$, namely the neighbors of the gadget vertex.
	This translation  is \IFPC{}-definable in both directions.
	
	We denote by $\orig{\indRel^\Struct}$
	the set of all (directed) base edges $(\bVertA,\bVertB)$
	that are the origin of some edge vertex individualized by $\indRel^\Struct$.
	We denote by $G-\orig{\indRel^\Struct}$
	the graph obtained from~$G$ by deleting
	the edges in $\orig{\indRel^\Struct}$ viewed as undirected edges.
	We can turn $\indRel^\Struct$ into an individualization for $G$:
	the individualization $\indRel^\Struct$ defines a total order on $\orig{\indRel^\Struct}$ 
	and for each directed base edge $(\bVertA,\bVertB) \in \orig{\indRel^\Struct}$ we individualize $\bVertA$ and $\bVertB$.
	In that way, we denote by $(G, \orig{\indRel^\Struct})$
	the graph~$G$ with this individualization.
	
	We analyze the cases in which there are non-trivial  $1$-orbits of $(\Struct, \indRel^\Struct)$.
	If there is a $2$-orbit 
	of $(G, \orig{\indRel^\Struct})$
	containing base edges part of a cycle in 
	$G -  \orig{\indRel^\Struct}$,
	we obtain a non-trivial $1$-orbit of $(\Struct, \indRel^\Struct)$ as follows:
	\begin{clm}
		\label{clm:orbit-cycle}
		Let $O$ be a $2$-orbit of $(G, \orig{\indRel^\Struct})$.
		If every (directed) edge in $O$ is part of a cycle in $G - \orig{\indRel^\Struct}$,
		then the set of edge-vertex-pairs
		$\setcond{\vertA}{\text{the origin of } \vertA \text{ is in } O}$
		is a $1$-orbit of $(\StructA,\indRel^\Struct)$.
	\end{clm}
	\begin{claimproof}
		Because $O$ is a $2$-orbit,
		either all or none of the directed base edges in $O$ are part of a cycle in $G - \orig{\indRel^\Struct}$.
		By Lemma~\ref{lem:edge-vertices-same-orbit},
		the edge-vertex-pairs with origin in $O$ are in the same $1$-orbit.
		Using an automorphism of the base graph,
		the edge vertices with origin $(\bVertA,\bVertB) \in O$
		can be mapped to the edge vertices with origin $(\bVertA',\bVertB') \in O$
		for every $(\bVertA,\bVertB),(\bVertA',\bVertB') \in O$.
		That is, $\setcond{\vertA}{\text{the origin of } \vertA \text{ is in } O}$
		is a subset of a $1$-orbit of $(\StructA,\indRel^\Struct)$.
		It cannot be a strict subset
		because then an edge vertex with origin in $O$
		has to be mapped to an edge vertex with origin not in $O$,
		which contradicts that $O$ is an orbit.
	\end{claimproof}

	In the case that no such $2$-orbit exists, 
	there are possibly other non-trivial $1$-orbits.
	They arise from automorphisms of the base graph.	
	An \defining{edge-vertex-pair-order} of a set of base edges  $E' \subseteq E$
	is a set of edge vertices~$R$ 
	such that, for every edge-vertex-pair $\set{\vertA_1,\vertA_2}$ with origin $(\bVertA,\bVertB)$ such that $\set{\bVertA,\bVertB} \in E'$,
	exactly one of~$\vertA_1$ and~$\vertA_2$ is contained in~$R$.
	Intuitively,~$R$ defines an order per edge-vertex-pair with origin in~$E'$,
	but does not order edge-vertex-pairs of different origins.
	
		\begin{figure}
		\centering
		\begin{tikzpicture}

			\begin{scope}[name prefix=a]
				
				\node[edgevertex, fill=gray!30!white] (v10) at (+\edgePairDistAngle:\edgePairRadius) {$0$};
				\node[edgevertex, fill=red!15!white, draw =red, ultra thick, dotted] (v11) at (-\edgePairDistAngle:\edgePairRadius) {$1$};
				\node[edgevertex, fill=red!30!white, draw =red, ultra thick] (v20) at (220-\edgePairDistAngle:\edgePairRadius) {$0$};
				\node[edgevertex, fill=gray!30!white] (v21) at (220+\edgePairDistAngle:\edgePairRadius) {$1$};
				\node[edgevertex, fill=gray!30!white] (v30) at (140+\edgePairDistAngle:\edgePairRadius) {$0$};
				\node[edgevertex, fill=red!30!white, draw =red, ultra thick] (v31) at (140-\edgePairDistAngle:\edgePairRadius) {$1$};
				
				\node[gadgetvertex, fill=gray] (u000) at (0.4, 1.5*\gadgetVertexYDist) {$000$};
				\node[gadgetvertex, fill=gray] (u011) at (0.4, 0.5*\gadgetVertexYDist) {$110$};
				\node[gadgetvertex, fill=blue!30!white, draw =blue, ultra thick] (u101) at (0.4, -0.5*\gadgetVertexYDist) {$101$};
				\node[gadgetvertex, fill=gray] (u110) at (0.4, -1.5*\gadgetVertexYDist) {$011$};
				
				\path [-, draw=black, thick]
				(u000) to (v10)
				(u000) to (v20)
				(u000) to (v30)
				(u011) to (v10)
				(u011) to (v21)
				(u011) to (v31)
				(u101) to (v11)
				(u101) to (v20)
				(u101) to (v31)
				(u110) to (v11)
				(u110) to (v21)
				(u110) to (v30);

			\end{scope}

		\end{tikzpicture}
		\caption[Extending Edge-Vertex-Pair-Orders]{
			Extending an edge-vertex-pair-order through one gadget as in Claim~\ref{clm:edge-vertex-pair-order-all} in the proof of Lemma~\ref{lem:CFI ready-individualization}:
			Gadget vertices are drawn in gray, edge vertices in light gray
			(cf.~Figure~\ref{fig:cfi-gadget-edge}).
			The two red-circled vertices are contained in an edge-vertex-pair-order.
			Because the shown gadget is of degree $3$,
			there is a unique blue gadget vertex adjacent to the two red vertices.
			The blue vertex has a unique additional adjacent vertex
			which is shown in red-dotted.
			By adding this canonical vertex to the edge-vertex-pair-order,
			it can be extended to the remaining edge-vertex-pair of the gadget.
		}
		\label{fig:cfi-extend-edge-vertex-pair-order}
	\end{figure}

	\begin{clm}
		\label{clm:edge-vertex-pair-order-all}
		There is an \IFPC{}-formula (uniformly in $G$)
		that defines an edge-vertex-pair-order on all base edges~$E$
		if $G - \orig{\indRel^\Struct}$ has no cycles.
	\end{clm}
	\begin{claimproof}
		We first show the following:
		Whenever for a base vertex
		there is an edge-vertex-pair-order~$R$ of all 
		incident edges apart from one,
		then~$R$ can be extended to the remaining incident edge. This is done as follows (cf.~Figure~\ref{fig:cfi-extend-edge-vertex-pair-order}).
		Let $\bVertA \in V$ be a base vertex
		and let $\neighbors{G}{\bVertA} = \set{\bVertB_1, \dots, \bVertB_d}$.
		Assume without loss of generality that $R$ orders the edge-vertex-pairs of the base edges $\set{\bVertA,\bVertB_i} \in E$ for every $i \in [d-1]$
		and let~$\vertB_i \in R$ be the edge vertex with origin $(\bVertA,\bVertB_i)$ for every $i \in [d-1]$.
		By construction of the CFI graphs,
		there is exactly one gadget vertex~$\vertA$ of the gadget of~$\bVertA$
		adjacent to~$\vertB_i$ for all $i \in [d-1]$.
		Because every gadget vertex is adjacent to exactly one edge vertex per incident edge-vertex-pair,
		we can add this unique edge vertex~$\vertB_d$ with origin $(\bVertA, \bVertB_d)$  and the unique edge vertex~$\vertB_d'$
		with origin $(\bVertB_d, \bVertA)$ adjacent to~$\vertA_d$  to~$R$.
		These two vertices can clearly be defined in \IFPC{} (without an order on the~$\bVertB_i$).
		
		We propagate this approach through gadgets.
		Assume there is an edge-vertex-pair-order~$R$ of~$E'$
		such that $G - E'$ has no cycles.
		Then $G-E'$ is a forest and there are vertices of degree one in $G-E'$
		(unless $E' = E$).
		For all degree-one vertices of $G - E'$,
		we extend~$R$ to the remaining incident edge as shown before.
		So unless $E' = E$, we added more edges to~$E'$.
		Surely, $G -E'$ has still no cycles.
		So
		we can repeat this process using a fixed-point operator
		to define an edge-vertex-pair-order of~$E$.
		
		It is clear that we can turn~$\indRel^\Struct$ into an
		edge-vertex-pair-order~$R$ of $\orig{\indRel^\Struct}$ (seen as undirected edges):
		For every  $(\bVertA,\bVertB) \in \orig{\indRel^\Struct}$,
		at least for one of the edge-vertex-pairs with origin $(\bVertA,\bVertB)$
		is individualized by~$\indRel^\Struct$.
		We put the $\indRel^\Struct$-minimal such vertex $\vertA$ into~$R$.
		For the edge-vertex-pair with origin $(\bVertB,\bVertA)$, we add the unique edge vertex adjacent to $\vertA$ to~$R$
		(if both $(\bVertA,\bVertB),(\bVertB,\bVertA) \in \orig{\indRel^\Struct}$,
		we start with the directed edge containing the $\indRel^\Struct$\nobreakdash-minimal vertex).
		One easily sees that $R$ is \IFPC{}-definable.
		By assumption, $G-\orig{\indRel^\Struct}$ has no cycles
		and thus we can define an edge-vertex-pair-order of~$E$.
	\end{claimproof}

	\begin{clm}
		\label{clm:orbit-no-cycle}
		Suppose~$R$ is an isomorphism-invariant edge-vertex-pair-order of~$E$,
		that is, $\autGroup{(\StructA,\indRel^\Struct)} = \autGroup{(\StructA,\indRel^\Struct, R)}$,
		and~$O$ is a $2$-orbit of $(G, \orig{\indRel^\Struct})$.
		Then the set of edge vertices
		$\setcond{\vertA \in R}{\text{the origin of } \vertA \text{ is in } O}$
		is a $1$-orbit of $(\StructA,\indRel^\Struct)$.
	\end{clm}
	\begin{claimproof}
		Let $(\bVertA,\bVertB),(\bVertA',\bVertB') \in O$,
		i.e., there is an automorphism $\autoA \in \autGroup{(G, \orig{\indRel^\Struct})}$
		such that $\autoA((\bVertA,\bVertB)) = (\bVertA',\bVertB')$.
		Every automorphism of $(G, \orig{\indRel^\Struct})$ induces an automorphism of $(\StructA, \indRel^\Struct)$.
		So there is an automorphism $\autoB \in \autGroup{(\StructA, \indRel^\Struct)}$ mapping the edge-vertex-pair with origin $(\bVertA,\bVertB)$
		to the one with origin $(\bVertA',\bVertB')$.
		Because~$R$ is isomorphism-invariant,
		$\autoB$ has to map edge vertices in~$R$ to edge vertices in~$R$.
		Hence, $\setcond{\vertA \in R}{\text{the origin of } \vertA \text{ is in } O}$
		is a subset of a $1$-orbit of $(\StructA,\indRel^\Struct)$.
		This set cannot be a proper subset of an orbit because~$R$ is isomorphism-invariant.
	\end{claimproof}

	Let $\formA_{2\text{-orb}}(\tup{x},\tup{y})$ be an $\IFPCWSCI$-formula distinguishing $2$-orbits of $\GraphClass$
	(that is, by definition, of $\indClosure{\GraphClass})$.
	We cannot evaluate $\formA_{2\text{-orb}}$ on $\StructA$
	to define $2$-orbits of $(G, \orig{\indRel^\Struct})$
	because~$\StructA$ has a more complicated automorphism structure
	than~$G$ and it is not clear how to witness orbits.
	Here we use the interpretation operator.
	We define an \IFPC{}-interpretation defining the base graph $(G, \orig{\indRel ^\Struct})$.
	Intuitively, we contract all gadgets to a single vertex,
	remove all edge vertices,
	and instead directly connect the contracted gadgets.
	
	\begin{figure}
		\centering
			\begin{tikzpicture}
				
				\draw[draw = none,use as bounding box] (-2.3,-1.8) rectangle (8.3, 1.8);

				\begin{scope}[name prefix=a]
					
					\node[edgevertex, fill=gray!30!white] (v10) at (+\edgePairDistAngle:\edgePairRadius) {$0$};
					\node[edgevertex, fill=gray!30!white] (v11) at (-\edgePairDistAngle:\edgePairRadius) {$1$};
					\node[edgevertex, fill=gray!30!white] (v20) at (220-\edgePairDistAngle:\edgePairRadius) {$0$};
					\node[edgevertex, fill=gray!30!white] (v21) at (220+\edgePairDistAngle:\edgePairRadius) {$1$};
					\node[edgevertex, fill=gray!30!white] (v30) at (140+\edgePairDistAngle:\edgePairRadius) {$0$};
					\node[edgevertex, fill=gray!30!white] (v31) at (140-\edgePairDistAngle:\edgePairRadius) {$1$};
					
					\node[gadgetvertex, fill=gray] (u000) at (0.4, 1.5*\gadgetVertexYDist) {$000$};
					\node[gadgetvertex, fill=gray] (u011) at (0.4, 0.5*\gadgetVertexYDist) {$110$};
					\node[gadgetvertex, fill=gray] (u101) at (0.4, -0.5*\gadgetVertexYDist) {$101$};
					\node[gadgetvertex, fill=gray] (u110) at (0.4, -1.5*\gadgetVertexYDist) {$011$};
					
					\path [-, draw=black, thick]
					(u000) to (v10)
					(u000) to (v20)
					(u000) to (v30)
					(u011) to (v10)
					(u011) to (v21)
					(u011) to (v31)
					(u101) to (v11)
					(u101) to (v20)
					(u101) to (v31)
					(u110) to (v11)
					(u110) to (v21)
					(u110) to (v30);
					
					\path[-, draw = red, ultra thick]
					(u000) to (v30) to (u110);
					\path[-, draw = blue, ultra thick]
					(u110) to (v21) to (u011) to (v31) to (u101);
					
				\end{scope}

				\begin{scope}[name prefix=b, shift={(6,0)}, xscale = -1]
					
					\node[edgevertex, fill=gray!30!white] (v10) at (+\edgePairDistAngle:\edgePairRadius) {$0$};
					\node[edgevertex, fill=gray!30!white] (v11) at (-\edgePairDistAngle:\edgePairRadius) {$1$};
					\node[edgevertex, fill=gray!30!white] (v20) at (220-\edgePairDistAngle:\edgePairRadius) {$0$};
					\node[edgevertex, fill=gray!30!white] (v21) at (220+\edgePairDistAngle:\edgePairRadius) {$1$};
					\node[edgevertex, fill=gray!30!white] (v30) at (140+\edgePairDistAngle:\edgePairRadius) {$0$};
					\node[edgevertex, fill=gray!30!white] (v31) at (140-\edgePairDistAngle:\edgePairRadius) {$1$};
					
					\node[gadgetvertex, fill=gray] (u000) at (0.4, 1.5*\gadgetVertexYDist) {$000$};
					\node[gadgetvertex, fill=gray] (u011) at (0.4, 0.5*\gadgetVertexYDist) {$011$};
					\node[gadgetvertex, fill=gray] (u101) at (0.4, -0.5*\gadgetVertexYDist) {$101$};
					\node[gadgetvertex, fill=gray] (u110) at (0.4, -1.5*\gadgetVertexYDist) {$110$};
					
					\path [-, draw=black, thick]
					(u000) to (v10)
					(u000) to (v20)
					(u000) to (v30)
					(u011) to (v10)
					(u011) to (v21)
					(u011) to (v31)
					(u101) to (v11)
					(u101) to (v20)
					(u101) to (v31)
					(u110) to (v11)
					(u110) to (v21)
					(u110) to (v30);
					
				\end{scope}
				
				\path[thick, -, draw = black]
				(av10) to (bv10)
				(av11) to (bv11);

				\path[-, draw = green!80!black, ultra thick]
				(au011) to (av10) to (bv10) to (bu000);
				\path[-, draw = brown!50!orange, ultra thick]
				(au101) to (av11) to (bv11) to (bu101) to (bv31) to (bu011);

			\end{tikzpicture}
		\caption[Distances in CFI Graphs]{
			Distances in CFI graphs as proven in Claim~\ref{clm:define-adjacent-gadgets}:
			Gadget vertices are drawn in gray and edge vertices in light gray
			(cf.~Figure~\ref{fig:cfi-gadget-edge}).
			Two gadget vertices in the same gadgets have distance~$2$ or~$4$
			as shown by the red and blue path.
			Two gadget vertices in gadgets for adjacent base vertices have distance $3$ or $5$ as shown by the green and brown path.
		}
		\label{fig:cfi-distances}
	\end{figure}
	
	When defining the base graph, base vertices of degree two will complicate matters. We first deal with base vertices of degree at least three.
	
	\begin{clm}
		\label{clm:define-gadget-vertices}
		For all vertices $\vertA\in \StructV$ the following holds:
		$\vertA$ is a gadget vertex whose origin has degree at least $3$ if, and only if,
			for all $\vertB \in \neighbors{\StructA}{\vertA}$,
			there are two distinct $\vertC,\vertC' \in \neighbors{\StructA}{\vertB} \setminus \set{\vertA}$ of different color.
	\end{clm}
	\begin{claimproof}
	 	Let $\vertA\in \StructV$.
		Assume that~$\vertA$ is a gadget vertex with origin~$\bVertA$
		and let~$\vertB \in \neighbors{\StructA}{\vertA}$.
		Then~$\vertB$ is an edge vertex with origin $(\bVertA,\bVertB)$.
		Because $\bVertA$ has degree at least~$3$,
		there is another gadget vertex $\vertA'\neq \vertA$ with origin~$\bVertA$
		adjacent to~$\vertB$ (cf.~Figure~\ref{fig:cfi-distances}).
		The edge vertex with origin $(\bVertB,\bVertA)$ adjacent to~$\vertB$
		has a different color than~$\vertA'$.
		To show the other direction,
		assume that
		for every neighbor~$\vertB\in \neighbors{\StructA}{\vertA}$
		there are two distinct~$\vertC,\vertC' \in \neighbors{\StructA}{\vertB} \setminus \set{\vertA}$ of different color.
		For a sake of contradiction,
		suppose that~$\vertA$ is an edge vertex
		with origin $(\bVertA,\bVertB)$.
		Consider the unique edge~$\vertB$ vertex with origin $(\bVertB,\bVertA)$
		adjacent to~$\vertA$.
		Every neighbor of~$\vertB$ is either~$\vertA$ or a gadget vertex with origin~$\bVertB$,
		but which all have the same color.
		Hence,~$\vertA$ is a gadget vertex.
	\end{claimproof}
	
	\begin{clm}
		\label{clm:define-adjacent-gadgets}
		Every two distinct gadget vertices $\vertA, \vertB \in \StructV$ 
		with origins~$\bVertA$ and~$\bVertB$, where at least one of them is of degree at least $3$,
		\begin{enumerate}[label=(\alph*)]
			\item have distance~$2$ or~$4$ if and only if $\bVertA = \bVertB$ and 
			\item have distance~$3$ or~$5$ if and only if $\set{\bVertA,\bVertB} \in E$. 
		\end{enumerate}
	\end{clm}
	\begin{claimproof}
		Let $\vertA=\tup{a}$ and $\vertB=\tup{b}$ be gadget vertices
		with origins~$\bVertA$ respectively~$\bVertB$, i.e., $\tup{a} \in \FF_2^{|\neighbors{G}{\bVertA}|}$ and $\tup{b} \in \FF_2^{|\neighbors{G}{\bVertB}|}$ such that $0 = a_1+\dots +a_{|\neighbors{G}{\bVertA}|} = b_1+\dots + b_{|\neighbors{G}{\bVertB}|}$.
		For Part~a),
		assume $\bVertA = \bVertB$.
		By construction of the CFI gadget,~$\vertA$ and~$\vertB$
		are not adjacent and have a common neighbor if and only if $a_i = b_i$ or some~$i \in [\tlength{a_i}]$.
		If $a_i \neq b_i$ for all $i\in [\tlength{a_i}]$,
		then there is another gadget vertex
		to which both~$\vertA$ and~$\vertB$ have distance~$2$
		because $\bVertA$ has degree at least~$3$ (cf.~Figure~\ref{fig:cfi-distances} again).
		For the other direction, assume~$\vertA$ and~$\vertB$ have distance~$2$ or~$4$.
		For every neighbor~$\vertC$ of~$\vertA$,
		there is an $\set{\bVertA,\bVertC} \in E$
		such that~$\vertC$ is an edge vertex with origin $(\bVertA,\bVertC)$
		and every neighbor of~$\vertC$ is a gadget vertex with origin~$\bVertA$
		or an edge vertex with origin $(\bVertC,\bVertA)$.
		Hence, if~$\vertA$ and~$\vertB$ have distance~$2$,
		then $\bVertB = \bVertA$.
		The case of distance~$4$ is similar,
		here all distance~$4$ vertices are either vertices of the same gadget or edge vertices.
		
		Proving Part~b) is similar:
		Assume $\set{\bVertA,\bVertB} \in E$.
		If $a_i = b_j$ where the $i$-th edge-vertex-pair of the gadget of $\bVertA$ is connected to the $j$-th one of the gadget of $\bVertB$,
		then~$\vertA$ and~$\vertB$ have distance~$3$
		(via two edge vertices with origin $(\bVertA,\bVertB)$ and $(\bVertB,\bVertA)$),
		otherwise they have distance~$5$.
		For the other direction, all distance~$3$ vertices of~$\vertA$
		are either edge vertices or gadget vertices of gadgets whose origin~$\bVertC$ satisfies $\set{\bVertA,\bVertC} \in E$.
	\end{claimproof}

	\noindent
	
	We first consider the case that the minimum degree of $G$ is three and later describe how to remove this assumption.
	By Claims~\ref{clm:define-gadget-vertices} and~\ref{clm:define-adjacent-gadgets}, 
	the following interpretation
	$\interpret = (\formA_{\text{dom}}, \formA_{\cong}, \formA_{E}, \formA_{\spleq}, \formA_\indRel)$ defines the base graph:
	\begin{align*}
		\formA_{\text{dom}}(x) &:= \forall y.\qspace E(x,y) \Rightarrow \exists z_1,z_2.\qspace z_1\neq x\land z_2 \neq x \land E(y,z_1)\land E(y,z_2)\land   z_1 \spless z_2,\\
		\formA_{\cong}(x,y) &:= x=y \lor \formA_{\text{dist}}^2 (x,y) \lor \formA_{\text{dist}}^4 (x,y),\\
		\formA_{E} (x,y)&:=  \formA_{\text{dist}}^3 (x,y) \lor \formA_{\text{dist}}^5 (x,y), \\
		\formA_{\spleq}(x,y) &:= x \spleq y.
	\end{align*}
	We used formulas $\formA_{\text{dist}}(x,y)^\ell$ defining that~$x$ and~$y$ have distance~$\ell$.
	It remains to define~$\formA_\indRel$,
	for which we omit a formal definition:
	We use~$\indRel$ to define an order on the base edges in $\orig{\indRel^\Struct}$ and individualize the corresponding base vertices using this order. 
	
	Now we are able to evaluate $\formA_{2\text{-orb}}$ on the base graph. 
	We extend $\interpret$ by two parameters~$z_1$ and~$z_2$
	for two edge vertices, for which we want to know whether their origins are in the same $2$-orbit of~$G$.
	We add two fresh binary relation symbols~$S_1$ and~$S_2$
	and let~$\interpret$ define~$S_i$ to contain the origin of the edge vertex of~$z_i$ for every $i \in [2]$.

	We lift the total preorder of the base vertices
	to the edge vertices using the interpretation operator:
	\[\formA_{\text{base-orb}}(z_1,z_2) := \iop{\interpret(z_1,z_2)}{
	\forall \tup{x}\tup{y}.\qspace S_1(\tup{x}) \Rightarrow S_2(\tup{y}) \Rightarrow \formA_{2\text{-orb}}(\tup{x},\tup{y})}.\]
	The formula~$\formA_{\text{base-orb}}$ defines whether the
	origins of two edge vertices are in the same orbit of $(G,\orig{\indRel^\Struct})$.
	Note that $\formA_{2\text{-orb}}$ does not mention the additional relations~$S_1$ and~$S_2$
	and thus every WSC-fixed-point operator in $\formA_{2\text{-orb}}$
	is evaluated on a structure not containing~$S_1$ and~$S_2$
	(by the reduct semantics) and so indeed on a $\indClosure{\GraphClass}$-graph.
	
	Finally, we can indeed check whether there is an orbit as required
	by Claim~\ref{clm:orbit-cycle}:
	Consider the equivalence classes induced by $\formA_{\text{base-orb}}$
	on the edge vertices and check the existence of the required cycle.
	We define the minimal such orbit if such one exists.
	
	If no such orbit exists,
	then $G-\orig{\indRel^\Struct}$ contains no cycles
	and we can define an edge-vertex-pair-order~$R$ of~$E$ by Claim~\ref{clm:edge-vertex-pair-order-all} in \IFPC{}.
	Because~$R$ is \IFPC{}-definable,~R is in particular isomorphism-invariant.
	If there is a non-trivial $2$-orbit of $G-\orig{\indRel^\Struct}$,
	we pick the minimal one to define a non-trivial $1$-orbit of $(\Struct$, $\indRel^\Struct)$ using~$R$ and Claim~\ref{clm:orbit-no-cycle}.
	
	If that is also not the case, then all $2$-orbit of $G-\orig{\indRel^\Struct}$ are trivial,
	that is, there is a definable total order on the (directed) base edges.
	Together with the edge-vertex-pair-order~$R$,
	we define a total order on all edge-vertex-pairs,
	which can be extended to a total order on~$\StructA$.
	We define the minimal vertex, which is not individualized
	in $\indRel^\Struct$, if it exists.
	Otherwise, all vertices are individualized.
	We pick the $\indRel$-minimal one.

	\begin{figure}
		\centering
			\begin{tikzpicture}

			\begin{scope}[name prefix=a]
				
				\node[edgevertex, bicolor fill={edgeColA}{edgeColB}] (v10) at (+\edgePairDistAngle:\edgePairRadius) {$0$};
				\node[edgevertex, bicolor fill={edgeColA}{edgeColB}] (v11) at (-\edgePairDistAngle:\edgePairRadius) {$1$};
				
				\node[edgevertex, bicolor fill={edgeColA}{edgeColC, shading angle=-90}] (v20) at (180-\edgePairDistAngle:\edgePairRadius) {$0$};
				\node[edgevertex, bicolor fill={edgeColA}{edgeColC, shading angle=-90}] (v21) at (180+\edgePairDistAngle:\edgePairRadius) {$1$};
				
				\node[gadgetvertex, fill=edgeColA] (u00) at (0, 0.5*\gadgetVertexYDist) {$00$};
				\node[gadgetvertex, fill=edgeColA] (u11) at (0, -0.5*\gadgetVertexYDist) {$11$};
%				\node[gadgetvertex, fill=edgeColA] (u101) at (0.4, -0.5*\gadgetVertexYDist) {$101$};
%				\node[gadgetvertex, fill=edgeColA] (u110) at (0.4, -1.5*\gadgetVertexYDist) {$110$};
%				
				\path [-, draw=black, thick]
				(u00) to (v10)
				(u00) to (v20)
				(u11) to (v11)
				(u11) to (v21);
				
				\draw [thick, dashed, rounded corners]
					(-\edgePairRadius - 0.5, -0.5*\gadgetVertexYDist -0.5 ) rectangle (\edgePairRadius + 0.5, 0.5*\gadgetVertexYDist + 0.5);				
				
			\end{scope}
			
			\begin{scope}[name prefix=c, shift={(-6,0)}]
				
				\node[edgevertex, bicolor fill={edgeColC}{edgeColA}, shading angle = 90] (v10) at (+\edgePairDistAngle:\edgePairRadius) {$0$};
				\node[edgevertex, bicolor fill={edgeColC}{edgeColA}, shading angle=90] (v11) at (-\edgePairDistAngle:\edgePairRadius) {$1$};
				
				\begin{scope}[opacity = 0] 
					\node[gadgetvertex, fill=edgeColB] (u000) at (0.4, 1.5*\gadgetVertexYDist) {$000$};
					\node[gadgetvertex, fill=edgeColB] (u011) at (0.4, 0.5*\gadgetVertexYDist) {$011$};
					\node[gadgetvertex, fill=edgeColB] (u101) at (0.4, -0.5*\gadgetVertexYDist) {$101$};
					\node[gadgetvertex, fill=edgeColB] (u110) at (0.4, -1.5*\gadgetVertexYDist) {$110$};
				\end{scope}
				
				\path [-, draw=black, dashed, thick]
				(u000) to (v10)
				(u011) to (v10)
				(u101) to (v11)
				(u110) to (v11);
				
			\end{scope}
			
			\begin{scope}[name prefix=b, shift={(6,0)}, xscale = -1]
				
				\node[edgevertex, bicolor fill={edgeColB}{edgeColA}, shading angle = -90] (v10) at (+\edgePairDistAngle:\edgePairRadius) {$0$};
				\node[edgevertex, bicolor fill={edgeColB}{edgeColA}, shading angle=-90] (v11) at (-\edgePairDistAngle:\edgePairRadius) {$1$};
				
				\begin{scope}[opacity = 0] 
				\node[gadgetvertex, fill=edgeColB] (u000) at (0.4, 1.5*\gadgetVertexYDist) {$000$};
				\node[gadgetvertex, fill=edgeColB] (u011) at (0.4, 0.5*\gadgetVertexYDist) {$011$};
				\node[gadgetvertex, fill=edgeColB] (u101) at (0.4, -0.5*\gadgetVertexYDist) {$101$};
				\node[gadgetvertex, fill=edgeColB] (u110) at (0.4, -1.5*\gadgetVertexYDist) {$110$};
				\end{scope}
				
				\path [-, draw=black, dashed, thick]
				(u000) to (v10)
				(u011) to (v10)
				(u101) to (v11)
				(u110) to (v11);
				
			\end{scope}
			
			\path[thick, -, draw = black]
			(av10) to (bv10)
			(av11) to (bv11);
			
			\path[thick, -, draw = black]
			(av20) to (cv10)
			(av21) to (cv11);

		\end{tikzpicture}
		\caption{A degree two CFI gadget: the gadget is encircled.
		To its left and right, the edge-vertex-pair of the neighbored gadget is drawn. The degree two CFI gadget essentially subdivides a connection between these two other edge-vertex-pairs.
		By using multiple degree two gadgets, these paths can get arbitrarily long.}
		\label{fig:degree-two-gadget}
	\end{figure}
	We finally have to describe how to deal with base vertices of degree at most two.
	First, check whether the base graph~$G$ is of order at most two.
	If it is $2$-regular, then~$G$ is a cycle because~$G$ is connected.
	In this case, the CFI graph~$\StructA$ is either a cycle (if it is odd) or the disjoint union of two cycles (it it is even).
	If~$G$ is not $2$-regular, then~$G$ is a path because~$G$ is connected.
	In this case, the CFI graph~$\StructA$ is the disjoint union of two paths (of equal length if and only if~$\StructA$ is odd).
	Since both, disjoint unions of cycles or of paths, are \IFPC-definable, we can distinguish this case.
	It is easy to show that this class of colored graphs is ready for individualization -- here we actually do not need to define the base graph at all. But this can also be done in \IFPC.
	
	If the maximal degree of~$G$ is at least three, we follow the same approach as before:
	We first define the set of gadget vertices.
	For base vertices of degree at least three we use Claim~\ref{clm:define-gadget-vertices}.
	For base vertices of degree one,
	note that gadget vertices of a degree one gadget are the only vertices in a CFI graph of degree one.
	It remains to consider gadget vertices of degree two gadgets. 
	These base vertices subdivide the connection between two edge-vertex-pairs (cf.~Figure~\ref{fig:degree-two-gadget}).
	On these paths, every third vertex is a gadget vertex, which can be defined using a fixed-point operator.
	
	In the next step, we define which gadget vertices have the same origin.
	For base vertices of degree at least three,
	we exploit Claim~\ref{clm:define-adjacent-gadgets}, for degree one base vertices this is trivial (they only have one gadget vertex),
	and for degree two base vertices we can follow the ``subdivided'' paths again using fixed-points.
	Finally, we need to define which gadgets have adjacent origins.
	For base vertices of degree at least three we again use Claim~\ref{clm:define-adjacent-gadgets},
	for degree one this is trivial,
	and for degree two we exploit the structure of the paths.
	
	This high-level description omits some details, but a full formal proof would be lengthy and technical and not provide further insights. The main difference is that, due to the long paths, fixed-points are needed.
	Since degree two base vertices do no change the automorphism or isomorphism structure of CFI graphs,
	it usually suffices to consider base graphs of minimum degree three.
	It in particular would suffice for the main results of this article.
	This concludes the proof of Lemma~\ref{lem:CFI ready-individualization}.	
\end{proof}

We now can prove Theorem~\ref{thm:canonize-CFI if-base},
stating that if \IFPCWSCI{} canonizes $\GraphClass$,
then \IFPCWSCI{} canonizes $\CFI{\GraphClass}$, too.
\begin{proof}[Proof of Theorem~\ref{thm:canonize-CFI if-base}]
	Assume that \IFPCWSCI{} defines a canonization for $\GraphClass$.
	Then \IFPCWSCI{} distinguishes $2$-orbits of $\GraphClass$ by Lemma~\ref{lem:canon-orbit-ready-iff}.
	Hence, $\CFI{\GraphClass}$ is ready for individualization in \IFPCWSCI{} by Lemma~\ref{lem:CFI ready-individualization} 
	and so \IFPCWSCI{} defines a canonization for $\CFI{\GraphClass}$ by Lemma~\ref{lem:canon-orbit-ready-iff}.
\end{proof}

\begin{cor}
	If $\GraphClass$ is a class of base graphs of bounded degree,
	then \IFPCWSCI{} defines canonization for $\GraphClass$ if and only if 
	it defines canonization for $\CFI{\GraphClass}$.
\end{cor}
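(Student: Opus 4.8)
The plan is to prove the two implications separately, using the degree bound only for the direction from $\CFI{\GraphClass}$ to $\GraphClass$. For ``$\Rightarrow$'', if \IFPCWSCI{} canonizes $\GraphClass$ then it canonizes $\CFI{\GraphClass}$ directly by Theorem~\ref{thm:canonize-CFI if-base} in the form valid for arbitrary base graphs, as noted in the remark preceding this corollary; no degree bound is needed here.

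For ``$\Leftarrow$'', suppose \IFPCWSCI{} canonizes $\CFI{\GraphClass}$ via an interpretation $\interpretB$. I would obtain a canonization of $\GraphClass$ as a composition of three interpretations. First, an \IFPC-interpretation $\interpretA_1$ sends $(G,\indRel^G)\in\indClosure{\GraphClass}$ to $(\CFI{G,0},\indRel')$, where a gadget vertex of the gadget for a base vertex~$\bVertA$ is represented by a pair $(\bVertA,S)$ with $S\subseteq\neighbors{G}{\bVertA}$ of even size, edge vertices by analogous bounded tuples, adjacency and the color preorder are defined from the CFI construction, and $\indRel'$ individualizes, for every base vertex individualized by $\indRel^G$, the whole gadget of that vertex (its gadget and its edge vertices), ordered compatibly with $\indRel^G$ and with the fixed order on $\FF_2$-coordinates. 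The bound $D$ on the degree of $\GraphClass$ is precisely what makes $\interpretA_1$ have a fixed dimension $D+\mathcal{O}(1)$ independent of the input; moreover $(\CFI{G,0},\indRel')\in\indClosure{\CFI{\GraphClass}}$. Second, $\interpretB$ turns this into an ordered copy of $(\CFI{G,0},\indRel')$. Third, an \IFPC-interpretation $\interpretA_2$ recovers, from an ordered CFI graph, the ordered base graph together with the induced individualization; this is immediate because \IFPC{} captures \PTime{} on ordered structures (alternatively, one uses the distance characterizations of Claim~\ref{clm:cfi-distances}). Since \IFPCWSCI{} is closed under interpretations, $\interpretA_2\circ\interpretB\circ\interpretA_1$ is \IFPCWSCI-definable.

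It remains to check that this composition is a canonization of $\GraphClass$, and this verification is where I expect the main difficulty to lie. Conditions~1 and~2 of the definition of canonization follow because the base graph of $\CFI{G,0}$ is~$G$ and $\indRel'$ encodes $\indRel^G$ (recall that Condition~3 is then automatic for definable canonizations). The delicate point is that the individualized CFI graph faithfully records the isomorphism type of $(G,\indRel^G)$, i.e.\ $(\CFI{G,0},\indRel')\iso(\CFI{G',0},\indRel'')$ if and only if $(G,\indRel^G)\iso(G',\indRel^{G'})$. For the ``if'' part one transports a color- and $\indRel$-preserving base isomorphism to the canonical base-vertex-respecting isomorphism between the two $0$-CFI graphs, which by the choice of $\indRel'$ respects the gadget-wise individualizations. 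For the ``only if'' part, an isomorphism of the individualized CFI graphs induces a base-graph isomorphism (being a gadget vertex and ``lying in the same gadget'' are isomorphism-invariant, cf.\ Claim~\ref{clm:cfi-distances}); exactly as in the proof of Lemma~\ref{lem:edge-vertices-same-orbit} one may assume this isomorphism is base-vertex-respecting, whence it is automatically color-preserving and, mapping individualized gadgets to individualized gadgets while respecting their individualization orders, also $\indRel$-preserving. This gives the required equivalence and hence the corollary. (An alternative, equivalent route for ``$\Leftarrow$'' runs through Lemma~\ref{lem:canon-orbit-ready-iff}: use $\interpretA_1$ and the interpretation operator to reduce distinguishing the $k$-orbits of $\GraphClass$ to distinguishing orbits of $\CFI{\GraphClass}$.)
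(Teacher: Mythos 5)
Your overall strategy is exactly the paper's: the forward direction is Theorem~\ref{thm:canonize-CFI if-base}, and for the converse the paper likewise uses the degree bound to get a fixed-dimensional \IFPC{}-interpretation producing the even CFI graph, translates the individualization into the CFI graph, and composes with the CFI-canonization and a base-graph-recovering interpretation. The outline, the role of the degree bound, and the faithfulness check you single out are all the right ingredients.

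There is, however, one step that fails as written: your relation $\indRel'$ is supposed to \emph{totally order} all gadget and edge vertices of the gadget of each $\indRel^G$-individualized base vertex $\bVertA$, ``compatibly with the fixed order on $\FF_2$-coordinates''. Those coordinates are indexed by $\neighbors{G}{\bVertA}$, and if $\bVertA$ has two neighbors lying in the same orbit of $(G,\indRel^G)$ (e.g.\ an individualized vertex of an uncolored $3$-regular graph whose neighbors are untouched), then no isomorphism-invariant formula can linearly order the corresponding edge-vertex-pairs or the gadget vertices $(\bVertA,S)$. So $\interpretA_1$ does not define an individualization at all in general, and the argument breaks before $\interpretB$ is even applied. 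The repair stays entirely within your framework: individualize only \emph{one} canonically chosen vertex per individualized gadget --- for instance the all-zeros gadget vertex $(\bVertA,\emptyset)$, which is a definable element of the interpreted structure --- ordered by $\indRel^G$. This is definable, lands in $\indClosure{\CFI{\GraphClass}}$, and your faithfulness argument goes through unchanged: the identity-like lift of a base isomorphism maps all-zeros gadget vertices to all-zeros gadget vertices, and conversely an isomorphism of the individualized CFI graphs induces a base isomorphism matching the individualized gadgets in order. (The paper's own phrasing --- putting each individualized gadget into a unique color class --- is doing the same job of distinguishing the gadget only setwise rather than ordering its vertices.) With that correction your proof is complete and coincides with the paper's.
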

\begin{proof}
	One direction is by Theorem~\ref{thm:canonize-CFI if-base}.
	For the other direction,
	let $\GraphClass$ be graph class of maximal degree~$d$.
	Then there is a $(d+2)$-dimensional \IFPC{}-interpretation $\interpret$
	such that $\interpret(G)$ is the even CFI graph over $G$
	for every base graph $G \in \indClosure{\GraphClass}$.
	Individualized vertices are translated into the coloring
	and thus into gadgets of unique color.
	Together with the canonization-defining interpretation
	and the base-graph-defining one in the proof of Theorem~\ref{thm:canonize-CFI if-base},
	we obtain a canonization for $\indClosure{\GraphClass}$.
\end{proof}
Note that this approach of defining a CFI graph to canonize the base graph
cannot work for arbitrary base graphs
because for large degrees the CFI graphs get exponentially large
(which cannot be defined by an \IFPC{}-interpretation).

We make the following observation.
First, Theorem~\ref{thm:canonize-CFI if-base} can be applied iteratively:
If \IFPCWSCI{} canonizes~$\GraphClass$,
then \IFPCWSCI{} canonizes $\CFI{\GraphClass}$,
and so \IFPCWSCI{} canonizes $\CFI{\CFI{\GraphClass}}$.
Second, every application of Theorem~\ref{thm:canonize-CFI if-base}
adds one WSC-fixed-point operator (to define Gurevich's algorithm in Lemma~\ref{lem:canon-orbit-ready-iff})
and one interpretation operator (for the base-graph-defining interpretation in Lemma~\ref{lem:CFI ready-individualization}).
More precisely, the nesting depth of these operators increases.
We will show that this is in some sense necessary:
The CFI query on a variation of $\CFI{\CFI{\GraphClass}}$
cannot be defined without nesting two WSC-fixed-point and two interpretation operators.

\section{The CFI Query and Nesting of Operators}
\label{sec:nesting-of-operators}

In this section we show that the increased nesting depth of WSC-fixed-point
and interpretation operators used to canonize CFI graphs
in Theorem~\ref{thm:canonize-CFI if-base} is unavoidable.
Because \IFPC{} does not define the CFI query,
it is clear that even if the class of base graph has \IFPC{}-distinguishable orbits,
the CFI query cannot be defined in \IFPC{}
(e.g., on a class of all ordered graphs).
So, the nesting depth of WSC-fixed-point operators has to increase further.
However, for orbits distinguishable in \IFPCWSCI{} but not in \IFPC{}
it is not clear whether the nesting depth has to increase.
We now show that this is indeed the case.

Intuitively, we want to combine non-isomorphic CFI graphs
into a new base graph and apply the CFI construction again.
To define orbits of these double CFI graphs,
one has to define the CFI query for the base CFI graphs,
which cannot be done without a WSC-fixed-point operator.
However, parameters of WSC-fixed-point operators complicate matters.
If a parameter of a WSC-fixed-point operator is used to 
fix a vertex contained in of the base CFI graphs,
then this base CFI graph can be distinguished from all the others.
Hence, orbits of this base CFI graph can be defined without defining the CFI query.
To overcome this problem, we use multiple copies of the base CFI graphs
such that their number exceeds the number of parameters.

If we want to make choices from the base CFI graphs not containing parameters,
we have to define the CFI query for them, which increases the nesting depth.
If we do not do so, we can essentially make choices only in the base CFI graphs containing a parameter vertex.
That is, we can move the twist to the other base CFI graphs.
Proving that in this case the CFI query cannot be defined without more operators requires formal effort:
We introduce a logic which allows for quantifying over all individualizations of the base CFI graphs containing parameters.
This is (potentially) more powerful than making choices, but we still can prove non-definability of the CFI query.
This logic has the benefit that we can characterize it via a pebble game.

First, we introduce a formalism for nesting WSC-fixed-point and interpretation operators in Section~\ref{sec:nested-operators}.
Second, we provide a graph construction combining multiple base CFI graphs in Section~\ref{sec:color-class-joins-and-cfi}.
Third, we define a logic quantifying over certain individualizations in Section~\ref{sec:quantifying-pebbled-part-individualizations}
and prove that it cannot distinguish CFI graphs from the prior graph construction.
Last, we make the argument sketched above formal in Section~\ref{sec:nesting-operators-is-necessary} and show that
these CFI graphs require nested WSC-fixed-point operators.

\newcommand{\wscrank}[1]{\operatorname*{wscr}(#1)}
\subsection{Nested WSC-Fixed-Point and Interpretation Operators}
\label{sec:nested-operators}
We next consider \IFPCWSCI{}\nobreakdash-formulas with restricted nesting depth of
WSC-fixed-point and interpretation operators.
Let $\IFPC{} \subseteq L \subseteq \IFPCWSCI{}$ be a subset of \IFPCWSCI{}.
We write $\WSCof{L}$ for the set of formulas
obtained from $L$ using \IFPC{}-formula-formation rules
and WSC-fixed-point operators,
for which the step, choice, witnessing, and output formulas are $L$-formulas.
Likewise, we define $\Iof{L}$:
One can use an additional interpretation operator $I(\interpret, \formB)$,
where  the interpretation $\interpret$ is an $\IFPC{}$-interpretation and $\formB$ is an $L$-formula.
Note that $L \subseteq \WSCof{L}$, $L \subseteq \Iof{L}$,
and $\Iof{\IFPC} = \IFPC{}$ because \IFPC{} is closed under \IFPC{}-interpretations.

We abbreviate $\WSCIof{L} :=  \WSCof{\Iof{L}}$ and $\WSCIkof{k+1}{L} := \WSCIof{\WSCIkof{k}{L}}$.
Note that we restrict here to \IFPC{}-interpretations in the interpretation operator, which means that $\WSCIkof{n}{\IFPC}$ is the fragment of $\IFPCWSCI$
in which WSC-fixed-point operators can be nested at most~$n$ times,
interpretation operators are restricted to \IFPC{}-interpretations,
and WSC-fixed-point and interpretation operators alternate.
But also note that both, the WSC-fixed-point operator and the interpretation operator, can be made ``trivial'' (using contradictory choice and step formulas and a tautological output formula or the identity interpretation, respectively).
We restrict to \IFPC{}-interpretations because otherwise an interpretation operator $\iop{\interpret}{\formA}$ introduces ``hidden'' nestings of WSC-fixed-point operators: if both~$\interpret$ and~$\formA$ nest~$k$ WSC-fixed-point operators, then $\iop{\interpret}{\formA}$ evaluates~$\formA$ in the image of~$\interpret$ and so nests~$2k$ WSC-fixed-point operators.
This is not the desired notion for this section.

Restricting ourself to \IFPC{}-interpretations is a limitation,
but for the goal of this section it is not restrictive:
The goal is to prove $\WSCof{\IFPC{}} = \WSCIof{\IFPC{}} < \WSCIkof{2}{\IFPC{}}$ and the result only gets stronger when $\WSCIkof{2}{\IFPC{}}$ is limited to \IFPC{}-interpretations.
Recall the construction in Lemmas~\ref{lem:canon-orbit-ready-iff} and~\ref{lem:CFI ready-individualization}:
\begin{cor}
	\label{cor:canonize-CFI if-base-wsci}
	Let $\GraphClass$ be a class of base graphs.
	\begin{enumerate}
		\item If $L$ distinguishes $2$-orbits of $\GraphClass$,
		then $\CFI{\GraphClass}$ is ready for individualization in 
		$\Iof{L}$.
		\item If $\CFI{\GraphClass}$ is ready for individualization in $L$, then $\WSCof{L}$ defines a canonization for $\CFI{\GraphClass}$.
		\item If $L$ distinguishes $2$-orbits of $\GraphClass$,
		then $\WSCIof{L}$ defines a canonization for $\CFI{\GraphClass}$.
	\end{enumerate}
\end{cor}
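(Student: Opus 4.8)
The plan is to observe that Corollary~\ref{cor:canonize-CFI if-base-wsci} drops out of the proofs of Lemmas~\ref{lem:canon-orbit-ready-iff} and~\ref{lem:CFI ready-individualization} once we bookkeep the nesting depth of WSC-fixed-point and interpretation operators; no new construction is needed, we only need to locate where the power of $L$ beyond \IFPC{} is actually used. For Part~(1), I would re-examine the proof of Lemma~\ref{lem:CFI ready-individualization}. Every ingredient there --- the translation of $\indRel^\Struct$ into individualized base edges, the edge-vertex-pair-orders of Claims~\ref{clm:edge-vertex-pair-order-all} and~\ref{clm:orbit-no-cycle}, the distance formulas and the base-graph-defining interpretation $\interpret$ of Claim~\ref{clm:cfi-distances}, and the final case distinction selecting the minimal non-trivial $1$-orbit --- is definable in \IFPC{}, either as a formula or as an interpretation. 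The strength of $L$ enters exactly once, via the formula $\formA_{2\text{-orb}}$ distinguishing $2$-orbits of $\GraphClass$, and it enters only as (a first-order combination forming) the body of the single interpretation operator $\iop{\interpret}{\cdot}$, whose interpretation $\interpret$ is \IFPC{}-definable, hence an $L$-interpretation since $\IFPC \subseteq L$. Consequently the resulting ready-for-individualization sentence is an \IFPC{}-formula-formation-rule combination of \IFPC{}-formulas and one interpretation operator $I(\interpret,\formB)$ with $\interpret$ an $L$-interpretation and $\formB$ an $L$-formula (the first-order massaging around $\formA_{2\text{-orb}}$ stays inside $L$, since the logics we apply this to are closed under \IFPC{}-formula-formation rules), i.e., an $\Iof{L}$-sentence. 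I would also note that Claims~\ref{clm:orbit-cycle}, \ref{clm:edge-vertex-pair-order-all} and~\ref{clm:orbit-no-cycle} only ever ask to distinguish and order $2$-orbits of $\GraphClass$, so $L$-distinguishability of $2$-orbits is precisely the hypothesis needed.

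For Part~(2), I would revisit the implication (3)~$\Rightarrow$~(1) of Lemma~\ref{lem:canon-orbit-ready-iff}: from an $L$-sentence witnessing that $\CFI{\GraphClass}$ is ready for individualization one obtains a definable canonization by running Gurevich's canonization algorithm~\cite{Gurevich97}, which is captured by a single WSC-fixed-point operator that iteratively chooses a vertex from the orbit defined by that $L$-sentence and collects the individualized vertices in a relation, with all remaining bookkeeping (encoding the order, assembling the canon) done in \IFPC{}. Hence the step, choice, witnessing, and output formulas are built from the $L$-sentence by \IFPC{}-formula-formation rules and are thus $L$-formulas, so the canonizing interpretation is a $\WSCof{L}$-interpretation. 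Part~(3) is then immediate: Part~(1) gives readiness for individualization in $\Iof{L}$, and Part~(2) applied with $\Iof{L}$ in the role of $L$ yields a $\WSCof{\Iof{L}} = \WSCIof{L}$-canonization of $\CFI{\GraphClass}$.

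The only real work --- and the place I expect the main obstacle --- is this bookkeeping: confirming by inspection of the proof of Lemma~\ref{lem:CFI ready-individualization} that $L$ is invoked through exactly one interpretation operator with an $L$-formula body, with no $L$-operator nested inside another operator in a way that would inflate the $\Iof{\cdot}$- or $\WSCof{\cdot}$-layering, and confirming that Gurevich's algorithm in Lemma~\ref{lem:canon-orbit-ready-iff} contributes exactly one WSC-fixed-point operator all of whose subformulas are $L$-formulas. Once these two points are checked, the three statements follow as indicated.
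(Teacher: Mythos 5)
Your proposal is correct and follows essentially the same route as the paper: the paper's proof of this corollary likewise just points back to the proofs of Lemmas~\ref{lem:canon-orbit-ready-iff} and~\ref{lem:CFI ready-individualization}, noting that the former contributes exactly one WSC-fixed-point operator (for Gurevich's algorithm) and the latter invokes the orbit-distinguishing formula through exactly one interpretation operator, and then composes the two parts to get the third. Your write-up is simply a more explicit account of the same bookkeeping.
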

\begin{proof}
	The first claim follows from the proof of Lemma~\ref{lem:CFI ready-individualization}.
	The second claim follows from the proof of Lemma~\ref{lem:canon-orbit-ready-iff}
	and the fact that implementing Gurevich's canonization algorithm
	requires one WSC-fixed-point operator~\cite{LichterSchweitzer24}.
	Combining the first and second claim yields the last one.
\end{proof}

\newcommand{\ccjoin}[1]{J_{\text{cc}}(#1)}
\newcommand{\ccjoink}[2]{J_{\text{cc}}^{#1}(#2)}
\subsection{Color Class Joins and CFI Graphs}
\label{sec:color-class-joins-and-cfi}
We define a composition operation of colored graphs:
Let $G_1, \dots , G_\ell$ be connected colored graphs,
such that all~$G_i$ have the same number of color classes~$c$.
The \defining{color class join} $\ccjoin{G_1, \dots, G_\ell}$ is defined as follows:
Start with the disjoint union of the~$G_i$ and add~$c$ additional vertices $\vertA_1, \dots, \vertA_c$.
Then add for each $i \in[c]$ edges between~$\vertA_i$ and each vertex~$\vertB$ in the $i$-th color class of every $G_j$ (cf.~Figure~\ref{fig:color-class-join}).
The resulting colored graph $\ccjoin{G_1, \dots, G_\ell}$ has~$2c$ color classes:
One color class for each~$\vertA_i$ and the union of the color classes of the~$G_j$.
We call the~$G_j$ the \defining{parts} of $\ccjoin{G_1, \dots, G_\ell}$.
The vertices~$\vertA_i$ are called the \defining{join vertices} and the other ones the \defining{part vertices}.
The \defining{part of} a part vertex~$\vertB$ is the graph~$G_j$ containing~$\vertB$.
The color class join has two crucial properties:
First, defining orbits of  $\ccjoin{G_1, \dots, G_\ell}$
is at least as hard as defining isomorphism of the~$G_j$.

\begin{figure}
	\centering
	\newcommand{\partgraph}[1]{
		\draw[gray, thick] (0,0) ellipse (1.8cm and 1.1cm);
		\draw[edgeColB, thick]  (-1,0) ellipse (0.3cm and 0.8cm);
		\draw[edgeColA, thick]  (0,0) ellipse (0.3cm and 0.8cm);
		\draw[edgeColC, thick] (1,0) ellipse (0.3cm and 0.8cm);
		
		\node[vertex, fill=edgeColB] (r1) at (-1,0.5){};
		\node[vertex, fill=edgeColB] (r2) at (-1,0){};
		\node[vertex, fill=edgeColB] (r3) at (-1,-0.5){};
		
		\node[vertex, fill=edgeColA] (b1) at (0,0.5){};
		\node[vertex, fill=edgeColA] (b2) at (0,0){};
		\node[vertex, fill=edgeColA] (b3) at (0,-0.5){};
		
		\node[vertex, fill=edgeColC] (g1) at (1,0.5){};
		\node[vertex, fill=edgeColC] (g2) at (1,0){};
		\node[vertex, fill=edgeColC] (g3) at (1,-0.5){};
		
		\node at (0,1.6) {#1};
	}
	\begin{tikzpicture}[font=\small, scale=1.1]
		
		\draw[use as bounding box,draw=none] (6.5,1.2) rectangle (-6.5,-2.9);
		\begin{scope}[name prefix = G1, scale=0.6,shift={(-5,-0.5)}, rotate = 20]
			\partgraph{$G_1$-part}
		\end{scope}
		\begin{scope}[name prefix = G2, scale=0.6]
				\partgraph{$G_2$-part}
		\end{scope}
		\begin{scope}[name prefix = G3, scale=0.6, shift={(5,-0.5)}, rotate = -20]
			\partgraph{$G_3$-part}
		\end{scope}
		
		\node[vertex, fill=edgeColE] (r) at (-2,-2.5) {};
		\node[vertex, fill=edgeColD] (b) at (0,-2.5) {};
		\node[vertex, fill=edgeColF] (g) at (2,-2.5) {};
		
		\draw [decorate,
		decoration = {brace}]
		(4.5,0.6) --  (4.5,-1.2)
		node[pos=0.5,black, right=0.2cm , align=left]
		{part vertices};
		
		\draw [decorate,
		decoration = {brace}]
		(4.5,-2.2) --  (4.5,-2.8)
		node[pos=0.5,black, right=0.2cm , align=left]
		{join vertices};
		
		\path[draw, black, thick]
		(G1r1) -- (r)
		(G1r2) -- (r)
		(G1r3) -- (r)
		(G2r1) -- (r)
		(G2r2) -- (r)
		(G2r3) -- (r)
		(G3r1) -- (r)
		(G3r2) -- (r)
		(G3r3) -- (r);
		
		\path[draw, black, thick]
		(G1g1) -- (g)
		(G1g2) -- (g)
		(G1g3) -- (g)
		(G2g1) -- (g)
		(G2g2) -- (g)
		(G2g3) -- (g)
		(G3g1) -- (g)
		(G3g2) -- (g)
		(G3g3) -- (g);
		
		\path[draw, black, thick]
		(G1b1) -- (b)
		(G1b2) -- (b)
		(G1b3) -- (b)
		(G2b1) to[bend left=9] (b)
		(G2b2) to[bend right=9] (b)
		(G2b3) -- (b)
		(G3b1) -- (b)
		(G3b2) -- (b)
		(G3b3) -- (b);
	\end{tikzpicture}
\caption{Visualization of color graphs joins:
For three graphs $G_1$, $G_2$, and $G_3$, each with three color classes,
the figure shows the color graph join $\ccjoin{G_1,G_2,G_3}$.
Edges inside the graphs $G_1$ to $G_3$ are not drawn.
For each color class,
one new vertex is added and connected to all existing vertices of that color class.
The new vertices receive a new unique color.
\label{fig:color-class-join}
}
 
\end{figure}

\begin{lem}
	\label{lem:color-class-join-orbits}
	If two part vertices $\vertB$ and $\vertB'$ are in the same orbit of $\ccjoin{G_1, \dots, G_\ell}$,
	then the part of $\vertB$ is isomorphic to the one of $\vertB'$.
\end{lem}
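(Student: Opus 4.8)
The plan is to exploit that the join vertices are fixed by every automorphism, so that deleting them recovers the parts as the connected components.

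First I would note that each join vertex $\vertA_i$ lies in a singleton color class of $\ccjoin{G_1,\dots,G_\ell}$ whose color occurs nowhere else (neither at another join vertex nor at a part vertex). Hence every automorphism $\autoA$ of $\ccjoin{G_1,\dots,G_\ell}$ fixes each $\vertA_i$, and in particular maps part vertices to part vertices. Next, let $H$ be the graph obtained from $\ccjoin{G_1,\dots,G_\ell}$ by deleting $\vertA_1,\dots,\vertA_c$. Since the join operation adds only edges incident to join vertices, $H$ is the disjoint union of $G_1,\dots,G_\ell$ with their original edge sets; because each $G_j$ is connected and no edge runs between distinct parts, the connected components of $H$ are exactly the vertex sets of the parts. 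As $\autoA$ fixes all join vertices, it restricts to an automorphism of $H$ and therefore permutes these components, i.e.\ it permutes the parts. So if $\vertB$ lies in part $G_j$ and $\vertB' = \autoA(\vertB)$ lies in part $G_{j'}$, then $\autoA$ restricts to a graph isomorphism $G_j \to G_{j'}$.

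It then remains to see that this restriction also respects the coloring, i.e.\ is an isomorphism of colored graphs: a part vertex lies in the $i$-th color class of its part if and only if it is adjacent to $\vertA_i$ in $\ccjoin{G_1,\dots,G_\ell}$, and since $\autoA(\vertA_i) = \vertA_i$ the restriction maps the $i$-th color class of $G_j$ onto the $i$-th color class of $G_{j'}$ for every $i \in [c]$ — equivalently, $\autoA$ preserves $\spleq$ and hence carries the color ordering of $G_j$ to that of $G_{j'}$. Therefore $G_j \iso G_{j'}$, which is the claim. The argument is essentially routine; the only points requiring a little care are identifying the parts with the connected components of $H$ and checking that the restricted map respects the coloring (and, in the paper's convention, the total preorder $\spleq$) rather than merely the edge relation, both of which become immediate once the join vertices are known to be fixed.
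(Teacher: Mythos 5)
Your proof is correct and follows essentially the same route as the paper's (much terser) argument: connectedness of the parts plus the color separation of join and part vertices forces any automorphism to map the part of $\vertB$ onto the part of $\vertB'$, yielding an isomorphism of the parts. You merely spell out the details the paper leaves implicit (join vertices are fixed since they form singleton color classes, parts are the connected components after their removal, and the restriction respects $\spleq$), all of which check out.
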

\begin{proof}
	Because every~$G_i$ is connected
	and the part and join vertices have different colors,
	an automorphism of $\ccjoin{G_1, \dots, G_\ell}$
	mapping~$\vertB$ to~$\vertB'$
	has to map the part of~$\vertB$ to the part of~$\vertB'$.
	In particular, these parts are isomorphic.
\end{proof}
Second, the automorphism structure of a part $G_j$ is independent of
individualizing vertices in other parts (or join vertices).
\begin{lem}
	Let $\tup{\vertC}$ be a tuple of vertices of $\ccjoin{G_1, \dots, G_\ell}$, let $j \in [\ell]$,
	and let $\vertB$ and~$\vertB'$ be vertices of~$G_j$.
	If there is no $i \in [|\tup{\vertC}|]$
	such that~$G_j$ is the part of~$\vertC_i$
	and if~$\vertB$ and~$\vertB'$ are in the same orbit of~$G_j$,
	then~$\vertB$ and~$\vertB'$ are in the same orbit of $(\ccjoin{G_1, \dots, G_\ell}, \tup{\vertC})$.
\end{lem}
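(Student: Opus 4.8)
The plan is to lift an automorphism of the single part $G_j$ to an automorphism of the whole color class join that acts as the identity everywhere outside $G_j$. Concretely, fix $\autoA \in \autGroup{G_j}$ with $\autoA(\vertB) = \vertB'$, which exists since $\vertB$ and $\vertB'$ lie in the same orbit of $G_j$, and define a map $\hat\autoA$ on $\ccjoin{G_1,\dots,G_\ell}$ by $\hat\autoA(\vertC) := \autoA(\vertC)$ whenever $\vertC$ is a part vertex of $G_j$ and $\hat\autoA(\vertC) := \vertC$ otherwise; in particular $\hat\autoA$ fixes every join vertex and every part vertex of every $G_{j'}$ with $j' \neq j$. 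Since $\autoA$ is a bijection on the vertices of $G_j$ and the two domains partition the vertex set of $\ccjoin{G_1,\dots,G_\ell}$, the map $\hat\autoA$ is a bijection.

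The first step is to verify that $\hat\autoA$ is an automorphism of the colored graph $\ccjoin{G_1,\dots,G_\ell}$. Edges internal to a part are preserved: inside $G_j$ because $\autoA$ is an automorphism of $G_j$, and inside every other part because $\hat\autoA$ restricts to the identity there; there are no edges between distinct parts. For an edge incident to a join vertex $\vertA_i$, recall that the neighborhood of $\vertA_i$ is exactly the union of the $i$-th color classes of all parts. Here I use that an automorphism of a colored graph respects the total preorder $\spleq$, hence maps each color class of $G_j$ onto itself; therefore $\hat\autoA$ permutes the neighborhood of $\vertA_i$ while fixing $\vertA_i$, so edges at join vertices are preserved as well. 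Finally $\hat\autoA$ preserves colors — on $G_j$ because $\autoA$ does, elsewhere because it is the identity, and the join vertices retain their unique colors — so $\hat\autoA \in \autGroup{\ccjoin{G_1,\dots,G_\ell}}$.

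The second step is to check that $\hat\autoA$ fixes $\tup{\vertC}$. By hypothesis no $\vertC_i$ is a part vertex of $G_j$, so each $\vertC_i$ is either a join vertex or a part vertex of some $G_{j'}$ with $j' \neq j$; in either case $\hat\autoA(\vertC_i) = \vertC_i$ by definition. Hence $\hat\autoA \in \autGroup{(\ccjoin{G_1,\dots,G_\ell}, \tup{\vertC})}$, and since $\hat\autoA(\vertB) = \autoA(\vertB) = \vertB'$, the vertices $\vertB$ and $\vertB'$ lie in the same orbit of $(\ccjoin{G_1,\dots,G_\ell}, \tup{\vertC})$, as claimed.

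There is no real obstacle in this argument; the only point requiring a moment's care is the behavior of $\hat\autoA$ at the join vertices, i.e.\ the observation that a colored-graph automorphism of $G_j$ necessarily preserves the partition of $G_j$ into color classes — this is precisely what makes the ``identity on the rest'' extension legal. Everything else is routine bookkeeping.
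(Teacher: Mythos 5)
Your proof is correct and is essentially the paper's own argument: extend the automorphism of $G_j$ by the identity on all other parts and on the join vertices, note that preservation of the color classes of $G_j$ guarantees that adjacency to join vertices is respected, and observe that $\tup{\vertC}$ is fixed because none of its entries lies in $G_j$. Nothing to add.
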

\begin{proof}
	Because~$\vertB$ and~$\vertB'$ are in the same orbit of~$G_j$,
	there is a $\autoA \in \autGroup{G_j}$ such that $\autoA(\vertB) = \vertB'$.
	We extend~$\autoA$ to $\ccjoin{G_1, \dots, G_\ell}$
	by the identity on all join vertices and all parts apart from~$G_j$.
	Because~$\autoA$ respects the colors classes of~$G_j$
	and the join vertex~$\vertA_i$ is adjacent to all vertices in the $i$-th color class of~$G_j$ for every $i \in [c]$,
	adjacency of part vertices of~$G_j$ and join vertices is preserved by~$\autoA$.
	Thus, $\autoA \in \autGroup{(\ccjoin{G_1, \dots, G_\ell}, \tup{\vertC})}$
	because the part of~$\vertC_i$ is not~$G_j$ for every $i \in [|\tup{\vertC}|]$
	and~$\autoA$ is the identity on all other parts.
\end{proof}

We will be later interested in the case that the coloring of a graph is unnecessary with respect to automorphisms in the following sense.
The \defining{coloring of a graph $G$ is unnecessary}
if $G$ has the same automorphisms as the uncolored variant of $G$.
\begin{lem}
	\label{lem:unnecessary-colorings}
	If the coloring of a base graph $G$ is unnecessary,
	then the coloring of $\CFI{G,g}$ is unnecessary for every $g \in \FF_2$.
	If $G_1,\dots, G_\ell$ are base graphs of degree at most $d$ 
	whose coloring is unnecessary and $\ell+2 > d$,
	then the coloring of $\ccjoin{G_1,\dots, G_\ell}$ is unnecessary.
\end{lem}
\begin{proof}
	The first claim concerning CFI-graphs holds because automorphisms of a CFI
	graphs are composed of automorphisms of its base graph and ``CFI-automorphisms''~\cite{Pago23} (the automorphisms of CFI graphs for totally ordered base graphs). In particular, these CFI-automorphisms are independent of the coloring of the base graph.
	If the coloring of the base graph is unnecessary,
	then the uncolored base graph has the same automorphisms as the colored one.
	Hence both, the colored and uncolored CFI graph,
	have the same automorphisms of the base graph and the same CFI-automorphisms.
	This implies that the coloring of the CFI graph is unnecessary.
	
	For the second claim, we first show that also in the uncolored color class join, each join vertex is in its own orbit.
	Every part vertex $\vertA$ in $\ccjoin{G_1,\dots, G_\ell}$ has degree at most $d+1$: the vertex $\vertA$ has at most $d$ incident edges in its part $G_i$ and one additional incident edge to one join vertex.
	Each join vertex has degree $\ell$.
	Because $\ell+2 > d$, the join vertices in $\ccjoin{G_1,\dots, G_\ell}$ have degree larger than any part vertex.
	Two join vertices cannot be in the same orbit, either, because they are connected to vertices in different orbits in the uncolored version of~$G_i$,
	which is the case because the coloring of~$G_i$ is unnecessary.
	Hence, all automorphisms in both the colored and uncolored color graph join
	are composed of automorphisms of the parts and isomorphisms between parts.
	Because the coloring of the~$G_i$ is unnecessary, no additional automorphisms are added when removing the coloring.
\end{proof}

\paragraph{Color Class Joins of CFI Graphs}
Now we apply color class joins to CFI graphs:
For connected colored graphs $G$, $H$, and $K$ with the same number of color classes,
we define
\[\ccjoink{k}{G, H, K} := \ccjoin{\underbrace{G, \dots , G}_{k \text{ many}}, \underbrace{H, \dots , H}_{k \text{ many}}, \underbrace{K, \dots , K}_{k \text{ many}}}.\]
Let $\GraphClass$ be a class of colored base graphs.
For $G \in \GraphClass$ and $g \in \FF_2$, we define
\begin{align*}
	\CFImk{k}{G,g} &:= \ccjoink{k}{\CFI{G,0},\CFI{G,g},\CFI{G,1}},\\
	\CFImk{k}{\GraphClass} &:= \setcond*{\CFImk{k}{G,g}}{G \in \GraphClass, g \in \FF_2}, \text{ and}\\
	\CFIm{\GraphClass} &:= \bigcup_{k \in \nat} \CFImk{k}{\GraphClass}.
\end{align*}

\begin{lem}
	\label{lem:unnecessary-coloring-cfim}
	Let $G$ be a colored base graph of maximal degree $d$.
	If the coloring of $G$ is unnecessary,
	then for all $k > 2^{d-1}-2$ and all $g,h \in \FF_2$,
	the coloring of $\CFI{\CFImk{k}{G,g},h}$ is unnecessary, too.
\end{lem}
\begin{proof}
	The lemma follows by applying Lemma~\ref{lem:unnecessary-colorings} multiple times.
	Because the coloring of~$G$ is unnecessary, the one of $\CFI{G,g'}$ is unnecessary for all $g' \in \FF_2$.
	Because the maximal degree of $\CFI{G,g'}$ is $2^{d-1}$ and
	$k +2 > 2^{d-1}$, the coloring of $\CFImk{k}{G,g}$ is unnecessary as well.
	Hence, also the coloring of $\CFI{\CFImk{k}{G,g}, h}$ is unnecessary.
\end{proof}

\begin{lem}
	\label{lem:extended-CFI if-cfi}
	Let $\IFPC{} \subseteq L \subseteq \IFPCWSCI{}$
	be a subset of \IFPCWSCI{}
	closed under \IFPC{}-formula-formation rules.
	If $L$ canonizes  $\CFI{\GraphClass}$,
	then $L$ canonizes $\CFIm{\GraphClass}$.
\end{lem}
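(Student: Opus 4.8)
The plan is to canonize a color class join of CFI graphs part by part, canonizing each individual part with the hypothesis and then reassembling the results on a canonical numeric domain. Fix an $L$-canonization $\interpret_0$ for $\CFI{\GraphClass}$ and let $\StructA = (\ccjoink{k}{\CFI{G,0},\CFI{G,g},\CFI{G,1}}, \indRel^\StructA) \in \indClosure{\CFIm{\GraphClass}}$ be an input, so that $\StructA$ is a color class join of $3k$ CFI-graph \emph{parts} over one base graph $G \in \GraphClass$, together with the join vertices and an individualization $\indRel^\StructA$. First I would use \IFPC{}-definable (hence $L$-available) operations to recover the combinatorial skeleton: the join vertices are precisely those alone in their color class, and the parts are the connected components of the graph obtained by deleting the join vertices (each part is connected since $G$ is). For a part $P$, let $\indRel^\StructA_P$ be the restriction of $\indRel^\StructA$ to the part vertices of $P$; since those vertices carry exactly the color classes of the underlying CFI graph, ordered by $\spleq^\StructA$ exactly as in the CFI construction (the CFI coloring depends only on $G$, so all parts share the same color order), the pair $(P,\indRel^\StructA_P)$ is isomorphic to a member of $\indClosure{\CFI{\GraphClass}}$.

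Next I would apply $\interpret_0$ to each part, obtaining a totally ordered colored graph with individualization $C_P := \interpret_0(P,\indRel^\StructA_P)$ with $C_P \iso (P,\indRel^\StructA_P)$; since $L$ has numbers we may take the canon on a numeric domain, so $C_P = C_{P'}$ holds exactly when $(P,\indRel^\StructA_P)\iso(P',\indRel^\StructA_{P'})$, and this equality is \IFPC{}-decidable once the $C_P$ are available. Two parts with the same canon are isomorphic as individualized CFI graphs; if in addition neither carries an individualized part vertex, an isomorphism between them extends to an automorphism of $\StructA$, so they lie in a common orbit and are freely interchangeable. Hence the canon of $\StructA$ is determined by canonical data only: the multiset $\setcond{C_P}{P \text{ a part}}$ — equivalently, the distinct part-canons with their multiplicities, obtained by \IFPC{}-counting the parts of each type — the color order of the join vertices read off from $\spleq^\StructA$, and the $\indRel^\StructA$-ranks of all individualized vertices. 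I would let $\interpret(\StructA)$ have a canonical numeric universe holding, for each distinct part-canon $C$ (enumerated in increasing order), one copy of $C$ per part of that type together with one fresh vertex per join color; its $E$- and $\spleq$-relations realize the color class join of these copies (the union being well defined because all parts share the same color order); its $\leq$ orders the join vertices first (by color), then the copies block-by-block by $C$ and, inside a block, by the $\indRel^\StructA$-rank sequences of the corresponding parts' individualized vertices (these sequences are pairwise disjoint, hence distinct, when nonempty, and the order is immaterial otherwise); and its $\indRel$ places the images of the $\indRel^\StructA$-individualized vertices — located inside the relevant $C_P$ via its own $\indRel$-relation, or directly for join vertices — in $\indRel^\StructA$-rank order. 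One then checks that $\interpret(\StructA)$ is totally ordered and isomorphic to $\StructA$ as an $\set{E,\spleq,\indRel}$-structure, so Conditions~1 and~2 of a canonization hold and Condition~3 follows by the remark after the definition of canonization.

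It remains to see that $\interpret$ is an $L$-interpretation. The skeleton extraction, the multiplicity counting, the ordering of the distinct canons, and the final assembly are \IFPC{}-interpretations and hence available in $L$; the only non-\IFPC{} ingredient is $\interpret_0$, applied with the part's individualized vertices as parameters to each extracted part, which is the composition of the $L$-interpretation $\interpret_0$ with \IFPC{}-interpretations. Since $L$ contains \IFPC{} and is closed under \IFPC{}-formula-formation rules, the substitution lemma for interpretations shows that the composite remains an $L$-interpretation. I expect this last step to be the main obstacle: one must ensure that the WSC-fixed-point operators (and, where $L$ uses it, the interpretation operator) inside $\interpret_0$ are evaluated over the extracted part and not over the whole join, since otherwise the witnessed orbits would be the wrong ones. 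This is handled by observing that each part is an \IFPC{}-definable induced substructure whose CFI coloring is \IFPC{}-recoverable, together with the earlier fact that the automorphism structure of a part is unaffected by the rest of the join and by individualizations outside it, so that relativizing $\interpret_0$ to a part computes precisely $\interpret_0(P,\indRel^\StructA_P)$. With this, $\interpret$ is an $L$-canonization for $\CFIm{\GraphClass}$.
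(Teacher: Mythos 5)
Your proposal is correct and follows essentially the same route as the paper: identify join vertices and parts \IFPC{}-definably, relativize the given canonization to each part (justified by the fact that a part's orbit/automorphism structure is unaffected by the rest of the join, so choice-sets and witnesses transfer), and reassemble the per-part canons on a numeric domain ordered by individualization and multiplicity. The only cosmetic difference is that the paper handles unindividualized parts by counting even versus odd canons and emitting copies, whereas you order all part-canons in blocks, which amounts to the same construction.
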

\begin{proof}
	First, we can easily distinguish join vertices from part vertices in \IFPC{}
	for graphs in $\indClosure{\CFIm{\GraphClass}}$
	because the join vertices are in singleton color classes and the part vertices are not.
	So for a given part vertex~$\vertA$ of $G$,
	we can define the set of all part vertices contained in the same part as~$\vertA$
	(namely the ones reachable without using a join vertex),
	that is, we can define the CFI graph in $\CFI{\GraphClass}$ containing~$\vertA$.
	
	Let $\interpret_{\text{can}}$ be an $L$-canonization of $\CFI{\GraphClass}$
	(that is, by definition, a canonization of $\indClosure{\CFI{\GraphClass}}$).
	Then we obtain an $L$-interpretation
	$\interpret_{\text{part-can}}(x)$
	that given a $\CFIm{\GraphClass}$-graph canonizes the part of $x$.
	It essentially is $\interpret_{\text{can}}$ but only considers the (definable)
	set of part vertices in the same part as $x$.
	So every choice-set in the evaluation of $\interpret_{\text{part-can}}(x)$
	will be a set of part vertices in the same part as $x$.
	Thus, a choice-set is an orbit if and only if
	the corresponding choice-set in the evaluation of $\interpret_{\text{can}}$ 
	is an orbit.
	The witnessing automorphisms are obtained by extending the
	witnessing automorphisms defined in $\interpret_{\text{can}}$ 
	with the identity on all vertices not in the part of $x$.
	In that way, exactly the same choices are successfully witnessed by
	$\interpret_{\text{part-can}}$ as by  $\interpret_{\text{can}}$.
	Note that $\interpret_{\text{part-can}}$ is an~$L$-interpretation
	because $L$ is closed
	under \IFPC{}-formula-formation rules. 
	
	We use $\interpret_{\text{part-can}}(x)$ to define the canon of every
	part containing an $\indRel$-individualized vertex.
	These canons can be ordered according to the order of the individualized vertices.
	For the remaining parts not containing individualized vertices,
	we use $\interpret_{\text{part-can}}(x)$ to determine how many of 
	them are even respectively odd CFI graphs.
	We obtain a canon for the even and odd graph (if they occur)
	and define as many copies as needed (using numeric variables).
	The copies can be ordered.
	We finally take the disjoint union of all these canons,
	and lastly add the join vertices,
	which all is \IFPC{}-definable.
\end{proof}

\paragraph{CFI Graphs of Color Class Joins}
Now, we use color class joins as base graphs.
We introduce terminology for CFI graphs over color class joins.
Let $G_1,\dots, G_\ell$ be colored base graphs with the same number of colors
and let $h \in \FF_2$.
We transfer the notion of part and join vertices from $H := \ccjoin{G_1, \dots, G_\ell}$ to $\StructA := \CFI{H,h}$.
The \defining{$G_i$-part} of $\StructA$
is the set of vertices originating from a vertex or edge of $G_i$ in $H$.
These vertices are called \defining{part vertices of $G_i$}.
A vertex is just a part vertex, if it is a part vertex of some $G_i$.
The remaining vertices are the \defining{join vertices}.

We consider a special class of individualizations of~$\StructA$.
Let $\tup{\vertA} \in \StructVA^*$.
A part of~$\Struct$ is \defining{$\tup{\vertA}$\nobreakdash-pebbled}
if $\vertA_i$ is a part vertex of that part for some $i \in [k]$.
Otherwise, the part is \defining{$\tup{\vertA}$\nobreakdash-unpebbled}.
The set of \defining{$\tup{\vertA}$\nobreakdash-pebbled-part vertices} $\partvert{\Struct}{\tup{\vertA}}$
is the set of all join vertices and all part vertices of all $\tup{\vertA}$\nobreakdash-pebbled parts.
The set of \defining{$\tup{\vertA}$\nobreakdash-pebbled-part individualizations} $\partind{\Struct}{\tup{\vertA}}$ is the set of all individualizations of $\partvert{\Struct}{\tup{\vertA}}$.
We now introduce a technical notion that we will use later.
\begin{defi}[Unpebbled-Part-Distinguishing]
For a tuple $\tup{\vertA} \in \StructVA^\ell$,
a relation $R \subseteq \StructV^k$ is \defining{$\tup{\vertA}$\nobreakdash-unpebbled-part-distinguishing} if there are $m \in [k]$
and $i \neq j \in [\ell]$
such that 
\begin{enumerate}
	\item both the $G_i$-part and the $G_j$-part of $\StructA$ are $\tup{\vertA}$-unpebbled,
	\item there is a $\tup{\vertB} \in R$
	such that $\vertB_m$ is a part vertex of $G_i$, and 
	\item for every $\tup{\vertC} \in R$,
	the vertex $\vertC_m$ is not a part vertex of $G_j$.
\end{enumerate}
\end{defi}
\begin{lem}
	\label{lem:orbit-unpebble-part-distinguishing}
	Let $\tup{\vertA} \in \StructVA^\ell$ and
	$R \subseteq \StructV^k$.
	Assume that $G_i \not\iso G_j$,
	that $G_i$ and $G_j$ are $\tup{\vertA}$\nobreakdash-unpebbled,
	and some $\tup{\vertB} \in R$ contains a vertex of a $\tup{\vertA}$-unpebbled part.
	If $R$ is an orbit of $(\StructA, \tup{\vertA})$,
	then $R$ is $\tup{\vertA}$-unpebbled-part-distinguishing.
\end{lem}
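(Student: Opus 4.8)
The plan is to extract from the hypotheses --- that $R$ is an orbit and that $G_i \not\iso G_j$ --- a single coordinate at which the tuples of~$R$ can reach only parts of one fixed isomorphism type, and then read off the data required by the definition.

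First I would establish the structural claim: no automorphism $\auto \in \autGroup{(\StructA,\tup{\vertA})}$ maps a part vertex of a part~$G_p$ to a part vertex of a part~$G_q$ whenever $G_p \not\iso G_q$. For this I would use the standard fact (as already invoked in the proof of Lemma~\ref{lem:edge-vertices-same-orbit}) that~$\auto$ induces an automorphism~$\bar{\auto}$ of the base graph $H := \ccjoin{G_1,\dots,G_\ell}$ which is compatible with origins, i.e.\ $\orig{\auto(\vertC)} = \bar{\auto}(\orig{\vertC})$ for every vertex~$\vertC$ of~$\StructA$ (extending~$\bar{\auto}$ to act on base edges). Since each join vertex of~$H$ lies in a singleton color class, $\bar{\auto}$ fixes every join vertex and therefore permutes the connected components of~$H$ with the join vertices deleted; those components are exactly the copies of $G_1, \dots, G_\ell$. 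Hence $\bar{\auto}$ maps the copy of~$G_p$ onto the copy of some~$G_{q'}$, restricting there to an isomorphism $G_p \to G_{q'}$. As the origin of a part vertex of~$G_p$ is a vertex or edge of the copy of~$G_p$, and distinct copies are vertex- and edge-disjoint, $\auto$ can map a part vertex of~$G_p$ only to a part vertex of~$G_{q'}$; were such an image a part vertex of~$G_q$, we would get $q'=q$ and hence $G_p \iso G_q$, a contradiction.

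Next I would note the consequence that, for every coordinate $m \in [k]$, the set $P_m := \setcond{p}{\vertB_m \text{ is a part vertex of } G_p \text{ for some } \tup{\vertB} \in R}$ consists of indices of pairwise isomorphic parts: if $p,q \in P_m$, an automorphism of $(\StructA,\tup{\vertA})$ taking a witnessing tuple for~$p$ to one for~$q$ (which exists because $R$ is an orbit) maps a part vertex of~$G_p$ to a part vertex of~$G_q$, so $G_p \iso G_q$ by the structural claim. Finally, the assumption that some $\tup{\vertB}^* \in R$ contains a vertex of a $\tup{\vertA}$-unpebbled part gives a coordinate~$m_0$ and an index~$i_0$ with $\vertB^*_{m_0}$ a part vertex of the $\tup{\vertA}$-unpebbled part~$G_{i_0}$; thus $i_0 \in P_{m_0}$ and every part indexed by~$P_{m_0}$ is isomorphic to~$G_{i_0}$. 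Because $G_i \not\iso G_j$, not both of them are isomorphic to~$G_{i_0}$, so I may pick $j_0 \in \set{i,j}$ with $G_{j_0} \not\iso G_{i_0}$; then $j_0 \notin P_{m_0}$, $j_0 \neq i_0$, and both $G_{i_0}$ and $G_{j_0}$ are $\tup{\vertA}$-unpebbled (the former by the choice of~$m_0$, the latter since $G_i$ and $G_j$ are). Hence $m_0$, $i_0$, and $j_0$ witness that $R$ is $\tup{\vertA}$-unpebbled-part-distinguishing.

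The step I expect to be the main obstacle is the structural claim: one must take care that origins can be base edges as well as base vertices, verify that the copies of the~$G_p$ inside the color class join are genuinely vertex- and edge-disjoint so that the target copy~$G_{q'}$ is unambiguous, and correctly invoke the reduction of automorphisms of~$\CFI{H,h}$ to automorphisms of the base graph~$H$ (checking that~$H$, a color class join of CFI graphs, meets whatever requirements that reduction needs).
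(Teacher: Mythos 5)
Your proof is correct and follows essentially the same route as the paper's: both arguments rest on the fact that an automorphism of $(\StructA,\tup{\vertA})$ can only map a part vertex to a part vertex of an isomorphic part (Lemma~\ref{lem:color-class-join-orbits}, lifted through the CFI construction), and then exhibit a coordinate at which one of the two non-isomorphic unpebbled parts is never hit. The only differences are presentational: you make explicit the lifting of Lemma~\ref{lem:color-class-join-orbits} from $\ccjoin{G_1,\dots,G_\ell}$ to $\CFI{\ccjoin{G_1,\dots,G_\ell},h}$ via induced base-graph automorphisms, which the paper leaves implicit, and you replace the paper's two-case analysis (on whether some tuple hits $G_i$ or $G_j$ at the chosen coordinate) by the uniform observation that the parts reachable at a fixed coordinate form a single isomorphism class.
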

\begin{proof}
	Let the part of~$\vertB_m$ be $\tup{\vertA}$-unpebbled
	and let this part be the $G_n$-part.
	Assume that~$R$ is an orbit of $(\StructA, \tup{\vertA})$.
	If the part of~$\vertC_m$ is neither~$G_i$ nor~$G_j$ for all $\tup{\vertC}\in R$,
	then~${n \notin \set{i,j}}$.
	Thus,~$R$ is $\tup{\vertA}$-unpebbled-part-distinguishing
	because the $G_i$-part and the $G_n$-part are $\tup{\vertA}$\nobreakdash-unpebbled,~$\vertB_m$ is a part vertex of~$G_n$,
	and for every~$\tup{\vertC}\in R$,~$\vertC_m$ is not a part vertex of~$G_i$.
	
	Otherwise, there is a $\tup{\vertC} \in R$ such that
	the part of~$\vertC_m$ is without loss of generality~$G_i$.
	Then no automorphism can map~$\vertC_m$ to a vertex whose part is~$G_j$
	because $G_i \not\iso G_j$ (Lemma~\ref{lem:color-class-join-orbits}).
	Thus,~$\vertC_m$ is not a part vertex of~$G_j$ for every $\tup{\vertC} \in R$ because~$R$ is an orbit.
	It follows that~$R$ is $\tup{\vertA}$-unpebbled-part-distinguishing.
\end{proof}

\subsection{Quantifying over Pebbled-Part Individualizations}
\label{sec:quantifying-pebbled-part-individualizations}

We now define an extension of the $k$-variable counting logic $\Ck{k}$
which allows for quantifying over pebbled-part individualizations. 
Our (unnatural) extension $\Pk{k}$ can only be evaluated on CFI graphs over color class joins and will be a tool to show $\WSCof{\IFPC{}}$-undefinability.
The benefit of this logic is that we can characterize it via a pebble game.
The formulas of $\Pk{k}$ are defined as follows. Let~$\sig$ be a signature.
Whenever $\formA(\tup{x})$ is a $\Ck{k}[\sig,\indRel_P]$\nobreakdash-formula
(for a binary relation symbol~${\indRel_P}\notin \sig$), then \[(\existsP{\indRel_P}.\qspace \formA)(\tup{x})\]
is a $\Pk{k}[\sig]$\nobreakdash-formula.
$\Pk{k}[\sig]$-formulas can be combined as usual in $\Ck{k}$
with Boolean operators and counting quantifiers.
Note that $\exists^P$-quantifiers \emph{cannot} be nested.
The logic is evaluated on CFI graph over colors class joins:
Let $G_1,\dots ,G_\ell$ be colored base graphs with the same number of colors,
$g \in \FF_2$, and
$\Struct = \CFI{\ccjoin{G_1,\dots, G_n},g}$.
The $\exists^P$-quantifier has the following semantics:
\[(\existsP{\indRel_P}.\qspace\formA)^{\Struct} := \setcond*{\tup{\vertA} \in \StructV^{|\tup{x}|}}{
	\tup{\vertA} \in \formA^{(\Struct, \indRel_P^\Struct)}
	\text { for some } {\indRel_P^\Struct} \in \partind{\Struct}{\tup{\vertA}}}.\]
That is, the $\exists^P$-quantifier quantifies over a pebbled-part individualization for the free variables of $\formA$.
Note that the quantifier does not bind first-order variables.

We now characterize $\Pk{k}$ by an Ehrenfeucht-Fraïssé-like pebble game,
which is an extension of the bijective $k$-pebble game.
The \defining{$\Pk{k}$-game} is played by Spoiler and Duplicator.
There are two types of pebbles.
First, there a~$k$ pebble pairs $(p_i, q_i)$, one for each $i \in[k]$,
that will be used as in the bijective $k$-pebble game.
Second, there is a pebble pair $(a_i, b_i)$ for each $i \in \nat$.
The game is played on CFI graphs over color class joins~$\StructA$ and~$\StructB$
satisfying $|\StructVA| = |\StructVB|$ (otherwise Spoiler wins immediately).
A position in the game is a tuple $(\StructA, \indRel_P^\StructA, \tup{\vertA}; \StructB, \indRel_P^\StructB, \tup{\vertB})$,
where $\tup{\vertA} \in \StructVA^{\leq k}$ and
$\tup{\vertB} \in \StructVB^{\leq k}$ are of the same length,
and~$\indRel_P^\StructA$ and~$\indRel_P^\StructB$ individualize vertices (possibly none) originating from up to~$k$ part or join vertices in~$\StructA$ and~$\StructB$ respectively.
A pebble~$p_j$ is placed on~$\vertA_i$ and the corresponding pebble~$q_j$ is placed on~$\vertB_i$ for some $j \in [k]$
(the exact pebble pair $(p_j,q_j)$ placed on the $i$-th entries will not matter).
The pebble~$a_i$ is placed on the $i$-th vertex individualized by $\indRel_P^\StructA$
and~$b_i$ on the $i$-th vertex individualized by~$\indRel_P^\StructB$.
The initial position is $(\StructA, \emptyset, (); \StructB, \emptyset, ())$,
where~$()$ denotes the empty tuple.

Spoiler can perform two kinds of moves.
A \defining{regular move} proceeds as in the bijective $k$-pebble game:
Spoiler picks up a pair of pebbles $(p_i, q_i)$.
Then Duplicator provides a bijection $\gamebij\colon \StructVA \to \StructVB$.
Spoiler places $p_i$ on $\vertA \in \StructVA$ and $q_i$ on $\gamebij(\vertA)$.

A \defining{P-move} proceeds as follows and can only be performed once by Spoiler
(that is, if ${\indRel_P^\StructA}={\indRel_P^\StructB}=\emptyset$):
Spoiler chooses to play the $a_i$-pebbles exactly on the
$\tup{\vertA}$-pebbled-part vertices of~$\StructA$
or to play the $b_i$-pebbles exactly on the
$\tup{\vertB}$-pebbled-part vertices of~$\StructB$.
In the first case, 
Duplicator responds by placing the $b_i$-pebbles on exactly  the $\tup{\vertB}$-pebbled-part vertices of~$\StructB$.
In the second case, Duplicator places the $a_i$-pebbles on exactly  the $\tup{\vertA}$-pebbled-part vertices of~$\StructA$.

If after a round there is no pebble-respecting local isomorphism
of the pebble-induced substructures of $\StructA$ and $\StructB$, then Spoiler wins.
Duplicator wins if Spoiler never wins.

\begin{lem}
	\label{lem:pk-iff-pk-game}
	For every $k \geq 3$,
	Spoiler has a winning strategy in the $\Pk{k}$-game 
	at position $(\StructA, \emptyset,\tup{\vertA}; \StructB, \emptyset, \tup{\vertB})$
	if and only if the logic $\Pk{k}$ distinguishes $(\StructA, \tup{\vertA})$ and $(\StructB, \tup{\vertB})$.
\end{lem}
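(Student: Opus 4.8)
The plan is to prove both directions of the equivalence by reducing, as soon as the single P-move has been played, to the classical characterisation of $\Ck{k}$ by the bijective $k$-pebble game~\cite{Hella96}, applied to the structures expanded by the individualization relation~$\indRel_P$. The key observation is structural: a $\Pk{k}[\sig]$-formula is a $\Ck{k}$-formula over the extended set of atoms consisting of the usual atoms together with the formulas $\existsP{\indRel_P}.\formB$ for $\formB$ a $\Ck{k}[\sig,\indRel_P]$-formula, and $\exists^P$-quantifiers do not nest. On the game side this corresponds to the fact that the P-move may be played at most once, and that once it has been played, placing the $a_i$- and $b_i$-pebbles on exactly the pebbled-part vertices in a chosen order is the same as fixing total orders $\indRel_P^\StructA$ and $\indRel_P^\StructB$, so that the remainder of the $\Pk{k}$-game is literally the bijective $k$-pebble game on the $(\sig\cup\set{\indRel_P})$-structures $(\StructA,\indRel_P^\StructA)$ and $(\StructB,\indRel_P^\StructB)$.

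For ``$\Pk{k}$ distinguishes $\Rightarrow$ Spoiler wins'' I would fix a distinguishing formula $\formA$ with $\tup{\vertA}\in\formA^\StructA$ and $\tup{\vertB}\notin\formA^\StructB$ and construct Spoiler's winning strategy by induction on $\formA$. The Boolean cases are routine (negation swaps the two structures, which the game permits); for a counting quantifier Spoiler picks up the corresponding pebble pair, and any bijection $\gamebij$ offered by Duplicator must send a witness on the $\StructA$-side to a non-witness on the $\StructB$-side, so the induction hypothesis applies at the new position to the structurally simpler subformula, the P-move still being unused along this branch. If $\formA=\existsP{\indRel_P}.\formB$, Spoiler plays the P-move on $\StructA$ using a witnessing order $\indRel_P^\StructA\in\partind{\StructA}{\tup{\vertA}}$; for every response $\indRel_P^\StructB\in\partind{\StructB}{\tup{\vertB}}$ of Duplicator the formula $\formB$ still distinguishes $(\StructA,\indRel_P^\StructA,\tup{\vertA})$ from $(\StructB,\indRel_P^\StructB,\tup{\vertB})$, so by~\cite{Hella96} Spoiler wins the ordinary game that remains; the case $\formA=\neg\existsP{\indRel_P}.\formB$ is symmetric with Spoiler playing the P-move on~$\StructB$ instead.

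For ``$\Pk{k}$ does not distinguish $\Rightarrow$ Duplicator wins'' I would let Duplicator maintain the invariant that before a P-move the current position satisfies the same $\Pk{k}$-formulas on both sides, and after a P-move the $\indRel_P$-expanded position satisfies the same $\Ck{k}$-formulas on both sides. Since atomic formulas and cardinality sentences lie in both logics, the invariant always supplies a pebble-respecting local isomorphism, so Spoiler never wins. Preservation under regular moves is the standard bijection-between-types argument (over $\sig$ before the P-move, over $\sig\cup\set{\indRel_P}$ after). For a P-move on $\StructA$ realising an order $\indRel_P^\StructA$, Duplicator must find $\indRel_P^\StructB\in\partind{\StructB}{\tup{\vertB}}$ with $(\StructA,\indRel_P^\StructA,\tup{\vertA})\kequiv{k}(\StructB,\indRel_P^\StructB,\tup{\vertB})$ over $\sig\cup\set{\indRel_P}$; if no such choice existed, then for each of the finitely many $\indRel_P^\StructB\in\partind{\StructB}{\tup{\vertB}}$ there would be a distinguishing $\Ck{k}[\sig,\indRel_P]$-formula, their finite conjunction $\formB$ would again be a $\Ck{k}$-formula, and $\existsP{\indRel_P}.\formB$ would distinguish $(\StructA,\tup{\vertA})$ from $(\StructB,\tup{\vertB})$, contradicting the invariant; the case of a P-move on $\StructB$ is symmetric.

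I expect the real work to be bookkeeping rather than conceptual: verifying precisely that the post-P-move game coincides with the bijective $k$-pebble game on the $\indRel_P$-expanded structures, so that~\cite{Hella96} can be invoked as a black box, and that the ``at most once'' restriction on the P-move is tracked correctly through the induction — which it is, because the only subformula reached after descending below an $\exists^P$ is a plain $\Ck{k}[\sig,\indRel_P]$-formula and hence needs no further P-move. The remaining point to handle carefully is that $\partind{\StructB}{\tup{\vertB}}$ is finite for finite structures, which is exactly what makes the conjunction step in the converse direction legitimate.
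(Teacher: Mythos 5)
Your proposal is correct and follows essentially the same route as the paper's proof: both directions reduce, once the single P-move has been played, to Hella's characterisation of $\Ck{k}$ by the bijective $k$-pebble game on the $\indRel_P$-expanded structures, with the usual type-counting argument for regular moves and an induction over formula structure (resp.\ its game-side dual) for the rest. Your explicit finite-conjunction over Duplicator's possible responses $\indRel_P^\StructB \in \partind{\StructB}{\tup{\vertB}}$ is a point the paper's write-up glosses over slightly, and your presentation of the converse direction as a Duplicator invariant rather than an induction on Spoiler's winning rounds is only a cosmetic difference.
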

\begin{proof}
	Let $k \geq 3$.
	We first consider positions $(\StructA, \indRel_P^\StructA, \tup{\vertA};\StructB, \indRel_P^\StructB, \tup{\vertB})$ where Spoiler already has performed the $P$-move,
	i.e., $\indRel_P^\StructA,  {\indRel_P^\StructB} \neq \emptyset$.
	Then the remaining game is essentially just the 
	bijective $k$-pebble game,
	where for each~$i\in \nat$ the vertices pebbled by~$a_i$ and~$b_i$ are put in a unique singleton relation.
	Spoiler has a winning strategy at the position $(\StructA, \indRel_P^\StructA, \tup{\vertA}; \StructB, \indRel_P^\StructB, \tup{\vertB})$ if and only if $\Ck{k}$
	distinguishes $(\StructA, \indRel_P^\StructA, \tup{\vertA})$ and  $(\StructB, \indRel_P^\StructB, \tup{\vertB})$ because, for $k\geq 3$,
	we can define the $i$-th vertex individualized by $\indRel_P^\StructA$ or $\indRel_P^\StructB$.

	We now prove by induction on the number of rounds
	that if Spoiler has a winning strategy in the $\Pk{k}$-game at position $(\StructA, \emptyset,\tup{\vertA}; \StructB, \emptyset, \tup{\vertB})$,
	then $\Pk{k}$ distinguishes $(\StructA, \tup{\vertA})$ and $(\StructB, \tup{\vertB})$.
	If Spoiler wins without performing a $P$-move,
	then Spoiler wins the bijective pebble game and $\Ck{k}$ and thus also $\Pk{k}$ distinguishes $(\StructA, \tup{\vertA})$ and $(\StructB, \tup{\vertB})$.
	So assume Spoiler eventually performs a $P$-move.
	
	Assume that  Spoiler performs a regular move and picks up the $i$-th pebble pair $(p_i,q_i)$.
	The argument is essentially the same as for the bijective $k$-pebble game:
	For every bijection $\gamebij \colon \StructVA \to \StructVB$,
	there is a~$\vertC \in \StructVA$ such that 
	Spoiler wins the $\Pk{k}$-game 
	when placing~$p_i$ on $\vertC$ and~$q_i$ on~$\gamebij(\vertC)$.
	For all these positions, there is a $\Pk{k}$-formula distinguishing them by the inductive hypothesis.
	So some Boolean combination of these distinguishing formulas
	is satisfied by a different number of vertices in $(\StructA, \tup{\vertA})$ and $(\StructB, \tup{\vertB})$
	and we can distinguish them using a counting quantifier.
	
	Now assume that Spoiler performs a $P$-move:
	By symmetry, assume Spoiler places $a_i$\nobreakdash-pebbles on all $\tup{\vertA}$\nobreakdash-pebbled-part vertices of~$\StructA$ (inducing the individualization~$\indRel_P^\StructA$).
	Then, for every placement of the $b_i$\nobreakdash-pebbles by Duplicator on the $\tup{\vertB}$\nobreakdash-pebbled-part vertices of~$\StructB$
	(inducing~$\indRel_P^\StructB$),
	Spoiler has a winning strategy in the bijective $k$-pebble game at position
	$(\StructA, \indRel_P^\StructA, \tup{\vertA}; \StructB, \indRel_P^\StructB, \tup{\vertB})$ (no $P$-move is allowed anymore).
	Then, as argued before, there is a $\Ck{k}$\nobreakdash-formula $\formA$
	distinguishing $(\StructA, \indRel_P^\StructA, \tup{\vertA})$ and $(\StructB, \indRel_P^\StructB, \tup{\vertB})$.
	So the $\Pk{k}$-formula
	$\existsP{\indRel_P}.\qspace \formA$
	distinguishes $(\StructA, \tup{\vertA})$ and $(\StructB, \tup{\vertB})$.
	
	To show the other direction,
	we prove by induction on the quantifier depth that if
	a $\Pk{k}$\nobreakdash-formula~$\formA$ distinguishes $(\StructA, \tup{\vertA})$ and $(\StructB, \tup{\vertB})$,
	then Spoiler has a winning strategy in the $\Pk{k}$-game at position $(\StructA, \emptyset,\tup{\vertA}; \StructB, \emptyset, \tup{\vertB})$.
	If $\formA$ is actually a $\Ck{k}$-formula, then Spoiler wins the bijective $k$-pebble game and so in particular the $\Pk{k}$-game.
	If $\formA$ is $\formB \land \formB'$, $\formB \lor \formB'$, or $\neg \formB$, then one of $\formB$ and $\formB'$ distinguishes $(\StructA, \tup{\vertA})$ and $(\StructB, \tup{\vertB})$.
	If $\formA$ is a counting quantifier $\exists^{\leq i} x.\qspace \formB$,
	then Spoiler performs a regular move.
	Because $\formA$ has at most $k-1$ free variables, Spoiler can pick up a pair of pebbles $(p_i,q_i)$.
	Whatever bijection $\gamebij$ Duplicator chooses,
	there is a vertex~$\vertC$ such that, by symmetry,~$\vertC$ satisfies~$\formB$ 
	in~$\StructA$ but~$\gamebij(\vertC)$ does not satisfy~$\formB$ in~$\StructB$.
	That is,
	Spoiler places $p_i$ on $\vertC$ and $q_i$ on $\gamebij(\vertC)$
	and wins by the induction hypothesis.
	
	To the end, assume that $\formA$ is the $\Pk{k}$-formula $\existsP{\indRel_P}.\qspace\formB$.
	By symmetry, we assume that $(\StructA, \tup{\vertA})$ satisfies $\formA$
	but $(\StructB, \tup{\vertB})$ does not.
	So there is a ${\indRel_P^\StructA} \in \partind{\StructA}{\tup{\vertA}}$
	satisfying $\tup{\vertA} \in \formB^{(\StructA, \indRel_P^\StructA)}$
	such that, for every ${\indRel_P^\StructB} \in \partind{\StructB}{\tup{\vertB}}$,
	it holds that  $\tup{\vertB} \notin \formB^{(\StructB, \indRel_P^\StructB)}$.
	Because $\formB$ is a $\Ck{k}$ formula,
	Spoiler has a winning strategy in the $k$-bijective pebble game at position 
	$(\StructA, \indRel_P^\StructA, \tup{\vertA}; \StructB, \indRel_P^\StructB, \tup{\vertB})$.
	By performing a $P$-move and placing the $a_i$ pebbles according to $\indRel_P^\StructA$,
	Spoiler obtains a winning strategy in the $\Pk{k}$ game at position
	$(\StructA, \emptyset, \tup{\vertA}; \StructB, \emptyset, \tup{\vertB})$.
\end{proof}

Note that the former lemma only holds for $k\geq3$ because with fewer variables
we cannot define the $i$-th individualized vertex in $\Ck{k}$
and thus cannot check local isomorphisms.
Modifying the logic such that the lemma hold for every $k$
only complicates matters and is not needed in the following.

\begin{lem}
	\label{lem:pk-CFI undistinguishable}
	Let $G_1, \dots, G_{k+1}$ be colored base graphs,
	each with $c > k$ color classes,
	such that $\CFI{G_i,0}\kequiv{k}\CFI{G_i, 1}$ for each $i \in [k+1]$. 
	Then Duplicator has a winning strategy in the $\Pk{k}$-game
	played on $\CFI{\ccjoin{G_1, \dots, G_{k+1}},0}$ and $\CFI{\ccjoin{G_1, \dots, G_{k+1}},1}$.
\end{lem}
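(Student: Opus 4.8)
The plan is to exhibit a winning strategy for Duplicator in the $\Pk{k}$-game played on $\StructA := \CFI{\ccjoin{G_1, \dots, G_{k+1}},0}$ and $\StructB := \CFI{\ccjoin{G_1, \dots, G_{k+1}},1}$; put $H := \ccjoin{G_1, \dots, G_{k+1}}$. Recall that, up to isomorphism, $\StructA$ and $\StructB$ arise from a single labelled CFI graph over $H$ by twisting one edge, and that this twist may be moved along any path of $H$ by an isomorphism acting as the identity outside the origins of the path vertices. For each $i$, the hypothesis $\CFI{G_i,0}\kequiv{k}\CFI{G_i,1}$ together with the game characterization of $\Ck{k}$ gives a Duplicator winning strategy $\mathcal{S}_i$ in the bijective $k$-pebble game on $\CFI{G_i,0}$ and $\CFI{G_i,1}$; note that each gadget of a $G_i$-vertex inside $H$ carries one extra edge-vertex-pair (for its incident join edge) on top of the pairs present in $\CFI{G_i,\cdot}$, and that $\mathcal{S}_i$ transplants to $H$ by simply never touching those extra pairs. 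Two counting facts are used constantly: with at most $k$ pebbles and $k+1$ parts there is always a $\tup{\vertA}$-unpebbled part, and with $c>k$ join vertices there is always a join vertex whose gadget carries no pebble.

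Before any $P$-move Duplicator maintains the invariant that there is a $\tup{\vertA}$-unpebbled part $G_j$ such that (a) the twist can be realized on an internal edge of $G_j$, so that $\StructA$ and $\StructB$ agree outside the $G_j$-part, and (b) inside the $G_j$-part the position is a winning position of the transplanted $\mathcal{S}_j$ with the pebbles currently in $G_j$. To respond to a regular move in which Spoiler picks up $(p_i,q_i)$: now at most $k-1$ pebbles are placed, so at least one part $G_{j'}\neq G_j$ is unpebbled; if the twist-part $G_j$ currently carries a pebble, Duplicator first moves the twist along a path through a pebble-free join vertex into $G_{j'}$ (the path may be taken inside $G_j$, through the join vertex, and inside $G_{j'}$, so it disturbs no pebbled gadget), re-establishing the invariant with twist-part $G_{j'}$; then Duplicator answers with the bijection that is $\mathcal{S}_j$'s bijection on the $G_j$-part and the identity elsewhere. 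Whatever vertex Spoiler now pebbles, either it lies outside the $G_j$-part — then the identity keeps $\StructA$ and $\StructB$ matched there and $G_j$ stays unpebbled — or it lies inside, in which case $\mathcal{S}_j$'s guarantee provides a local isomorphism on the $G_j$-part (routing the twist within $G_j$ away from the new pebble) which extends by the identity on the join pairs and outside; counting quantifiers are handled verbatim as in the ordinary bijective pebble game.

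When Spoiler eventually performs the (unique) $P$-move, the twist-part $G_j$ is $\tup{\vertA}$-unpebbled by the invariant, so the individualization $\indRel_P$ touches only join vertices and vertices of pebbled parts, and individualizes nothing in $G_j$. Once all join vertices are individualized the twist can never leave $G_j$ (any path-isomorphism moving it out would act nontrivially on a now-rigid join gadget), so the rest of the game is the bijective $k$-pebble game on the $G_j$-part with its join boundary individualized; this is handled by the transplanted $\mathcal{S}_j$, whose moves flip only edge-vertex-pairs of $G_j$-edges and hence are compatible with the boundary being fixed. Duplicator therefore survives every play, which is the claimed strategy.

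The step I expect to be the crux is the middle one: keeping the twist inside a part that is guaranteed to be unpebbled at the moment Spoiler decides to play the $P$-move, i.e.\ showing that Duplicator can always evacuate the twist from a part that Spoiler has just pebbled without disturbing the pebbled substructure. This is exactly where having strictly more parts than pebbles ($k+1>k$) and strictly more join vertices than pebbles ($c>k$) is essential — they provide, respectively, a spare unpebbled part to receive the twist and a pebble-free join vertex through which to route it — and turning ``there is room to route'' into an actual path avoiding the at most $k$ pebbled vertices of $H$ is the point that needs the detailed CFI bookkeeping (and, implicitly, that each $G_i$ is sufficiently connected, which again follows from $\CFI{G_i,0}\kequiv{k}\CFI{G_i,1}$).
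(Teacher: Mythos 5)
Your overall architecture (localize the twist in an unpebbled part, use the hypothesis $\CFI{G_j,0}\kequiv{k}\CFI{G_j,1}$ inside that part, use an honest isomorphism everywhere else, exploit $k+1>k$ parts and $c>k$ join vertices) matches the paper's. But the timing of the twist evacuation is wrong, and this is a genuine gap, not just missing bookkeeping. You answer a pebble that lands \emph{inside} the current twist part $G_j$ with the $\Ck{k}$-strategy $\mathcal{S}_j$, and only evacuate the twist at the start of the \emph{next} regular move. At that point the pebble positions $\tup{u}$ in $\StructA$'s copy and $\tup{v}$ in $\StructB$'s copy of the $G_j$-part are only guaranteed to satisfy $(\CFI{G_j,0},\tup{u})\kequiv{k}(\CFI{G_j,1},\tup{v})$; they need not lie in the same orbit. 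After any path-isomorphism carries the twist out of $G_j$, your invariant (a) would require a pebble-respecting \emph{isomorphism} between the two (now untwisted) copies of the $G_j$-part sending $\tup{u}$ to the image of $\tup{v}$, and $\Ck{k}$-equivalence does not supply one — that non-implication is the entire point of the CFI construction. A second manifestation of the same problem: Spoiler can pebble $G_j$ and immediately play the $P$-move; your strategy gets no regular move in which to evacuate, the pebbled twist part is individualized wholesale, and the twist is exposed.

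The paper's proof evacuates \emph{pre-emptively}: when Duplicator builds the bijection for a regular move (before the $P$-move), the twist part $G_j$ is still unpebbled, so Duplicator composes in a path-isomorphism moving the twist from $G_j$ into another unpebbled part $G_\ell$ through an unpebbled join vertex, and defines the bijection on $V_j$ by that genuine isomorphism rather than by $\mathcal{S}_j$. If Spoiler then pebbles inside $V_j$, the pebble is matched isomorphically and $\ell$ becomes the new twist part; the $\Ck{k}$-strategy is reserved exclusively for the part that is still unpebbled when the $P$-move finally happens. This also disposes of your connectivity worry: since both $G_j$ and $G_\ell$ are unpebbled at evacuation time, plain connectedness of the parts suffices for the routing path (only the join vertex must be unpebbled, whence $c>k$). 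Your fallback claim that each $G_i$ is ``sufficiently connected'' because $\CFI{G_i,0}\kequiv{k}\CFI{G_i,1}$ is false — that equivalence is driven by treewidth, not connectivity, and in the intended application the $G_i$ are CFI graphs of maximum degree $3$, hence at most $3$-connected.
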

\begin{proof}
	Let $\StructA = \CFI{\ccjoin{G_1, \dots, G_{k+1}},0}$ and $\StructB = \CFI{\ccjoin{G_1, \dots, G_{k+1}},1}$.
	In this proof, we call the $G_j$-part in $\StructA$ and $\StructB$ just the $j$-th part.
	Let $V_J\subseteq \StructVA=\StructVB$ be the set of join vertices and $V_j\subseteq \StructVA=\StructVB$ be the set of part vertices of the $j$-th part for every $j \in [k+1]$
	(recall that CFI graphs are defined over the same vertex set).

	We show that
	Duplicator is able to maintain the following invariant.
	At every position ${(\StructA, \indRel_P^\StructA, \tup{\vertA}; \StructB, \indRel_P^\StructB, \tup{\vertB})}$ during the game there is  an isomorphism $\autoB \colon (\StructB, \tup{\vertB}, \indRel_P^\StructB) \to (\StructB', \tup{\vertB}, \indRel_P^\StructB)$ for some $\StructB'$,
	which moves the twisted edge, so $\StructVB' = \StructVB$,
	and there is a $j \in[k+1]$
	satisfying the following:
	\begin{enumerate}
		\item \label{itm:not-p-move} If Spoiler has not performed the $P$-move (i.e.,~the $a_i$ and $b_i$ pebbles are not placed), then the part of every $\vertA_i$ and every $\vertB_i$
		is not the $j$-th one.
		\item \label{itm:p-move} If Spoiler has performed the $P$-move, then
		no vertex of the $j$-th part is individualized by $\indRel_P^\StructA$ or $\indRel_P^\StructB$.
		\item \label{itm:auto-all-apart-j} There is an isomorphism
		$\autoA \colon (\StructA, \indRel_P^\StructA, \tup{\vertA})[\StructVA \setminus V_j] \to 
		 (\autoB(\StructB), \indRel_P^\StructB, \tup{\vertB})[\StructVB \setminus V_j]$ respecting the parts,
		 that is, $\autoA$ maps for all $i\neq j$ the $i$-th part of $\StructA$ 
		 to the $i$-th part of $\autoB(\StructB)$.
		 \item \label{itm:kequiv-part-j}
		 $(\StructA,\indRel_J, \tup{\vertA})[V_J \cup V_j] \kequiv{k}
		 (\autoB(\StructB),\autoB(\indRel_J), \tup{\vertB})[V_J \cup V_j]$ for some individualization~$\indRel_J$ of~$V_J$.
	\end{enumerate}
	Note that Property~\ref{itm:kequiv-part-j} is satisfied either by none or all such individualizations.
	Clearly, the invariant is satisfied initially.
	Assume that it is Spoiler's turn and
	Spoiler performs a regular move.
	Spoiler picks up a pebble pair $(p_i, q_i)$.
	Now Duplicator has to provide a bijection~$\gamebij$.
	The bijection~$\gamebij$ is constructed in the following way.
	We start with the isomorphism $\autoA$ of Property~\ref{itm:auto-all-apart-j} and extend it
	on the $j$-th part as follows:
	
	If Spoiler has performed the $P$-move already,
	Duplicator uses the bijection $\gamebij'$
	of Duplicator's winning strategy on $(\StructA,\indRel_J, \tup{\vertA})[V_J \cup V_j] \kequiv{k}
	(\autoB(\StructB),\autoB(\indRel_J), \tup{\vertB})[V_J \cup V_j]$
	for some and thus every $\indRel_J$ of~$V_J$
	(Property~\ref{itm:kequiv-part-j}) to extend $\autoA$ on $V_j$:
	\[\gamebij(\vertC) := \begin{cases}
		\auto(\vertC) & \text{if } \vertC \notin V_j,\\
		\gamebij'(\vertC) &\text{otherwise.}
	\end{cases}\]
	Because in this game the individualization~$\indRel_J$ is used for~$\StructA$
	and the image~$\autoB(\indRel_J)$ is used for~$\StructB$,
	the bijection~$\gamebij'$ necessarily has to map the $i$-th $\indRel_J$-individualized vertex 
	to the $i$-th ${\autoB(\indRel_J)}$\nobreakdash-individualized vertex.
	That is,~$\gamebij$ and~$\gamebij'$ necessarily agree on the join vertices,
	that is, $\gamebij(\vertC) = \gamebij'(\vertC)$ for all $\vertC \in V_J$.
	Spoiler places $p_i$ on $\vertC$ and $q_i$ on $\gamebij(\vertC)$.
	The pebbles still induce a local isomorphism
	(because~$\autoA$ is an isomorphism and $\gamebij'$ is given by a winning strategy).
	Properties~\ref{itm:not-p-move} and~\ref{itm:p-move}
	are obviously satisfied.
	If $\vertC \notin V_j$, then $\autoA(\vertC) = \gamebij(\vertC)$
	and the isomorphism~$\autoA$ still satisfies Property~\ref{itm:auto-all-apart-j}.
	If in particular $\vertC \notin V_J$,
	then Property~\ref{itm:kequiv-part-j} is satisfied because the new pebble is not placed on ${(\StructA,\indRel_J, \tup{\vertA})[V_J \cup V_j]}$ respectively on ${(\autoB(\StructB),\autoB(\indRel_J), \tup{\vertB})[V_J \cup V_j]}$.
	If $\vertC \in V_j$,
	then the pebbles are placed according to a winning strategy of Duplicator
	and thus Property~\ref{itm:kequiv-part-j} is satisfied, too.
	If otherwise $\vertC \in V_j$, then $\vertC$ is not in the domain of~$\autoA$ and thus~$\autoA$ satisfies Property~\ref{itm:auto-all-apart-j}.
	Property~\ref{itm:kequiv-part-j} is satisfied because $\gamebij(\vertC) = \gamebij'(\vertC)$
	and~$\gamebij'$ was obtained by a winning strategy of Duplicator.

	If Spoiler has not performed the $P$-move,
	then Duplicator extends~$\auto$ as follows.
	There is another $\tup{\vertA}$-unpebbled part different from the $j$-th one
	because there are at most $k-1$ pebbles placed
	(one pebble pair is picked up).
	Let this part be the $\ell$-th part for some $\ell \neq j$
	and let $\set{\bVertA,\bVertB}$
	be the twisted edge between $\StructA$ and $\autoB(\StructB)$, which is contained in the $j$-th part (by Property~\ref{itm:auto-all-apart-j}).
	There is a path from one of $\bVertA$ or $\bVertB$
	into the $\ell$-th part in $\ccjoin{G_1, \dots, G_{k+1}}$
	only using vertices of the $j$-th and $\ell$-th part and one join vertex $\bVertC$
	such that it does not use the origin of the pebbled vertices:
	Both~$G_j$ and~$G_\ell$ are connected, do not contain any pebbles,
	and there are $c > k$ color classes,
	so one join vertex is not pebbled.
	Hence, there is a path-isomorphism $\autoB' \colon (\StructB', \tup{\vertB}) \to (\StructB'', \tup{\vertB})$
	moving the twist from the $j$-th into the $\ell$-th part along that path.
	Thus, 
	we can extend the restriction \[\restrictVect{\autoB'}{\StructV \setminus V_j \setminus V_\ell}\circ\restrictVect{\auto}{\StructV \setminus V_\ell} \colon (\StructA, \tup{\vertA})[\StructVA \setminus V_j\setminus V_\ell] \to 
	(\autoB'(\autoB(\StructB)), \tup{\vertB})[\StructVB \setminus V_j\setminus V_\ell]\]
	to $V_j$ because the twist is now in the $\ell$-th part.
	That is, we obtain an isomorphism
	$\auto' \colon (\StructA, \tup{\vertA})[\StructVA \setminus V_\ell] \to 
	(\autoB'\circ\autoB(\StructB), \tup{\vertB})[\StructVB \setminus V_\ell]$
	which agrees with $\autoA$ on $\StructV\setminus V_j \setminus V_\ell$
	apart from gadget vertices originating from $\bVertC$
	and with the edge vertices originating from the edge incident to $\bVertC$
	into the $j$-th and $\ell$-th part.
	These are the only vertices in $\StructV \setminus V_j \setminus V_\ell$
	for which the isomorphism $\autoB'$ is not the identity.
	Duplicator extends~$\autoA$ on~$V_j$ using~$\autoA'$ to the bijection~$\gamebij$:
	\[\gamebij(\vertC) := \begin{cases}
		\autoA(\vertC) & \text{if } \vertC \notin V_j,\\
		\autoA' (\vertC) &\text{otherwise.}
	\end{cases}\]
	Spoiler places~$p_i$ on~$\vertC$ and~$q_i$ on~$\gamebij(\vertC)$.
	If $\vertC \notin V_j$
	(and so $\gamebij(\vertC)=\autoA(\vertC) \notin V_j$),
	then Properties~\ref{itm:not-p-move} to~\ref{itm:kequiv-part-j} are clearly satisfied (now for $\ell$ instead for $j$)
	because $\autoA$ is an isomorphism.
	For the same reason, the pebbles induce a local isomorphism.

	So assume $\vertC \in V_j$.
	By Property~\ref{itm:not-p-move},
	the first pebble is placed on the $j$-th part
	and the $\ell$-th part does not contain a pebble.
	Then the restriction of~$\gamebij$ to $\StructVA \setminus V_\ell$
	can be turned into 
	is an isomorphism $ (\StructA, \emptyset, \tup{\vertA}\vertC)[\StructVA \setminus V_\ell] \to 
	(\StructB'', \emptyset, \tup{\vertB}\gamebij(\vertC))[\StructVB \setminus V_\ell]$
	by applying a local automorphism of the gadget of $\bVertC$ (which is not pebbled)
	that moves the twist from the $j$-th into the $\ell$-th part according to $\autoB'$.
	In particular, the pebbles induce a local isomorphism
	and Property~\ref{itm:auto-all-apart-j} holds.
	Property~\ref{itm:kequiv-part-j} is satisfied
	because the $\ell$-th part does not contain a pebble:
	If $\CFI{G_\ell,0} \kequiv{k} \CFI{G_\ell,1}$,
	then $(\StructA, \indRel_J)[V_J \cup V_\ell]  \kequiv{k}
	(\StructB'',\autoB(\indRel_J))[V_J \cup V_\ell]$, too,
	because $(\StructA, \indRel_J)[V_J \cup V_\ell]$
	just extends $\CFI{G_\ell,0}$ by gadgets for the join vertices,
	which are all fixed by~$\indRel_J$
	(and likewise for $(\StructB'',\autoB(\indRel_J))[V_J \cup V_\ell]$ and $\CFI{G_\ell,1}$).
	
	Finally, let Spoiler perform the $P$-move.
	Assume by symmetry that Spoiler places the~$a_i$ pebbles on the~$\tup{\vertA}$-pebbled parts.
	Duplicator places the pebble~$b_i$ on~$\autoA(a_i)$ for all $i$.
	Because the pebbles are placed according to the isomorphism~$\autoA$,
	there is a pebble-respecting local isomorphism.
	Properties~\ref{itm:not-p-move} and~\ref{itm:p-move}
	are clearly satisfied.
	Property~\ref{itm:auto-all-apart-j} is satisfied by $\autoA$
	and Property~\ref{itm:kequiv-part-j} is satisfied
	because no pebble is placed in the $j$-th part
	(similar to the $\ell$-th part in the former case).
\end{proof}

\subsection{Nesting Operators to Define the CFI Query is Necessary}
\label{sec:nesting-operators-is-necessary}

We use CFI graphs over color class joins of CFI graphs
to construct a new class of base graphs,
for which $\WSCIkof{2}{\IFPC}$ defines the CFI query
but $\WSCIof{\IFPC}$ does not.
A graph is \defining{asymmetric} if it only has the trivial automorphism.
Fix a class $\GraphClass := \setcond{G_i}{i \in \nat}$ of totally ordered base graphs such that for all $i\in \nat$
\begin{enumerate}
	\item $G_i$ has maximal degree $3$,
	\item $G_i$ has treewidth at least~$i$, and
	\item the coloring of $G_i$ is unnecessary (i.e.,
the uncolored $E$-reduct of $G_i$ is asymmetric).
\end{enumerate} 
Such a class exists because we can obtain from some graph~$G_i'$ of treewidth~$i$
(e.g., a clique of size $i+1$)
a $3$-regular graph of treewidth at least~$i$ as follows:
If~$G_i'$ has a vertex $\vertA$ of degree greater than~$3$,
then we obtain a new graph~$G_i''$ by splitting~$\vertA$ off into two vertices
(onto which we equally distribute the edges incident to~$\vertA$)
and connecting them via an edge.
Contracting this edge yields back~$G_i'$.
Thus,~$G_i'$ is a minor of~$G_i''$ and thus the treewidth of~$G_i''$
is at least the treewidth of~$G_i'$.
We repeat this procedure until every vertex has degree $3$.
To make~$G_i$ asymmetric, we attach paths of distinct lengths to the vertices in~$G_i$.

\begin{lem}
	\label{lem:cfi-cfi-not-in-Ck}
	$\CFI{\CFI{G_k,g},0} \kequiv{k} \CFI{\CFI{G_k,g},1}$ for every $k \in \nat$ and $g \in \FF_2$.
\end{lem}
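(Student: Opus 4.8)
The plan is to apply Lemma~\ref{lem:k-connected-treewidth-CFI-equiv} to the coloured graph $H_{k,g} := \CFI{G_k,g}$, used this time as a base graph. That lemma reduces the desired $\CFI{H_{k,g},0} \kequiv{k} \CFI{H_{k,g},1}$ to two facts about $H_{k,g}$: that it is a base graph of minimum degree at least $2$, and that its treewidth is at least $k$. Note that the second coordinate $g$ plays no role in either property, so both $g = 0$ and $g = 1$ are handled at once.

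First I would record the structural facts about $H_{k,g}$. Since $G_k$ is connected, each CFI gadget is connected (this holds for gadgets of degree $\geq 2$), and gadgets of adjacent base vertices are joined by edges, so $H_{k,g}$ is connected; it is simple and coloured by construction, hence a legitimate base graph. As $G_k$ is $3$-regular, $H_{k,g}$ consists of degree-$3$ gadgets: every gadget vertex has degree $3$, and every edge vertex has two neighbours among the gadget vertices plus one neighbour in the adjacent gadget, so $H_{k,g}$ is in fact $3$-regular; in particular its minimum degree is at least $2$, as required by the lemma (the precise degree count is not needed).

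The core step is the treewidth bound, for which I would exhibit $G_k$ as a minor of $H_{k,g}$. The vertex set of $H_{k,g}$ is partitioned into the gadgets, one per base vertex of $G_k$, and each gadget induces a connected subgraph; contracting every gadget to a single vertex yields $G_k$, since for each base edge $\set{\bVertA,\bVertB}$ of $G_k$ the two corresponding gadgets are joined by at least one edge and there are no edges between gadgets of non-adjacent base vertices. Because the treewidth of a minor is at most that of the host graph (as recalled in Section~\ref{sec:prelimiaries}) and $G_k$ was chosen of treewidth at least $k$, it follows that $H_{k,g}$ has treewidth at least $k$. Now Lemma~\ref{lem:k-connected-treewidth-CFI-equiv} applied to $H_{k,g}$ gives $\CFI{\CFI{G_k,g},0} = \CFI{H_{k,g},0} \kequiv{k} \CFI{H_{k,g},1} = \CFI{\CFI{G_k,g},1}$, which is the statement.

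I do not foresee a genuine obstacle here; the only points that need a little care are that the whole argument takes place in the coloured setting (which is fine, since base graphs are allowed to be coloured throughout the paper, and colourings are irrelevant for both the minor relation and the treewidth) and that contracting a gadget is a legitimate minor operation, i.e.\ each gadget induces a connected subgraph — which is exactly where $G_k$ having degree $\geq 2$ (here $3$-regularity) is used.
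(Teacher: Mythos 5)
Your proposal is correct and follows essentially the same route as the paper: the paper's proof likewise observes that $G_k$ is a minor of $\CFI{G_k,g}$, hence the treewidth is at least $k$, and then invokes Lemma~\ref{lem:k-connected-treewidth-CFI-equiv}. You merely spell out the details (the gadget-contraction minor and the minimum-degree check) that the paper leaves implicit.
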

\begin{proof}
	Let $k \in \nat$ and $g \in \FF_2$.
	The graph $G_k$ has treewidth at least $k$.
	The CFI construction does not decrease the treewidth
	because $G_k$ is a minor of $\CFI{G_k, g}$ (cf.~\cite{DawarRicherby07}).
	Hence, $\CFI{G_k, g}$ has treewidth at least $k$ and 
	$\CFI{\CFI{G_k,g},0} \kequiv{k} \CFI{\CFI{G_k,g},1}$ by Lemma~\ref{lem:k-connected-treewidth-CFI-equiv}.
\end{proof}

\begin{lem}
	\label{lem:cfi-cfi-in-wsci-2}
	$\WSCIkof{2}{\IFPC}$ defines the CFI query for $\CFI{\CFIm{\GraphClass}}$.
\end{lem}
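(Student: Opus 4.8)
The plan is to obtain the CFI query by \emph{canonizing} $\CFI{\CFIm{\GraphClass}}$ in $\WSCIkof{2}{\IFPC}$, applying the CFI‑canonization machinery of Corollary~\ref{cor:canonize-CFI if-base-wsci} and Lemma~\ref{lem:extended-CFI if-cfi} in two successive rounds: one round takes us from the (trivially orbit‑definable) ordered base graphs $\GraphClass$ up to $\CFIm{\GraphClass}$, the second round takes us from $\CFIm{\GraphClass}$ up to $\CFI{\CFIm{\GraphClass}}$, and each round costs exactly one layer of $\mathrm{WSCI}$.

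First I would carry out the initial round. Every $G\in\GraphClass$ is totally ordered, so for every individualization $\indRel^G$ the automorphism group of $(G,\indRel^G)$ is trivial, all $2$‑orbits are singletons, and the lexicographic order on pairs induced by $\spleq$ is an $\FO$‑, hence $\IFPC$‑definable total preorder whose classes are the $2$‑orbits; thus $\IFPC$ distinguishes the $2$‑orbits of $\GraphClass$. Since each $G\in\GraphClass$ is $3$‑regular, Corollary~\ref{cor:canonize-CFI if-base-wsci} gives that $\WSCIof{\IFPC}$ canonizes $\CFI{\GraphClass}$. The graphs of $\CFIm{\GraphClass}$ are the color class joins $\ccjoink{k}{\CFI{G,0},\CFI{G,g},\CFI{G,1}}$, and $\WSCIof{\IFPC}$ is closed under $\IFPC$‑formula‑formation rules, so Lemma~\ref{lem:extended-CFI if-cfi} yields that $\WSCIof{\IFPC}$ canonizes $\CFIm{\GraphClass}$.

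The key step — and the one I expect to be the main obstacle — is upgrading this canonization to $2$‑orbit distinguishability \emph{inside $\WSCIof{\IFPC}$}, i.e.\ without spending an additional interpretation operator. Here I would follow the argument for the implication $(1)\Rightarrow(2)$ of Lemma~\ref{lem:canon-orbit-ready-iff}: given the canonizing $\WSCIof{\IFPC}$‑interpretation $\interpret$ of $\CFIm{\GraphClass}$, which we may assume to be equivalence‑free with its canon living on the numeric sort (as it comes from Gurevich's canonization algorithm), one orders a pair $\tup{\vertC}\in\StructVA^{2}$ by extending $\indRel^\StructA$ $\IFPC$‑definably so that it individualizes $\tup{\vertC}$, substituting this definable relation for $\indRel$ in the formulas of $\interpret$, and comparing the resulting canon with the one for a second pair. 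Because the canon is over the numeric sort, its domain and each of its relations are named by $\WSCIof{\IFPC}$‑formulas in $\tup{\vertC}$ evaluated on numeric arguments, so the lexicographic comparison of two such canons is a Boolean combination of bounded numeric quantifications over these formulas, hence again a $\WSCIof{\IFPC}$‑formula; and two pairs receive equal canons iff the individualized structures are isomorphic iff the pairs lie in the same $2$‑orbit. Trying instead to access the canon via an interpretation operator would add a layer and only give the statement at the level $\WSCIkof{3}{\IFPC}$, so the numeric‑sort observation is exactly what keeps the nesting depth at $2$; the remaining routine point is that substituting a definable relation for the relation symbol $\indRel$ inside WSC‑fixed‑point operators is admissible since the substituted formula is $\IFPC$.

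Finally I would run the second round. The graphs of $\CFIm{\GraphClass}$ are connected colored graphs of minimal degree at least $3$ (a part vertex keeps degree $\geq 3$ from its CFI gadget and additionally meets the join vertex of its color; a join vertex has degree at least $3k$), so Corollary~\ref{cor:canonize-CFI if-base-wsci} applies with base‑graph class $\CFIm{\GraphClass}$ and $L:=\WSCIof{\IFPC}$, and the previous step shows $L$ distinguishes $2$‑orbits of $\CFIm{\GraphClass}$; hence $\WSCIof{\WSCIof{\IFPC}}=\WSCIkof{2}{\IFPC}$ canonizes $\CFI{\CFIm{\GraphClass}}$. From the ordered canon the CFI query is immediate: deciding evenness of a CFI graph is a polynomial‑time property and therefore $\IFP$‑definable on ordered structures by the Immerman–Vardi theorem~\cite{Immerman87}, and since the canon's relations are again named by $\WSCIkof{2}{\IFPC}$‑formulas over the numeric sort, plugging them into this $\IFP$‑formula produces a $\WSCIkof{2}{\IFPC}$‑sentence defining the CFI query for $\CFI{\CFIm{\GraphClass}}$.
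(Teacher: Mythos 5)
Your proposal is correct and follows essentially the same route as the paper: \IFPC{} distinguishes the $2$-orbits of the ordered base graphs, Corollary~\ref{cor:canonize-CFI if-base-wsci} and Lemma~\ref{lem:extended-CFI if-cfi} give a $\WSCIof{\IFPC}$-canonization of $\CFIm{\GraphClass}$, hence $2$-orbit distinguishability at the same nesting level, and a second application of Corollary~\ref{cor:canonize-CFI if-base-wsci} yields a canonization, and thus the CFI query, in $\WSCIkof{2}{\IFPC}$. Your additional care in keeping the canonization-to-orbit step inside $\WSCIof{\IFPC}$ (via the numeric-sort canon) correctly fills in a detail the paper delegates to Lemma~\ref{lem:canon-orbit-ready-iff}.
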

\begin{proof}
	\IFPC{} distinguishes $2$-orbits of $\GraphClass$
	because $\GraphClass$-graphs are totally ordered.
	By Corollary~\ref{cor:canonize-CFI if-base-wsci},
	$\WSCIof{\IFPC{}}$ canonizes $\CFI{\GraphClass}$.
	From Lemma~\ref{lem:extended-CFI if-cfi} it follows that $\WSCIof{\IFPC{}}$ canonizes $\CFIm{\GraphClass}$
	and so also distinguishes $2$-orbits of $\CFIm{\GraphClass}$.
	Again due to Corollary~\ref{cor:canonize-CFI if-base-wsci},
	$\WSCIkof{2}{\IFPC{}}$ canonizes $\CFI{\CFIm{\GraphClass}}$.
	In particular, $\WSCIkof{2}{\IFPC{}}$ defines the CFI query for $\CFI{\CFIm{\GraphClass}}$.
\end{proof}

To show that $\WSCIof{\IFPC}$ does not define the CFI query for 
$\CFI{\CFIm{\GraphClass}}$, we will use the following idea:
Suppose that a $\WSCIof{\IFPC}$-formula $\formA$ defines the CFI query for 
$\CFI{\CFIm{\GraphClass}}$
and we evaluate $\formA$
on $\CFI{\CFImk{\ell}{G}}$ for some $\ell > |\tup{p}|$.
If $\formA$ always defines choice-sets containing only tuples of vertices in parameter-pebbled parts,
then the twist can be moved in the parameter-unpebbled parts.
Because all choices are made in parameter-pebbled parts,
the output formula of $\formA$, which is an \IFPC{}-formula,
essentially has to define the CFI query for $\CFI{\CFI{\GraphClass}}$,
which is not possible (Lemma~\ref{lem:cfi-cfi-not-in-Ck}).
Otherwise, 
$\formA$ makes a choice in parameter-unpebbled parts.
But for that, $\formA$ has to distinguish $\CFI{G,0}$ from $\CFI{G,1}$
to define orbits of $\CFI{\CFImk{\ell}{G}}$.
So the choice \IFPC{}-formula
has to define the CFI query for $\CFI{\GraphClass}$,
which is also not possible.
Making this idea formal requires some effort.

\newcommand{\GraphClassOrb}{\GraphClass_{\text{orb}}}
\newcommand{\GraphClassCFI}{\GraphClass_{\text{cfi}}}

\begin{lem}
	\label{lem:cfi-cfi-not-in-wsci}
	$\WSCIof{\IFPC}$ does not define the CFI query for $\CFI{\CFIm{\GraphClass}}$.
\end{lem}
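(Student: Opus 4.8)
Since $\Iof{\IFPC} = \IFPC$, every $\WSCIof{\IFPC}$-sentence is equivalent to a $\WSCof{\IFPC}$-sentence (the interpretation operator in the constituent formulas of a WSC-fixed-point operator can be absorbed, blowing up the variable number by a constant), so the plan is to show that no $\WSCof{\IFPC}$-sentence defines the CFI query for $\CFI{\CFIm{\GraphClass}}$. Suppose $\formA \in \WSCof{\IFPC}$ does, and let $m$ bound its element variables. Every \IFPC{}-subformula occurring in $\formA$ (its step, choice, witnessing, and output formulas, as well as the surrounding \IFPC{}-formula-formation context) is, on structures of any size, equivalent to a $\Ck{q'}$-formula for some $q' = \mathcal{O}(m)$; fix such a $q' \geq 3$ and put $N := q'$. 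Choose $\ell$ so large that in $\CFI{\CFImk{\ell}{G_N,0},h}$ for $h \in \FF_2$, fixing any at most $m$ part vertices leaves at least two non-isomorphic and at least $q'{+}1$ parts entirely unpebbled. Set $X := \CFImk{\ell}{G_N,0} \in \CFIm{\GraphClass}$ and $\StructA := \CFI{X,0}$, $\StructB := \CFI{X,1}$; these lie in $\CFI{\CFIm{\GraphClass}}$ with $\StructA$ even and $\StructB$ odd, so it suffices to derive $\formA^\StructA = \formA^\StructB$.

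First I would show that no $\Pk{q'}$-sentence distinguishes $\StructA$ from $\StructB$. Both are CFI graphs over the color class join $X = \ccjoin{G_1,\dots,G_{3\ell}}$ whose parts $G_i$ are copies of $\CFI{G_N,0}$ and $\CFI{G_N,1}$; each $G_i$ is connected, has more than $q'$ colors (at least $|V(G_N)| > N = q'$), and satisfies $\CFI{G_i,0} \kequiv{q'} \CFI{G_i,1}$ by Lemma~\ref{lem:cfi-cfi-not-in-Ck} (using $q' = N$). By the evident generalization of Lemma~\ref{lem:pk-CFI undistinguishable} to any number $\geq q'{+}1$ of parts, Duplicator wins the $\Pk{q'}$-game on $\StructA$ and $\StructB$, hence by Lemma~\ref{lem:pk-iff-pk-game} no $\Pk{q'}$-sentence distinguishes them.

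Next I would analyze the WSC-fixed-point operators of $\formA$. When evaluating $\formA$ on $\StructA$ or $\StructB$, each WSC-fixed-point operator is evaluated at element parameter tuples $\tup{\vertA}$ with $|\tup{\vertA}| \leq m$ (numeric parameters do not affect orbits). Call such an evaluation \emph{bad} if, along some choice sequence witnessed up to some stage, the choice-set $T_{i+1}$ produced there by the choice formula contains a tuple with a coordinate in a $\tup{\vertA}$-unpebbled part; take the first such stage $i_0$. Then all chosen singletons $S_1,\dots,S_{i_0}$ lie in the at most $m$ $\tup{\vertA}$-pebbled parts, so storing them in one auxiliary relation supported on those parts and replaying the \IFPC{}-definable step formula with a numeric stage counter, the stage $R_{i_0}$, and hence the set $T_{i_0+1}$, is $\Ck{q'}$-definable from $\StructA$ together with $\tup{\vertA}$ and this auxiliary relation. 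As $\formA$ is satisfied only if all its choices are witnessed, $T_{i_0+1}$ is a single orbit of $(\StructA,\tup{\vertA},R_1,\dots,R_{i_0})$, so by the argument of Lemma~\ref{lem:orbit-unpebble-part-distinguishing} it is $\tup{\vertA}$-unpebbled-part-distinguishing; its defining $\Ck{q'}$-formula thus separates, in some coordinate, a vertex of an unpebbled part isomorphic to $\CFI{G_N,0}$ from every vertex of an unpebbled part isomorphic to $\CFI{G_N,1}$ (the two parts must be non-isomorphic, since an orbit is closed under the automorphism swapping any two isomorphic unpebbled parts). Since $\tup{\vertA}$ and the auxiliary relation touch neither of these two parts, lifting a Duplicator winning strategy for $\CFI{G_N,0} \kequiv{q'} \CFI{G_N,1}$ (Lemma~\ref{lem:k-connected-treewidth-CFI-equiv}; $G_N$ has treewidth $\geq N = q'$) by the identity on the remainder of $\StructA$ contradicts the existence of such a separating formula. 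Hence no evaluation is bad, and every WSC-fixed-point operator only ever chooses from sets of $\tup{\vertA}$-pebbled-part vertices.

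Finally, for such operators the choice can be made deterministically: fixing any $\indRel_P \in \partind{\StructA}{\tup{\vertA}}$ individualizing the pebbled parts, taking the $\indRel_P$-least element of each choice-set is a legitimate choice (all witnessed choices yield isomorphic fixed-points), so the operator becomes an ordinary \IFPC{}-fixed-point over $(\StructA,\tup{\vertA},\indRel_P)$. Running this deterministic simulation synchronously with Duplicator's strategy from Lemma~\ref{lem:pk-CFI undistinguishable} --- whose invariant provides an honest partial isomorphism defined on all pebbled parts, matching every (pebbled-part) choice on the $\StructA$-side with its image on the $\StructB$-side, the only uncontrolled region being the single unpebbled part that carries the twist, from which nothing is ever chosen and whose contribution to the stages is therefore \IFPC{}-controlled --- keeps $\StructA$ and $\StructB$ $\Ck{q'}$-equivalent after adjoining the computed fixed-point relations, so the \IFPC{}-output formulas, and then the surrounding \IFPC{}-formula-formation context of $\formA$, evaluate equally. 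This gives $\formA^\StructA = \formA^\StructB$, contradicting that $\formA$ defines the CFI query. I expect the main obstacle to be precisely this last step: propagating the synchronization between the choice-dependent stage relations of the WSC-fixed-point operators and the $\Pk{q'}$-game through the \IFPC{}-formula-formation rules around them, which is where the formalization becomes genuinely tedious.
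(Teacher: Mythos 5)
Your proof follows the paper's route for the "all choices stay in pebbled parts" case (the translation into $\Pk{q'}$ via $\existsP{\indRel_P}$, Lemmas~\ref{lem:pk-iff-pk-game} and~\ref{lem:pk-CFI undistinguishable}), but it diverges at the decisive point, and that is where the gap is. The paper splits the base graphs into two classes and handles the graphs admitting a witnessed, unpebbled-part-distinguishing choice-set by a \emph{global} argument (Claim~\ref{clm:orb-infinite}): an \IFPC{}-interpretation builds $\CFI{\CFImk{\ell+1}{G,g},h}$ from $\CFI{G,g}$ while adjoining \emph{individualized reference copies} of $\CFI{G,0}$ and $\CFI{G,1}$ of known parity, simulates the operator deterministically, and reads the parity of $\CFI{G,g}$ off the distinguishing choice-set; this contradicts \IFPC{}-undefinability of the CFI query on $\CFI{\GraphClass}$ for infinitely many $G$. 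You instead try to refute the existence of a $\Ck{q'}$-definable unpebbled-part-distinguishing orbit \emph{locally}, on one fixed structure, by ``lifting a Duplicator winning strategy for $\CFI{G_N,0}\kequiv{q'}\CFI{G_N,1}$ by the identity on the remainder.'' This is the missing idea: the unpebbled parts of $\CFI{\CFImk{\ell}{G_N,0},h}$ are not copies of $\CFI{G_N,\epsilon}$ but their images under a \emph{second} CFI construction, attached to the join-vertex gadgets. To conclude that a vertex of a $\CFI{G_N,0}$-part has the same $\Ck{q'}$-type as some vertex of a $\CFI{G_N,1}$-part of the \emph{same} ambient structure (with parameters and your auxiliary choice-history relation present), you need a transfer lemma of the form ``the CFI construction over $\kequiv{k}$-equivalent but non-isomorphic base graphs yields $\kequiv{k'}$-equivalent results, uniformly inside the color class join.'' No such lemma exists in the paper — Lemma~\ref{lem:pk-CFI undistinguishable} only ever relates \emph{corresponding, isomorphic} parts of the even and odd outer CFI graphs by honest isomorphisms — it would cost at least a constant factor in the number of variables, and your choice $N=q'$ leaves no slack for that loss. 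The paper's interpretation argument exists precisely to avoid having to prove this local inexpressibility statement.

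A second, smaller gap: ``hence no evaluation is bad'' does not follow even granting the above. A choice formula may produce a set meeting an unpebbled part that is simply \emph{not} an orbit; then your contradiction does not apply, the operator merely evaluates to false at that tuple, and $\formA$ may still define the CFI query. Your final deterministic simulation assumes choices never leave the pebbled parts and is silent on this case; it must additionally detect escape from the pebbled parts and output false there (justified by the contrapositive of the bad-case refutation, once that refutation is actually established). The paper builds exactly this fallback into the definition of its second class of graphs and into the simulating $\Ck{\ell}$-formula.
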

\begin{proof}	
	For the sake of a contradiction, suppose that $\formA$ is a $\WSCIof{\IFPC}$-formula defining the CFI query for $\CFI{\CFIm{\GraphClass}}$.
	Without loss of generality,~we assume that $\formA$ binds no variable twice.
	Because $\Iof{\IFPC{}}=\IFPC{}$,
	we can assume that $\formA$ is a $\WSCof{\IFPC{}}$-formula.
	
	Let $\formB_1(\tup{x}_1), \dots, \formB_m(\tup{x}_m)$ be all WSC-fixed-point operators that are subformulas of $\formA$.
	For the moment assume that all free variables $\tup{x}_i$ are element variables.
	Let the number of distinct variables of $\formA$ be $k$
	and let $\ell := \ell(k) \geq \max\set{k,3}$ for some function $\ell(k)$ to be defined later.
	We consider the subclass $\CFI{\CFImk{\ell+1}{\GraphClass}} \subseteq \CFI{\CFIm{\GraphClass}}$.
	We partition~$\GraphClass$ as follows:
	First, let $\GraphClassOrb$ be the set of all $G \in \GraphClass$
	such that, for every $g \in \FF_2$, there are $h\in \FF_2$,  $j \in [m]$,
	and a $|\tup{x}_j|$-tuple $\tup{\vertA}$ of vertices of $\CFI{\CFImk{\ell+1}{G,g},h})$
	such that 
	\begin{enumerate}
		\item all choice-sets during the evaluation of $\formB_j(\tup{\vertA})$ on $\CFI{\CFImk{\ell+1}{G,g},h}$
		are indeed orbits and
		\item one of these choice-sets is $\tup{\vertA}$-unpebbled-part-distinguishing.
	\end{enumerate}
	Second, set $\GraphClassCFI := \GraphClass\setminus \GraphClassOrb$.
	Clearly, at least one of $\GraphClassOrb$ and $\GraphClassCFI$ contains infinitely many graphs, which is a contradiction as shown in the two following claims.
	
	\begin{clm}
		\label{clm:orb-infinite}
		The size of $\GraphClassOrb$ is finite.
	\end{clm}
	\begin{claimproof}
		We claim that there is an \IFPC{}-formula defining the CFI query for $\CFI{\GraphClassOrb}$.
		First, we show that there is an \IFPC{}-interpretation that,
		for every $G \in \GraphClassOrb$ (and even in $\GraphClass$),
		maps a CFI graph $\CFI{G,g}$ and an $h \in \FF_2$ to the graph $(\CFI{\CFImk{\ell+1}{G,g},h},\indRel)$ such that~$\indRel$ individualizes the vertices of $\ell+1$ many $\CFI{G,0}$-parts,
		$\ell+1$ many $\CFI{G,1}$-parts, and all join vertices of $\CFI{\CFImk{\ell+1}{G,g},h}$.
		The following mappings are definable by \IFPC{}-interpretations:
		\begin{enumerate}[label=(\alph*)]
			\item Map $\CFI{G,g}$ to the base graph $G$ (which is an ordered graph).
			\item Map $G$ and $g'\in \FF_2$ to  $(\CFI{G,g'}, \leq)$, such that $\leq$ is a total order on $\CFI{G,g'}$.
			This map is \IFPC{}-definable because $G$ is of degree at most $3$.
			\item Map $\CFI{G,g}$, $(\CFI{G,0}, \leq_0)$, and $(\CFI{G,1}, \leq_1)$ to $(\CFImk{\ell+1}{G,g}, \indRel')$,
			where~$\leq_0$ and~$\leq_1$ are total orders and~$\indRel'$ individualizes all vertices of the $\ell+1$ parts of
			$\CFI{G,0}$ and $\CFI{G,1}$ as well as the join vertices.
			\item Finally, map $(\CFImk{\ell+1}{G,g}, \indRel')$ and $h\in\FF_2$ to $(\CFI{\CFImk{\ell+1}{G,g},h},\indRel)$ such that~$\indRel$ individualizes the required vertices.
			This map is \IFPC{}-definable since $\CFImk{\ell+1}{G,g}$ is of bounded degree:
			The graph $G$ has color class size $1$ and has maximal degree $3$,
			so $\CFI{G,h}$ has color class size $4$ and maximal degree $3$.
			That is, $\CFImk{\ell+1}{G,g}$ has color class size $4\ell + 4$
			and degree at most $4\ell+ 4$ (the join vertices),
			which is a constant.
		\end{enumerate}
		By composing these \IFPC{}-interpretations, we obtain the required one.
		We now show that we can simulate each WSC-fixed-point operator $\formB_j$
		on $(\CFI{\CFImk{\ell+1}{G,g},h},\indRel)$ in \IFPC{}
		such that we determine the parity of $\CFI{G,g}$. 
		
		For every WSC-fixed-point operator~$\formB_j(\tup{x}_j)$, we consider every $h \in \FF_2$
		and every possible $|\tup{x}_j|$-tuple~$\tup{\vertA}$ of $\indRel$\nobreakdash-individualized vertices for the parameters~$\tup{x}_j$.
		We simulate the evaluation of the WSC-fixed-point operator~$\formB_j$
		in $\IFPC{}$ as follows:
		Because~$\formA$ is a $\WSCof{\IFPC}$ formula,
		the step, choice, witnessing, and output formula of~$\formB_j$
		are $\IFPC$-formulas.
		We evaluate the choice formula and
		check whether all tuples in the defined relation
		are composed of $\tup{\vertA}$\nobreakdash-pebbled-part vertices.
		If that is the case, we resolve the choice deterministically
		using the lexicographical order of~$\indRel$ on the tuples
		(recall that~$\indRel$ individualizes all $\tup{\vertA}$-pebbled-part vertices by construction).
		Next, we evaluate the step formula.
		The simulation is continued
		until there is a choice-set not solely composed of vertices of  $\tup{\vertA}$-pebbled-part parts.
		Because the choice-set is by definition of~$\GraphClassOrb$ an orbit,
		it is $\tup{\vertA}$-unpebbled distinguishing
		by Lemma~\ref{lem:orbit-unpebble-part-distinguishing}.
		So the choice-set contains (at some index) vertices of the $\CFI{G,g}$-parts and either of the $\CFI{G,0}$-parts or of $\CFI{G,1}$-parts:
		Because isomorphic parts are in the same orbit, 
		either vertices of all isomorphic parts  or none of them occur
		because they cannot be distinguished.
		At least one of the $\CFI{G,0}$-parts and the $\CFI{G,1}$-parts each is not $\tup{\vertA}$-pebbled because $|\tup{\vertA}| \leq k < \ell+1$.
		So one of these is isomorphic to the $\CFI{G,g}$-parts.
		The graphs $\CFI{G,0}$ and $\CFI{G,1}$ were added by the interpretation,
		so we can actually remember their parity and thus defined the parity of $\CFI{G,g}$.
		
		It remains to prove that such a combination of~$j$,~$h$, and~$\tup{\vertA}$ always exists.
		By construction of~$\GraphClassOrb$,
		they exist when testing all possible $|\tup{x}_j|$-tuples for $\tup{\vertA}$
		(and not only those of $\indRel$\nobreakdash-individualized vertices).
		But, because $k < \ell+1 $, there exists always an automorphism~$\auto$ (ignoring the individualization) mapping $\tup{\vertA}$ to the vertices of the $\CFI{G,0}$-parts and the $\CFI{G,1}$-parts (because $\CFI{G,g}$ is isomorphic to one of them). 
		Because $\formB_j$ has no access to the individualization,~$\formB_j$ is satisfied by~$\tup{\vertA}$
		if and only if~$\formB_j$ is satisfied by $\auto(\tup{\vertA})$.
		So indeed \IFPC{} defines the CFI query for $\CFI{\GraphClassOrb}$.
		
		Now, for the sake of contradiction,
		suppose that the size of~$\GraphClassOrb$ is infinite.
		So for every~$k$, there is a $j\geq k$ such that $G_j \in \GraphClassOrb$
		and $\CFI{G_j,0} \kequiv{k} \CFI{G_j,1}$ by Lemma~\ref{lem:k-connected-treewidth-CFI-equiv}.
		This contradicts that \IFPC{} defines the CFI query for $\CFI{\GraphClassOrb}$.
	\end{claimproof}
	
	\begin{clm}
		\label{clm:cfi-infinite}
		The size of $\GraphClassCFI$ is finite.
	\end{clm}
	\begin{claimproof}
		Assume that $\GraphClassCFI$ is infinite.
		So there is an $\ell' > \ell$ such that $G:=G_{\ell'}\in \GraphClassCFI$.
		By definition of~$\GraphClassCFI$,
		there is a $g \in \FF_2$ such that for all $h \in \FF_2$, all $j \in [m]$, and all $|\tup{x}_j|$-tuples~$\tup{\vertA}$ of $\CFI{\CFImk{\ell+1}{G,g},h})$
		\begin{enumerate}
			\item some choice-set during the evaluation of $\formB_j(\tup{\vertA})$ on $\CFI{\CFImk{\ell+1}{G,g},h}$
			is not an orbit or
			\item all choice-sets are not $\tup{\vertA}$-unpebbled-part-distinguishing.
		\end{enumerate} 
		We claim that there is a $\Pk{\ell}$-formula equivalent to the CFI-query-defining formula~$\formA$
		on $\CFI{\CFImk{\ell+1}{G,g},0}$ and $\CFI{\CFImk{\ell+1}{G,g},1}$.
		We first translate every WSC-fixed-point operator $\formB_i(\tup{x}_i)$
		(for $i \in [m]$)
		into an equivalent \IFPC{}-formula which uses a fresh relation symbol~$\indRel_P$.
		The relation~$\indRel_P$ is intended to be interpreted as an $\tup{\vertA}$-pebbled-part individualization when using~$\tup{\vertA}$ for the parameters~$\tup{x}_i$.
		Let $i \in [m]$ be arbitrary.
		Again, the step, choice, witnessing, and output formulas of~$\formB_i$ are \IFPC{}-formulas because~$\formB_i$ is a $\WSCof{\IFPC}$-formula.
		We simulate~$\formB_i$ by an \IFPC{}-formula using the relation~$\indRel_P$.
		If all choice-sets during the evaluation for~$\tup{\vertA}$
		are not $\tup{\vertA}$-unpebbled-part-distinguishing,
		then all choice-sets contain solely tuples
		composed out of the vertices individualized by~$\indRel_P$
		(otherwise a choice-set would be $\tup{\vertA}$-unpebbled-part-distinguishing by Lemma~\ref{lem:orbit-unpebble-part-distinguishing}).
		So if all tuples in a choice-set are composed of the individualized vertices,
		we can resolve all choices deterministically using
		the lexicographical order on tuples given by~$\indRel_P$.
		Otherwise, some choice-set during the evaluation
		will not be an orbit by definition of~$\GraphClassCFI$
		and
		we immediately evaluate to false
		because we make (or will make) a choice out of a non-orbit.
		If this was never the case,
		we check in the end whether all choices were indeed witnessed.
		Let $\tilde{\formB}_{i}(\tup{x}_i)$ be an \IFPC{}-formula,
		which implements exactly this approach to simulate~$\formB_i$.
		The formula  $\tilde{\formB}_{i}(\tup{x}_i)$ can be constructed to use not more than $\ell(k)$ distinct variables
		such that no variable is bound twice.
		
		For every number~$n$,
		every $k$-variable \IFPC{}-formula not binding variables twice can be unwound into
		a $\Ck{k}$-formula that is equivalent on structures of size up to~$n$ (see~\cite{Otto1997}).
		So, for $\ell = \ell(k)$ and $n=|\CFI{\CFImk{\ell+1}{G,g},0}|$,
		we can unwind $\tilde{\formB}_{i}(\tup{x}_i)$ to a  $\Ck{\ell}$-formula $\tilde{\formB}_{i}^n(\tup{x}_i)$.
		Then the $\Pk{\ell}$-formula
		\[\formA_i(\tup{x}_i):= \existsP{\indRel_P}.\qspace\tilde{\formB}_{i}^n(\tup{x}_i)\]
		is equivalent to $\formB_i$ on $\CFI{\CFImk{\ell+1}{G,g},0}$ and $\CFI{\CFImk{\ell+1}{G,g},1}$.
		To see this, note that~$\tilde{\formB}_{i}$ evaluates equally for every pebbled-part individualization $\indRel_P$:
		The individualization~$\indRel_P$ is only used to resolve choices.		
		If all choice-sets were witnessed as orbits (not fixing~$\indRel_P$),
		then indeed~$\tilde{\formB}_{i}$ evaluates equally for all~$\indRel_P$ (neither the step, the choice, the witnessing, nor the output formula  use~$\indRel_P$).
		If a choice-set is not witnessed as orbit, this is surely also true for all~$\indRel_P$.
	
		We replace each WSC-fixed-point operator~$\formB_i$ by~$\formA_i$ in~$\formA$
		and continue to unwind the remaining \IFPC{}-part of~$\formA$
		yielding a $\Pk{\ell}$-formula equivalent to~$\formA$
		on $\CFI{\CFImk{\ell+1}{G,g},0}$ and $\CFI{\CFImk{\ell+1}{G,g},1}$,
		which by assumption distinguishes the two graphs.
		This contradicts Lemma~\ref{lem:pk-CFI undistinguishable}:
		The ordered graph~$G$ has more than~$\ell$ vertices (and thus color classes),
		so $\CFI{G,g'}$ has also more than~$\ell$ color classes for every $g' \in \FF_2$.
		Thus, we have
		$\CFI{\CFI{G,g'},0} \kequiv{k} \CFI{\CFI{G,g'},1}$ by Lemma~\ref{lem:cfi-cfi-not-in-Ck}.
		So, by  Lemma~\ref{lem:pk-CFI undistinguishable},
		Duplicator has a winning strategy in the $\Pk{\ell}$-game
		played on $\CFI{\CFImk{k+1}{G,g},0}$ and $\CFI{\CFImk{k+1}{G,g},1}$.
		Hence, $\Pk{\ell}$ does not distinguish the graphs
		by Lemma~\ref{lem:pk-iff-pk-game}, which is a contradiction.
	\end{claimproof}

	Finally, we have to consider the case of free numeric variables.
	Let the numeric variables in~$\formA$ be~$\tup{\numVarA}$.
	For each numeric variable, there is a closed numeric $\WSCof{\IFPC{}}$-term bounding its value. Let these terms be~$\tup{\termA}$.
	Because we cannot evaluate the $\WSCof{\IFPC{}}$-terms,
	we construct upper-bound-defining \IFPC{}-terms $\tup{\termB}$
	only depending on the size of the input structure.
	To obtain these, we construct upper-bound-defining terms recursively: For $0$,~$1$,~$\cdot$, and~$+$ this is obvious.
	For a counting quantifier $\#\tup{\uniVarA}\tup{\numVarA} \leq \tup{\termA}'.\qspace\formA$,
	the upper bound is defined by the \IFPC{}-term
	$(\#\tup{\uniVarA}.\qspace\tup{\uniVarA}=\tup{\uniVarA}) \cdot \termB'_1 \cdot \ldots \cdot \termB'_{|\numVarA|}$,
	where~$\termB'_i$ is the upper-bound-defining \IFPC{}-term recursively obtained for~$\termA'_i$ for every $i \in [|\numVarA|]$.
	Note that we do not recurse on~$\formA$ and in particular 
	not on a WSC-fixed-point operator.
	For $G \in \GraphClass$, set $N(G) := \set{0,...,\termB_1^\StructA} \times \dots \times \set{0,...,\termB_{|\tup{\numVarA}|}^\StructA}$ to be the possible values for
	the numeric variables for $\StructA = \CFI{\CFImk{\ell+1}{G,g},h}$ (which only depends on $|\StructVA|$).
	
	To partition~$\GraphClass$ into~$\GraphClassOrb$ and~$\GraphClassCFI$,
	we not only consider $|\tup{x}_j|$-tuples~$\tup{\vertA}$ of vertices of  $\CFI{\CFImk{\ell+1}{G,g},h}$
	but tuples $\tup{\vertA}\tup{a}$
	for $\tup{a} \in  N(G)$.
	To extend Claim~\ref{clm:orb-infinite} to numeric variables,
	we have to test all possible values for the free numeric variables
	according to the upper-bound-defining term~$\tup{\termB}$
	and find the unpebbled-part-distinguishing choice-set.
	To adapt the proof of Claim~\ref{clm:cfi-infinite},
	we obtain for every $i \in [m]$
	and every tuple of values $\tup{a} \in N(G)$ for the free numeric variables,
	a $\Pk{\ell}$-formula $
	\formA_{i}^{\tup{a}}(\tup{x}_i):= \existsP{\indRel_P}.\qspace\tilde{\formB}_{i}^{n,\tup{a}}(\tup{x}_i)$
	satisfying $\tup{\vertA} \in (\formA_{i}^{\tup{a}})^\Struct$ if and only if $\tup{\vertA}\tup{a} \in \formB_{i}^\Struct$
	for every $\Struct \in \setcond{\CFI{\CFImk{\ell+1}{G,g},h}}{h\in \FF_2}$.
	In the same way,
	free numeric variables of \IFPC{}-formulas are eliminated
	and we can use these formulas to construct the $\Pk{\ell}$-formula
	equivalent to~$\formA$.
\end{proof}

Now we can show that we cannot avoid the additional operators to canonize CFI graphs (which implies defining the CFI query) as shown in Corollary~\ref{cor:canonize-CFI if-base-wsci}.
\begin{proof}[Proof of Theorem~\ref{thm:cfi-wsci-wsci}]
	We consider the class of base graphs $\CFIm{\GraphClass}$.
	$\WSCIkof{2}{\IFPC}$ defines the CFI query for $\CFI{\CFIm{\GraphClass}}$
	by Lemma~\ref{lem:cfi-cfi-in-wsci-2}.
	But $\WSCIof{\IFPC}$ does not define the CFI query for $\CFI{\CFIm{\GraphClass}}$
	by Lemma~\ref{lem:cfi-cfi-not-in-wsci}.
\end{proof}

The reason why $\WSCIof{\IFPC{}}$ does not define this CFI query is not that $\WSCIof{\IFPC{}}$ does not have Ebbinghaus' reduct property.
Also $\WSCIof{\IFPC{}}$ with the global reduct semantics is not able to define it, either.
\begin{lem}
	$\WSCIof{\IFPC{}}$ with global reduct semantics does not define
	the CFI query for $\CFI{\CFIm{\GraphClass}}$.
\end{lem}
\begin{proof}
	To deal with the reduct semantics,
	we show that we can assume that every CFI-query-defining formula~$\formA$
	uses both the edge relation and the color-encoding preorder.
	Then the global reduct semantics does not move to a proper reduct
	and Lemma~\ref{lem:cfi-cfi-not-in-wsci} applies.
	
	If~$\formA$ does not use the edge relation, then~$\formA$ clearly cannot define the CFI query.
	So consider the case that~$\formA$ does not use the color-encoding preorder but uses the edge relation.
	For all $k$ large enough and all $g,h\in\FF_2$, the coloring of $\CFI{\CFImk{k}{G_i,g},h}$ is unnecessary
	by Lemma~\ref{lem:unnecessary-coloring-cfim} because
	the coloring of $G_i$ is unnecessary by the definition of~$\GraphClass$.
	Recall, that this means that the uncolored graph $\reduct{\CFI{\CFImk{k}{G_i,g},h}}{E}$ has the same orbit partition as the colored graph $\CFI{\CFImk{k}{G_i,g},h}$.
	Because the orbit partition does not change when omitting the coloring, defining and witnessing
	orbits only gets more difficult when not using the coloring.
	So we can assume that~$\formA$ uses both the edge relation and the preorder.
\end{proof}
\noindent We emphasize that our proofs only use \IFPC{}-interpretations in interpretation operators.
\begin{cor}
	\label{cor:wsci-le-wsci2}
	$\IFPC{} < \WSCIof{\IFPC{}} < \WSCIkof{2}{\IFPC{}}$.
\end{cor}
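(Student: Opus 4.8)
The plan is to read off both strict inclusions from results already established in the excerpt. The non-strict inclusions $\IFPC{} \subseteq \WSCIof{\IFPC{}} \subseteq \WSCIkof{2}{\IFPC{}}$ hold by definition, since every $L$-formula is a $\WSCof{L}$-formula and an $\Iof{L}$-formula, and $\WSCIkof{2}{\IFPC} = \WSCIof{\WSCIof{\IFPC}}$. So only the two separations need justification.

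For $\IFPC{} < \WSCIof{\IFPC{}}$ I would use the fixed class $\GraphClass = \setcond{G_i}{i \in \nat}$ of totally ordered, $3$-regular base graphs with $G_i$ of treewidth at least $i$, introduced before Lemma~\ref{lem:cfi-cfi-not-in-Ck}. Because these graphs and all their individualizations are totally ordered, their orbits are singletons, so \IFPC{} trivially distinguishes their $2$-orbits. By Corollary~\ref{cor:canonize-CFI if-base-wsci}(3), $\WSCIof{\IFPC{}}$ canonizes $\CFI{\GraphClass}$, and hence $\WSCIof{\IFPC{}}$ defines the CFI query for $\CFI{\GraphClass}$ (on the ordered $\interpret$-canon the parity of the CFI graph is polynomial-time, hence \IFP{}-, computable, and this can be carried out by the canonizing interpretation). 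On the other hand, \IFPC{} does not define the CFI query for $\CFI{\GraphClass}$: by Lemma~\ref{lem:k-connected-treewidth-CFI-equiv}, $\CFI{G_k,0} \kequiv{k} \CFI{G_k,1}$ for every $k$, yet these two graphs both lie in $\CFI{\GraphClass}$ and have opposite parity, so the standard $\Ck{k}$-undefinability argument rules out an \IFPC{}-formula. This witnesses the strict inclusion $\IFPC{} < \WSCIof{\IFPC{}}$.

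For $\WSCIof{\IFPC{}} < \WSCIkof{2}{\IFPC{}}$ I would invoke Theorem~\ref{thm:cfi-wsci-wsci} directly, instantiated with the base graph class $\CFIm{\GraphClass}$: by Lemma~\ref{lem:cfi-cfi-in-wsci-2}, $\WSCIkof{2}{\IFPC{}}$ defines the CFI query for $\CFI{\CFIm{\GraphClass}}$, whereas by Lemma~\ref{lem:cfi-cfi-not-in-wsci} this query is not $\WSCIof{\IFPC{}}$-definable; hence the inclusion is strict. Since all interpretation operators used in Lemmas~\ref{lem:CFI ready-individualization}, \ref{lem:extended-CFI if-cfi}, and \ref{lem:cfi-cfi-in-wsci-2} are \IFPC{}-interpretations, the separations already hold at this restricted level, matching the remark preceding the corollary. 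There is no genuine obstacle here beyond assembling these pieces; the only point that warrants an explicit word is that a $\WSCIof{\IFPC{}}$-canonization of $\CFI{\GraphClass}$ yields $\WSCIof{\IFPC{}}$-definability of the CFI query for $\CFI{\GraphClass}$, which I would spell out as above.
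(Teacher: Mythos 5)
Your proposal is correct and matches the paper's (implicit) argument: the paper states this corollary without a separate proof, reading the second strict inclusion off Lemmas~\ref{lem:cfi-cfi-in-wsci-2} and~\ref{lem:cfi-cfi-not-in-wsci} (i.e., Theorem~\ref{thm:cfi-wsci-wsci}) and the first off Corollary~\ref{cor:canonize-CFI if-base-wsci} applied to the ordered base graphs $\GraphClass$ together with the classical \IFPC{}-lower bound from Lemma~\ref{lem:k-connected-treewidth-CFI-equiv}. Your extra remark that canonization yields definability of the CFI query in $\WSCIof{\IFPC{}}$ is exactly the step the paper also uses (e.g., in the proof of Theorem~\ref{thm:wsc-le-wsci}).
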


It seems natural that $\WSCIkof{n}{\IFPC{}} <  \WSCIkof{n+1}{\IFPC{}}$ for every $n\in \nat$.
Possibly, this hierarchy can be shown by iterating our construction (e.g.~$\CFI{(\CFImsym)^n(\GraphClass)}$,
where $(\CFImsym)^n$ denotes $n$ applications of the $\CFImsym$-operator).

Combining Corollary~\ref{cor:wsci-le-wsci2} with the fact that $\IFPC$ is closed under interpretations and thus $\WSCIof{\IFPC} \leq \IFPCWSC$, separates $\IFPC$ from $\IFPCWSC$.
\begin{cor}
	\label{cor:ifpc-le-wsc}
	$\IFPC{} < \IFPCWSC$.
\end{cor}

We have seen that nesting WSC-fixed-point and interpretation operators increases the expressiveness of \IFPC{}.
However, we have not shown that the interpretation operator is necessary for that
or whether WSC-fixed-point operators suffice.
We will show in the next section that the interpretation operator indeed increases the expressiveness.

\section{Separating \IFPCWSC{} from \IFPCWSCI{}}
\label{sec:separating-wsc-wsci}

This section separates \IFPCWSC{} from \IFPCWSCI{},
that is, the interpretation operator strictly increases expressiveness.
To show this, we will define a class of structures $\GraphClass$ without non-trivial automorphisms.
Thus, there are only singleton orbits
and the WSC-fixed-point operator can be expressed
by a (non-WSC) fixed-point operator:
Either all choice-sets are singletons or the WSC-fixed-point operator evaluates to false.
We will show that isomorphism of $\GraphClass$-structures is
not definable in \IFPC{} and thus not in \IFPCWSC{}.
However, we will show that there is an \IFPC{}-interpretation
reducing $\GraphClass$-isomorphism to
isomorphism of CFI graphs (over ordered base graphs)
and thus $\GraphClass$\nobreakdash-isomorphism is \IFPCWSCI{}-definable.

We will combine two known constructions.
We start with CFI graphs.
In the next step, we will modify the CFI graphs to remove all non-trivial automorphisms.
This will be achieved by gluing a CFI graph to a so-called multipede~\cite{GurevichShelah96}.
The multipedes are structures without non-trivial automorphisms, for which \IFPC{} fails to define the orbit partition.
To prove that isomorphism of this gluing is not \IFPC{}-definable, either,
we will combine winning strategies of Duplicator in the bijective $k$-pebble game
of CFI graphs and multipedes.
In order to successfully combine the strategies,
we will require that the base graphs of the CFI graphs have large connectivity.
For the multipedes, we will show the existence of sets of vertices
with pairwise large distance.
The multipede can be removed by an \FO{}-interpretation
reducing the isomorphism problem of the gluing to the CFI query
and hence the isomorphism problem is \IFPCWSCI{}-definable.

\subsection{Multipedes}

We now review the multipede structures of Gurevich and Shelah~\cite{GurevichShelah96}.
These structures are also based on the CFI gadgets.
Most importantly, these structures are \defining{asymmetric}, i.e.,
their only automorphism is the trivial one.

The base graph of a multipede is a bipartite graph $G=(\gadgetsb,\feetb,E, \leq)$,
where~$\leq$ is some total order on $\gadgetsb \cup \feetb$ and every vertex in~$\gadgetsb$ has degree~$3$.
We obtain the  \defining{multipede structure}~$\multipede{G}$ as follows.
For every base vertex $\bVertA \in \feetb$, there is a fresh vertex pair ${\feetOf{\bVertA} = \set{\vertA_0,\vertA_1}}$
called a \defining{segment}. 
We also call $\bVertA\in \feetb$ itself a segment.
A single vertex~$\vertA_i$ is called a \defining{foot}.
Vertices $\bVertB \in \gadgetsb$ are called \defining{constraint vertices}.

For every constraint vertex $\bVertB \in \gadgetsb$,
a degree-$3$ CFI gadget with three edge-vertex-pairs $\set{a_0^j,a_1^j}$ (for $j \in [3]$) is added.
Let $\neighbors{G}{\bVertB} = \set{\bVertA^1,\bVertA^2,\bVertA^3}$
(the order of the~$\bVertA^i$ is given by~$\leq$).
Then~$a_i^j$ is identified with the foot~$\vertA^j_i$ for every $j \in [3]$ and $i \in \FF_2$.
To construct multipedes, we use the relation-based CFI gadgets,
that is we do not add further vertices to the feet
but a ternary relation $R$ containing all triples $(\vertA_{i_1}^1,\vertA_{i_2}^2,\vertA_{i_3}^3)$ with $i_1+i_2+i_3 = 0$.
Because all constraint vertices in~$\gadgetsb$ have degree~$3$,
the construction yields ternary structures of the fixed signature $\set{R, \spleq}$.
The coloring~$\spleq$ is again obtained from~$\leq$
(the feet of a segment have the color of the segment).
Figure~\ref{fig:multipede} shows an example for a multipede.
We collect properties of multipedes.

\begin{figure}
		\newcommand{\vertexpair}[3]{
		\begin{scope}[#3]
			\node [vertex, fill=#1] (v0) at (-0.3,0) {};
			\node [vertex, fill=#1] (v1) at (0.3,0) {};
		\end{scope}
	}
	\subfloat[A base graph of a multipede.]{
		\begin{tikzpicture}
			
			\path[draw = none, use as bounding box] (-1.5,-0.5) rectangle (5.5, 2.5);

			\node[basevertex, fill=edgeColA] (u1) at (0,0){};
			\node[basevertex, fill = edgeColB] (u2)  at (1,0){};
			\node[basevertex, fill = edgeColC] (u3)  at (2,0){};
			\node[basevertex, fill = edgeColD] (u4)  at (3,0){};
			\node[basevertex, fill = edgeColE] (u5)  at (4,0){};
			
			\node[basevertex, dashed,very thick, fill = white] (v1)  at (1,2){};
			\node[basevertex, dotted,very thick, fill = white] (v2)  at (3,2){};
			\node[basevertex, very thick,fill = white] (v3)  at (2,2){};
			
			\node at (5,0) {$W$};
			\node at (5,2) {$V$};
			
			\draw [-, thick]
			(v1) edge (u1) edge (u2) edge (u3)
			(v2) edge (u3) edge (u4) edge (u5)
			(v3) edge (u3) edge (u5) edge (u1);
		\end{tikzpicture}
		\label{fig:multipede-base}
	}
	
	\subfloat[The corresponding multipede.]{
		\begin{tikzpicture}[rotate=-90, scale = 2, ]
			
			\path [draw=none, use as bounding box] (-0.8,-1) rectangle (0.5,5);
			\node at (-0.3,4.5) {$0$};
			\node at (0.3,4.5) {$1$};
			
			\begin{scope}[every node/.style={transform shape}]
		
		\vertexpair{edgeColA}{edgeColA!10!white}{name prefix = a, shift={(0,0)}}
		\vertexpair{edgeColB}{edgeColB!10!white}{name prefix = b, shift={(0,1)}}
		\vertexpair{edgeColC}{edgeColC!10!white}{name prefix = c, shift={(0,2)}}
		\vertexpair{edgeColD}{edgeColD!10!white}{name prefix = d, shift={(0,3)}}
		\vertexpair{edgeColE}{edgeColE!10!white}{name prefix = e, shift={(0,4)}}

		\draw[black, dashed, rounded corners, draw, line width = 0.4mm, ]
		($(av0)+(-0.04,0.04)$) -- ($(bv0)+(-0.02,0)$) -- ($(cv0)+(-0.05,-0.04)$)
		($(av0)+(0.04,0.04)$) -- ($(bv1)+(-0.01,0)$) -- ($(cv1)+(0.05,-0.04)$)
		($(av1)+(0.04,0.04)$) -- ($(bv1)+(0.09,0)$) -- ($(cv0)+(+0.05,-0.04)$)
		($(av1)+(-0.02,0.04)$) -- ($(bv0)+(0.04,0)$) -- ($(cv1)+(-0.05,-0.04)$);
		
		\draw[black, dotted, rounded corners, draw, line width = 0.4mm, ]
		($(cv0)+(-0.05,0.04)$) -- ($(dv0)+(-0.02,0)$) -- ($(ev0)+(-0.04,-0.04)$)
		($(cv0)+(0.05,0.04)$) -- ($(dv1)+(-0.01,0)$) -- ($(ev1)+(0.04,-0.04)$)
		($(cv1)+(0.05,0.04)$) -- ($(dv1)+(0.09,0)$) -- ($(ev0)+(+0.04,-0.04)$)
		($(cv1)+(-0.05,0.04)$) -- ($(dv0)+(0.04,0)$) -- ($(ev1)+(-0.02,-0.04)$);
		
		\draw[, rounded corners, draw, line width = 0.4mm, line cap=round]
		 ($(cv0)+(-0.06,0)$) edge[bend right=45] ($(av0)+(-0.04,-0.04)$) edge[bend left=45] ($(ev0)+(-0.04,0.04)$)
		($(cv1)+(-0.06,0)$) edge[bend right=50] ($(av0)+(0.04,-0.04)$) edge[bend left=40] ($(ev1)+(0.02,0.04)$)
		($(cv1)+(0.03,0)$) edge[bend right=45] ($(av1)+(0.02,-0.04)$) edge[bend left=55] ($(ev0)+(+0.04,0.04)$)
		($(cv0)+(0.02,0)$) edge[bend right=50] ($(av1)+(-0.04,-0.04)$) edge[bend left=50] ($(ev1)+(-0.04,0.04)$);
			
		\end{scope}
	\end{tikzpicture}
	\label{fig:multipede-multipede}
	}
	\caption{Construction of a multipede from a bipartite base graph: Figure~\protect\subref{fig:multipede-base} shows a base graph of a multipede
		with vertex set $V\cup W$.
	Each vertex in $W$ has a unique color and each vertex in $V$ has a unique line style.
	Figure~\protect\subref{fig:multipede-multipede} shows the resulting multipede:
	The vertices $u_0$ are shown on the top, the vertices $u_1$ at the bottom.
	Feet inherit the color of the base vertices, vertices of the same segment have the same color.
	Each tuple in the ternary relation of the multipede is drawn by a line.
	The line style of the constraint vertices indicate which tuple arises from which constraint vertex. 
	}
	\label{fig:multipede}
\end{figure}
 
\begin{defi}[Odd Graph]
	A bipartite graph $G=(\gadgetsb,\feetb,E, \leq)$ is \defining{odd}
	if, for every $\emptyset\neq \footbset \subseteq \feetb$,
	there exists a $\bVertB \in \gadgetsb$ such that
	$|\footbset \cap \neighbors{G}{\bVertB}|$ is odd.
\end{defi}
\begin{lemC}[\cite{GurevichShelah96}]
	\label{lem:odd-multipede-asymmetric}
	If $G$ is an odd and ordered bipartite graph,
	then $\multipede{G}$ is asymmetric.
\end{lemC}

\begin{defi}[$k$-Meager]
	A bipartite graph $G=(\gadgetsb,\feetb,E,\leq)$ is called \defining{$k$-meager},
	if every set $\footbset \subseteq \feetb$ of size $|\footbset| \leq 2k$
	satisfies $|\setcond{\bVertB \in \gadgetsb}{\neighbors{G}{\bVertB} \subseteq \footbset}| \leq 2|\footbset|$.
\end{defi} 

If $G$ is a $k$-meager bipartite graph,
then~$\Ck{k}$ cannot distinguish between the two feet of a segment in
the structure $\multipede{G}$~\cite{GurevichShelah96}.
We need to investigate the argument for this statement in more details.
For a set $\footbset \subseteq \feetb$, we define the \defining{feet-induced} subgraph
\[\footinduced{G}{\footbset} := G[\footbset \cup \setcond{\bVertB \in \gadgetsb}{\neighbors{G}{\bVertB} \subseteq \footbset}]\]
to be the subgraph induced by the feet in $\footbset$
and all constraint vertices only adjacent to feet in $\footbset$.
We extend the notation to the multipede: $\footinduced{\multipede{G}}{\footbset}$
is the substructure induced by all feet whose segment is contained in $\footbset$.
For a tuple $\tup{\vertA}$ of feet of $\multipede{G}$,
we define
\[\segmentOf{\tup{\vertA}} := \setcond[\big]{\bVertA \in \feetb}{\vertA_i \in \feetOf{\bVertA} \text{ for some } i\leq |\tup{\vertA}|}\]
to be the set of the segments of all feet in $\tup{\vertA}$.

\begin{lemC}[\cite{GurevichShelah96}]
	\label{lem:multipede-winning-strategy}
	Let $G$ be a $k$-meager bipartite graph,
	$\StructA = \multipede{G}$,
	and
	$\tup{\vertA},\tup{\vertB} \in \StructVA^k$.
	If there is a local automorphism $\autoA \in \autGroup{\footinduced{\StructA}{\segmentOf{\tup{\vertA}\tup{\vertB}}}}$
	with $\autoA(\tup{\vertA}) = \tup{\vertB}$,
	then $(\StructA, \tup{\vertA}) \kequiv{k} (\StructA, \tup{\vertB})$.
\end{lemC}

Gurevich and Shelah~\cite{GurevichShelah96} showed
that odd and $k$-meager bipartite graphs
exist.
However, we need a more detailed understanding of these graphs
for our construction.
In particular, we are interested in sets of segments which have pairwise large distance.

\begin{defi}[$k$-Scattered]
	For a bipartite graph $G=(\gadgetsb,\feetb,E)$,
	a set $\footbset \subseteq \feetb$ is called \defining{$k$\nobreakdash-scattered}
	if every distinct $\bVertA, \bVertB \in \footbset$ have distance at least
	$2k$ in $G$.
\end{defi}

We require pairwise distance~$2k$ for a $k$-scattered set $\footbset$
because we are actually only interested in segments
(which alternate with constraint vertices in paths in~$G$).

\begin{lem}
	\label{lem:odd-meager-distance-base-graph}
	For every $k \in \nat$,
	there is an odd and $k$-meager bipartite graph
	$G=(\gadgetsb,\feetb,E)$
	and a $k$-scattered set $\footbset \subseteq \feetb$
	of size $|\footbset| \geq k^2$.
\end{lem}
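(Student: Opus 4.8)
The plan is to obtain the desired graph by a probabilistic construction, taking a random bipartite graph on a suitable number of feet and constraint vertices, and then deleting a small number of vertices to repair the rare violations of oddness, meagerness, and scatteredness. First I would fix $n$ (polynomial in $k$, to be chosen large enough) feet $W = \{\bVertA_1, \dots, \bVertA_n\}$ and $m$ constraint vertices, and let each constraint vertex $\bVertB \in \gadgetsb$ independently choose its $3$-element neighbourhood $\neighbors{G}{\bVertB} \subseteq \feetb$ uniformly at random among all $\binom{n}{3}$ triples (discarding the degenerate possibility of repeated feet). The two standard properties are handled as in Gurevich--Shelah~\cite{GurevichShelah96}: for $k$-meagerness, a fixed set $\footbset$ of size $s \le 2k$ has expected number of fully-contained constraint vertices equal to $m \cdot \binom{s}{3}/\binom{n}{3} = O(m s^3 / n^3)$, so choosing $m = \Theta(n)$ and $n$ large compared to $k$ makes this much smaller than $2s$; a Chernoff bound plus a union bound over the $\binom{n}{\le 2k}$ candidate sets $\footbset$ shows $k$-meagerness holds with high probability. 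For oddness, for each nonempty $\footbset \subseteq \feetb$ the probability that \emph{every} constraint vertex meets $\footbset$ in an even number of feet is exponentially small in $m$ (each constraint vertex independently has a constant probability of odd intersection unless $\footbset$ is all of $\feetb$ or nearly so, and the few exceptional $\footbset$ can be checked directly), so again a union bound gives oddness with high probability.

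The genuinely new ingredient is the $k$-scattered set $\footbset$ of size $\ge k^2$. Here the key observation is that a random $3$-regular-on-one-side bipartite graph of the above type has girth-like expansion: with high probability, for every feet-pair at distance $< 2k$ the number of such ``close pairs'' is small, because the number of paths of length $\le 2k$ in $G$ starting at a fixed foot is bounded by $(\text{constant})^{k}$ while the total number of feet is $n \gg (\text{constant})^k \cdot k^2$ once $n$ is polynomial in $k$ with a large enough exponent (recall the degree on the $\gadgetsb$-side is exactly $3$, so the branching is controlled, though the feet can have large degree; one must bound the feet-side degree too, which again holds with high probability by a Chernoff bound giving maximum feet-degree $O(m/n) = O(1)$). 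So I would argue: after conditioning on bounded feet-degree, the ``conflict graph'' on $\feetb$ -- put an edge between two feet if their distance in $G$ is less than $2k$ -- has maximum degree at most $D := (\text{const})^{k}$, hence by a greedy argument contains an independent set of size $\ge n/(D+1)$, which exceeds $k^2$ for $n$ polynomially large in $k$ with the exponent chosen after $D$. That independent set is the desired $k$-scattered $\footbset$.

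The final step is to combine the three properties. Since meagerness and oddness hold simultaneously with probability $\to 1$, and bounded feet-degree (hence the scattered-set conclusion) also holds with probability $\to 1$, for $n$ large enough there is a single graph $G$ satisfying all of them; then $G$ is odd and $k$-meager, and $\footbset$ is a $k$-scattered set of size at least $k^2$. One small subtlety: the definitions of $k$-meager and odd must survive the conditioning, so I would make sure the bad events are handled by a union bound \emph{jointly}, i.e.\ choose $n$ so that the sum of the failure probabilities of all three events is strictly below $1$. I expect the main obstacle to be the counting estimate in the scattered-set step: one must show carefully that the number of feet within distance $2k$ of a fixed foot is at most exponential in $k$ (not, say, polynomial in $n$), which requires bounding the feet-side degree uniformly and tracking the alternation between $\feetb$ and $\gadgetsb$ along paths -- since each $\bVertB \in \gadgetsb$ has degree exactly $3$ this bound is clean, but the bookkeeping for the feet side is where the argument needs the most care. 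Everything else is a routine first/second-moment computation in the spirit of~\cite{GurevichShelah96}.
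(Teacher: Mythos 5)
Your overall strategy (random sparse $3$-uniform construction, cite Gurevich--Shelah for oddness and meagerness, bound degrees, bound balls of radius $2k$, then greedily extract the scattered set) is the same as the paper's, but your choice of parameters breaks the argument at the point you yourself flag as delicate. With $m=\Theta(n)$ constraint vertices, oddness fails almost surely: for a \emph{small} set $\footbsetA$ (say a singleton $\set{\bVertA}$), the probability that a uniformly random triple meets $\footbsetA$ oddly is about $3|\footbsetA|/n$, not a constant, so the probability that all $m=Cn$ constraint vertices meet it evenly is about $e^{-3C|\footbsetA|}$, a constant per singleton; summed over $n$ singletons this diverges, and indeed a.s.\ some foot has degree $0$, which is an oddness violation. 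Your parenthetical identifies the exceptional sets as those that are ``all of $\feetb$ or nearly so'', but it is exactly the other way around: large sets are hit oddly with constant probability per constraint, small sets are the obstruction. This is why Gurevich--Shelah take superlinearly many constraints (each triple present independently with probability $p=n^{-2+\epsilon}$, hence $\approx n^{1+\epsilon}$ constraints), and the paper does the same.

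Once you move to that regime, the second half of your argument no longer goes through as written: the expected feet-side degree is $\Theta(n^{\epsilon})$, not $O(1)$, so your conflict-graph degree $D$ is not a constant depending only on $k$, and the claim that $n$ polynomial in $k$ suffices fails. The paper resolves exactly this tension: it proves that a.s.\ every foot has degree at most $n^{1/(1.5k)}$ (choosing $\epsilon<1/(2k+3)$), so a ball of radius $2k$ contains at most $(3n^{1/(1.5k)})^{k}=3^{k}n^{2/3}=o(n)$ feet, and the greedy argument then yields a $k$-scattered set of size $\Omega(3^{-k}n^{1/3})$, which exceeds $k^{2}$ only for $n$ exponentially large in $k$ (not polynomially, as you assert). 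So the missing ingredient is the balancing act between the density needed for oddness and the degree bound needed for scatteredness; as proposed, the two halves of your proof require incompatible parameter regimes.
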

\begin{proof}
	The multipedes constructed in~\cite{GurevichShelah96}
	are sparse graphs generated by a random process.
	We show that these graphs contain a $k$-scattered set of size at least $k^2$:
	Fix an arbitrary $k \in \nat$.
	We start with a vertex set~$U$ of size~$n$.
	An event $E(n)$ is called \defining{almost sure}
	if the probability that $E$ occurs tends to~$1$ when~$n$ grows to infinity.
	Pick $\epsilon < \inv{(2k + 3)}$
	and add independently with probability $p=n^{-2+\epsilon}$ for every $3$-element subset $\set{\bVertA_1,\bVertA_2,\bVertA_3} \subseteq U$
	a vertex $\bVertB$ to $\gadgetsb$ adjacent to the $\bVertA_i$.
	Now, it is almost surely possible to remove at most $n/4$ vertices from $U$ forming subgraphs that are exceedingly dense
	(and all constraint vertices in~$\bVertB$ incident to them),
	which results in an odd and $k$-meager bipartite graph~\cite{GurevichShelah96}.
	We show that in this graph a $k$-scattered set $\footbset$ of size $k^2$ exists almost surely.
	
	\begin{clm}
		\label{clm:almost-sure-degree}
		Almost surely, every vertex $\bVertA \in \feetb$ has degree at most $n^{\inv{1.5k}}$.
	\end{clm}
	\begin{claimproof}
		Let $m$ be the least integer such that $m > n^{\inv{(1.5k)}}$.
		The probability that a given vertex has degree larger than $n^{\inv{(1.5k)}}$ is $\binom{n^2}{m} \cdot p ^{m}$.
		So the probability that some vertex has degree larger than $n^{\inv{(1.5k)}}$ is at most the following: 
		\begin{align*}
			n \cdot \binom{n^2}{m} \cdot p ^{m} &= n \cdot \binom{n^2}{m} \cdot (n^{-2+\epsilon})^m \\
			&\leq \binom{n^2}{m} \cdot n^{(-2 + \inv{(2k+3)})m + 1}\\
			&\leq \left(\frac{e n^2}{m}\right)^m \cdot n^{(-2 + \inv{(2k+3)})m + 1}\\
			&\leq \left(e n^{2-\inv{(1.5k)}}\right)^m \cdot n^{(-2 + \inv{(2k+3)})m + 1}\\
			&= e^m \cdot n^{(2-\inv{(1.5k)})m}\cdot n^{(-2 + \inv{(2k+3)})m + 1}\\
			&= e^m \cdot n^{(\inv{(2k+3)}-\inv{(1.5k)})m +1}\\
			&= n^{\inv{(\ln n)}\cdot m} \cdot n^{(\inv{(2k+3)}-\inv{(1.5k)})m +1}\\
			& = n ^ {(\inv{(2k+3)} -\inv{(1.5k)} + \inv{(\ln n)})m +1}\\
			&\leq n ^ {(\inv{(2k+3)} -\inv{(1.5k)} + \inv{(\ln n)})(n^{\inv{(1.5k)}}+1) +1}
			= o(1).
		\end{align*}
		The last step holds because $\inv{(2k+3)} -\inv{(1.5k)} < 0$  and thus for sufficiently large $n$ and
		$\ell := \inv{(2k+3)} - \inv{(1.5k)} + \inv{(\ln n)}<0$
		the term $\ell(n^{\inv{(1.5k)}}+1) +1$ tends to $-\infty$.
	\end{claimproof}
	\begin{clm}
		\label{ref:claim-distance}
		If every vertex $\bVertA \in \feetb$ has  degree at most $n^{\inv{(1.5k)}}$,
		then for every vertex $\bVertA \in \feetb$ at most $3^kn^{\frac{2}{3}}$  vertices $\bVertA'\in\feetb$
		have distance at most $2k$ to $\bVertA$.
	\end{clm}
	\begin{claimproof}
		By construction, every vertex in $\gadgetsb$ has degree $3$.
		So at most $3n^{\inv{(1.5k)}}$ vertices have distance $2$ to $\bVertA$.
		Repeating the argument, 
		at most $(3n^{\inv{(1.5k)}})^k = 3^k n^{\frac{2}{3}}$
		vertices in $\feetb$ have distance $2k$ to $\bVertA$.
	\end{claimproof}
	
	Almost surely, every vertex $\bVertA \in \feetb$ has degree at most $n^{\inv{(1.5k)}}$ (Claim~\ref{clm:almost-sure-degree}).
	We show that in this case a $k$-scattered set $\footbset$ of size at least $k^2$ exists.
	Note that we determined the probability before removing the $\frac{n}{4}$
	``bad'' vertices.
	So we first remove some set of size at most $\frac{n}{4}$ from~$U$
	(which only decreases the degree of the remaining vertices).
	We now repeatedly apply Claim~\ref{ref:claim-distance}.
	If~$\footbset$ is a $k$-scattered set, then at most $|\footbset| \cdot 3^kn^{\frac{2}{3}}$ vertices have distance at most $2k$ to a vertex in $\footbset$.
	Then pick one of the other vertices, add it to~$\footbset$, so~$\footbset$ is still $k$-scattered, and repeat.
	In that way we find a $k$-scattered set~$\footbset$
	of size $|\footbset| \geq \frac{3}{4}n \cdot \inv{(3^kn^{\frac{2}{3}})} = \frac{3}{4}3^{-k}n^{\frac{1}{3}}$.
	Finally, for sufficiently large $n$, we have that $k^2 \leq \frac{3}{4}3^{-k}n^{\frac{1}{3}}$.
\end{proof}

We want to use a $k$-scattered set $\footbset$ to ensure that
in the bijective $k$-pebble game played on multipedes 
placing a pebble on one foot of a segment in $\footbset$ has no effect on the other segments in $\footbset$.
To make this argument formal, we need to consider how information of pebbles is spread through the multipedes.
For now, fix an arbitrary ordered bipartite graph $G=(\gadgetsb,\feetb, E, \leq)$.

\begin{defi}[Closure]
	Let $\footbset \subseteq \feetb$.
	The \defining{attractor} of $\footbset$ is
	\[\attractor{\footbset} := \footbset \cup \bigcup_{\substack{\vertA \in \gadgetsb,\\ |\neighbors{G}{\vertA} \setminus \footbset| \leq 1}} \neighbors{G}{\vertA}. \]
	The set $\footbset$ is \defining{closed}
	if $\footbset = \attractor{\footbset}$.
	The \defining{closure} $\closure{\footbset}$
	of $\footbset$ is the inclusion-wise minimal closed superset of $\footbset$.
\end{defi}

\begin{lemC}[\cite{GurevichShelah96}]
	\label{lem:closure-meager-size}
	If $G$ is $k$-meager and $\footbset \subseteq \feetb$ of size at most $k$,
	then $|\closure{\footbset}| \leq 2|\footbset|$.
\end{lemC}

Let $\footbset$ be closed.
A set $\footbsetB \subseteq \footbset$ is a \defining{component} of $\footbset$
if $\footinduced{G}{\footbsetB}$
is a connected component of $\footinduced{G}{\footbset}$.
That is, every constraint vertex contained in $\footinduced{G}{\footbset}$ is contained in $\footinduced{G}{\footbsetB}$ or in $\footinduced{G}{\footbset \setminus \footbsetB}$.

\begin{lem}
	\label{lem:closure-components}
	Let $\footbsetA \subseteq \feetb$.
	If $\closure{\footbset} = \footbsetB_1 \cup \footbsetB_2$
	is the disjoint union of two components~$\footbsetB_1$ and~$\footbsetB_2$,
	then~$\footbsetA$ can be partitioned into $\footbsetA_1 \cup \footbsetA_2$
	such that $\closure{\footbsetA_i} = \footbsetB_i$ for every $i \in [2]$.
\end{lem}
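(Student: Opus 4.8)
The plan is to decompose $\footbsetA$ according to which component of $\closure{\footbsetA}$ each foot "ends up in", and then to argue that the closure of each piece is exactly the corresponding component. First I would set $\footbsetA_i := \footbsetA \cap \footbsetB_i$ for $i \in [2]$; since $\footbsetA \subseteq \closure{\footbsetA} = \footbsetB_1 \cup \footbsetB_2$ and the union is disjoint, this gives a partition $\footbsetA = \footbsetA_1 \cup \footbsetA_2$. It then remains to show $\closure{\footbsetA_i} = \footbsetB_i$ for each $i$.

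For the inclusion $\closure{\footbsetA_i} \subseteq \footbsetB_i$, the key observation is that $\footbsetB_i$ is itself closed: if $\bVertB \in \gadgetsb$ has $|\neighbors{G}{\bVertB} \setminus \footbsetB_i| \leq 1$, then in particular $\bVertB$ has at least two neighbors in $\footbsetB_i \subseteq \closure{\footbsetA}$, so $\bVertB$ lies in $\footinduced{G}{\closure{\footbsetA}}$; since a component of $\footinduced{G}{\closure{\footbsetA}}$ contains every constraint vertex all of whose relevant feet it already sees, and $\bVertB$ has $\geq 2$ neighbors in $\footbsetB_i$, $\bVertB$ must belong to the component $\footinduced{G}{\footbsetB_i}$ rather than to $\footinduced{G}{\footbsetB_{3-i}}$; hence $\neighbors{G}{\bVertB} \subseteq \footbsetB_i$, and $\attractor{\footbsetB_i} = \footbsetB_i$. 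Since $\footbsetA_i \subseteq \footbsetB_i$ and $\footbsetB_i$ is closed, minimality of the closure gives $\closure{\footbsetA_i} \subseteq \footbsetB_i$.

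For the reverse inclusion $\footbsetB_i \subseteq \closure{\footbsetA_i}$, I would argue that $\closure{\footbsetA_1} \cup \closure{\footbsetA_2}$ is closed and contains $\footbsetA$, hence contains $\closure{\footbsetA} = \footbsetB_1 \cup \footbsetB_2$. Closedness of the union follows because the attractor of a set only adds neighbors of constraint vertices that already have $\geq 2$ neighbors in the set, and such a constraint vertex lies in one fixed component $\footinduced{G}{\footbsetB_j}$; combined with $\closure{\footbsetA_j} \subseteq \footbsetB_j$ (from the previous paragraph) and the fact that $\footbsetB_1, \footbsetB_2$ induce disjoint subgraphs sharing no constraint vertex, the attractor step can only add feet inside the same $\closure{\footbsetA_j}$, so each $\closure{\footbsetA_j}$ is closed and their union is too. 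Then $\footbsetB_1 \cup \footbsetB_2 = \closure{\footbsetA} \subseteq \closure{\footbsetA_1} \cup \closure{\footbsetA_2} \subseteq \footbsetB_1 \cup \footbsetB_2$, and intersecting with $\footbsetB_i$ (using disjointness and $\closure{\footbsetA_i} \subseteq \footbsetB_i$) forces $\closure{\footbsetA_i} = \footbsetB_i$.

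The main obstacle I anticipate is making precise the claim that a constraint vertex with two neighbors in $\footbsetB_i$ cannot "straddle" the two components — i.e., correctly using the definition that $\footbsetB_i$ is a component of the closed set $\closure{\footbsetA}$ to conclude that every relevant constraint vertex is assigned unambiguously to one side. This hinges on the remark just before the lemma that for a closed set, every constraint vertex of $\footinduced{G}{\footbset}$ lies in $\footinduced{G}{\footbsetB}$ or in $\footinduced{G}{\footbset \setminus \footbsetB}$; I would need to invoke this carefully, together with the fact that a constraint vertex in $\footinduced{G}{\closure{\footbsetA}}$ already has all three of its neighbors in $\closure{\footbsetA}$, so "$\geq 2$ neighbors in $\footbsetB_i$" combined with connectivity of $\footinduced{G}{\footbsetB_i}$ pins it down. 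Everything else is bookkeeping with the monotone closure operator.
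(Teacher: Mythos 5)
Your proposal is correct and follows essentially the same route as the paper: you use the identical partition $\footbsetA_i := \footbsetA \cap \footbsetB_i$, and both inclusions rest on the same key observation that a degree-$3$ constraint vertex with two neighbours in $\closure{\footbsetA}$ has all its neighbours in $\closure{\footbsetA}$ and hence lies entirely within one component, so the closure process cannot straddle $\footbsetB_1$ and $\footbsetB_2$. Your write-up merely fills in the details that the paper's terse proof (``it is clear that $\closure{\footbsetA_i}\subseteq \footbsetB_i$'' and the one-line contradiction for the converse) leaves implicit.
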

\begin{proof}
	Define $\footbsetA_i := \footbsetB_i \cap \footbsetA$ for every $i \in [2]$.
	Let $i \in [2]$ be arbitrary. We show that $\closure{\footbsetA_i} = \footbsetB_i$.
	It is clear that $\closure{\footbsetA_i} \subseteq \footbsetB_i$.
	For the other direction, suppose that $\bVertA \in \footbsetB_i \setminus \closure{\footbsetA_i}$.
	Then there must be a constraint vertex in $\footinduced{G}{\footbsetB}$
	which has a neighbor in $\footbsetB_1$ and another one in $\footbsetB_2$.
	This contradicts that the $\footbsetB_i$ are components of $\closure{\footbset}$.
\end{proof}

\begin{lem}
	\label{lem:closure-singleton-component}
	Let $\footbsetB\subseteq \feetb$ be closed and
	$\bVertA \in \feetb$ have distance at least~$4$ to~$\footbsetB$.
	Then $\closure{\footbsetB \cup \set{\bVertA}} = \footbsetB \cup \set{\bVertA}$
	and~$\bVertA$ forms a singleton component of $\footbsetB \cup \set{\bVertA}$.
\end{lem}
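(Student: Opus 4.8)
The plan is to unfold the definition of closure as an inflationary fixed-point built from the attractor operator and to argue that adding a sufficiently distant foot cannot trigger any new attractions. First I would recall the setup: $\footbsetB$ is closed, so $\attractor{\footbsetB} = \footbsetB$, and $\bVertA \in \feetb$ has distance at least $4$ to $\footbsetB$ in $G$. I want to show $\attractor{\footbsetB \cup \set{\bVertA}} = \footbsetB \cup \set{\bVertA}$; once this is established, $\footbsetB \cup \set{\bVertA}$ is closed, hence equals its own closure, and I get the first claim immediately.

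For the attractor computation, take any constraint vertex $\bVertC \in \gadgetsb$ with $|\neighbors{G}{\bVertC} \setminus (\footbsetB \cup \set{\bVertA})| \leq 1$; I must show $\neighbors{G}{\bVertC} \subseteq \footbsetB \cup \set{\bVertA}$. The key observation is that $\bVertC$ cannot be adjacent to $\bVertA$: if it were, then (since $\bVertA$ has distance at least $4$ to $\footbsetB$) $\bVertC$ would have distance at least $3$ to $\footbsetB$, so at most zero of its three neighbors lie in $\footbsetB$; together with $\bVertA$ that accounts for at most one neighbor in $\footbsetB \cup \set{\bVertA}$, leaving at least two neighbors outside, contradicting $|\neighbors{G}{\bVertC} \setminus (\footbsetB \cup \set{\bVertA})| \leq 1$. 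Hence $\bVertA \notin \neighbors{G}{\bVertC}$, so the condition $|\neighbors{G}{\bVertC} \setminus (\footbsetB \cup \set{\bVertA})| \leq 1$ is in fact $|\neighbors{G}{\bVertC} \setminus \footbsetB| \leq 1$, and since $\footbsetB$ is closed we conclude $\neighbors{G}{\bVertC} \subseteq \footbsetB \subseteq \footbsetB \cup \set{\bVertA}$. This shows $\attractor{\footbsetB \cup \set{\bVertA}} = \footbsetB \cup \set{\bVertA}$.

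For the component claim, I need to show $\bVertA$ is isolated in $\footinduced{G}{\footbsetB \cup \set{\bVertA}}$, i.e., no constraint vertex $\bVertC$ with $\neighbors{G}{\bVertC} \subseteq \footbsetB \cup \set{\bVertA}$ is adjacent to $\bVertA$. But we just argued that any such $\bVertC$ adjacent to $\bVertA$ would have all of its remaining neighbors in $\footbsetB$, giving $\bVertC$ distance $2$ to $\footbsetB$ and hence $\bVertA$ distance $3$ to $\footbsetB$, contradicting the distance-$4$ hypothesis (actually distance $\geq 3$ already suffices here, but the distance-$4$ bound makes the attractor argument clean). Therefore $\bVertA$ has no neighbor among the constraint vertices of $\footinduced{G}{\footbsetB \cup \set{\bVertA}}$, so $\set{\bVertA}$ is a connected component of $\footinduced{G}{\footbsetB \cup \set{\bVertA}}$, i.e., a singleton component of $\footbsetB \cup \set{\bVertA}$.

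I do not anticipate a serious obstacle here; the only point requiring slight care is tracking the distance bookkeeping — distances in $G$ between feet and constraint vertices differ by one, and a constraint vertex adjacent to $\bVertA$ with another neighbor at distance $d$ from $\footbsetB$ forces $\bVertA$ to be at distance $d+2$ (going $\bVertA$–$\bVertC$–neighbor–$\cdots$) unless that neighbor lies in $\footbsetB$ itself. The distance-$4$ assumption is exactly what is needed to rule out all the problematic cases simultaneously, both for the attractor step and for the isolation step.
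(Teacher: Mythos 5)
Your proof is correct and rests on the same core observation as the paper's: a constraint vertex adjacent to $\bVertA$ whose remaining neighbors lie in $\footbsetB$ would place $\bVertA$ at distance $2$ from $\footbsetB$, contradicting the distance hypothesis. The paper phrases this as a contradiction on a hypothetical new element of the closure, whereas you verify directly that $\footbsetB\cup\set{\bVertA}$ is already a fixed point of the attractor; these are the same argument in different packaging.
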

\begin{proof}
	For the sake of contradiction,
	assume that there exists a $\bVertB \in \closure{\footbsetB \cup \set{\bVertA}}  \setminus (\footbsetB \cup \set{\bVertA})$.
	Then there is a constraint vertex~$\bVertC$,
	of which one neighbor is~$\bVertB$
	and the two others are contained in $\closure{\footbsetB \cup \set{\bVertA}}$.
	If the two other neighbors are contained in $\closure{\footbsetB}$,
	then~$\bVertB$ is already contained $\closure{\footbsetB}$,
	which is a contradiction.
	So one of the two neighbors is~$\bVertA$.
	But then~$\bVertA$ has distance~$2$ to~$\footbsetB$,
	which contradicts our assumption.
\end{proof}

\begin{lem}
	\label{lem:closure-scattered-singleton-components}
	Let $k \geq 2$,
	$G$ be $2k$-meager,
	$\footbsetB \subseteq \feetb$ be of size at most $k$,
	and $\footbset\subseteq \feetb$ be $6k$-scattered.
	Then at most $|\footbsetB|$ vertices of $\footbset$
	do not form singleton components in $\closure{\footbset \cup \footbsetB}$.
\end{lem}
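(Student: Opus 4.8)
The plan is to argue by induction on $|\footbsetB|$, peeling off the elements of $\footbsetB$ one at a time. Write $Z:=\closure{\footbset\cup\footbsetB}$ and decompose it, by repeated application of Lemma~\ref{lem:closure-components}, into its components $Z=Z_1\disunion\dots\disunion Z_r$ together with the induced partition $\footbset\cup\footbsetB=Y_1\disunion\dots\disunion Y_r$ satisfying $\closure{Y_j}=Z_j$. A foot $\bVertA\in\footbset$ fails to be a singleton component of $Z$ precisely when it lies in some $Y_j$ with $|Z_j|\geq 2$; for such $j$ we have $|Y_j|\geq 2$, since the closure of a single foot is that foot (every constraint vertex has three distinct neighbours). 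Moreover, if $Y_j\subseteq\footbset$ then, as $\footbset$ is $6k$-scattered, no constraint vertex has two neighbours in $Y_j$, so $\closure{Y_j}=Y_j$ and $\footinduced{G}{Y_j}$ has no constraint vertices at all; hence $Z_j=Y_j$ would be a disjoint union of $\geq 2$ singletons, contradicting connectedness of $Z_j$. Thus every $Y_j$ with $|Z_j|\geq 2$ meets $\footbsetB$. This already gives a bound of $2|\footbsetB|$ for the components small enough to apply Lemma~\ref{lem:closure-meager-size}: if $|Y_j|\leq 2k$ then $|Z_j|\leq 2|Y_j|$, whereas a connected $\footinduced{G}{Z_j}$ containing two feet of $\footbset$ (at $G$-distance $\geq 12k$) has at least $6k$ feet, forcing $|Y_j|\geq 3k$, a contradiction; so such a $Y_j$ contains at most one foot of $\footbset$.

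To sharpen this to $|\footbsetB|$ and to cover large components, I would run the induction. Fix $\bVertC\in\footbsetB$, set $\footbsetB':=\footbsetB\setminus\set{\bVertC}$ and $Z':=\closure{\footbset\cup\footbsetB'}$, so that $Z=\closure{Z'\cup\set{\bVertC}}$. By the induction hypothesis at most $|\footbsetB'|$ feet of $\footbset$ are non-singleton components of $Z'$, and it suffices to show that passing from $Z'$ to $Z$ turns at most one further foot of $\footbset$ bad. The key structural point is that all changes from $Z'$ to $Z$ happen inside the single component $Z_{\bVertC}$ of $Z$ containing $\bVertC$: a constraint vertex lying in $\footinduced{G}{Z}$ but not in $\footinduced{G}{Z'}$ is incident to a foot of $Z\setminus Z'$, and—because a constraint vertex with two neighbours in the closed set $Z'$ has its third neighbour there too—every foot of $Z\setminus Z'$ is joined to $\bVertC$ by a path through such new constraint vertices; hence $Z_{\bVertC}$ is exactly $\set{\bVertC}$ together with the newly created feet and the $Z'$-components absorbed into it, and all other components of $Z'$ survive unchanged. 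Consequently a foot $\bVertA\in\footbset$ that was a singleton component of $Z'$ becomes bad only if $\set{\bVertA}$ is one of the $Z'$-components absorbed into $Z_{\bVertC}$, and the whole step reduces to: \emph{at most one absorbed $Z'$-component is a singleton contained in $\footbset$.}

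The main obstacle is precisely this last claim, and it is where careful distance/size bookkeeping is needed. Suppose two singletons $\set{\bVertA_1},\set{\bVertA_2}$ with $\bVertA_1,\bVertA_2\in\footbset$ were both absorbed into $Z_{\bVertC}$. By Lemma~\ref{lem:closure-components} we have $Z_{\bVertC}=\closure{W}$ with $W=\set{\bVertC}\cup(\text{absorbed }Z'\text{-components})$, so $\footinduced{G}{Z_{\bVertC}}$ is connected and contains $\bVertA_1,\bVertA_2$, which are at $G$-distance $\geq 12k$; hence $|Z_{\bVertC}|\geq 6k$. On the other hand, the induction hypothesis keeps the absorbed non-singleton components small (their feet lie in $\footbset\cup\footbsetB'$, with at most $|\footbsetB'|$ of them bad feet of $\footbset$ and at most $|\footbsetB'|$ of them in $\footbsetB'$, so each such component, being a closure of a feet set of size $\leq 2|\footbsetB'|\leq 2k$, has size $\leq 4|\footbsetB'|$ by Lemma~\ref{lem:closure-meager-size}); and by processing the absorbed $\footbset$-singletons \emph{last} and using Lemma~\ref{lem:closure-singleton-component} (a $6k$-scattered foot at distance $\geq 4$ from the current closed set remains a singleton and does not enlarge the closure) one keeps $W$ within the range where Lemma~\ref{lem:closure-meager-size} applies, yielding $|Z_{\bVertC}|=|\closure{W}|\leq 2|W|<6k$ once $k\geq 2$—the desired contradiction. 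The base case $\footbsetB=\emptyset$ is immediate, since $\closure{\footbset}=\footbset$ for a $6k$-scattered set. Making the distance accounting in this final step precise—so that the relevant closures really do stay within the size bound of Lemma~\ref{lem:closure-meager-size}, which is exactly what the hypotheses ``$6k$-scattered'', ``$2k$-meager'' and $k\geq 2$ are calibrated to guarantee—is the part that will require the most care.
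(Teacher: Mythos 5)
Your structural setup is sound: the base case, the decomposition into components via Lemma~\ref{lem:closure-components}, and the observation that all changes when passing from $Z'=\closure{\footbset\cup\footbsetB'}$ to $Z$ are confined to the component $Z_{\bVertC}$ of the newly added vertex are all correct. But the argument collapses at exactly the step you flag as delicate, and the mechanism you propose there does not close the gap. To bound $|Z_{\bVertC}|$ via Lemma~\ref{lem:closure-meager-size} you must bound the size of a \emph{generating} set $W$ with $\closure{W}=Z_{\bVertC}$, and any absorbed $\footbset$-singleton must be put into $W$: Lemma~\ref{lem:closure-singleton-component} cannot be used to "process it last without enlarging the closure," because a foot to which that lemma applies ends up as a \emph{singleton component} of the resulting closure and is therefore, by definition, not absorbed into $Z_{\bVertC}$ at all. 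So under your contradiction hypothesis ($m\geq 2$ absorbed singletons) you are forced to take $W=W_{\text{core}}\cup\set{\bVertA_1,\bVertA_2,\dots}$ with $|W_{\text{core}}|\leq 1+2|\footbsetB'|\leq 2k-1$, giving $|W|\geq 2k+1$, which already exceeds the threshold at which $2k$-meagerness (Lemma~\ref{lem:closure-meager-size}) applies. Worse, nothing in your per-step bookkeeping controls the number of absorbed singletons: a scattered foot that is close to $\footbsetB'$ (not to $\bVertC$) and was a singleton in $Z'$ can in principle be dragged into $Z_{\bVertC}$, so $|W|$ is not a priori bounded and the whole meagerness computation is unavailable. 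The induction on $|\footbsetB|$ thus reduces to a claim ("one added vertex absorbs at most one scattered singleton") that is true but whose proof needs essentially the full strength of the lemma, so the induction buys nothing.

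The paper's proof avoids this by classifying the scattered feet \emph{globally} by their distance to $\footbsetB$: at most $|\footbsetB|$ of them lie within distance $6k$ of $\footbsetB$ (call these $\footbset'$), and these are simply written off. The set $\footbset'\cup\footbsetB$ has size $\leq 2k$, so $\closure{\footbset'\cup\footbsetB}$ has $\leq 4k$ feet and every vertex of it is within distance $4k$ of $\footbset'\cup\footbsetB$; hence each remaining scattered foot is at distance $\geq 2k\geq 4$ from it, and adding the remaining feet one at a time and applying Lemma~\ref{lem:closure-singleton-component} at each step shows they all survive as singleton components without ever enlarging the generating set. If you want to rescue your induction, the step you need is precisely this incremental "far feet stay singletons" argument applied to $\closure{\footbset'\cup\footbsetB}$ — at which point the induction on $|\footbsetB|$ is superfluous.
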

\begin{proof}
	Because $\footbset\subseteq \feetb$ is $6k$-scattered,
	all distinct $\bVertA,\bVertB \in \footbset$
	have distance at least~$12k$ in~$G$.
	Hence, for every $\bVertC \in \footbsetB$,
	there is at most one $\bVertA \in \footbset$
	such that~$\bVertC$ has distance less than~$6k$ to~$\bVertA$.
	Let $\footbset' \subseteq \footbset$ be the set of vertices with distance at most~$6k$ to~$\footbsetB$. 
	Thus, $|\footbset'| \leq |\footbsetB|$.
	So, $|\footbsetA' \cup \footbsetB| \leq 2k$ and thus
	$|\closure{\footbsetA'\cup \footbsetB}| \leq 4k$
	by Lemma~\ref{lem:closure-meager-size} because~$G$ is $2k$-meager.
	It follows that $|\closure{\footbsetA'\cup \footbsetB} \setminus \footbsetA' \setminus\footbsetB | \leq 2k$.
	Because every component of $\closure{\footbsetA'\cup \footbsetB}$ contains a vertex of $\footbsetA' \cup \footbsetB$,
	every vertex in $\closure{\footbsetA' \cup \footbsetB}$
	has distance at most~$4k$ to every vertex in~$\footbsetA'\cup\footbsetB$.
	
	Since $|\footbset'| \leq |\footbsetB|$,
	it suffices to show that all vertices in $\footbset\setminus \footbset'$
	form singleton components of $\closure{\footbsetA \cup \footbsetB}$.
	Let $\footbset \setminus \footbset' = \set{\bVertC_1, \dots ,\bVertC_\ell}$
	and set $\footbset_i := \set {\bVertC_1,\dots,\bVertC_i}$ for every ${0 \leq i \leq \ell}$.
	We show by induction on $i \leq \ell$
	that all vertices in $\footbset_i$ form singleton components in ${\closure{\footbset_i\cup \footbset' \cup \footbsetB}}$.
	For $i = 0$, the claim trivially holds.
	So assume $i > 0$.
	As seen before, every vertex in $\closure{\footbset' \cup \footbsetB}$ has distance at most $4k$ to $\footbset' \cup \footbsetB$.
	By construction of $\footbset'$, the vertex~$\bVertC_i$ has distance at least $6k +1$ to $\footbsetB$.
	Hence, $\bVertC_i$ has distance at least $2k \geq 4$ to  $\closure{\footbset' \cup \footbsetB}$. 
	Because~$\footbset$ is  $6k$-scattered,~$\bVertC_i$ has distance at least $12k$ to all other $\bVertC_j$.
	Because, by the inductive hypothesis,
	all vertices in $\footbset_{i-1}$ form singleton components,
	$\bVertC_i$ has distance at least~$4$ to 
	$\closure{\footbset_i \cup \footbset' \cup \footbsetB}$.
	Hence, by Lemma~\ref{lem:closure-singleton-component},
	all vertices in $\footbset_i \setminus \footbset'$
	form singleton components of $\closure{\footbset \cup \footbsetB}$.
\end{proof}

\begin{lem}
	\label{lem:closure-distance}
	Let $k \geq 2$,
	$G$ be $2k$-meager,
	$\footbsetA \subseteq \feetb$ be $6k$-scattered,
	and let $\footbsetB \subseteq \feetb$ be of size at most $k$.
	Then $\footbsetA \cup \footbsetB$ can be partitioned
	into $\footbsetC_1, \dots, \footbsetC_\ell$
	such that $|\closure{\footbsetC_i} \cap \footbsetA| \leq 1$ for every $i\in [\ell]$
	and	$\closure{\footbsetC_1}, \dots, \closure{\footbsetC_\ell}$
	are the components of $\closure{\footbsetA \cup \footbsetB}$.
\end{lem}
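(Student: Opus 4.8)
The plan is to reduce the statement to the single claim that \emph{every component of $\closure{\footbsetA\cup\footbsetB}$ contains at most one vertex of $\footbsetA$}, and then to read off the partition directly from Lemma~\ref{lem:closure-components}. For the latter step: let $D_1,\dots,D_\ell$ be the components of $\closure{\footbsetA\cup\footbsetB}$. Applying Lemma~\ref{lem:closure-components} repeatedly — each time splitting off one component and recursing on the union of the remaining ones — yields a partition $\footbsetC_1,\dots,\footbsetC_\ell$ of $\footbsetA\cup\footbsetB$ with $\closure{\footbsetC_i}=D_i$, and inspecting the construction in the proof of that lemma shows $\footbsetC_i=D_i\cap(\footbsetA\cup\footbsetB)$. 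The sets $\closure{\footbsetC_i}$ are then by construction the components of $\closure{\footbsetA\cup\footbsetB}$, and $\closure{\footbsetC_i}\cap\footbsetA=D_i\cap\footbsetA$ has at most one element by the claim; so this is precisely the partition demanded.

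To prove the claim I would first apply Lemma~\ref{lem:closure-scattered-singleton-components} — its hypotheses ($k\ge2$, $G$ is $2k$-meager, $|\footbsetB|\le k$, $\footbsetA$ is $6k$-scattered) are exactly those at hand — to conclude that all but at most $|\footbsetB|\le k$ vertices of $\footbsetA$ form singleton components of $\closure{\footbsetA\cup\footbsetB}$; collect the exceptional ones in a set $\footbsetA'$ with $|\footbsetA'|\le k$. Now suppose for contradiction that some component $D$ of $\closure{\footbsetA\cup\footbsetB}$ contains two distinct $\bVertA,\bVertA'\in\footbsetA$. Then $D$ is not a singleton, so $D\cap\footbsetA\subseteq\footbsetA'$ — any vertex of $\footbsetA\setminus\footbsetA'$ forms its own singleton component and hence cannot lie in $D$. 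Splitting $D$ off from $\closure{\footbsetA\cup\footbsetB}$ via Lemma~\ref{lem:closure-components} as before gives $D=\closure{D\cap(\footbsetA\cup\footbsetB)}$ with $D\cap(\footbsetA\cup\footbsetB)=(D\cap\footbsetA)\cup(D\cap\footbsetB)\subseteq\footbsetA'\cup\footbsetB$, hence $|D\cap(\footbsetA\cup\footbsetB)|\le2k$. Lemma~\ref{lem:closure-meager-size} (applied with $2k$ in place of $k$) then bounds $|D|\le4k$, and $2k$-meagerness bounds the number of constraint vertices of $\footinduced{G}{D}$ (those whose three neighbours all lie in $D$) by $2\cdot4k=8k$. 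Thus $\footinduced{G}{D}$ has at most $12k$ vertices; since it is connected, its diameter is at most $12k-1$, so $\bVertA$ and $\bVertA'$ are at distance at most $12k-1$ in $\footinduced{G}{D}$ and hence in $G$. But $\footbsetA$ is $6k$-scattered, so they are at distance at least $12k$ in $G$ — a contradiction, which proves the claim.

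The only genuinely non-routine part is this last estimate: converting the scatteredness of $\footbsetA$ together with the meagerness of $G$ into an honest diameter bound on a component of the closure. This is exactly what forces the preliminary use of Lemma~\ref{lem:closure-scattered-singleton-components}: one must first discard all but $O(k)$ vertices of $\footbsetA$, so that the relevant piece of the closure is small enough (of size $O(k)$) for Lemma~\ref{lem:closure-meager-size} to apply and yield the vertex count. Everything else — the bookkeeping with $\footbsetA'$ and the extraction of the partition from Lemma~\ref{lem:closure-components} — is routine.
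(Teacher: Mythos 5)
Your proof is correct and follows essentially the same route as the paper's: partition via Lemma~\ref{lem:closure-components}, invoke Lemma~\ref{lem:closure-scattered-singleton-components} to reduce to components meeting only the $\leq|\footbsetB|$ exceptional vertices of $\footbsetA$, bound such a component's size by $4k$ via Lemma~\ref{lem:closure-meager-size}, and contradict $6k$-scatteredness with a diameter bound. The only (immaterial) difference is the final estimate: the paper bounds pairwise distances within a component by $8k$ using only the $\leq 4k$ feet, whereas you count constraint vertices as well to get diameter $\leq 12k-1$, which still falls just below the required $12k$.
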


\begin{proof}
	We partition $\footbsetA \cup \footbsetB$ using Lemma~\ref{lem:closure-components}
	into $\footbsetC_1, \dots, \footbsetC_\ell$
	such that $\closure{\footbsetC_1}, \dots, \closure{\footbsetC_\ell}$
	are precisely the components of $\closure{\footbsetA \cup \footbsetB}$.	
	Let $i \in [\ell]$ be arbitrary but fixed. We prove that $|\closure{\footbsetC_i} \cap \footbsetA| \leq 1$.
	By Lemma~\ref{lem:closure-scattered-singleton-components},
	all apart from $|\footbsetB|$ vertices in $\footbsetA$ are contained
	in a singleton component.
	If $\closure{\footbsetC_i}$ is such a singleton component, we are done.
	Otherwise, \[|\footbsetC_i| \leq |\footbsetC_i \cap \footbsetA| + |\footbsetB| \leq 2|\footbsetB| \leq 2k\]
	and $|\closure{\footbsetC_i}| \leq 4k$ by Lemma~\ref{lem:closure-meager-size}
	and because $G$ is $2k$-meager.
	Because $\closure{\footbsetC_i}$ is a component,
	all vertices in $\closure{\footbsetC_i}$ have pairwise distance at most $8k$.
	So $|\closure{\footbsetC_i} \cap \footbsetA| \leq 1$,
	because~$\footbset$ is $6k$-scattered (and thus vertices in $\footbset$ have distance at least $12k$).
\end{proof}

\begin{lemC}[\cite{GurevichShelah96}]
	\label{lem:closure-extension}
	Let $\footbset \subseteq \feetb$, $G$ be $k$-meager,
	$\autoA\in \autGroup{\footinduced{\multipede{G}}{\footbset}}$,
	and $|\footbset| \leq k$.
	Then $\autoA$ extends to an automorphism of $\footinduced{\multipede{G}}{\closure{\footbset}}$.
\end{lemC}
\begin{lem}
	\label{lem:closure-extension-new}
	Let $\footbsetB \subseteq \feetb$,~$G$ be $k$-meager,
	$\autoA\in \autGroup{\footinduced{\multipede{G}}{\closure{\footbsetB}}}$,
	and $|\footbsetB| < k$.
	Then for all $\bVertA \in \feetb \setminus \closure{\footbsetB}$
	and all feet $\vertB, \vertB' \in \feetOf{\bVertA}$ (possibly $\vertB=\vertB'$),
	there is an extension~$\autoB$ of~$\autoA$ to an automorphism of $\footinduced{\multipede{G}}{\closure{\footbsetB \cup \set{\bVertA}}}$
	satisfying $\autoB(\vertB) = \vertB'$.
\end{lem}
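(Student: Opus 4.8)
The plan is to extend $\autoA$ in two stages: first over the single additional segment $\bVertA$ while staying in the ``unclosed'' part of the multipede, then up to the full closure by invoking Lemma~\ref{lem:closure-extension}; afterwards one has to verify that the automorphism so obtained is forced to agree with $\autoA$ on all of $\closure{\footbsetB}$, so that it really is an extension of $\autoA$ and not merely of $\restrictVect{\autoA}{\footbsetB}$.

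\emph{Step 1: extending over $\bVertA$.} Since $\autoA$ respects the colouring and each segment is its own colour class, $\autoA$ fixes every segment of $\closure{\footbsetB}$ setwise, hence restricts to an automorphism $\autoA_0$ of $\footinduced{\multipede{G}}{\footbsetB}$ (the $R$-triples over $\footbsetB$ are among those over $\closure{\footbsetB}$, and $\autoA$ maps segments to themselves). Because $\closure{\footbsetB}$ is closed and $\bVertA\notin\closure{\footbsetB}$, no constraint vertex has its whole neighbourhood inside $\footbsetB\cup\set{\bVertA}$ while containing $\bVertA$: such a vertex would have at most one neighbour outside $\closure{\footbsetB}$, forcing $\bVertA\in\closure{\footbsetB}$. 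Hence $\footinduced{\multipede{G}}{\footbsetB\cup\set{\bVertA}}$ carries exactly the constraints of $\footinduced{\multipede{G}}{\footbsetB}$ plus the constraint-free segment $\bVertA$, so the map $\autoB_0$ that agrees with $\autoA_0$ on the feet of $\footbsetB$ and sends $\vertB\mapsto\vertB'$ (thus fixes or flips the segment $\bVertA$) is an automorphism of $\footinduced{\multipede{G}}{\footbsetB\cup\set{\bVertA}}$.

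\emph{Step 2: closing up.} As $\bVertA\notin\footbsetB$ we have $|\footbsetB\cup\set{\bVertA}|=|\footbsetB|+1\le k$, so Lemma~\ref{lem:closure-extension} yields an automorphism $\autoB$ of $\footinduced{\multipede{G}}{\closure{\footbsetB\cup\set{\bVertA}}}$ extending $\autoB_0$; in particular $\autoB(\vertB)=\vertB'$. Since $\closure{\footbsetB}\subseteq\closure{\footbsetB\cup\set{\bVertA}}$ and $\autoB$ again fixes every segment setwise, $\restrictVect{\autoB}{\closure{\footbsetB}}$ is an automorphism of $\footinduced{\multipede{G}}{\closure{\footbsetB}}$ that coincides with $\autoA$ on the feet of $\footbsetB$. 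The remaining task is to upgrade this to $\restrictVect{\autoB}{\closure{\footbsetB}}=\autoA$, equivalently to show that $\autoA^{-1}\circ\restrictVect{\autoB}{\closure{\footbsetB}}$, which fixes every foot of $\footbsetB$, is the identity.

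\emph{Step 3 and the main obstacle.} The crux is therefore the rigidity claim: the only automorphism of $\footinduced{\multipede{G}}{\closure{\footbsetB}}$ fixing every foot of $\footbsetB$ is the identity. I would write $\closure{\footbsetB}$ as a chain $\footbsetB=\footbsetC_0\subseteq\footbsetC_1\subseteq\dots\subseteq\footbsetC_m=\closure{\footbsetB}$ in which each step adjoins one new segment $\bVertC$ via a constraint vertex whose other two neighbours already lie in $\footbsetC_i$; the associated $R$-triples (whose $\FF_2$-coordinates sum to $0$) force the ``flip'' of any such automorphism on $\bVertC$ to equal the sum of its flips on those two neighbours, which are $0$ by induction, so the automorphism fixes $\bVertC$ too and hence all of $\closure{\footbsetB}$. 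The delicate points here are that the closure can genuinely be built one segment at a time (so the induction is legitimate) and that, even if several constraint vertices share a $3$-element neighbourhood, the $R$-triples among the six feet involved are precisely those summing to $0$, so the flip really is forced. Note that $k$-meagreness is not needed directly in Steps~1 and~3; it enters only through Lemma~\ref{lem:closure-extension} in Step~2.
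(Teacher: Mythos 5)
Your proof is correct and follows the same two-step strategy as the paper's: since $\bVertA\notin\closure{\footbsetB}$ and $\closure{\footbsetB}$ is closed, no constraint vertex ties $\bVertA$ to the already-fixed segments, so the flip on the segment $\bVertA$ can be prescribed freely, and Lemma~\ref{lem:closure-extension} applied to $\footbsetB\cup\set{\bVertA}$ (of size at most $k$) then extends to the closure. The one genuine difference is your Step~3. The paper's proof stops after observing that the map is a local automorphism of $\footinduced{\multipede{G}}{\footbsetB\cup\set{\bVertA}}$ and leaves implicit that the extension produced by Lemma~\ref{lem:closure-extension} actually agrees with the given $\autoA$ on $\closure{\footbsetB}\setminus\footbsetB$ (rather than merely with $\restrictVect{\autoA}{\footbsetB}$). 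Your rigidity argument — every segment of $\closure{\footbsetB}\setminus\footbsetB$ enters the closure as the unique missing neighbour of some constraint vertex whose other two segments are already present, and the $\FF_2$-parity of that gadget's $R$-triples then forces the flip on the new segment — is the correct and standard way to discharge this (it is in effect the uniqueness half of the closure-extension mechanism from \cite{GurevichShelah96}, and the refinement to a one-segment-at-a-time chain is legitimate because each attractor step can be serialized, the witnessing gadget's other two neighbours lying in an earlier set of the refined chain). So your write-up is, if anything, more complete than the proof in the paper; there is no gap.
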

\begin{proof}
	The condition $\autoB(\vertB) = \vertB'$ defines $\autoB$ uniquely
	on the feet of~$\bVertA$,
	that is, given~$\autoA$,~$\vertB$, and~$\vertB'$ the map~$\autoB$ is determined.
	If $\vertB = \vertB'$, then the~$\autoB$ maps the feet of~$\bVertA$
	to itself and otherwise it exchanges them.
	Using Lemma~\ref{lem:closure-extension} 
	and noting that $|\footbsetB \cup \set{\bVertA}| \leq k$,
	it suffices to show that~$\autoB$ is an automorphism of 
	$\footinduced{\multipede{G}}{\footbsetB \cup \set{\bVertA}}$.
	Because $\bVertA \notin \closure{\footbsetB}$,
	there is no CFI gadget of which one edge-vertex-pair form the feet of~$\bVertA$
	and the other  edge-vertex-pairs are contained in $\closure{\footbsetB}$
	(otherwise~$\bVertA$ would be in the closure).
	So indeed,~$\autoB$ is a local automorphism.
\end{proof}

\begin{lem}
	\label{lem:distance-set-local-aut}
	Let $k \geq 2$,
	$G$ be $2k$-meager,
	$\footbset = \set{\bVertA_1,\dots, \bVertA_\ell} \subseteq \feetb$ be $6k$-scattered,
	${\footbsetB \subseteq \feetb}$ be
	such that $\footbset \cap \closure{\footbsetB} = \emptyset$
	and $|\footbsetB| < k$,
	and $\autoA\in \autGroup{\footinduced{\multipede{G}}{\footbsetB}}$.
	Then for all tuples ${\tup{\vertA},\tup{\vertA}' \in \feetOf{\bVertA_1} \times \cdots \times \feetOf{\bVertA_\ell}}$,
	there exists an extension $\autoB$ of $\autoA$ to an automorphism of
	$\footinduced{\multipede{G}}{\footbset \cup \footbsetB}$
	satisfying $\autoB(\tup{\vertA}) = \tup{\vertA}'$.
\end{lem}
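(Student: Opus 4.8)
The plan is to split $\footinduced{\multipede{G}}{\footbset \cup \footbsetB}$ into pieces along the components of $\closure{\footbset \cup \footbsetB}$, each piece containing at most one segment of~$\footbset$, and to assemble~$\autoB$ from automorphisms of the individual pieces. First I would record the standard fact about multipedes: since the two feet of a segment share that segment's color and distinct segments receive distinct colors, every automorphism of a feet-induced substructure $\footinduced{\multipede{G}}{\footbsetC}$ fixes each segment of~$\footbsetC$ setwise. Hence such an automorphism is determined by the set of segments whose two feet it exchanges, subject only to the condition that each constraint vertex~$\bVertB$ with $\neighbors{G}{\bVertB} \subseteq \footbsetC$ has an even number of exchanged neighbors; in particular, restricting an automorphism of $\footinduced{\multipede{G}}{\footbsetC}$ to the feet of a union of components of~$\footbsetC$ again yields an automorphism of the corresponding feet-induced substructure.

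Next I would invoke Lemma~\ref{lem:closure-distance} with the $6k$-scattered set~$\footbset$ and the set~$\footbsetB$ (using $k \geq 2$, that $G$ is $2k$-meager, and $|\footbsetB| < k$) to partition $\footbset \cup \footbsetB = \footbsetC_1 \disunion \dots \disunion \footbsetC_m$ so that $|\closure{\footbsetC_i} \cap \footbset| \leq 1$ for every~$i$ and $\closure{\footbsetC_1}, \dots, \closure{\footbsetC_m}$ are exactly the components of $\closure{\footbset \cup \footbsetB}$. Since these closures are pairwise disjoint and, by definition of a component, no constraint vertex of $\footinduced{G}{\closure{\footbset \cup \footbsetB}}$ joins two of them, every constraint vertex of $\footinduced{\multipede{G}}{\footbset \cup \footbsetB}$ has all three of its segment-neighbors inside a single~$\footbsetC_i$. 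Consequently $\footinduced{\multipede{G}}{\footbset \cup \footbsetB}$ is the disjoint union of the $\footinduced{\multipede{G}}{\footbsetC_i}$, and likewise $\footinduced{\multipede{G}}{\footbsetB}$ is the disjoint union of the $\footinduced{\multipede{G}}{\footbsetB \cap \footbsetC_i}$ along the induced partition $\footbsetB = \biguplus_i (\footbsetB \cap \footbsetC_i)$. It therefore suffices to construct, for each~$i$, an automorphism $\autoB_i$ of $\footinduced{\multipede{G}}{\footbsetC_i}$ that agrees with~$\autoA$ on the feet of $\footbsetB \cap \footbsetC_i$ and, whenever $\footbsetC_i \cap \footbset = \set{\bVertA_j}$, satisfies $\autoB_i(\vertA_j) = \vertA'_j$; then $\autoB := \biguplus_i \autoB_i$ is the required extension.

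For the piecewise construction I would argue as follows. If $\footbsetC_i \cap \footbset = \emptyset$ then $\footbsetC_i \subseteq \footbsetB$, and the restriction of~$\autoA$ to $\footinduced{\multipede{G}}{\footbsetC_i}$ already works. If $\footbsetC_i \cap \footbset = \set{\bVertA_j}$ then $\footbsetC_i = (\footbsetB \cap \footbsetC_i) \cup \set{\bVertA_j}$ with $|\footbsetB \cap \footbsetC_i| \leq |\footbsetB| < k$; I would first extend the restriction of~$\autoA$ to the feet of $\footbsetB \cap \footbsetC_i$ to an automorphism of $\footinduced{\multipede{G}}{\closure{\footbsetB \cap \footbsetC_i}}$ via Lemma~\ref{lem:closure-extension}, and then, since $\bVertA_j \notin \closure{\footbsetB \cap \footbsetC_i}$ (because $\closure{\footbsetB \cap \footbsetC_i} \subseteq \closure{\footbsetB}$ is disjoint from~$\footbset$), extend further by Lemma~\ref{lem:closure-extension-new} to an automorphism of $\footinduced{\multipede{G}}{\closure{\footbsetC_i}}$ that maps the feet of~$\bVertA_j$ so that $\vertA_j \mapsto \vertA'_j$; restricting this automorphism to $\footinduced{\multipede{G}}{\footbsetC_i}$ gives~$\autoB_i$. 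Finally I would check that $\autoB = \biguplus_i \autoB_i$ is an automorphism of $\footinduced{\multipede{G}}{\footbset \cup \footbsetB}$ extending~$\autoA$, and that $\autoB(\tup{\vertA}) = \tup{\vertA}'$, using that each~$\bVertA_j$ lies in exactly one~$\footbsetC_i$, necessarily with $\footbsetC_i \cap \footbset = \set{\bVertA_j}$ since $|\closure{\footbsetC_i} \cap \footbset| \leq 1$. The main obstacle is the bookkeeping around the decomposition: confirming that the feet-induced substructures genuinely split as disjoint unions along the Lemma~\ref{lem:closure-distance} partition, that restrictions of automorphisms to unions of components remain automorphisms, and that the cardinality and meagerness hypotheses of Lemmas~\ref{lem:closure-extension} and~\ref{lem:closure-extension-new} are met (using that $2k$-meager implies $k$-meager and that $|\footbsetB \cap \footbsetC_i| < k$).
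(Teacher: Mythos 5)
Your proposal is correct and follows essentially the same route as the paper: Lemma~\ref{lem:closure-distance} separates the segments of $\footbset$ into distinct components of $\closure{\footbset \cup \footbsetB}$, Lemmas~\ref{lem:closure-extension} and~\ref{lem:closure-extension-new} supply the per-segment extensions with the prescribed foot images, and the pieces are glued using that no CFI gadget connects two components. Your gluing as a disjoint union of componentwise automorphisms (rather than the paper's composition of globally extended maps) is a slightly cleaner organization of the same argument.
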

\begin{proof}
	
	Let $\tup{\vertA},\tup{\vertA}' \in \feetOf{\bVertA_1} \times \cdots \times \feetOf{\bVertA_\ell}$.
	For every $i \in [\ell]$, there is by Lemma~\ref{lem:closure-extension-new} an extension~$\autoB_i$ of
	$\autoA$ to $\footinduced{\multipede{G}}{\closure{\footbsetB \cup \set{\bVertA_i}}}$
	satisfying $\autoB_i(\vertA_i) = \vertA'_i$.
	
	By Lemma~\ref{lem:closure-distance}, all $\bVertA_i$
	are in different components of $\closure{\footbsetB \cup \footbset}$
	because~$G$ is $2k$-meager and~$\footbset$ is $6k$-scattered.
	So we can extend every~$\autoB_i$ on all components  of $\closure{\footbsetB \cup \footbset}$,
	on which~$\autoB_i$ is not defined (in particular the ones containing all other~$\bVertA_j$ for $j\neq i$),
	by the identity map and obtain an automorphism of $\footinduced{\multipede{G}}{\closure{\footbsetB \cup \footbset}}$
	(two components are never connected by a CFI gadget).
	Hence, composing all the extended~$\autoB_i$ yields the desired automorphism.
\end{proof}

The previous lemma will be extremely useful in the bijective $k$-pebble game: If the pebbles are placed on the feet in $\footbsetB$,
we can simultaneously for all feet in $\footbset\setminus\closure{\footbsetB}$ place arbitrary pebbles 
and still maintain a local automorphism.
Such sets $\footbset$ will allow us to glue another graph to the multipede
at the feet in $\footbset$.
Whatever restrictions on placing pebbles are imposed by the other graph,
we still can maintain local automorphisms in the multipede.

\begin{lem}
	\label{lem:closure-new-vertices}
	Let $k\geq 2$,
	$G$ be $2k$-meager,
	$\footbset \subseteq \feetb$ be $6k$-scattered,
	and $\footbsetB \subseteq \feetb$ be of size at most~$k$.
	Then $|\closure{\footbsetB} \cap (\footbset \setminus\footbsetB) | \leq |\footbsetB\setminus \footbset|$.
\end{lem}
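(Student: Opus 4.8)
The plan is to reduce the statement to a single component of the closure. First I would decompose $\closure{\footbsetB}$ into its components: by iterating Lemma~\ref{lem:closure-components} (exactly as in the proof of Lemma~\ref{lem:closure-distance}), one partitions $\footbsetB$ into nonempty sets $\footbsetC_1,\dots,\footbsetC_\ell$ such that $\closure{\footbsetC_1},\dots,\closure{\footbsetC_\ell}$ are precisely the (pairwise disjoint) components of $\closure{\footbsetB}$, and $\footinduced{G}{\closure{\footbsetC_i}}$ is a connected component of $\footinduced{G}{\closure{\footbsetB}}$ for each $i$.

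The key step is to bound, for each $i$, how many vertices of $\footbset$ are trapped in $\closure{\footbsetC_i}$. Since $|\footbsetC_i|\le|\footbsetB|\le k$ and $G$ is $2k$-meager, Lemma~\ref{lem:closure-meager-size} gives $|\closure{\footbsetC_i}|\le 2|\footbsetC_i|\le 2k$. As $\footinduced{G}{\closure{\footbsetC_i}}$ is connected and $G$ is bipartite between $\gadgetsb$ and $\feetb$, a shortest path between two feet of $\closure{\footbsetC_i}$ inside this subgraph alternates feet and constraint vertices, and its feet are pairwise distinct and all lie in $\closure{\footbsetC_i}$; hence it has length at most $2(|\closure{\footbsetC_i}|-1)\le 4k-2<12k$, so any two feet of $\closure{\footbsetC_i}$ have distance less than $12k$ in $G$. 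Since $\footbset$ is $6k$-scattered, distinct elements of $\footbset$ have distance at least $12k$, whence $|\closure{\footbsetC_i}\cap\footbset|\le 1$. I expect this connectivity/diameter estimate (each closure-component can swallow at most one scattered vertex) to be the only real point of the argument; everything else is bookkeeping.

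It then remains to show $|\closure{\footbsetC_i}\cap(\footbset\setminus\footbsetB)|\le|\footbsetC_i\cap(\footbsetB\setminus\footbset)|$ for each $i$. If the left-hand side is $0$ this is immediate. If it equals $1$, the unique vertex $\bVertA$ of $\closure{\footbsetC_i}\cap\footbset$ lies outside $\footbsetB$, hence outside $\footbsetC_i\subseteq\footbsetB$; since by the previous step $\bVertA$ is the only element of $\closure{\footbsetC_i}\cap\footbset$, we get $\footbsetC_i\cap\footbset=\emptyset$, so $\footbsetC_i\subseteq\footbsetB\setminus\footbset$ and $|\footbsetC_i\cap(\footbsetB\setminus\footbset)|=|\footbsetC_i|\ge 1$ as $\footbsetC_i$ is nonempty. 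Summing over $i$, using that the $\closure{\footbsetC_i}$ are disjoint with union $\closure{\footbsetB}$ and that the $\footbsetC_i$ partition $\footbsetB$,
\[|\closure{\footbsetB}\cap(\footbset\setminus\footbsetB)|=\sum_{i=1}^{\ell}|\closure{\footbsetC_i}\cap(\footbset\setminus\footbsetB)|\le\sum_{i=1}^{\ell}|\footbsetC_i\cap(\footbsetB\setminus\footbset)|=|\footbsetB\setminus\footbset|,\]
which is the claimed inequality.
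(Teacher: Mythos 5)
Your proof is correct. It follows the same basic strategy as the paper's --- decompose a closure into components, use scatteredness to show each component can contain at most one vertex of $\footbset$, then count --- but it executes it on a different object. The paper partitions $\footbset\cup\footbsetB$ via Lemma~\ref{lem:closure-distance}, works with the components of $\closure{\footbset\cup\footbsetB}$, and extracts the bound by a somewhat delicate index bookkeeping (the components meeting $\footbsetB$, among those the ones whose closure meets $\footbset\setminus\footbsetB$, and an injection from the latter into $\footbsetB\setminus\footbset$). You instead decompose $\closure{\footbsetB}$ alone and re-derive the ``at most one scattered vertex per component'' fact directly from Lemma~\ref{lem:closure-meager-size} plus a diameter estimate on a connected bipartite component ($2(|\closure{\footbsetC_i}|-1)\le 4k-2<12k$), which is essentially the argument hidden inside the paper's Lemmas~\ref{lem:closure-scattered-singleton-components} and~\ref{lem:closure-distance}. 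Your per-component accounting --- if $\closure{\footbsetC_i}$ captures a vertex of $\footbset\setminus\footbsetB$, then $\footbsetC_i\cap\footbset=\emptyset$, so $\footbsetC_i\subseteq\footbsetB\setminus\footbset$ contributes at least one to the right-hand side --- is cleaner and more local than the paper's global count of indices $m\le\ell$. What you lose is reuse of the already-established Lemma~\ref{lem:closure-distance}; what you gain is a shorter, more self-contained argument that never needs to close the (potentially large) set $\footbset\cup\footbsetB$. Both the nonemptiness of the $\footbsetC_i$ and the disjoint-union identities you sum over are justified by iterating Lemma~\ref{lem:closure-components} exactly as the paper does, so there is no gap.
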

\begin{proof}
	We partition $\footbset \cup \footbsetB$
	using Lemma~\ref{lem:closure-distance} into $\footbsetC_1, \dots , \footbsetC_j$
	such that the $\closure{\footbsetC_i}$ are the components of $\closure{\footbset \cup \footbsetB }$
	and at most one vertex of $\footbset$
	is contained in one $\footbsetC_i$.
	Up to reordering, assume that for some $\ell\leq j$ 
	the components $\footbsetC_1, \dots , \footbsetC_\ell$
	are all components $\footbsetC_i$ such that $\footbsetC_i \cap \footbsetB \neq \emptyset$
	and, for some $m \leq \ell$,
	the $\footbsetC_1, \dots , \footbsetC_m$
	are all $\footbsetC_i$ such that additionally $\closure{\footbsetC_i} \cap (\footbset \setminus\footbsetB) \neq \emptyset$.
	
	Then clearly $\closure{\footbsetB} \subseteq \bigcup_{i \in [\ell]}
	\closure{\footbsetC_i}$
	and $\ell \leq |\footbsetB|$.
	Because~$G$ is $2k$\nobreakdash-meager,~$\footbset$ is $6k$\nobreakdash-scattered, and $|\footbsetB|\leq k$,
	it follows from Lemma~\ref{lem:closure-distance} that
	$|\closure{\footbsetC_i} \cap \footbset| \leq 1$ for every $i \in [\ell]$.
	Thus, $|\closure{\footbsetC_i} \cap (\footbset\setminus \footbsetB)| = 1$
	for every $i \in [m]$
	and $|\closure{\footbsetC_i} \cap (\footbset\setminus \footbsetB)| = 0$
	for every $m < i \leq \ell$.
	It follows that
	$\closure{\footbsetB} \cap (\footbset \setminus\footbsetB) \subseteq \bigcup_{i=1}^m \closure{\footbsetC_i} \cap (\footbset \setminus\footbsetB)$.
	Hence, $|\closure{\footbsetB} \cap (\footbset \setminus\footbsetB)| \leq m$.
	
	Every vertex in $\footbsetB \cap \footbsetA$
	has to be contained in some $\footbsetC_i$  such that $m < i \leq \ell$
	because, for every $i \leq m$, we have $|\closure{\footbsetC_i} \cap \footbset| \leq 1$ and so
	$\footbsetC_i \cap (\footbsetB \cap \footbset) = \emptyset$
	since $\footbsetC_i$ contains a vertex of $\footbset \setminus \footbsetB$.
	But for $i \leq \ell$, every $\footbsetC_i$ contains at least one vertex of $\footbsetB$.
	That is, $m \leq |\footbsetB\setminus\footbset|$.
\end{proof}

\subsection{Gluing Multipedes to CFI Graphs}

We now glue CFI graphs to multipedes.
First, we alter the CFI-construction used.
Instead of two edge-vertex-pairs for the same base edge $\set{\bVertA, \bVertB}$
(one with origin $(\bVertA,\bVertB)$ and one with origin $(\bVertB,\bVertA)$),
we contract the edges between these two edge-vertex-pairs (which form a matching)
and obtain a single edge-vertex-pair with origin $\set{\bVertA,\bVertB}$.
This preserves all relevant properties of CFI graphs.
In particular, there are parity-preserving \IFPC{}-interpretations
mapping CFI graphs with two edge-vertex-pairs per base edge to the corresponding CFI graphs  with only one edge-vertex-pair per base edge and vice-versa.
In this section, we write $\CFI{H,f}$ for CFI graphs of this modified construction.
Using only one edge-vertex-pair per base edge removes technical details from the following.

Let $G=(\gadgetsb^G,\feetb^G, E^G, \leq^G)$ be an ordered bipartite graph,
let $H=(V^H,E^H, \leq^H)$ be an ordered base graph,
let $f \colon E^H \to \FF_2$,
and let $\footbset \subseteq \feetb^G$ have size $|\footbset| = |E^H|$.
We define the ternary structure $\gluing{\multipede{G}}{\footbset}{\CFI{H, f}}$ called the \defining {gluing}
of the multipede $\multipede{G}  =  (\StructVA, R^{\multipede{G}}, \spleq^{\multipede{G}})$ and the CFI graph $\CFI{H, f}  = (\StructVB, E^{\CFI{H, f}}, \spleq^{\CFI{H, f}})$ at~$\footbset$ as follows (see Figure~\ref{fig:gluing-multipede-cfi-graph} for an illustration):
The $i$-th edge-vertex-pair of $\CFI{H, f}$ is the edge-vertex-pair such that its origin 
$\set{\bVertA,\bVertB}$ is the $i$-th edge in~$H$ according to $\leq^H$.
We start with the disjoint union of $\multipede{G}$ and $\CFI{H, f}$
and identify the $i$-th edge-vertex-pair of $\CFI{H, f}$ (according to $\leq^H$) with the $i$\nobreakdash-th segment in $\footbset$ (according to $\leq^G$).
We finally turn the edges $E^{\CFI{H, f}}$
into a ternary relation by extending every edge $(\vertA,\vertB)\in E^{\CFI{H, f}}$ to the triple $(\vertA,\vertB,\vertB)$.
In this way, we obtain the $\set{\rel,\spleq}$-structure $\gluing{\multipede{G}}{\footbset}{\CFI{H, f}}$:
The relation~$R$ is the union of $R^{\multipede{G}}$ with the triples $(\vertA,\vertB, \vertB)$ defined before.
The total preorder~$\spleq$ is obtained from joining~$\spleq^{\multipede{G}}$ and~$\spleq^{\CFI{H, f}}$
by moving all gadget vertices of $\CFI{H,f}$ to the end.

	\newcommand{\vertexpair}[3]{
	\begin{scope}[#3]
		\draw[#1,fill=#2] (0,0) ellipse (0.5cm and 0.25cm);
		\node [vertex, fill=#1] (v0) at (-0.2,0) {};
		\node [vertex, fill=#1] (v1) at (0.2,0) {};
	\end{scope}
}
\newcommand{\gadget}[3]{
	\begin{scope}[#3]
		\draw[#1, fill=#2] (0,0) ellipse (0.9cm and 0.35cm);
		\node [vertex, fill=#1] (u000) at (-0.6,0) {};
		\node [vertex, fill=#1] (u011) at (-0.2,0) {};
		\node [vertex, fill=#1] (u101) at (0.2,0) {};
		\node [vertex, fill=#1] (u110) at (0.6,0) {};
	\end{scope}
}
\begin{figure}[t]
	\centering

	\begin{tikzpicture}[font = \small]
		
		\draw[draw=none, use as bounding box]  (-6,-2.9) rectangle (8.5,5.2);
		
		\draw [edgeColA, fill = edgeColA!2!white, thick] (-0.4,-0.4) ellipse (5cm and 2.3cm);
		\draw [decorate,
		decoration = {brace}]
		(5,1.9) --  (5,-2.7)
		node[pos=0.5,black, right=0.2cm , align=left]
		{multipede $\multipede{G}$};
		
		\vertexpair{edgeColB}{edgeColB!20!white}{name prefix = a, shift={(1,-2)}}
		\vertexpair{edgeColB}{edgeColB!20!white}{name prefix = b, shift={(-4,0.5)}}
		\vertexpair{edgeColB}{edgeColB!20!white}{name prefix = c, shift={(-2.5,-1.5)}}
		\vertexpair{edgeColB}{edgeColB!20!white}{name prefix = d, shift={(-0.5,1)}}
		\vertexpair{edgeColB}{edgeColB!20!white}{name prefix = e, shift={(3.5,0)}}
		
		\vertexpair{edgeColA}{edgeColA!20!white}{name prefix = f, shift={(-0.1,-1.25)}}
		\vertexpair{edgeColA}{edgeColA!20!white}{name prefix = g, shift={(0,-0.5)}}
		\vertexpair{edgeColA}{edgeColA!20!white}{name prefix = h, shift={(0.1,0.25)}}
		
		\draw[edgeColE!80!black, rounded corners, draw, line width = 0.4mm, line cap=round]
			($(dv0)+(220:0.04)$) -- ($(hv0)+(-0.02,0)$) -- ($(gv0)+(-0.05,0)$)
			($(dv0)+(40:0.04)$) -- ($(hv1)+(-0.01,0)$) -- ($(gv1)+(0.05,0)$)
			($(dv1)+(40:0.04)$) -- ($(hv1)+(0.09,0)$) -- ($(gv0)+(+0.05,0)$)
			($(dv1)+(220:0.04)$) -- ($(hv0)+(0.04,0)$) -- ($(gv1)+(-0.05,0)$);
			
		\draw[edgeColF, rounded corners, draw, line width = 0.4mm, line cap=round]
			($(gv0)+(-0.05,0)$) -- ($(fv0)+(-0.09,0)$) -- ($(av0)+(230:0.04)$)
			($(gv0)+(0.05,0)$) -- ($(fv1)+(-0.09,0)$) -- ($(av1)+(50:0.04)$)
			($(gv1)+(0.05,0)$) -- ($(fv1)+(0,0)$) -- ($(av0)+(50:0.04)$)
			($(gv1)+(-0.05,0)$) -- ($(fv0)+(-0.03,0)$) -- ($(av1)+(230:0.04)$);
			
		\draw[edgeColA, line width = 0.4mm, dashed,  thick]
			(av0) -- ($(av0)+(0.65,0.65)$)
			(av1) -- ($(av1)+(0.65,0.65)$)
			(bv0) -- ($(bv0)+(0.65,-0.65)$)
			(bv1) -- ($(bv1)+(0.65,-0.65)$)
			(cv0) -- ($(cv0)+(-0.65,0.65)$)
			(cv1) -- ($(cv1)+(-0.65,0.65)$)
			(cv0) -- ($(cv0)+(0.65,0.65)$)
			(cv1) -- ($(cv1)+(0.65,0.65)$)
			(ev0) -- ($(ev0)+(-0.65,-0.65)$)
			(ev1) -- ($(ev1)+(-0.65,-0.65)$)
			(ev0) -- ($(ev0)+(-0.65,+0.65)$)
			(ev1) -- ($(ev1)+(-0.65,+0.65)$)
			(dv0) -- ($(dv0)+(-0.65,-0.65)$)
			(dv1) -- ($(dv1)+(-0.65,-0.65)$)
			(hv0) -- ($(hv0)+(0.65,0.65)$)
			(hv1) -- ($(hv1)+(0.65,0.65)$)
			(fv0) -- ($(fv0)+(-0.65,-0.65)$)
			(fv1) -- ($(fv1)+(-0.65,-0.65)$)
			;

		\draw[edgeColC, fill = edgeColC!2!white, thick] (-0.4,4) ellipse (4cm and 1cm);
		\draw [decorate,
		decoration = {brace}]
		(5,5) --  (5,3)
		node[pos=0.5,black, right=0.2cm , align=left]
		{gadget vertices\\ of $\CFI{H,f}$};
		\gadget{edgeColC}{edgeColC!20!white}{name prefix = p, shift = {(-2,4)}}
		\gadget{edgeColC}{edgeColC!20!white}{name prefix = q, shift = {(1.5,4)}}
		
		\draw[edgeColC!80!black, thick]
			(pu000) -- (bv0)
			(pu000) -- (cv0)
			(pu000) -- (dv0)
			(pu011) -- (bv0)
			(pu011) -- (cv1)
			(pu011) -- (dv1)
			(pu101) -- (bv1)
			(pu101) -- (cv0)
			(pu101) -- (dv1)
			(pu110) -- (bv1)
			(pu110) -- (cv1)
			(pu110) -- (dv0);
			
		\draw[edgeColC!80!black, thick]
			(qu000) -- (dv0)
			(qu000) -- (av0)
			(qu000) -- (ev0)
			(qu011) -- (dv0)
			(qu011) -- (av1)
			(qu011) -- (ev1)
			(qu101) -- (dv1)
			(qu101) -- (av0)
			(qu101) -- (ev1)
			(qu110) -- (dv1)
			(qu110) -- (av1)
			(qu110) -- (ev0);
			
		\draw[edgeColC!80!black, thick, dashed]
			(bv0) -- ($(bv0) + (-0.5, 0.8)$)
			(bv1) -- ($(bv1) + (-0.5, 0.8)$)
			(ev0) -- ($(ev0) + (0.5, 0.8)$)
			(ev1) -- ($(ev1) + (0.5, 0.8)$);
		
	\end{tikzpicture}
	\caption[Gluing Multipedes to CFI Graphs]{Gluing multipedes to CFI graphs:
		The figure shows the gluing $\gluing{\multipede{G}}{\footbset}{\CFI{H, f}}$
		of the multipede $\multipede{G}$
		and the CFI graph $\CFI{H, f}$ at the set of segments~$\footbset$.
		The mutipede is drawn in blue,
		the segments in~$\footbset$ are drawn in red,
		and the gadget vertices of the CFI graph are drawn in green.
		Only some segments and CFI gadgets are shown.
		Two relational CFI gadgets of the multipede
		are shown, where different colors are used for each gadget.
		The edge-vertex-pairs of the CFI graphs are identified
		with the vertex-pairs of the segments in $\footbset$.
		Formally, every vertex pair of each segment has a unique color
		and there is only a single ternary relation. 
		\label{fig:gluing-multipede-cfi-graph}
	}
\end{figure}

\begin{lem}
	\label{lem:glueing-asymmetric}
	If $\multipede{G}$ is asymmetric, then $\gluing{\multipede{G}}{\footbset}{\CFI{H, f}}$
	is asymmetric.
\end{lem}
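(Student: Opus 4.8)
The plan is to prove that the only automorphism of $\gluing{\multipede{G}}{\footbset}{\CFI{H, f}}$ is the identity. Let $\auto$ be such an automorphism. First I would separate the glued structure into its ``layers''. By construction the gadget vertices of $\CFI{H,f}$ occupy the top colour classes of $\spleq$ (they are moved to the end when the two preorders are joined), so $\auto$ maps the set of gadget vertices of $\CFI{H,f}$ onto itself and hence also preserves its complement, namely the vertex set $\StructVA$ of $\multipede{G}$, which consists only of feet since the relational CFI gadgets of the multipede add no further vertices. Next I would observe that the ternary relation restricted to $\StructVA^3$ is exactly the relation of $\multipede{G}$: every triple coming from $\CFI{H,f}$ has the form $(\vertA,\vertB,\vertB)$ with $\vertA\vertB$ an edge of $\CFI{H,f}$, and in the modified CFI construction every such edge is incident with a gadget vertex, which lies outside $\StructVA$; whereas the multipede triples have three pairwise distinct feet and thus lie in $\StructVA^3$. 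Since moreover the colour classes of $\spleq$ meeting $\StructVA$ are precisely the segments of $\multipede{G}$ (joining the two preorders does not merge classes inside $\StructVA$), the restriction $\restrictVect{\auto}{\StructVA}$ is an automorphism of $\multipede{G}$, hence the identity by hypothesis.

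Consequently $\auto$ fixes every foot, in particular every foot of a segment in $\footbset$; as $|\footbset|=|E^H|$, these are exactly the edge vertices of $\CFI{H,f}$, so $\auto$ fixes all edge vertices of $\CFI{H,f}$. It then remains to handle the gadget vertices. Because a gadget vertex inherits the colour of its origin base vertex and $H$ is ordered, $\auto$ maps the gadget of each base vertex to itself. Within one gadget, a gadget vertex $\tup{b}\in\FF_2^d$ is uniquely determined by the set of incident edge vertices — two distinct gadget vertices differ in some coordinate and therefore in one such incidence — and $\auto$ preserves this incidence because it is encoded by the ternary relation and $\auto$ already fixes all edge vertices. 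Hence $\auto$ fixes every gadget vertex, so $\auto$ is the identity and $\gluing{\multipede{G}}{\footbset}{\CFI{H, f}}$ is asymmetric.

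The step I expect to require the most care is the first one: checking that after gluing, the three groups of vertices — the feet of segments outside $\footbset$, the feet of segments in $\footbset$ (which double as the edge vertices of $\CFI{H,f}$), and the gadget vertices of $\CFI{H,f}$ — remain separated both by the colour preorder $\spleq$ and by the ternary relation, so that no automorphism can move a vertex from one group to another and $\restrictVect{\auto}{\StructVA}$ genuinely lands back in $\multipede{G}$. Everything after that is a routine consequence of the asymmetry of $\multipede{G}$ together with the fact that the CFI gadget vertices are rigidly pinned down once all edge vertices are fixed.
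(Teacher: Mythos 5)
Your proof is correct. It takes the opposite direction from the paper's argument, though both hinge on the same feature of the gluing, namely that edge-vertex-pairs of $\CFI{H,f}$ are identified with segments of $\multipede{G}$. The paper first restricts an automorphism to the CFI layer and invokes the characterization of automorphisms of CFI graphs over totally ordered base graphs (every nontrivial one is a composition of cycle-automorphisms and hence swaps the two vertices of some edge-vertex-pair), which would then induce a nontrivial automorphism of the asymmetric multipede. You instead restrict to the multipede layer first, conclude that all feet — in particular all edge vertices of $\CFI{H,f}$ — are fixed, and then propagate rigidity into the gadgets by the elementary observation that a gadget vertex $\tup{b}\in\FF_2^d$ is determined by its incident edge vertices. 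Your route is more self-contained (it needs no facts about CFI automorphisms beyond gadget rigidity) at the cost of having to verify the layer separation explicitly; your careful check that the ternary relation and the joined preorder really do separate feet from gadget vertices, and that the multipede's relation and coloring are recovered on the foot part, is exactly the content the paper's one-line assertion "is in particular an automorphism of $\CFI{H,f}$" sweeps under the rug.
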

\begin{proof}
	Every automorphism of 
	$\gluing{\multipede{G}}{\footbset}{\CFI{H, f}}$
	 is in particular an automorphism of 
	$\CFI{H, f}$.
	Every non-trivial automorphism of $\CFI{H, f}$
	exchanges the two vertices of some edge-vertex-pairs because $H$ is totally ordered.
	Because every edge-vertex-pair is identified with a segment of $\multipede{G}$
	and $\multipede{G}$ is asymmetric,
	$\gluing{\multipede{G}}{\footbset}{\CFI{H, f}}$ is asymmetric, too.
\end{proof}

We now show that
$\gluing{\multipede{G}}{\footbset}{\CFI{H, 0}} \kequiv{k} \gluing{\multipede{G}}{\footbset}{\CFI{H, 1}}$
if $G$ and $H$ satisfy certain conditions.
Let $\tup{\vertA}$ be a vertex-tuple of $\gluing{\multipede{G}}{\footbset}{\CFI{H, f}}$,
i.e., $\tup{\vertA}$ contains either gadget vertices of the gadgets in $\CFI{H,f}$
or feet of $\multipede{G}$. We also write $\segmentOf{\tup{\vertA}}$
for the set of segments of all feet in $\tup{\vertA}$.
\begin{enumerate}
	\item The segments $\segmentOf{\tup{\vertA}}$ are \defining{directly-fixed by $\tup{\vertA}$}.
	\item The segments $\closure{\segmentOf{\tup{\vertA}}} \setminus \segmentOf{\tup{\vertA}}$ are \defining{closure-fixed by $\tup{\vertA}$}.
	\item A segment $\bVertA \in \footbsetA$ is \defining {gadget-fixed by $\tup{\vertA}$}
	if the feet of $\bVertA$ are identified with some edge-vertex-pair with origin $\set{\bVertB,\bVertC}$ in $\CFI{H,f}$
	such that there is a gadget vertex with origin~$\bVertB$ or~$\bVertC$ in $\tup{\vertA}$.
	\item A segment is \defining{fixed by $\tup{\vertA}$} if it is directly fixed, closure-fixed, or gadget-fixed by $\tup{\vertA}$.
\end{enumerate}

\begin{lem}
	\label{lem:number-fixed-segments}
	Let $r \geq k \geq 2$
	and $\tup{\vertA}$ be a vertex-tuple of $\gluing{\multipede{G}}{\footbset}{\CFI{H, f}}$ of length at most $k$.
	If $H$ is $r$-regular, 
	$G$ is $2k$-meager,
	and $\footbset$ is $6k$-scattered,
	then at most $r\cdot \tlength{\tup{\vertA}}$ segments are fixed by $\tup{\vertA}$.
	If $\tup{\vertA}$ contains $i$ gadget vertices and $\ell$ segments in $\footbset$ are directly-fixed by $\tup{\vertA}$,
	then at most $\tlength{\tup{\vertA}}-i-\ell$ segments in $\footbset$ are closure-fixed by $\tup{\vertA}$.
\end{lem}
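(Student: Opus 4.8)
The plan is to estimate the three kinds of fixed segments --- directly-fixed, closure-fixed, and gadget-fixed --- separately and then to add up the estimates. To fix notation, let $i$ be the number of gadget vertices of $\CFI{H,f}$ among the entries of $\tup{\vertA}$, so that the remaining $\tlength{\tup{\vertA}}-i$ entries of $\tup{\vertA}$ are feet of $\multipede{G}$ (recall that in this modified construction the edge vertices of $\CFI{H,f}$ are exactly the feet of the segments in $\footbset$), and put $\footbsetB := \segmentOf{\tup{\vertA}}$. Since every foot lies in exactly one segment, $|\footbsetB|\leq \tlength{\tup{\vertA}}-i\leq k$.

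First I would bound the directly-fixed and closure-fixed segments jointly: by definition their union is exactly $\closure{\footbsetB}$. As $G$ is $2k$-meager it is in particular $k$-meager, and $|\footbsetB|\leq k$, so Lemma~\ref{lem:closure-meager-size} yields $|\closure{\footbsetB}|\leq 2|\footbsetB|\leq 2(\tlength{\tup{\vertA}}-i)$. Next I would bound the gadget-fixed segments: a gadget vertex in $\tup{\vertA}$ has an origin $\bVertB\in V^H$, which has degree $r$ since $H$ is $r$-regular, and by construction of the gluing each of the $r$ base edges incident to $\bVertB$ is identified with exactly one segment of $\footbset$; hence each gadget vertex gadget-fixes at most $r$ segments, so at most $ri$ segments are gadget-fixed in total. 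Adding the two bounds, the number of fixed segments is at most $2(\tlength{\tup{\vertA}}-i)+ri\leq r(\tlength{\tup{\vertA}}-i)+ri = r\tlength{\tup{\vertA}}$, where the inequality uses $r\geq 2$. This gives the first claim.

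For the second claim I would use the elementary observation that a segment of $\footbset$ is closure-fixed by $\tup{\vertA}$ exactly when it lies in $\closure{\footbsetB}\cap(\footbset\setminus\footbsetB)$ (it must lie in $\closure{\footbsetB}\setminus\footbsetB$ and in $\footbset$, and the intersection of these two conditions is precisely $\closure{\footbsetB}\cap(\footbset\setminus\footbsetB)$). Since $k\geq 2$, $G$ is $2k$-meager, $\footbset$ is $6k$-scattered, and $|\footbsetB|\leq k$, Lemma~\ref{lem:closure-new-vertices} applies with $\footbsetB$ in the role of its set of size at most $k$ and gives $|\closure{\footbsetB}\cap(\footbset\setminus\footbsetB)|\leq|\footbsetB\setminus\footbset|$. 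Finally, writing $\ell := |\footbsetB\cap\footbset|$ for the number of directly-fixed segments of $\footbset$, we have $|\footbsetB\setminus\footbset| = |\footbsetB|-\ell \leq (\tlength{\tup{\vertA}}-i)-\ell$, so at most $\tlength{\tup{\vertA}}-i-\ell$ segments of $\footbset$ are closure-fixed.

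Most of this is bookkeeping on top of Lemmas~\ref{lem:closure-meager-size} and~\ref{lem:closure-new-vertices}, which are already available. The two points that actually carry the argument --- and thus the main obstacle, modest as it is --- are that the gluing identifies each base edge of $H$ with exactly one segment of $\footbset$, so that a single gadget vertex can gadget-fix no more than $r$ segments, and the set identity expressing the closure-fixed segments of $\footbset$ in exactly the form needed to invoke Lemma~\ref{lem:closure-new-vertices} verbatim.
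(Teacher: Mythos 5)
Your proof is correct and follows essentially the same route as the paper's: bound the directly- and closure-fixed segments together by $2(\tlength{\tup{\vertA}}-i)$ via Lemma~\ref{lem:closure-meager-size}, bound the gadget-fixed segments by $ri$ via $r$-regularity of $H$, combine using $r\geq 2$, and invoke Lemma~\ref{lem:closure-new-vertices} for the second claim. The only (welcome) difference is that you correctly attribute the $r$-regularity to $H$ where the paper's proof has a typo saying $G$.
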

\begin{proof}
	Assume that~$\tup{\vertA}$ contains~$j$ feet and~$i$ gadget vertices.
	So, $i + j = \tlength{\tup{\vertA}} \leq k$.
	Then at most~$r i$ segments are gadget-fixed by~$\tup{\vertA}$ because~$G$ is $r$-regular.
	From Lemma~\ref{lem:closure-meager-size}
	it follows that $|\closure{\segmentOf{\tup{\vertA}}}| \leq 2j$
	because $G$ is $k$-meager.
	So at most $2j$ segments are directly-fixed or closure-fixed.
	Then $r i + 2j \leq r i + r j = r(i + j) = r \tlength{\tup{\vertA}}$
	because $i+j = \tlength{\tup{\vertA}}$ and $r\geq k\geq 2$.
	
	By Lemma~\ref{lem:closure-new-vertices},
	it holds that $|\closure{\segmentOf{\tup{\vertA}}} \cap (\footbset \setminus \segmentOf{\tup{\vertA}})| \leq |\segmentOf{\tup{\vertA}} \setminus \footbset|$ because~$G$ is $k$\nobreakdash-meager,~$\footbset$ is $6k$\nobreakdash-scattered,
	and $|\segmentOf{\tup{\vertA}}|\leq k$.
	That is, the number of closure-fixed segments in~$\footbset$
	is bounded by the number of directly-fixed segments not in~$\footbset$.
	The number of directly-fixed segments not in~$\footbset$ is $\tlength{\tup{\vertA}}-i-\ell$.
\end{proof}

We now combine winning strategies of Duplicator in the bijective pebble game on multipedes and CFI graphs:
\begin{lem}
	\label{lem:gluing-k-equiv}
	Let $r \geq k \geq 3$, $G$ be $2rk$-meager,
	$H$ be $r$-regular and at least $(k+2)$-connected,
	and $\footbset$ be $6rk$-scattered.
	Then $\gluing{\multipede{G}}{\footbset}{\CFI{H,0}} \kequiv{k} \gluing{\multipede{G}}{\footbset}{\CFI{H,1}}$.
\end{lem}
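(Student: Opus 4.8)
The plan is to use Hella's characterization: it suffices to give Duplicator a winning strategy in the bijective $k$-pebble game on $\StructA_0 := \gluing{\multipede{G}}{\footbset}{\CFI{H,0}}$ and $\StructA_1 := \gluing{\multipede{G}}{\footbset}{\CFI{H,1}}$ (the structures have the same universe, so $|\StructVA|=|\StructVB|$ holds trivially). The idea is to \emph{merge} the two known Duplicator strategies: the one for $\CFI{H,0}$ versus $\CFI{H,1}$ coming from $(k+2)$-connectivity of $H$ (as in Lemma~\ref{lem:k-connected-treewidth-CFI-equiv} and the routing argument in the proof of Lemma~\ref{lem:orbits-connectivity}), and the one for the multipede $\multipede{G}$ coming from $G$ being meager and $\footbset$ being scattered (as in Lemmas~\ref{lem:multipede-winning-strategy} and~\ref{lem:distance-set-local-aut}). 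Recall that $\StructA_0$ and $\StructA_1$ coincide on the multipede relation and differ only in the CFI adjacencies; by moving the twist through path isomorphisms we may place the distinguishing ``twisted edge'' of $H$ wherever convenient, at the cost of flipping the feet of the segments of $\footbset$ that lie on the moving path.

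The invariant Duplicator maintains after each round, at a position with pebbles on $\tup{\vertA}$ and $\tup{\vertB}$ of length $\le k$, is: there is a bijection $\auto\colon\StructVA\to\StructVB$ that (i) is the identity outside a small region, (ii) restricts to a pebble-respecting local isomorphism of the pebbled substructures, (iii) on the CFI part is a composition of path isomorphisms placing the twist on an edge $\bEdge^*\in E^H$ that is neither incident to a base vertex of $H$ carrying a pebbled gadget vertex nor has its associated $\footbset$-segment \emph{fixed by $\tup{\vertA}$} (directly-, closure-, or gadget-fixed in the sense of Lemma~\ref{lem:number-fixed-segments}), and (iv) restricts on the multipede part to a local automorphism of $\footinduced{\multipede{G}}{\closure{\segmentOf{\tup{\vertA}}}}$ acting on the feet of $\tup{\vertA}$ as $\auto$ does; moreover the segments of $\footbset$ whose feet are flipped by $\auto$ are disjoint from $\closure{\segmentOf{\tup{\vertA}}}$ and form singleton components in the relevant closures, so that these flips are invisible to the current pebbles and extend to genuine local automorphisms by Lemmas~\ref{lem:closure-singleton-component},~\ref{lem:closure-extension} and~\ref{lem:closure-extension-new}. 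The invariant holds initially, and by (ii) a position respecting it is never losing for Duplicator.

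For a round, Spoiler picks up a pebble pair, leaving at most $k-1$ pebbles. Duplicator first \emph{re-routes the twist}: by Lemma~\ref{lem:number-fixed-segments} at most $r(k-1)$ segments of $\footbset$ are fixed and at most $k-1$ base vertices of $H$ carry a gadget pebble, so — using that $|\footbset|=|E^H|$ is large (here the quantitative hypotheses on $r$, $k$, the regularity and connectivity of $H$, and the size of $\footbset$ enter) and that deleting at most $k$ vertices from the $(k+2)$-connected graph $H$ leaves it connected — there is an edge $\bEdge\in E^H$ that avoids the pebbled base vertices, has unfixed $\footbset$-segment, and is reachable from the current twisted edge along a path in $H$ whose $\footbset$-segments are all well-separated from $\closure{\segmentOf{\tup{\vertA}}}$. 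Applying the corresponding path isomorphism on the CFI part, and on the multipede part extending the local automorphism of $\footinduced{\multipede{G}}{\closure{\segmentOf{\tup{\vertA}}}}$ by the induced feet-flips (legitimate by Lemma~\ref{lem:distance-set-local-aut}, as $G$ is $2rk$-meager and $\footbset$ is $6rk$-scattered), Duplicator obtains a bijection $\gamebij$ with the twist at $\bEdge$. She answers with $\gamebij$; for any vertex $\vertC$ Spoiler pebbles together with $\gamebij(\vertC)$, the pebbled substructures agree because the twist at $\bEdge$ is disjoint from all pebbles, and the invariant is restored after extending the multipede local automorphism over the at most one new segment entering $\closure{\segmentOf{\tup{\vertA}}}$ via the closure lemmas (using $|\segmentOf{\tup{\vertA}}|\le k$ and $2rk$-meagerness, together with Lemmas~\ref{lem:closure-meager-size} and~\ref{lem:closure-distance}). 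If the new pebble lands on a foot of the segment carrying the twist, that segment becomes fixed and the twist is moved away in the next round's re-routing step.

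The main obstacle is the interface between the two structures: a pebble on the gluing simultaneously constrains a multipede foot and a CFI edge vertex, and the CFI twist — when sitting on an $\footbset$-segment — forces a foot-flip that must be realizable by a local automorphism of the multipede fixing the pebbled feet (flipping a single segment is not an automorphism of $\multipede{G}$, so this is delicate). Duplicator must therefore keep the twist, and every $\footbset$-segment touched while re-routing it, scattered-far from the pebbled feet (so the closure lemmas yield the extending local automorphism) and off the pebbled gadgets of $H$ (so the CFI part stays consistent), while the target edge must still be reachable in $H$ avoiding the pebbled base vertices. Verifying that the chosen parameters — $G$ being $2rk$-meager, $\footbset$ being $6rk$-scattered of size $|E^H|$, and $H$ being $r$-regular and $(k+2)$-connected — always leave room for this simultaneous choice, i.e.\ the quantitative combination of Lemma~\ref{lem:number-fixed-segments} with the connectivity routing and the closure estimates (Lemmas~\ref{lem:closure-meager-size},~\ref{lem:closure-distance},~\ref{lem:closure-new-vertices}), is the technical heart of the argument; the merging of the two strategies is otherwise routine.
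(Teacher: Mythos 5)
Your overall strategy is the same as the paper's: play the bijective $k$-pebble game and merge Duplicator's multipede strategy (meagerness, scatteredness, and the closure lemmas) with her CFI strategy (connectivity of $H$), using the scatteredness of $\footbset$ to keep the two from interfering. The one structural difference is that you handle the CFI side by explicitly routing the twisted edge each round, whereas the paper keeps the CFI side as a black-box sub-game (played on $\CFI{H,0}$ versus $\CFI{H,1}$ with at most $k$ pebbles, won by Duplicator via Lemma~\ref{lem:k-connected-treewidth-CFI-equiv}) into which it merely inserts extra ``virtual pebbles.'' Your explicit routing can be made to work, and the routing count itself (at most $k$ forbidden vertices/edges of $H$, which is what the second part of Lemma~\ref{lem:number-fixed-segments} delivers) does go through as you suggest; but the black-box formulation spares the paper from arguing that an entire routing path avoids the multipede's danger zone.

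The genuine gap is in your invariant, and it sits exactly in the part you defer as ``the technical heart.'' Two pieces are missing. First, a single pebble on a gadget vertex of a base vertex $\bVertB$ of $H$ determines, for each of the $r$ incident edges, which of the two edge vertices is adjacent to it --- i.e.\ it fixes one foot in each of up to $r$ segments of $\footbset$, hence up to $rk$ feet in total. Your condition (iv) only tracks a local automorphism of $\footinduced{\multipede{G}}{\closure{\segmentOf{\tup{\vertA}}}}$, which does not see these feet; yet Spoiler can next pebble one of them, and the bijection is then forced to send it to the specific foot dictated by the CFI adjacency, which must simultaneously be realizable on the multipede side together with the other pebbled feet. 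This is why the paper carries the auxiliary tuples $\gadgetfixed{\tup{\vertB}}$, $\gadgetfixed{\tup{\vertB}'}$ (Properties~1 and~5 of its invariant), effectively plays the multipede game with up to $rk$ pebbles, and needs $2rk$-meagerness and $r\geq k$ --- hypotheses your sketch cites but never actually uses. Second, when a segment of $\footbset$ becomes closure-fixed, the extension of the multipede local automorphism to it is essentially forced, and it must agree with what your accumulated CFI-side path isomorphisms do to the corresponding edge-vertex-pair; conversely, if that segment is also gadget-fixed, the CFI side has no freedom either. Your proposal does not synchronize these two constraints; the paper does so by recording the choice in $\closurefixed{\tup{\vertB}}$, inserting those feet as pebbles into the CFI sub-game (Properties~2 and~4), and running the two-way case analysis ``already gadget-fixed when it becomes closure-fixed'' and vice versa. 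Without this synchronization the merged map need not be a local isomorphism on the shared feet, so the invariant as you state it is too weak to close the induction.
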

\begin{proof}
	Let $\StructA = \multipede{G}$,
	$\StructB = \CFI{H,0}$, and
	$\StructB' = \CFI{H,1}$.
	We show that Duplicator has a winning strategy in the bijective $k$-pebble game played on $\gluing{\StructA}{\footbset}{\StructB}$ and $\gluing{\StructA}{\footbset}{\StructB'}$.
	
	For a set of segments~$\footbsetB$ and a tuple~$\tup{\vertA}$,
	we denote by~$\feetrestrict{\tup{\vertA}}{\footbsetB}$ the restriction of~$\tup{\vertA}$ to all feet whose segment is contained in~$\footbsetB$,
	by~$\gadgetrestrict{\tup{\vertA}}$ the restriction of~$\tup{\vertA}$ to all gadget vertices,
	and by~$\nongadgetrestrict{\tup{\vertA}}$
	the restriction of~$\tup{\vertA}$ to all feet.
	
	Duplicator maintains the following invariant.
	At every position $(\gluing{\StructA}{\footbset}{\StructB},\tup{\vertA} ; \gluing{\StructA}{\footbset}{\StructB'}, \tup{\vertA}')$ in the game
	there exist tuples of feet $\gadgetfixed{\tup{\vertB}}, \closurefixed{\tup{\vertB}}$
	of $\gluing{\StructA}{\footbset}{\StructB}$
	and $\gadgetfixed{\tup{\vertB}'},\closurefixed{\tup{\vertB}'}$
	of $\gluing{\StructA}{\footbset}{\StructB'}$ satisfying the following:
	\begin{enumerate}
		\item \label{itm:gadget-fixed-vertices}
		For every segment  gadget-fixed by $\tup{\vertA}$
		(and so by $\tup{\vertA}'$),
		there is exactly one foot of this segment contained in $\gadgetfixed{\tup{\vertB}}$ and in $\gadgetfixed{\tup{\vertB}'}$, respectively.
		No foot of another segment is contained in~$\gadgetfixed{\tup{\vertB}}$ or in~$\gadgetfixed{\tup{\vertB}'}$.
		\item \label{itm:closure-fixed-vertices}
		For every segment contained in $\footbsetA$ and closure-fixed by $\tup{\vertA}$ (and so by $\tup{\vertA}'$),
		there is exactly one foot of this segment contained in $\closurefixed{\tup{\vertB}}$ and in $\closurefixed{\tup{\vertB}'}$, respectively.
		No foot of another segment is contained in~$\closurefixed{\tup{\vertB}}$ or in~$\closurefixed{\tup{\vertB}'}$.
		\item \label{itm:inv-multipede} There is an automorphism $\autoA \in \autGroup{\footinduced{\StructA}{\segmentOf{\nongadgetrestrict{\tup{\vertA}}\gadgetfixed{\tup{\vertB}}\closurefixed{\tup{\vertB}}}}}$ satisfying
		$\autoA(\nongadgetrestrict{\tup{\vertA}}\gadgetfixed{\tup{\vertB}}\closurefixed{\tup{\vertB}}) = \nongadgetrestrict{\tup{\vertA}'}\gadgetfixed{\tup{\vertB}'}\closurefixed{\tup{\vertB}'}$.
		\item \label{itm:inv-cfi} $(\StructB, \feetrestrict{\tup{\vertA}}{\footbsetA}\gadgetrestrict{\tup{\vertA}}\closurefixed{\tup{\vertB}}) \kequiv{k} (\StructB', \feetrestrict{\tup{\vertA}'}{\footbsetA}\gadgetrestrict{\tup{\vertA}'}\closurefixed{\tup{\vertB}'})$.
		\item \label{itm:inv-cfi-gadget-fixed} For every base vertex $\bVertA \in V^H$,
		it holds that
		$(\StructB', \gadgetrestrict{\tup{\vertA}}\gadgetfixed{\tup{\vertB}})[V_\bVertA] \iso (\StructB, \gadgetrestrict{\tup{\vertA}'}\gadgetfixed{\tup{\vertB}'})[V_\bVertA]$,
		where~$V_\bVertA$ is the set of all gadget vertices with origin $\bVertA$ and all edge vertices with origin $\set{\bVertA,\bVertB}$ for some $\bVertB \in \neighbors{G}{\bVertA}$.
	\end{enumerate}
	Regarding Property~\ref{itm:inv-multipede}, we have
	${\segmentOf{\nongadgetrestrict{\tup{\vertA}}\gadgetfixed{\tup{\vertB}}\closurefixed{\tup{\vertB}}} = \segmentOf{\nongadgetrestrict{\tup{\vertA}'}\gadgetfixed{\tup{\vertB}'}\closurefixed{\tup{\vertB}'}}}$ and
	${\tlength{\nongadgetrestrict{\tup{\vertA}}\gadgetfixed{\tup{\vertB}}\closurefixed{\tup{\vertB}}}=\tlength{\nongadgetrestrict{\tup{\vertA}'}\gadgetfixed{\tup{\vertB}'}\closurefixed{\tup{\vertB}'}} \leq rk}$~by Lemma~\ref{lem:number-fixed-segments},
	so the local automorphism $\auto$ extends to the closure of $\segmentOf{\nongadgetrestrict{\tup{\vertA}}\gadgetfixed{\tup{\vertB}}}$
	by Lemma~\ref{lem:closure-extension}
	because $G$ is $rk$-meager.
	Regarding Property~\ref{itm:inv-cfi}, note that ${\tlength{\feetrestrict{\tup{\vertA}}{\footbsetA}\gadgetrestrict{\tup{\vertA}}\closurefixed{\tup{\vertB}}} = \tlength{\feetrestrict{\tup{\vertA}'}{\footbsetA}\gadgetrestrict{\tup{\vertA}'}\closurefixed{\tup{\vertB}'}} \leq k}$:
	By Lemma~\ref{lem:number-fixed-segments},
	the number of closure-fixed segments in $\footbset$ is at most
	${\tlength{\closurefixed{\tup{\vertB}}} \leq k - \tlength{\gadgetrestrict{\tup{\vertA}}} - \tlength{\feetrestrict{\tup{\vertA}}{\footbset}}}$.
	Hence, ${\tlength{\feetrestrict{\tup{\vertA}}{\footbsetA}\gadgetrestrict{\tup{\vertA}}\closurefixed{\tup{\vertB}}}
	= \tlength{\feetrestrict{\tup{\vertA}}{\footbsetA}} +\tlength{\gadgetrestrict{\tup{\vertA}}}
	+\tlength{\closurefixed{\tup{\vertB}}} \leq k}$.
	(Note that ${\tlength{\feetrestrict{\tup{\vertA}}{\footbsetA}}=\tlength{\feetrestrict{\tup{\vertA}'
		}{\footbsetA}}}$, ${\tlength{\gadgetrestrict{\tup{\vertA}}} = \tlength{\gadgetrestrict{\tup{\vertA}'}}}$, etc.,~because otherwise Duplicator has already lost the game).
	Property~\ref{itm:inv-cfi-gadget-fixed} is needed because 
	$\tlength{\gadgetfixed{\tup{\vertB}}} \leq r k$ (possibly with equality), which exceeds $k$.
	Thus, Property~\ref{itm:inv-cfi-gadget-fixed} cannot be implied by Property~\ref{itm:inv-cfi}.
	Property~\ref{itm:inv-cfi-gadget-fixed} guarantees that we pick
	the vertices~$\gadgetfixed{\tup{\vertB}}$ and~$\gadgetfixed{\tup{\vertB}'}$ consistently.
	
	Intuitively, we want to play two games.
	Game~I is played with $rk$ pebbles on the multipede at position
	$(\StructA, \nongadgetrestrict{\tup{\vertA}}\gadgetfixed{\tup{\vertB}}\closurefixed{\tup{\vertB}}; \StructA, \nongadgetrestrict{\tup{\vertA}'}\gadgetfixed{\tup{\vertB}'}\closurefixed{\tup{\vertB}'})$.
	Game~II is played with $k$ pebbles on the CFI graphs at position
	$(\StructB,\feetrestrict{\tup{\vertA}}{\footbsetA}\gadgetrestrict{\tup{\vertA}}\closurefixed{\tup{\vertB}}; \StructB', \feetrestrict{\tup{\vertA}'}{\footbsetA}\gadgetrestrict{\tup{\vertA}'}\closurefixed{\tup{\vertB}'})$.
	We use the winning strategy of Duplicator in both games
	(Lemmas~\ref{lem:k-connected-treewidth-CFI-equiv} and~\ref{lem:multipede-winning-strategy}) to construct
	a winning strategy in the bijective $k$-pebble game played at position 
	$(\gluing{\StructA}{\footbset}{\StructB}, \tup{\vertA}; \gluing{\StructA}{\footbset}{\StructB'}, \tup{\vertA}')$.
	This is possible because in Game~I we artificially fixed all gadget-fixed segments
	and in Game~II we artificially fixed all closure-fixed segments in $\footbsetA$ (and only the segments in $\footbsetA$ are identified with edge-vertex-pairs of the CFI graphs).

	Now assume that it is Spoiler's turn.
	When Spoiler picks up a pair of pebbles $(p_i,q_i)$ from the structures,
	we first update the tuples $\gadgetfixed{\tup{\vertB}}$, $\gadgetfixed{\tup{\vertB}'}$, $\closurefixed{\tup{\vertB}}$, and $\closurefixed{\tup{\vertB}'}$:
	If a segment is no longer gadget-fixed or closure-fixed,
	then we remove the corresponding entries in the corresponding tuples.
	Clearly the invariant is maintained.
	
	We describe how Duplicator defines a bijection $\gamebij$ between
	$\gluing{\StructA}{\footbsetA}{\StructB}$ and $\gluing{\StructA}{\footbsetA}{\StructB'}$
	by defining $\gamebij(\vertC)$ using the following case distinction:
	\begin{enumerate}[label=(\alph*)]
		\item Assume the vertex $\vertC$ is a foot whose segment is contained in $\closure{\segmentOf{\nongadgetrestrict{\tup{\vertA}}\gadgetfixed{\tup{\vertB}}\closurefixed{\tup{\vertB}}}}$.
		The automorphism $\autoA$ from Property~\ref{itm:inv-multipede} extends to an automorphism of $\footinduced{\StructA}{\closure{\segmentOf{\nongadgetrestrict{\tup{\vertA}}\gadgetfixed{\tup{\vertB}}\closurefixed{\tup{\vertB}}}}}$
		by Lemma~\ref{lem:closure-extension}.
		We set $\gamebij(\vertC) := \autoA(\vertC)$.
		(This is actually Duplicator's winning strategy in Game~I~\cite{GurevichShelah96},
		but the exact strategy is needed later).
		 
		\item Assume $\vertC$ is a foot not covered by the previous case. 
		The bijection given by Duplicator's winning strategy in Game~I
		defines $\gamebij(\vertC)$
		(actually,~$\gamebij(\vertC)$ is an arbitrary foot of the same segment as~$\vertC$).
		
		\item Finally, assume $\vertC$ is a gadget vertex.
		We use the bijection given by Duplicator's winning strategy in Game~II
		to define $\gamebij(\vertC)$.
	\end{enumerate}
	Now Spoiler places the pebble pair $(p_i,q_i)$ on the vertices $\vertC$ and  $\vertC' := \gamebij(\vertC)$.
	We update the tuples~$\gadgetfixed{\tup{\vertB}}$,~$\gadgetfixed{\tup{\vertB}'}$,~$\closurefixed{\tup{\vertB}}$, and~$\closurefixed{\tup{\vertB}'}$ as follows:		
	\begin{enumerate}[label=(\alph*)]	
		\item Assume~$\vertC$ (and thus also~$\vertC'$) is a gadget vertex.
		Property~\ref{itm:inv-cfi} clearly holds
		because we followed Duplicator's winning strategy in Game II.
		No new segments in~$\footbset$ get closure-fixed by $\tup{\vertA}\vertC$ or $\tup{\vertA}'\vertC'$,
		so we just do not change~$\closurefixed{\tup{\vertB}}$ and $\closurefixed{\tup{\vertB}}'$
		and Property~\ref{itm:closure-fixed-vertices} is satisfied.
		We satisfy Properties~\ref{itm:gadget-fixed-vertices},~\ref{itm:inv-multipede},
		and~\ref{itm:inv-cfi-gadget-fixed}
		by picking feet from the new gadget-fixed segments as follows:
		\begin{itemize}

		\item Assume that a segment becomes gadget-fixed by $\tup{\vertA}\vertC$ and $\tup{\vertA}'\vertC'$
		(and so the segment is in~$\footbset$)
		that is already closure-fixed by~$\tup{\vertA}$ and~$\tup{\vertA}'$, respectively.
		We pick the same feet as in~$\closurefixed{\tup{\vertB}}$ and~$\closurefixed{\tup{\vertB}'}$
		and append them to~$\gadgetfixed{\tup{\vertB}'}$ and~$\gadgetfixed{\tup{\vertB}'}$, respectively.
		Thus, Property~\ref{itm:gadget-fixed-vertices} holds.
		Because the local automorphism from Property~\ref{itm:inv-multipede}
		already maps~$\closurefixed{\tup{\vertB}}$ to~$\closurefixed{\tup{\vertB}'}$,
		appending the corresponding entries to~$\gadgetfixed{\tup{\vertB}'}$ and~$\gadgetfixed{\tup{\vertB}'}$ satisfies Property~\ref{itm:inv-multipede}.
		Because the closure-fixed segments are part of
		the pebbled vertices in Game~II (Property~\ref{itm:inv-cfi}),
		appending these feet to $\gadgetfixed{\tup{\vertB}'}$ and $\gadgetfixed{\tup{\vertB}'}$
		satisfies Property~\ref{itm:inv-cfi-gadget-fixed}.
		
		\item Assume that a segment~$\bVertA$ becomes gadget-fixed by $\tup{\vertA}\vertC$ and $\tup{\vertA}'\vertC'$
		that is not closure-fixed by~$\tup{\vertA}$ and~$\tup{\vertA}'$, respectively.
		For both~$\StructB$ and~$\StructB'$,
		we pick the unique foot~$\vertB$ respectively~$\vertB'$
		of the segment~$\bVertA$
		adjacent to the newly pebbled gadget vertex
		and append them
		to~$\gadgetfixed{\tup{\vertB}}$ and~$\gadgetfixed{\tup{\vertB}'}$, respectively.
		Hence, Property~\ref{itm:gadget-fixed-vertices} is satisfied.
		Property~\ref{itm:inv-multipede} is satisfied by Lemma~\ref{lem:distance-set-local-aut}:
		We can pick for non-closure-fixed segments arbitrary feet
		and still find a local automorphism mapping them onto each other.
		For the sake of contradiction,
		assume that Property~\ref{itm:inv-cfi-gadget-fixed} is not satisfied
		by appending~$\vertB$ and~$\vertB'$.
		Then there is a base vertex $\bVertB \in V^H$
		such that $(\StructB', \gadgetrestrict{\tup{\vertA}}\gadgetfixed{\tup{\vertB}}\vertB)[V_\bVertB] \not\iso (\StructB, \gadgetrestrict{\tup{\vertA}'}\gadgetfixed{\tup{\vertB}'}\vertB')[V_\bVertB]$.
		There must be a pebble placed on a gadget vertex with origin~$\bVertB$,
		i.e., both~$\gadgetrestrict{\tup{\vertA}}$ and~$\gadgetrestrict{\tup{\vertA}'}$
		are nonempty,
		because otherwise $|\gadgetfixed{\tup{\vertB}}|\leq k$ and $|\gadgetfixed{\tup{\vertA}'}| \leq k$
		and thus the two edge vertices of the segment~$\bVertA$ form an orbit by Lemma~\ref{lem:orbits-connectivity} since $H$ is $(k+2)$-connected
		(the lemma also holds in the setting of a single edge-vertex-pair per base edge).
		In particular,
		$(\StructB', \gadgetrestrict{\tup{\vertA}}\vertB)[V_\bVertB] \not\iso (\StructB, \gadgetrestrict{\tup{\vertA}'}\vertB')[V_\bVertB]$.
		Note that~$\vertB$ and~$\vertB'$ are $\Ck{3}$-definable
		because they are the unique vertices in the segment~$\bVertA$
		adjacent to~$\vertC$ and~$\vertC'$, respectively.
		So Spoiler can win Game~II
		by picking up a pebble pair (which is neither placed on~$\vertC$ nor the gadget of~$\bVertB$)
		and placing it on~$\vertB$.
		Because $k \geq 3$, such a pebble pair actually exists.
		But that contradicts Property~\ref{itm:inv-cfi}
		and hence Property~\ref{itm:inv-cfi-gadget-fixed} is satisfied.
		\end{itemize}

		\item Assume~$\vertC$ (and thus also~$\vertC'$) is a foot.
		Thus, no segments get gadget-fixed 
		by~$\tup{\vertA}\vertC$ and~$\tup{\vertA}'\vertC'$
		that were not already gadget-fixed by~$\tup{\vertA}$ and~$\tup{\vertA}'$, respectively.
		So without picking further feet,~$\gadgetfixed{\tup{\vertB}}$
		and~$\gadgetfixed{\tup{\vertB}}'$ satisfy
		Properties~\ref{itm:gadget-fixed-vertices} and~\ref{itm:inv-cfi-gadget-fixed}.
		Possibly a new segment $\bVertA \in \footbsetA$ becomes closure-fixed.
		By Lemma~\ref{lem:number-fixed-segments},
		there can be at most one such segment.
		We satisfy Properties~\ref{itm:closure-fixed-vertices} to~\ref{itm:inv-cfi}
		as follows:
		\begin{itemize}
		\item Assume that~$\bVertA$ is already gadget-fixed by~$\tup{\vertA}$ and~$\tup{\vertA}'$.
		We append the vertices~$\vertB$ and~$\vertB'$
		whose segment is~$\bVertA$ in~$\gadgetfixed{\tup{\vertB}}$
		and in~$\gadgetfixed{\tup{\vertB}'}$, respectively,
		to~$\closurefixed{\tup{\vertB}}$ and~$\closurefixed{\tup{\vertB}'}$,
		respectively.
		So Property~\ref{itm:closure-fixed-vertices} holds.
		Property~\ref{itm:inv-multipede} is satisfied because
		the local automorphism~$\autoA$ already maps~$\gadgetfixed{\tup{\vertB}}$ to~$\gadgetfixed{\tup{\vertB}'}$
		and hence appending~$\vertB$ respectively~$\vertB'$ satisfies Property~\ref{itm:closure-fixed-vertices}.
		Property~\ref{itm:inv-cfi} is satisfied because of 
		Property~\ref{itm:inv-cfi-gadget-fixed}:
		Fixing a single gadget vertex fixes all edge vertices
		adjacent to the gadget (which is $\Ck{3}$-definable)
		and by Property~\ref{itm:inv-cfi-gadget-fixed}
		we have chosen~$\gadgetfixed{\tup{\vertB}}$
		respectively~$\gadgetfixed{\tup{\vertB}'}$
		consistently with the pebbles on the gadgets.
		So we can actually place a pebble pair~$\vertB$ and~$\vertB'$
		and Property~\ref{itm:inv-cfi} holds.
		
		\item Otherwise,~$\bVertA$ is not gadget-fixed by~$\tup{\vertA}$ and~$\tup{\vertA}'$.
		So the two feet of~$\bVertA$ form an orbit in
		$(\StructB, \feetrestrict{\tup{\vertA}}{\footbsetA}\gadgetrestrict{\tup{\vertA}}\closurefixed{\tup{\vertB}})$
		and likewise in $(\StructB, \feetrestrict{\tup{\vertA}'}{\footbsetA}\gadgetrestrict{\tup{\vertA}'}\closurefixed{\tup{\vertB}'})$
		by Lemma~\ref{lem:orbits-connectivity}
		because~$H$ is $(k+2)$-connected.
		So for every choice of feet in~$\bVertA$,
		Property~\ref{itm:inv-cfi} is satisfied.
		We make an arbitrary choice of a foot~$\vertB$ of the segment~$\bVertA$ in~$\StructB$,
		which we append to~$\closurefixed{\tup{\vertB}}$.
		Because~$\bVertA$ is closure-fixed by $\tup{\vertA}\vertC$,
		we can extend~$\autoA$
		also to the segment~$\bVertA$.
		We pick $\vertB' = \autoA(\vertA)$
		and append it to~$\closurefixed{\tup{\vertB}'}$.
		So Property~\ref{itm:inv-multipede} is satisfied.
		\end{itemize}
	\end{enumerate}
	Hence, Duplicator is able to maintain the invariant.
	We update~$\tup{\vertA}$ and~$\tup{\vertA}'$
	to include~$\vertC$ respectively~$\vertC'$.
	We show that the pebbles induce a local isomorphism:
	By Property~\ref{itm:inv-cfi},
	the pebbles induce a local isomorphism
	$\feetrestrict{\tup{\vertA}}{\footbsetA}\gadgetrestrict{\tup{\vertA}}\closurefixed{\tup{\vertB}} \mapsto \feetrestrict{\tup{\vertA}'}{\footbsetA}\gadgetrestrict{\tup{\vertA}'}\closurefixed{\tup{\vertB}'}$  on the CFI graphs.
	This local isomorphism extends to all gadget-fixed segments
	by Property~\ref{itm:inv-cfi-gadget-fixed},
	that is, the map $\feetrestrict{\tup{\vertA}}{\footbsetA}\gadgetrestrict{\tup{\vertA}}\closurefixed{\tup{\vertB}}\gadgetfixed{\tup{\vertB}} \mapsto \feetrestrict{\tup{\vertA}'}{\footbsetA}\gadgetrestrict{\tup{\vertA}'}\closurefixed{\tup{\vertB}'}\gadgetfixed{\tup{\vertB}'}$
	is a local isomorphism of the CFI graphs.
	By Property~\ref{itm:inv-multipede},
	the map
	$\nongadgetrestrict{\tup{\vertA}}\gadgetfixed{\tup{\vertB}}\closurefixed{\tup{\vertB}} \mapsto \nongadgetrestrict{\tup{\vertA}'}\gadgetfixed{\tup{\vertB}'}\closurefixed{\tup{\vertB}'}$
	is a local automorphism on the multipede.
	Because the maps agree on $\gadgetfixed{\tup{\vertB}}\closurefixed{\tup{\vertB}}$,
	the combined map 
	$\feetrestrict{\tup{\vertA}}{\footbsetA}\gadgetrestrict{\tup{\vertA}}\closurefixed{\tup{\vertB}}\gadgetfixed{\tup{\vertB}}\nongadgetrestrict{\tup{\vertA}} \mapsto \feetrestrict{\tup{\vertA}'}{\footbsetA}\gadgetrestrict{\tup{\vertA}'}\closurefixed{\tup{\vertB}'}\gadgetfixed{\tup{\vertB}'}\nongadgetrestrict{\tup{\vertA}}'$
	is a local isomorphism of the gluings.
	In particular, the map 	$\feetrestrict{\tup{\vertA}}{\footbsetA}\gadgetrestrict{\tup{\vertA}}\nongadgetrestrict{\tup{\vertA}} \mapsto \feetrestrict{\tup{\vertA}'}{\footbsetA}\gadgetrestrict{\tup{\vertA}'}\nongadgetrestrict{\tup{\vertA}}'$
	is a local isomorphism of the gluings,
	but that is just the map $\tup{\vertA} \mapsto \tup{\vertA}'$.
	So 	Duplicator does not lose in this round
	and by induction wins the bijective $k$-pebble game.
\end{proof}

\begin{thm}
	\label{thm:asymmetric-structures-with-CFI graphs}
	There is an \FO{}-interpretation $\interpret$
	and, for every $k \in \nat$, a pair of ternary $\set{R,\spleq}$-structures $(\StructA_k, \StructB_k)$
	such that
	\begin{enumerate}[label=(\alph*)]
		\item $\spleq$ is a total preorder on $\StructA$ and $\StructB$,
		\item $\StructA_k$, $\StructB_k$, $\reduct{\StructA_k}{R}$, and $\reduct{\StructB_k}{R}$ are asymmetric,
		\item $\StructA_k \kequiv{k} \StructB_k$,
		\item $\StructA_k \not\iso \StructB_k$, and
		\item $\interpret(\StructA_k)$ and $\interpret(\StructB_k)$
		are non-isomorphic CFI graphs over the same ordered base graph.
	\end{enumerate}
	The interpretation~$\interpret$ is one-dimensional and equivalence-free.
\end{thm}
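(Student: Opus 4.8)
The plan is to take $\StructA_k$ and $\StructB_k$ to be the two gluings $\gluing{\multipede{G}}{\footbset}{\CFI{H,0}}$ and $\gluing{\multipede{G}}{\footbset}{\CFI{H,1}}$ for carefully chosen base graphs $G$ and $H$ and a carefully chosen set of segments $\footbset$, and then to obtain (a)--(e) by assembling the lemmas proved above; the only substantive ingredient, the $\kequiv{k}$-equivalence in~(c), is already Lemma~\ref{lem:gluing-k-equiv}. It is enough to carry this out for $k \geq 3$: for $k \in \set{1,2}$ one sets $\StructA_k := \StructA_3$ and $\StructB_k := \StructB_3$, since $\StructA_3 \kequiv{3} \StructB_3$ implies $\StructA_3 \kequiv{k} \StructB_3$, while (a), (b), (d), (e) do not refer to $k$.

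So fix $k \geq 3$ and set $r := k+2$. Let $H := K_{k+3}$, totally ordered arbitrarily; it is $r$-regular, $(k+2)$-connected, and has $|E^H| = \binom{k+3}{2} \leq (6rk)^2$. Applying Lemma~\ref{lem:odd-meager-distance-base-graph} with parameter $6rk$ yields an odd, $6rk$-meager (hence also $2rk$-meager) bipartite graph $G = (\gadgetsb, \feetb, E)$ together with a $6rk$-scattered set $\footbsetB \subseteq \feetb$ with $|\footbsetB| \geq (6rk)^2 \geq |E^H|$; fix any total order on $G$ and choose $\footbset \subseteq \footbsetB$ with $|\footbset| = |E^H|$, which remains $6rk$-scattered. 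Put
\[ \StructA_k := \gluing{\multipede{G}}{\footbset}{\CFI{H,0}}, \qquad \StructB_k := \gluing{\multipede{G}}{\footbset}{\CFI{H,1}}. \]
These are ternary $\set{\rel, \spleq}$-structures whose $\spleq$ is a total preorder by construction, giving~(a). Because $G$ is odd and ordered, $\multipede{G}$ is asymmetric (Lemma~\ref{lem:odd-multipede-asymmetric}), hence so are $\StructA_k$ and $\StructB_k$ (Lemma~\ref{lem:glueing-asymmetric}), giving~(b). The parameters were chosen exactly to match the hypotheses of Lemma~\ref{lem:gluing-k-equiv} ($r \geq k \geq 3$; $G$ is $2rk$-meager; $H$ is $r$-regular and $(k+2)$-connected; $\footbset$ is $6rk$-scattered), so $\StructA_k \kequiv{k} \StructB_k$, which is~(c).

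For the interpretation I would exploit that in a gluing the triples of $\rel$ coming from $\multipede{G}$ have three pairwise distinct entries (they record the relational CFI gadgets at triples of distinct segments), whereas the triples recording CFI-edges are of the form $(\vertA, \vertB, \vertB)$. Thus the CFI graph sits $\FO$-definably inside the gluing: let $\interpret$ be the $1$-dimensional, equivalence-free $\FO$-interpretation whose domain formula is $\exists \uniVarB.\qspace \rel(\uniVarA, \uniVarB, \uniVarB)$, whose edge formula is $\rel(\uniVarA, \uniVarB, \uniVarB) \lor \rel(\uniVarB, \uniVarA, \uniVarA)$, whose order formula is $\uniVarA \spleq \uniVarB$, and whose congruence formula is $\uniVarA = \uniVarB$. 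A direct check gives $\interpret(\gluing{\multipede{G}}{\footbset}{\CFI{H, f}}) = \CFI{H^{\circ}, f}$, where $H^{\circ}$ is $H$ equipped with the total order induced by those of $G$ and $H$; in particular $H^{\circ}$ is an ordered base graph and $\interpret(\StructA_k)$, $\interpret(\StructB_k)$ are the CFI graphs over it for the maps $0$ and $1$. Since $H^{\circ}$ is connected they are non-isomorphic, which is~(e); and~(d) follows, because an isomorphism $\StructA_k \iso \StructB_k$ would yield $\interpret(\StructA_k) \iso \interpret(\StructB_k)$.

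I do not expect a genuine obstacle at this point: the delicate argument --- interleaving Duplicator's strategies on the multipede and on the CFI graphs across the shared segments, keeping track of directly-, closure-, and gadget-fixed segments --- is entirely contained in Lemma~\ref{lem:gluing-k-equiv}. What is left is bookkeeping: checking that regularity, $(k+2)$-connectivity, $2rk$-meagerness, $6rk$-scatteredness and $|\footbset| = |E^H|$ are simultaneously achievable (arranged by $r = k+2$, $H = K_{k+3}$, and Lemma~\ref{lem:odd-meager-distance-base-graph}), and confirming that $\interpret$ is $1$-dimensional and equivalence-free and produces exactly a CFI graph over the ordered base graph $H^{\circ}$ --- both of which are immediate from the definitions above.
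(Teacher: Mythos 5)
Your proposal is correct and follows essentially the same route as the paper: take the two gluings $\gluing{\multipede{G}}{\footbset}{\CFI{H,b}}$ for a clique $H$ and an odd, sufficiently meager $G$ with a sufficiently scattered $\footbset$ from Lemma~\ref{lem:odd-meager-distance-base-graph}, get (b) from Lemmas~\ref{lem:odd-multipede-asymmetric} and~\ref{lem:glueing-asymmetric}, (c) from Lemma~\ref{lem:gluing-k-equiv}, and read off the CFI graph via the triples $(\vertA,\vertB,\vertB)$ with a $1$-dimensional equivalence-free \FO{}-interpretation. The only differences are immaterial parameter choices ($K_{k+3}$ with margin $6rk$ versus the paper's $K_{k+4}$ with $6r(k+2)$, both satisfying the hypotheses of Lemma~\ref{lem:gluing-k-equiv}) and your explicit, slightly more careful handling of $k<3$.
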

\begin{proof}
	
	Let $k \geq 2$ be arbitrary and~$H$ be a clique of size $k+4$.
	Thus, $H$ is $r:=(k+3)$-regular, $(k+2)$-connected, and has $m:=\frac{1}{2}(k+4)(k+3)\leq r(k+2)$ edges.
	There exists an odd and $(6r(k+2))$-meager bipartite graph $G=(\gadgetsb,\feetb, E)$
	that contains a $6r(k+2)$-scattered set $\footbset \subseteq \feetb$ of size $m \leq (6r(k+2))^2$ by Lemma~\ref{lem:odd-meager-distance-base-graph}.
	Equip the graphs~$G$ and~$H$ with arbitrary total orders.
	
	Set $\StructA_k := \gluing{\multipede{G}}{\footbset}{\CFI{H,0}}$
	and $\StructB_k := \gluing{\multipede{G}}{\footbset}{\CFI{H,1}}$.
	Clearly, $\StructA_k \not\iso \StructB_k$
	and $\StructA_k \kequiv{k} \StructB_k$ by Lemma~\ref{lem:gluing-k-equiv}
	because $H$ is $(k+2)$-connected and $r$-regular, $G$ is
	 $(6r(k+2))$-meager and so in particular $2rk$-meager,
	and $\footbset$ is $6r(k+2)$-scattered and so in particular $6rk$-scattered.
	By Lemma~\ref{lem:odd-multipede-asymmetric},
	the multipede~$\multipede{G}$ is asymmetric because $G$ is odd.
	Thus,~$\StructA_k$ and~$\StructB_k$ are asymmetric by Lemma~\ref{lem:glueing-asymmetric}.
	
	We define the interpretation~$\interpret$,
	which maps~$\StructA_k$ to $\CFI{H,0}$ and~$\StructB_k$ to $\CFI{H,1}$.  
	Recall that the gluing extends edges of the CFI graphs to triples by repeating the last entry and multipedes do not contain such triples.
	So we can easily define the vertices contained in the ``CFI triples''.
	By taking the induced graph on these vertices
	and by shortening the triples back to pairs,
	one defines the CFI graphs again.
	This is done by the following one-dimensional and equivalence-free $\FO[\set{R,\spleq},\set{E,\spleq}]$-interpretation $\interpret = (\formA_{\text{dom}}(x), \formB_E(x,y), \formB_{\spleq}(x,y))$:
	\begin{align*}
		\formA_{\text{dom}}(x) &:= \exists y.\qspace R(x,y,y)\lor R(y,x,x), \\
		\formB_E(x,y) &:= R(x,y,y),\\
		\formB_{\spleq}(x,y) & := x \spleq y.
	\end{align*}
	Clearly
	$\interpret(\StructA_k) = \CFI{H,0}$ and 
	$\interpret(\StructB_k) = \CFI{H,1}$.
	
	So far we have satisfied all claims of the lemma apart from that the $R$-reducts are asymetric as well.
	To achieve this, we modify the structures and additionally encode~$\spleq$ into~$R$.
	Let $\StructA'_k$ and $\StructB'_k$ be the structures
	obtained form $\StructA_k$ and $\StructB_k$,
	where a directed path of length~$i$ and another one of length~$i+1$ is added to each vertex in the $i$-th color class.
	In that way,
	we obtain that also the $R$-reducts $\reduct{\StructA'_k}{R}$ and $\reduct{\StructB'_k}{R}$ are asymmetric.
	
	Clearly, there is a one-dimensional and equivalence-free \FO{}-interpretation~$\interpret'$
	such that $\interpret'(\StructA'_k) = \StructA_k$
	and $\interpret'(\StructB'_k) = \StructB_k$ for all $k$:
	remove all vertices of out-degree at most~$1$
	because we attached to all original vertices two directed paths.
	Then the $\StructA'_k$ and $\StructB'_k$ together with the interpretation
	$\interpret \circ \interpret'$ satisfy the claim of the lemma.
	Note that $\interpret \circ \interpret'$ is again a one-dimensional
	equivalence-free interpretation.
\end{proof}

\noindent We now prove Theorem~\ref{thm:wsc-le-wsci}
and separate \IFPCWSC{} from \IFPCWSCI{}.

\begin{proof}[Proof of Theorem~\ref{thm:wsc-le-wsci}]
	The first assertion $\IFPC{} < \IFPCWSC{}$ is Corollary~\ref{cor:ifpc-le-wsc}.
	We now show $\IFPCWSC < \IFPCWSCI{}$.
	Consider the class of $\set{R,\spleq}$-structures $\GraphClass$ given by Theorem~\ref{thm:asymmetric-structures-with-CFI graphs}.
	
	We argue that $\IFPCWSC{}=\IFPC$ on $\GraphClass'$.
	Because the structures are asymmetric, there are only singleton orbits.
	Hence, choice formulas have to define singleton choice-sets. Otherwise, it is not possible to witness them as orbits.
	Hence, choosing becomes useless and can be simulated by (non-WSC)-fixed-point operators: If the choice-set is indeed a singleton,
	its unique member is definable.
	Otherwise evaluate to false (because the WSC-fixed point operator evaluates to false if the choices were not witnessed successfully).

	Let $\interpret$ be the \FO-interpretation
	extracting the CFI graphs from $\GraphClass$-structures provided by Theorem~\ref{thm:asymmetric-structures-with-CFI graphs}.
	For every $k\in \nat$, there are two non-isomorphic structures
	$\StructA_k \kequiv{k} \StructB_k$ in~$\GraphClass$
	such that $\interpret$ evaluates on $\StructA_k$ and $\StructB_k$ to an even and an odd CFI graph, respectively.
	Consequently, we call $\Struct_k$ even and $\StructB_k$ odd.
	Since $\StructA_k \kequiv{k} \StructB_k$ for all $k\in\nat$,
	\IFPC{} does not distinguish even from odd $\GraphClass$\nobreakdash-structures.
	We show  that \IFPCWSCI{} distinguishes odd and even $\GraphClass$-structures.
	The CFI-query on ordered base graphs is definable in \IFPCWSCI{} by Theorem~\ref{thm:canonize-CFI if-base}.
	By Corollary~\ref{cor:canonize-CFI if-base-wsci},
	there is actually a $\WSCIof{\IFPC{}}=\WSCof{\IFPC}$-formula~$\formA_{\text{CFI}}$
	defining the CFI query on ordered base graphs.
	Then the $\Iof{\WSCof{\IFPC}}$-formula
	$\iop{\interpret}{\formA_{\text{CFI}}}$
	distinguishes $\StructA_k$ and $\StructB_k$ for all $k$: the interpretation $\interpret$ reduces by Theorem~\ref{thm:asymmetric-structures-with-CFI graphs}
	the problem of distinguishing even and odd $\GraphClass$~\nobreakdash structures to the  CFI query over ordered base graphs,
	which is defined by~$\formA_{\text{CFI}}$.
\end{proof}

\begin{cor}
	$\IFPCWSC{} < \PTime{}$.
\end{cor}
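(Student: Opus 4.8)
The plan is to obtain the corollary as an immediate consequence of Theorem~\ref{thm:wsc-le-wsci}. That theorem already gives the chain $\IFPCWSC < \IFPCWSCI \leq \PTime$. Since $\IFPCWSC \subseteq \IFPCWSCI$ and every $\IFPCWSCI$-definable property is decidable in polynomial time, we immediately get $\IFPCWSC \subseteq \PTime$; the strict inequality $\IFPCWSC < \IFPCWSCI$ then shows that this inclusion is proper, because the property witnessing $\IFPCWSC < \IFPCWSCI$ lies in $\IFPCWSCI \subseteq \PTime$ but not in $\IFPCWSC$. So the whole argument is essentially one line once Theorem~\ref{thm:wsc-le-wsci} is available.

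If one prefers a more self-contained argument that does not go through $\IFPCWSCI \leq \PTime$ abstractly, I would instead exhibit the separating property directly. Take the class $\GraphClass'$ constructed in the proof of Theorem~\ref{thm:wsc-le-wsci} and consider its isomorphism problem. By Theorem~\ref{thm:asymmetric-structures-with-CFI graphs} (together with the extra directed paths encoding $\spleq$ into $R$) there is a $1$-dimensional equivalence-free \FO{}-interpretation reducing this isomorphism problem to the CFI query over ordered base graphs, and the CFI query is trivially in \PTime{} (one simply computes the parity $\sum f$). As \PTime{} is closed under such \FO{}-reductions, the isomorphism problem of $\GraphClass'$ is in \PTime{}. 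Conversely, for every $k$ the two non-isomorphic structures $\StructA_k \kequiv{k} \StructB_k$ from Theorem~\ref{thm:asymmetric-structures-with-CFI graphs} witness that \IFPC{}, and hence \IFPCWSC{} (which collapses to \IFPC{} on $\GraphClass'$, since the $\set{R}$-reducts of these structures are asymmetric), fails to define this property. This yields the strict separation $\IFPCWSC < \PTime$.

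There is essentially no obstacle remaining: all the substance has already been carried by Theorems~\ref{thm:wsc-le-wsci} and~\ref{thm:asymmetric-structures-with-CFI graphs}. The only routine point to verify, in the second (self-contained) variant, is that \PTime{} is closed under the operations used — composition with a polynomial-time-computable \FO{}-interpretation followed by a parity computation — which is standard.
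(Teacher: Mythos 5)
Your first variant is exactly the paper's (implicit) argument: the corollary is stated without a separate proof precisely because it follows immediately from Theorem~\ref{thm:wsc-le-wsci}, since the property witnessing $\IFPCWSC < \IFPCWSCI$ is $\IFPCWSCI$-definable and hence in $\PTime$ but not $\IFPCWSC$-definable. Your second, self-contained variant is also sound but adds nothing beyond what is already established in the proofs of Theorems~\ref{thm:wsc-le-wsci} and~\ref{thm:asymmetric-structures-with-CFI graphs}.
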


\begin{cor}
	$\WSCof{\IFPC} < \Iof{\WSCof{\IFPC{}}}$.
\end{cor}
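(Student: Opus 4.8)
The plan is to extract this statement directly from the proof of Theorem~\ref{thm:wsc-le-wsci}, keeping track of which fragment of $\IFPCWSCI$ each formula used there belongs to. I reuse the class $\GraphClass'$ of $\set{R,\spleq}$-structures constructed in that proof: the gluings $\StructA_k,\StructB_k$ of multipedes and CFI graphs over ordered base graphs from Theorem~\ref{thm:asymmetric-structures-with-CFI graphs}, padded with directed paths so that even their $\set{R}$-reducts are asymmetric.

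For the lower bound I argue that $\WSCof{\IFPC}$ does not define the isomorphism problem of $\GraphClass'$. Since $\WSCof{\IFPC} \subseteq \IFPCWSC$, it suffices to recall from the proof of Theorem~\ref{thm:wsc-le-wsci} that $\IFPCWSC$ collapses to $\IFPC$ on $\GraphClass'$: a WSC-fixed-point operator mentioning~$R$ operates on an asymmetric structure, so all choice-sets are singletons and it can be replaced by an ordinary fixed-point operator; one not mentioning~$R$ operates on the $\set{\spleq}$-reduct, whose isomorphism type is already \IFPC{}-definable from the color-class sizes. As $\IFPC$ cannot define $\GraphClass'$-isomorphism because $\StructA_k \kequiv{k} \StructB_k$ while $\StructA_k \not\iso \StructB_k$, neither can $\WSCof{\IFPC}$.

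For the upper bound I exhibit, exactly as in the proof of Theorem~\ref{thm:wsc-le-wsci}, the formula $\iop{\interpret_{\text{CFI}}\circ\interpret_{\GraphClass}}{\formA_{\text{CFI}}}$. Here $\formA_{\text{CFI}}$ is the $\WSCof{\IFPC}$-formula for the CFI query on ordered base graphs delivered by Corollary~\ref{cor:canonize-CFI if-base-wsci} (using $\WSCIof{\IFPC}=\WSCof{\Iof{\IFPC}}=\WSCof{\IFPC}$, since \IFPC{} is closed under \IFPC{}-interpretations), and $\interpret_{\text{CFI}}\circ\interpret_{\GraphClass}$ is a composition of two $1$-dimensional equivalence-free \FO{}-interpretations, hence an \IFPC{}-interpretation. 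Thus this formula applies a single interpretation operator, with an \IFPC{}-interpretation, to a $\WSCof{\IFPC}$-formula, so it is an $\Iof{\WSCof{\IFPC}}$-formula; by Theorem~\ref{thm:asymmetric-structures-with-CFI graphs} it defines $\GraphClass'$-isomorphism. Combined with the trivial inclusion $\WSCof{\IFPC}\subseteq\Iof{\WSCof{\IFPC}}$, this yields the strict separation.

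There is no real obstacle: the corollary is a bookkeeping consequence of Theorem~\ref{thm:wsc-le-wsci}, and the only point needing a moment of care is confirming that the witnessing formula genuinely lies in $\Iof{\WSCof{\IFPC}}$ — that is, a single outermost interpretation operator over an \IFPC{}-interpretation wrapped around one layer of WSC-fixed-point operators over \IFPC{} — which is exactly what the cited results guarantee.
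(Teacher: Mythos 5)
Your proposal is correct and follows exactly the paper's route: the corollary is read off from the proof of Theorem~\ref{thm:wsc-le-wsci}, using the collapse of \IFPCWSC{} to \IFPC{} on the asymmetric class $\GraphClass'$ for the lower bound and the explicit $\Iof{\WSCof{\IFPC}}$-formula $\iop{\interpret_{\text{CFI}}\circ\interpret_{\GraphClass}}{\formA_{\text{CFI}}}$ (with $\formA_{\text{CFI}}$ a $\WSCof{\IFPC}$-formula via Corollary~\ref{cor:canonize-CFI if-base-wsci} and $\Iof{\IFPC}=\IFPC$) for the upper bound. No gaps.
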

\noindent Note that the prior corollary refines Corollary~\ref{cor:wsci-le-wsci2},
which states that \[\WSCof{\IFPC} = \WSCIof{\IFPC{}} < \WSCIkof{2}{\IFPC{}} = \WSCof{\Iof{\WSCof{\IFPC{}}}}.\]
We actually expect that $\WSCof{\IFPC} < \Iof{\WSCof{\IFPC{}}} < \WSCof{\Iof{\WSCof{\IFPC{}}}}$
because it seems unlikely that $\Iof{\WSCof{\IFPC{}}}$
defines the CFI query of the base graphs of Theorem~\ref{thm:cfi-wsci-wsci}.

Again, the inability of $\IFPCWSC{}$ to distinguish these structures
is not related to Ebbinghaus' reduct property.
$\IFPCWSC{}$ with global reduct semantics fails to distinguish them, either:
\begin{lem}
 $\IFPCWSCglob < \IFPCWSCIglob$ (recall that these are the global reduct semantic variants of $\IFPCWSC$ and $\IFPCWSCI$).
\end{lem}
\begin{proof}
	We consider the same class of structures $\GraphClass$
	as in the proof of~Theorem~\ref{thm:wsc-le-wsci}.
	Since their signature is fixed,
	$\IFPCWSCI \leq \IFPCWSCIglob{}$ on $\GraphClass$ by Lemma~\ref{lem:non-reduct-in-global-reduct-semantics-for-fixed-signature}.
	Thus, \IFPCWSCIglob{} distinguishes even and odd structures.
	To show that $\IFPCWSCglob$ does not distinguish them,
	we show that $\IFPCWSCglob = \IFPC$ on $\GraphClass$.
	Then~Theorem~\ref{thm:wsc-le-wsci} implies the claim.
	
	If an \IFPCWSC{}-formula uses~$R$,
	then, under the global reduct semantics, the formula is still evaluated on an asymmetric structure because the $R$-reducts are asymmetric.
	Hence, choosing is useless can be simulated by (non-WSC)-fixed-point operators, again.
	If otherwise the formula does not use~$R$,
	then the $\spleq$-reduct 
	is fully determined by the number and sizes of the color classes.
	They must be equal for~$\StructA_k$ and~$\StructB_k$ for every $k \geq 3$ because otherwise~$\Ck{3}$ distinguishes~$\StructA_k$ and~$\StructB_k$.
	The number of color classes and their sizes are clearly \IFPC{}-definable.
\end{proof}

\begin{cor}
	The logics \IFPCWSC{} and \IFPCWSCglob{} are not closed under \IFPC{}-interpretations
	and not even under one-dimensional and equivalence-free \FO{}-interpretations.
\end{cor}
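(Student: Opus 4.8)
The plan is to read this off directly from the proof of Theorem~\ref{thm:wsc-le-wsci}. First I would recall the class $\GraphClass'$ of $\set{R,\spleq}$-structures and the $1$-dimensional, equivalence-free \FO{}-interpretation $\interpretB := \interpret_{\text{CFI}}\circ\interpret_{\GraphClass}$ constructed there; by construction $\interpretB$ maps a $\GraphClass'$-structure to the CFI graph (over an ordered base graph) that is glued into it. Let $P$ be the property of $\GraphClass'$-structures that holds exactly for those $\Struct$ with $\interpretB(\Struct)$ an even CFI graph, and let $Q$ be the CFI query on ordered base graphs. Then $\Struct \in P$ if and only if $\interpretB(\Struct) \in Q$, so $P$ is \FO{}-reducible to $Q$ via the $1$-dimensional, equivalence-free interpretation $\interpretB$.

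Next I would argue that $Q$ is \IFPCWSC{}-definable: by Corollary~\ref{cor:canonize-CFI if-base-wsci} applied to the class of ordered graphs (whose $2$-orbits are \IFPC{}-distinguishable), the CFI query on ordered base graphs is definable by a $\WSCof{\IFPC}$-formula, and $\WSCof{\IFPC}$ is a syntactic fragment of \IFPCWSC{}. On the other hand $P$ is \emph{not} \IFPCWSC{}-definable: the structures $\StructA_k, \StructB_k \in \GraphClass'$ from Theorem~\ref{thm:asymmetric-structures-with-CFI graphs} (in the modified form used in the proof of Theorem~\ref{thm:wsc-le-wsci}) satisfy $\interpretB(\StructA_k)$ even, $\interpretB(\StructB_k)$ odd, $\StructA_k \not\iso \StructB_k$, and $\StructA_k \kequiv{k}\StructB_k$ for every $k$; hence $P$ is not \IFPC{}-definable, and since $\IFPCWSC = \IFPC$ on $\GraphClass'$ (established in the proof of Theorem~\ref{thm:wsc-le-wsci} using asymmetry of the $\set{R}$-reducts), $P$ is not \IFPCWSC{}-definable either.

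Putting the two halves together, $P$ is \FO{}-reducible --- indeed by a $1$-dimensional, equivalence-free interpretation --- to the \IFPCWSC{}-definable property $Q$, while $P$ itself is not \IFPCWSC{}-definable. By the definition of closure under interpretations this shows that \IFPCWSC{} is not closed under $1$-dimensional, equivalence-free \FO{}-interpretations, and a fortiori not closed under \IFPC{}-interpretations. There is no genuine obstacle here; the only points needing care are to phrase $P$ as an isomorphism-invariant decision property rather than as an isomorphism relation, and to check that the direction of the reduction matches the definition of closure under interpretations --- both of which are immediate from the construction recalled above.
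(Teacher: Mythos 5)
Your proposal is correct and takes essentially the same route as the paper: the paper's proof of this corollary simply observes that the interpretation $\interpret_{\text{CFI}}\circ\interpret_{\GraphClass}$ from the proof of Theorem~\ref{thm:wsc-le-wsci} is $1$-dimensional and equivalence-free, and lets the argument there (the CFI query on ordered base graphs is $\WSCof{\IFPC}$-definable, while its pullback along this interpretation is not \IFPCWSC{}-definable since $\IFPCWSC{}=\IFPC{}$ on $\GraphClass'$) do the work. Your write-up just makes explicit the properties $P$ and $Q$ and the direction of the reduction, which the paper leaves implicit.
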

\begin{proof}
	The FO-interpretation $\interpret$ in the proof of Theorem~\ref{thm:wsc-le-wsci} is one-dimensional and equivalence-free because it is obtained from Theorem~\ref{thm:asymmetric-structures-with-CFI graphs}.
	The interpretation actually only removes vertices.
	The claim follows.
\end{proof}

Dawar and Richerby~\cite{DawarRicherby03} asked whether \IFPSC{} is closed under interpretations.
For the global reduct semantics, we can give a negative answer and show that \IFPSC{} is not closed under one-dimensional \FO{}-interpretations.
\begin{cor}
	\IFPSC{} with global reduct semantics is not closed under one-dimensional and equivalence-free \FO{}-interpretations.
\end{cor}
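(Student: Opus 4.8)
The plan is to re-run the argument of Theorem~\ref{thm:wsc-le-wsci}, with \IFP{} and \IFPC{} playing the role that \IFPCWSC{} and \IFPC{} play there. Recall that the proof of that theorem builds a class $\GraphClass'$ of $\set{R,\spleq}$-structures --- obtained from the gluings $\StructA_k,\StructB_k$ of Theorem~\ref{thm:asymmetric-structures-with-CFI graphs} by additionally encoding the preorder into~$R$ --- such that both the full structures and their $\set{R}$-reducts are asymmetric, together with a composite interpretation $\interpret := \interpret_{\text{CFI}} \circ \interpret_{\GraphClass}$ that is $1$-dimensional, equivalence-free, only deletes vertices, and maps a $\GraphClass'$-structure to the CFI graph over the ordered base graph underlying it. I would reuse exactly this $\GraphClass'$ and this~$\interpret$.

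First I would check that the isomorphism relation of $\GraphClass'$ is \FO{}-reducible, via~$\interpret$, to that of CFI graphs over ordered base graphs. This is immediate from Theorem~\ref{thm:asymmetric-structures-with-CFI graphs}: $\interpret(\StructA_k)$ and $\interpret(\StructB_k)$ are the non-isomorphic even and odd CFI graph over the ($k$-dependent) ordered base graph supplied there, and $\interpret$ only deletes vertices and acts coordinate-wise on pairs, so it reduces $\GraphClass'$-isomorphism to isomorphism of CFI graphs over ordered base graphs exactly as in the proof of Theorem~\ref{thm:wsc-le-wsci}. Moreover, the CFI query over ordered base graphs is \IFPSC{}-definable: it is \IFPWSC{}-definable by~\cite{GireHoang98}, and every \IFPWSC{}-formula turns into an equivalent \IFPSC{}-formula once the witnessing formulas are deleted, because in a correct witnessed formula the choice-sets are orbits. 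Hence $\GraphClass'$-isomorphism \FO{}-reduces to an \IFPSC{}-definable property.

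Second I would show that the isomorphism relation of $\GraphClass'$ is not \IFPSC{}-definable, by the mechanism used to prove $\IFPCWSC{} = \IFPC{}$ on $\GraphClass'$ in the proof of Theorem~\ref{thm:wsc-le-wsci}. By the reduct semantics, every symmetric-choice operator occurring in a formula is evaluated on a reduct of a $\GraphClass'$-structure that is either (i) one still containing~$R$, hence asymmetric --- so all orbits are singletons, the choice is deterministic, and the operator is expressible by an ordinary inflationary fixed point --- or (ii) the $\set{\spleq}$-reduct alone; but the $\set{\spleq}$-reducts of $\StructA_k$ and $\StructB_k$ are literally the same structure (CFI graphs over a common base graph share universe and colouring, and the multipede part as well as the encoded paths are identical on both), so such an operator carries no information distinguishing $\StructA_k$ from $\StructB_k$ and its behaviour on the $\set{\spleq}$-reduct is captured within \IFPC{}, exactly as in Theorem~\ref{thm:wsc-le-wsci}. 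Thus on $\set{\StructA_k,\StructB_k}$ every \IFPSC{}-formula is equivalent to an \IFPC{}-formula. Since $\StructA_k \kequiv{k} \StructB_k$, $\StructA_k \not\iso \StructB_k$, and $|\StructA_k| = |\StructB_k|$ (Theorem~\ref{thm:asymmetric-structures-with-CFI graphs}), no \IFPC{}-formula distinguishes $\StructA_k$ from $\StructB_k$ for large~$k$, hence no \IFPSC{}-formula does either; in particular \IFPSC{} does not define isomorphism of $\GraphClass'$.

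Combining the two steps, $\GraphClass'$-isomorphism is \FO{}-reducible (by a $1$-dimensional equivalence-free interpretation) to an \IFPSC{}-definable property but is not itself \IFPSC{}-definable, so \IFPSC{} is not closed under $1$-dimensional equivalence-free \FO{}-interpretations. The hard part is the second step: one has to argue carefully, through the reduct semantics, that symmetric-choice operators contribute nothing beyond \IFP{}/\IFPC{} on $\GraphClass'$ --- for operators mentioning~$R$ via the asymmetry argument, and for operators seeing only the preorder via the fact that $\StructA_k$ and $\StructB_k$ induce identical $\set{\spleq}$-reducts --- which is a verbatim transfer of the corresponding part of the proof of Theorem~\ref{thm:wsc-le-wsci}; everything else is routine.
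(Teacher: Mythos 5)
Your proposal is correct and follows essentially the same route as the paper: reuse the class $\GraphClass'$ and the $1$-dimensional equivalence-free interpretation $\interpret_{\text{CFI}}\circ\interpret_{\GraphClass}$ from Theorem~\ref{thm:wsc-le-wsci}, observe that asymmetry (together with the reduct-semantics case analysis for operators not mentioning $R$) collapses symmetric choice on these structures so that isomorphism is not \IFPSC{}-definable, and use the Gire--Hoang result to get \IFPSC{}-definability of the CFI query on ordered base graphs. The paper's proof is just a terser statement of the same argument ($\IFP{}=\IFPSC{}$ on the constructed structures because all orbits are trivial), so no further comment is needed.
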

\begin{proof}
	We consider the same structures:
	$\IFP = \IFPSC$ for the constructed structures 
	because they only have trivial orbits and
	it does not make a difference whether we need to witness choices.
	Surely also \IFP{} does not distinguish the structures.
	Because $\IFPWSC{}$ defines the CFI query for ordered base graphs~\cite{GireHoang98}, so does \IFPSC{}.
	We conclude that \IFPSC{} with global reduct semantics is not closed under \FO{}-interpretations, either.
\end{proof}
For the subformula reduct semantics, as used by Dawar and Richerby~\cite{DawarRicherby03}, the question remains open because with subformula reduct semantics one-dimensional and equivalence-free interpretation operators are expressible (cf.~Lemma~\ref{lem:ipfcwsc-subformula-reduct-closed-under-one-dim-interpretations}, the proof also works for non-witnessing symmetric choice).
This shows that, without the interpretation operator,
the subformula reduct semantics is more expressive then the global reduct one:
\begin{cor}
 $\IFPCWSCglob < \IFPCWSCsub$.
\end{cor}
\begin{proof}
	It suffices to show that \IFPCWSCsub{} (the variant of \IFPCWSC{} with subformula reduct semantics) distinguishes the structures from~Theorem~\ref{thm:asymmetric-structures-with-CFI graphs}.
	For this we see that the defining $\Iof{\WSCof{\IFPC}}$~formula in the proof of Theorem~\ref{thm:wsc-le-wsci} is of the form
	$\iop{\interpret}{\formA_{\text{CFI}}}$,
	where $\formA_{\text{CFI}}$ is an $\IFPCWSC${}-formula
	and $\interpret$ a one-dimensional equivalence-free \FO{}-interpretation.
	
	Because \IFPCWSCsub{} is closed under one-dimensional and equivalence-free interpretations by Lemma~\ref{lem:ipfcwsc-subformula-reduct-closed-under-one-dim-interpretations},
	the formula $\iop{\interpret}{\formA_{\text{CFI}}}$ can be expressed in \IFPCWSCsub{},
	which, as seen earlier, is not possible in \IFPCWSCglob{}.
\end{proof}

\section{Discussion}
\label{sec:discussion}

We defined the logics \IFPCWSC{} and \IFPCWSCI{}
to study the combination of witnessed symmetric choice and interpretations
beyond simulating counting.
We provided graph constructions to prove lower bounds for these logics.
\IFPCWSCI{} canonizes CFI graphs if it canonizes the base graphs, but operators have to be nested.
We proved that this increase in nesting depth is unavoidable using double CFI graphs obtained by essentially applying the CFI construction twice.
Does iterating our construction further show an operator nesting hierarchy in \IFPCWSCI{}?
We have seen that also in the presence of counting
the interpretation operator strictly increases the expressiveness.
So indeed both, witnessed symmetric choice and interpretations are needed to possibly capture \PTime{}.
This answers the question of the relation between witnessed symmetric choice and interpretations for \IFPC{}.
But it remains open whether \IFPCWSCI{} captures \PTime{}.
Here, iterating our CFI construction is of interest again:
If one shows an operator nesting hierarchy using this construction,
then one in particular will separate \IFPCWSCI{} from \PTime{}
because our construction does not change the signature of the structures.

Answering this question can also help to compare \IFPCWSCI{} to other logics in the quest for a logic capturing \PTime{}.
One of them is rank logic, which was separated from \PTime{} by a generalization of CFI graphs~\cite{Lichter2023} not only over $\FF_2$ but over arbitrary modulo rings $\ZZ_{2^i} = \ZZ / 2^i \ZZ$.
On the one hand, the separating class is based on ordered base graphs
and hence we expect that \IFPCWSC{} defines this CFI query.
On the other hand, we expect that rank logic defines the CFI query from Theorem~\ref{thm:cfi-wsci-wsci}.
If it is possible to separate \IFPCWSCI{} from \PTime{} by iterating the construction,
it would be interesting to show that rank logic also defines this CFI query,
which would make rank logic and \IFPCWSCI{} incomparable.
Studying all these questions remain for future work.

\bibliographystyle{alphaurl}
\bibliography{ifpc_wsc_i}
\end{document}